\documentclass[11pt]{article}
\usepackage{amsmath,amssymb,amsfonts,amsthm,epsfig}
\usepackage[usenames,dvipsnames]{xcolor}
\usepackage{bm,xspace}
\usepackage{tcolorbox}
\usepackage{cancel}
\usepackage{fullpage}
\usepackage{guy}
\usepackage{framed}
\usepackage{verbatim}
\usepackage{enumitem}
\usepackage{array}
\usepackage{multirow}
\usepackage{afterpage}
\usepackage{mathrsfs}
\usepackage{pifont} 
\usepackage{chngpage}
\usepackage[normalem]{ulem}
\usepackage{boxedminipage}
\usepackage{caption}
\usepackage{forest}
\usepackage{svg}
\usepackage{microtype}
\usepackage{setspace}

\usepackage{algorithm}
\usepackage{algorithmicx}
\usepackage{algpseudocode}

\usepackage{pgfplots}
\pgfplotsset{width=8cm,compat=newest}
\usepgfplotslibrary{fillbetween}

\def\colorful{1}

\ifnum\colorful=1

\fi
\ifnum\colorful=0

\fi

\makeatletter
\newcommand{\subalign}[1]{%
  \vcenter{%
    \Let@ \restore@math@cr \default@tag
    \baselineskip\fontdimen10 \scriptfont\tw@
    \advance\baselineskip\fontdimen12 \scriptfont\tw@
    \lineskip\thr@@\fontdimen8 \scriptfont\thr@@
    \lineskiplimit\lineskip
    \ialign{\hfil$\m@th\scriptstyle##$&$\m@th\scriptstyle{}##$\hfil\crcr
      #1\crcr
    }%
  }%
}
\makeatother

\newcommand{\Conv}{\mathrm{Conv}}
\newcommand{\Maj}{\textsc{Maj}}
\newcommand{\Ber}{\mathrm{Ber}}
\newcommand{\Unif}{\mathrm{Unif}}

\newcommand{\Learn}{\textsc{Learn}}
\newcommand{\FW}{\textsc{FrankWolfe}}
\newcommand{\error}{\mathrm{error}}

\DeclareMathOperator*{\argmax}{arg\,max}
\DeclareMathOperator*{\argmin}{arg\,min}

\newlist{enumprop}{enumerate}{1} %
\setlist[enumprop]{label=\arabic*.,ref=\theproposition.\arabic*}
\crefalias{enumpropi}{proposition}

\newcommand{\citet}{\cite}
\newcommand{\citep}{\cite}

\newcommand{\Bin}{\mathrm{Bin}}
\newcommand{\Rad}{\mathrm{Rad}}
\newcommand{\VC}{\mathrm{VCdim}}
\newcommand{\corr}{\mathrm{corrupted}}
\newcommand{\mcDc}{\mcD^{(\corr)}}
\newcommand{\bmcDc}{\bmcD^{(\corr)}}

\newcommand{\rand}{\mathrm{random}}
\newcommand{\wrong}{\mathrm{wrong}}
\newcommand{\correct}{\mathrm{correct}}

\newcommand{\pparagraph}[1]{\bigskip \noindent {\bf {#1}}}

\begin{document}
\hypersetup{pageanchor=false}

\title{
Robust Learning with Optimal Error %
}

\author{ 
Guy Blanc \vspace{6pt} \\ 
\hspace{-7pt} {\sl Stanford} 
}

\maketitle

\begin{abstract}
    We construct algorithms with optimal error for learning with adversarial noise. The overarching theme of this work is that the use of \textsl{randomized} hypotheses can substantially improve upon the best error rates achievable with deterministic hypotheses. 
    \begin{enumerate}
        \item[$\circ$] For $\eta$-rate malicious noise, we show the optimal error is $\frac{1}{2} \cdot \eta/(1-\eta)$, improving on the optimal error of deterministic hypotheses by a factor of $1/2$. This answers an open question of Cesa-Bianchi et al. (JACM 1999) who showed randomness can improve error by a factor of $6/7$. 
        \item[$\circ$] For $\eta$-rate nasty noise, we show the optimal error is $\frac{3}{2} \cdot \eta$ for distribution-independent learners and $\eta$ for fixed-distribution learners, both improving upon the optimal $2 \eta$ error of deterministic hypotheses. This closes a gap first noted by Bshouty et al. (Theoretical Computer Science 2002) when they introduced nasty noise and reiterated in the recent works of Klivans et al. (NeurIPS 2025) and Blanc et al. (SODA 2026). 
        \item[$\circ$] For $\eta$-rate agnostic noise and the closely related nasty classification noise model, we show the optimal error is $\eta$, improving upon the optimal $2\eta$ error of deterministic hypotheses. 
    \end{enumerate}
    All of our learners have sample complexity linear in the VC-dimension of the concept class and polynomial in the inverse excess error. All except for the fixed-distribution nasty noise learner are time efficient given access to an oracle for empirical risk minimization. 

    Our techniques reveal close connections between optimal robustness and algorithmic fairness: We show that optimal error agnostic learning is equivalent to multiaccuracy and that optimal malicious and distribution-independent nasty noise learners can be constructed via a black-box post-processing of a calibrated multiaccurate learner, though with suboptimal sample and oracle complexity. Our techniques for improving these complexities may be of independent interest. 
\end{abstract}

 \thispagestyle{empty}
 \newpage 

 \thispagestyle{empty}
 \setcounter{tocdepth}{2}
 
\begin{spacing}{0.92} %
  \tableofcontents
 \end{spacing}

 \thispagestyle{empty}
 \newpage

 \hypersetup{pageanchor=true}
 \setcounter{page}{1}

\section{Introduction}

Valiant's original model of PAC learning~\cite{Val84} assumes that every element of the received dataset is correctly labeled by the target function, but the real world is rarely so kind. For this reason, there has been an intense and sustained effort to design \textsl{robust} learning algorithms that succeed even given less idealized datasets ~\cite{MP96,Aue97,Bsh98,CBNDFSS99,Ser03,AW98,KKMS08,KLS09,KK09,F10,LS11,BS12,ABL17,DKS18,SZ21,BLMT22,DK23book,Shen23,HKLM24,BV25,BHMS26}. These robust algorithms are formalized in various models of adversarial noise in which the learner is expected to succeed even when an adversary corrupts a portion of the dataset. Such models include~nasty noise~\citep{BEK02}, malicious noise~\cite{Val85}, and agnostic noise \cite{Hau92,KSS94}. This work resolves a foundational question for each of these models of robust learning.
\begin{quote}
\begin{center}
    \textsl{What is the optimal error of robust learners?}
\end{center}
\end{quote}
\noindent\textbf{Randomized vs.~deterministic hypotheses.} It is well-known that for many models of adversarial noise, empirical risk minimization (ERM) achieves optimal error when benchmarking only against \textsl{deterministic} hypotheses. For example, consider $\eta$-rate nasty noise, in which the adversary is allowed to arbitrarily corrupt an $\eta$-fraction of the data. In the paper introducing nasty noise, Bshouty, Eiron, and Kushilevitz show that ERM produces a hypothesis with $\approx 2\eta$ error and that no algorithm outputting a deterministic hypothesis can achieve better error~\cite{BEK02}.%

To prove this lower bound,~\citet{BEK02} first observe that if $f_0$ and $f_1$ are at distance $2\eta$, the randomized function $f_{\mathrm{mix}} = \frac{f_0 + f_1}{2}$ is only at distance $\eta$ from each of $f_0$ or $f_1$. Therefore, an $\eta$-rate adversary can create a dataset that has labels sampled from $f_{\mathrm{mix}}$ regardless of whether the true target is $f_0$ or $f_1$. The key step is to then force all of the $2\eta$ distance between $f_0$ and $f_1$ to come from them classifying a single point, $x^{\star}$, differently. %
Then, any \textsl{deterministic} hypothesis must commit to either positively or negatively classifying $x^{\star}$. Regardless of its decision, it will have $2\eta$ error on whichever of $f_0$ or $f_1$ it disagrees with.

This lower bound is unsatisfactory. First, observe the incongruence between the adversary generating samples from a randomized function, $f_{\mathrm{mix}}$, while the learner is forced to output a deterministic hypothesis. More generally, learning with adversarial noise is, as the naming suggests, learning in an inherently stochastic environment. In stochastic environments, randomized hypotheses, which can equivalently be viewed as outputting probabilities, are far more informative than deterministic hypotheses. For example, in weather prediction, such probabilities are more useful than a deterministic positive or negative prediction. (see also the excellent introduction of Kearns and Schapire~\citep{KS94} for many more examples in which randomized hypotheses are preferable).

Second, observe that this lower bound against deterministic hypotheses involves a single ``heavy" point, $x^{\star}$, that appears multiple times in the dataset. In practical settings, there is little reason to not simply memorize the labels of heavy points.  Our work shows that even if one wishes to output a deterministic hypothesis, heavy points are the only obstruction to improved error. This is because randomized hypotheses can easily be converted to deterministic hypotheses with comparable error whenever there are no heavy points (via randomized rounding, see \Cref{claim:randomized-rounding}). Hence, our results also improve the error rates of deterministic hypotheses whenever there are no heavy points.

\section{Our results}
\subsection{Malicious noise}
\label{subsec:malicious-intro}
We first state our optimal algorithm for learning with \textsl{malicious noise}, a model of robust learning first studied by Valiant \cite{Val85} only one year after he defined the PAC model. In $\eta$-malicious noise, if the true target is $c^\star$ and data distribution is $\mcD$, each point is created as follows:
\begin{enumerate}
    \item With probability $1 - \eta$, a clean point is sampled $(\bx, c^\star(\bx))_{\bx \sim \mcD}$.
    \item With probability $\eta$, the adversary may choose \textit{any} $(x', y')$.
\end{enumerate}
Hence, approximately $\eta$-fraction of the dataset is chosen by the adversary. It is well-known that the optimal error of any learner forced to output a deterministic hypothesis with $\eta$-malicious noise is $\approx \eta/(1-\eta)$ and that this is achieved by ERM~\cite{KL93}. Cesa-Bianchi et al.~\cite{CBNDFSS99} followed this with an investigation of learners allowed to output random hypotheses and showed:
\begin{enumerate}
    \item No learner can achieve error better than $\frac{1}{2} \cdot \frac{\eta}{1-\eta}$.
    \item With a massive sample (larger than the domain size), it is possible to achieve error approaching $\frac{1}{2} \cdot \frac{\eta}{1-\eta}$. Such a large sample allows for a memorization-based strategy, though their result still requires a clever probabilistic construction in addition to memorization.
    \item With a sample size linear in the VC dimension of the concept class, it is possible to achieve error approaching $\frac{6}{7} \cdot \frac{\eta}{1-\eta}$. As they remark, their learner requires solving a ``challenging combinatorial problem" and does not appear to be efficiently implementable even given an ERM oracle.
\end{enumerate}
They explicitly ask whether it is possible to give a learner with efficient sample complexity and error $\approx \frac{1}{2} \cdot \frac{\eta}{1-\eta}$ and note ``it seems [this question is] not easy to answer." We resolve it.
\begin{theorem}[Optimal learning with malicious noise, see \Cref{thm:malicious-body} for the full version]
    \label{thm:malicious-intro}
    For any concept class $\mcC$ with VC dimension $d$ and $\eps > 0$, there is an algorithm that learns $\mcC$ with $\eta$-malicious noise and error at most $\frac{1}{2} \cdot \frac{\eta}{1-\eta} + \eps$ using $O(\frac{d}{\eps^2})$ samples. Furthermore, this algorithm is efficiently implementable using $O(1/\eps)$ calls to an ERM oracle for $\mcC$.
\end{theorem}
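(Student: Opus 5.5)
Write $\beta = \eta/(1-\eta)$. The plan is to reduce malicious-noise learning to finding a randomized hypothesis $h:\mcX\to[0,1]$ that is simultaneously ``close'' to the corrupted sample in a multiaccuracy-like sense, and then post-process it.

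\textbf{Structure of the corrupted distribution.} The corrupted distribution can be written as $\mcDc = (1-\eta)\,(\mcD, c^\star) + \eta\, \mcE$ for an arbitrary adversarial distribution $\mcE$ over labeled points. Dividing by $1-\eta$, for any function $g$ we have $\E_{\mcD}[g(\bx,c^\star(\bx))] \le \frac{1}{1-\eta}\E_{\mcDc}[g]$ whenever $g\ge 0$. Hence for any candidate randomized hypothesis $h$, its true error $\E_{\mcD}|h(\bx)-c^\star(\bx)|$ can be related to quantities measurable under $\mcDc$. The adversary's $\beta$-relative mass can be spent two ways: mislabeled copies of points (which contradict $c^\star$) and ``mixing'' mass that makes two concepts $c_0,c_1$ indistinguishable. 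The lower bound $\frac12\beta$ comes from the adversary making the conditional label on a region of relative mass $\beta$ look like a fair coin, so the target should be: \emph{for every $c\in\mcC$ consistent with the data up to the malicious budget, $\E_\mcD|h-c| \le \frac12\beta+\eps$}.

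\textbf{Algorithm via ERM and Frank--Wolfe.} I would consider the set $\mcS$ of concepts $c\in\mcC$ whose empirical ``clean-consistency'' is plausible: $\Pr_{\mcDc}[\by \ne c(\bx)] \le \eta + \eps$. The target $c^\star$ lies in $\mcS$. The goal is a mixture $h = \sum_i \lambda_i c_i$ (with possible post-processing such as clipping $h$ toward $0$ or $1$ where it is near-certain) minimizing $\max_{c\in\mcS} \E_{\mcD}|h-c|$, which is a minimax problem; since the benchmark concepts are only known via $\mcDc$, I would reformulate the objective as weighted disagreement under $\mcDc$. Concretely, I would run a Frank--Wolfe / no-regret scheme for $O(1/\eps)$ rounds: at each round, the ``adversary player'' calls the ERM oracle with reweighted labels $y' = 1-\mathrm{round}(h)$ and weights $|2h-1|$-type adjustments to find the $c\in\mcS$ worst for the current $h$ (the constraint $c\in\mcS$ handled by a Lagrangian term added to the ERM weights), and $h$ is updated toward it. Uniform convergence for the VC class (and for the $O(1/\eps)$-fold mixtures, via the convexity of the loss) gives the $O(d/\eps^2)$ sample bound.

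\textbf{The key inequality and the main obstacle.} The crux is showing the minimax value is at most $\frac12\beta$: for every $c\in\mcS$ there must exist a single $h$ with $\E_\mcD|h-c|\le\frac12\beta$ for all $c$ simultaneously. By the minimax theorem it suffices, for every distribution $\pi$ over $\mcS$, to exhibit $h$ (the ``$\pi$-average thresholded'' predictor) with $\E_{c\sim\pi}\E_\mcD|h-c|\le\frac12\beta$. The natural candidate is $h=\E_\pi[c]$, but $\E_\pi|h-c|$ is a variance-like term, and I must bound it using that every $c\in\mcS$ is within error $\eta$ of the \emph{same} corrupted distribution: pointwise, if the $\mcDc$-label-rate on $x$ is $p(x)$ with clean mass $(1-\eta)\mcD(x)$, consistency of two concepts disagreeing at $x$ forces the adversary to have spent $\approx\min$ of the two sides' mass there. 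The hard step is turning this pointwise budget argument into a global bound with constant exactly $\frac12$, likely by choosing $h(x)$ as the adversary-optimal prediction (e.g.\ $h(x)=\frac12$ on contested mass and the majority label elsewhere) and doing the accounting that each unit of adversarial mass can ``buy'' at most $\frac12$ unit of error against $h$; I expect this accounting, and making it work with only the empirical ERM access rather than pointwise knowledge, to be the main difficulty.
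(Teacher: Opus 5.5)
\textbf{There is a genuine gap.} The step you flag as the main difficulty is essentially the whole proof, and the route you sketch toward it cannot succeed as set up.

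\textbf{Your hypothesis space is too small.} You aim for a mixture $h=\sum_i\lambda_i c_i$, with $h=\Ex_{\pi}[c]$ as the best response to $\pi$. But \Cref{claim:improper-mal} shows this fails. For $\mcS_k$ (VC dimension $1$) with $\eta=1/k$, every learner whose output is supported on $\mcC$ has expected error at least $\eta$, even with unlimited samples. The target, however, is $\frac{1}{2(k-1)}<\eta$. So the minimax value over $\Conv(\mcC)$ is at least $\eta$, and your key inequality is false for that class.

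\textbf{Your objective is not well defined.} You write $\max_{c\in\mcS}\Ex_{\mcD}|h-c|$, but $\mcD$ is unknown. Moreover, for a given candidate target the clean distribution depends on which points were corrupted. ``Reformulate as weighted disagreement under the corrupted distribution'' is exactly the missing step, not a detail.

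\textbf{The linear surrogate you are missing.} The paper supplies a linear surrogate that formalizes your own accounting intuition:
$\ell_S(c,\bar h)=\Ex_{(\bx,\by)\sim\Unif(S)}[c(\bx)\by(2-\bar h(\bx)\by)-\bar h(\bx)\by]$.
This equals $4\Pr[\bh(x)\neq y]$ on points with $c(x)=y$, and equals $-2$ on points with $c(x)\neq y$. So $\ell_S(c,\bar h)\le 0$ says precisely that the error of $\bh=\Rad(\bar h)$ on $c$-consistent points is at most half of $c$'s disagreement rate. That yields $\Pr_{S_I}[\bh\neq y]\le\frac{\eta}{2(1-\eta)}$ for every $I$ on which some $c\in\mcC$ is consistent. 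This needs no reference to $\mcD$, to $\eta$, to a set $\mcS$, or to a Lagrangian, and the worst $c$ is found with one weighted-ERM call.

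\textbf{What the best response must satisfy.} Against $\pi$ with mean $\bar\mu$, the response $\bar h=g(\bar\mu)$ must satisfy $2\bar\mu y-g(\bar\mu)(\bar\mu+y)\le 0$ for both labels $y$. Equivalently, $\frac{2t}{1+t}\le g(t)\le\frac{2t}{1-t}$, which forces $g(0)=0$ and $g'(0)=2$: a nearly even vote must be amplified twofold. Your candidates all violate this (in $\pm1$ notation):
\begin{itemize}
\item The identity leaves expected loss $|\bar\mu|(1-|\bar\mu|)>0$ at a point labeled $\sign(\bar\mu)$.
\item Clipping near the extremes never touches the middle range, where the identity already fails.
\item Any rule with values in $\{0,\pm1\}$, such as ``$\frac12$ on contested mass, majority label elsewhere,'' fails for every $t\in(0,\frac13)$.
\end{itemize}
The paper realizes the constraint with $g=\frac{16}{19}M_7+\frac{3}{19}t$; $k=7$ is the least odd $k$ with $M_k'(0)\ge 2$. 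With this choice $g\circ\mu\in\Conv(\Maj_7(\mcC))$, a class of VC dimension $O(d)$, which is what keeps the sample bound at $O(d/\eps^2)$.

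\textbf{The iteration count.} A generic no-regret scheme gives $O(1/\eps^2)$ rounds, not $O(1/\eps)$. The paper obtains $O(1/\eps)$ by recognizing the update $\mu_{t+1}=(1-\gamma)\mu_t+\gamma c_t$ as Frank--Wolfe on the potential $P(\nu)=-\frac{1}{4n}\sum_i(a_{y_i}G(\nu_i)+b_{y_i}\nu_i)$ with $G'=g$. This potential is convex and smooth, and by $g$-feasibility its duality gap upper-bounds $\max_{c}\ell_S(c,g\circ\mu)$.
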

This learner improves on that of~\cite{CBNDFSS99} both because it achieves optimal error with a reasonable sample size and because it is efficiently implementable using an ERM oracle. The full version, \Cref{thm:malicious-body}, furthermore shows that the sample complexity can be improved to $\tilde{O}(\frac{d}{\eps})$ in exchange for an error of $(\frac{1}{2} + \alpha) \cdot \frac{\eta}{1-\eta} + \eps$ where $\alpha$ is an arbitrarily small constant. This mirrors how \cite{CBNDFSS99} state their result.

\subsection{Nasty noise}
\label{subsec:nasty-intro}
In the nasty noise model \cite{BEK02}, also called strong contamination \cite{DKKLMS18}, first a clean dataset of i.i.d. $(\bx, c^\star(\bx))_{\bx \sim \mcD}$ points is drawn, and then the adversary is allowed to arbitrarily corrupt any $\approx \eta$-fraction of this dataset (see \Cref{def:nasty-noise-basic} for details). %
As mentioned in the introduction, ERM gives a learner with $2\eta$ error which is optimal for learners outputting deterministic hypotheses \cite{BEK02}. The best known lower bound for learners outputting randomized hypotheses is $\eta$, also shown in \cite{BEK02}. Resolving this gap is an explicit open question of recent work~\cite{KSTV25}. We show that, surprisingly, neither the existing upper nor lower bound is tight.
\begin{theorem}[Optimal distribution-independent learning with nasty noise, see \Cref{thm:nasty-dist-free-upper-body,thm:nasty-lb-body} for the full versions]
    \label{thm:nasty-dist-free-upper-intro}
    The same algorithm from \Cref{thm:malicious-intro} also achieves $\frac{3}{2}\cdot \eta + \eps$ error when learning $\mcC$ with $\eta$-nasty noise. Furthermore, this error is optimal even when the algorithm is given an unbounded number of samples.
    
\end{theorem}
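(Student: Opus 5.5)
The proof has two halves: the upper bound for the algorithm of \Cref{thm:malicious-intro} under nasty noise, and a matching lower bound. For the upper bound I would reuse the structure of that algorithm as a black box: it computes, from the corrupted sample, a predictor $q:X\to[0,1]$ that is (approximately) calibrated and multiaccurate with respect to $\mcC$ on the empirical corrupted distribution. It then outputs the randomized hypothesis $h=\phi(q)$ for a fixed post-processing map $\phi$ (a clipped linear rescaling of $q$ around $1/2$).

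\textbf{Upper bound.} I would write the nasty corruption pointwise. Let $\mcD$ be the clean distribution, labeled by $c^\star$. The adversary removes mass $r(x)\le \mcD(x)$ at each $x$ and adds mass $a_0(x),a_1(x)$ with labels $0,1$, subject to $\sum_x r(x)=\sum_x (a_0(x)+a_1(x))\le\eta$.

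The first step is to use multiaccuracy against the test function $c^\star$ (and $1-c^\star$), together with calibration. This should show that $q$ is, on average, no worse at predicting labels than the best ``pointwise Bayes'' predictor of the corrupted distribution, up to $\eps$. The second step is to express the clean error $\mathbb{E}_{\mcD}|h(x)-c^\star(x)|$ as the corrupted error plus correction terms in $r$, $a_0$ and $a_1$. The third step is to solve, level set by level set of $q$, the adversary's worst-case problem against $\phi$.

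The point of the post-processing is this. In a region where wrong labels have been injected, the adversary gains at most half the injected mass in error, because $h$ hedges. Mass the adversary removes costs it one unit of budget but gains only what it removed. Balancing these two effects should give a per-unit-budget error of at most $3/2$. The malicious bound $\frac12\cdot\frac{\eta}{1-\eta}$ corresponds to the case $r\equiv0$, where the renormalization factor $1/(1-\eta)$ appears. I would check that the same $\phi$ yields $3/2$ once $r$ is allowed, which requires no renormalization since total mass is conserved. Generalization is handled by the $O(d/\eps^2)$ uniform convergence already used for \Cref{thm:malicious-intro}, since nasty noise only perturbs the empirical distribution by $\eta$ in total variation.

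\textbf{Lower bound.} I would use the \cite{BEK02} template of two worlds $(\mcD_0,f_0)$ and $(\mcD_1,f_1)$ whose labeled distributions are each within $\eta$ in total variation of a common distribution $P$. The learner then sees $P$ in both worlds, even with unboundedly many samples, and must output a single $h$. Distribution independence lets me choose $\mcD_0\neq\mcD_1$, which is what should push the bound past $\eta$.

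A single point $x^\star$ with symmetric masses only gives $3\eta/4$. So I would use a small domain, e.g.\ $\{x^\star,z\}$ with $f_0,f_1$ disagreeing only on $x^\star$. I would set this up as a finite linear program: the variables are the masses of $\mcD_0$ and $\mcD_1$, the common $P$, and the learner's value $p=h(x^\star)$. I would solve the resulting min-max, checking that the optimum equals $\frac32\eta$, and take LP duality of the adversary's program as the certificate. A natural guess is that world $i$ puts unequal masses (roughly $2\eta$ versus $\eta$) on $x^\star$, with the remaining mass on $z$ used to make the total-variation budget exact. Averaging over the two worlds then forces $\max\{m_0p,\,m_1(1-p)\}\ge\frac32\eta$.

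\textbf{Main obstacle.} I expect the hardest part to be the upper-bound accounting: showing that one post-processing $\phi$ simultaneously beats removal-type and injection-type corruptions at rate $3/2$. Removals are invisible to the learner, so multiaccuracy alone cannot certify them. I would first try to charge every removal of clean mass against the budget it consumes, using that such removals can only lower the corrupted error of $c^\star$.
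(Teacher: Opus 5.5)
\textbf{Lower bound: the two-world template cannot reach $\frac32\eta$.} No choice of $\mcD_0\neq\mcD_1$ gets a two-world argument past $\eta$ against randomized hypotheses. Suppose $f_0,f_1$ disagree on a set $D$. Then $\mcD_{0,f_0}$ and $\mcD_{1,f_1}$ have disjoint supports over $D$, so their total variation distance is at least $\max\{\mcD_0(D),\mcD_1(D)\}$. By the triangle inequality through $P$, that distance is at most $2\eta$. Now take the learner that agrees with $f_0=f_1$ off $D$ and flips a fair coin on $D$. Its error is $\frac12\mcD_i(D)\le\eta$ in both worlds.

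In your notation, $m_0,m_1\le 2\eta$ and $\min_p\max\{m_0p,m_1(1-p)\}=\frac{m_0m_1}{m_0+m_1}\le\eta$. Your guess $m_0=2\eta$, $m_1=\eta$ gives only $\frac23\eta$. The LP you describe would therefore return at most $\eta$. Also, the symmetric single-point construction gives $\eta$, not $\frac34\eta$; this is the \cite{BEK02} bound.

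\textbf{The missing idea is many candidate worlds.} You need enough candidate targets that the learner cannot afford to hedge on all of them. In \Cref{thm:nasty-lb-body}, $\bc^\star$ is uniform over labelings of a shattered set of size $d$. A point $\bx_{\wrong}$ is hidden uniformly among $d-1$ candidates. Its clean mass is $\eta$, but in the corrupted distribution it appears with a flipped label and the same light mass $\frac{1-2\eta}{d-3}$ as the other, correctly labeled, candidates. The remaining budget makes the labels at $x_{\rand}$ uniformly random.

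Together the candidates carry $\approx 1-2\eta$ of correctly labeled mass, so for $\eta<1/3$ hedging on them costs more than it saves. The learner therefore pays $\approx\eta$ at $\bx_{\wrong}$ and $\approx\eta/2$ at $x_{\rand}$. The bound is $\frac32\eta-O(1/d)$ because the number of candidates, and hence the VC dimension, must grow.

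\textbf{Upper bound.} Your accounting intuition is right: injected mass costs at most half of itself, and removed mass costs at most itself. However, you attach it to a mischaracterization of the algorithm and leave the key step as an unexecuted level-set analysis. The malicious learner is not a calibrated multiaccurate predictor followed by a clipped linear map. It runs the procedure of \Cref{thm:general-optimization} to find $\mu\in\Conv(\mcC)$ with $\ell_S(c,g\circ\mu)\le\eps$ for all $c\in\mcC$, where $g=\frac{16}{19}M_7+\frac{3}{19}t$.

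The only property you need is \Cref{lem:mal-error-loss-body}. The output has small error on every index set of size $(1-\eta')n$ that is consistent with some concept. Under nasty noise, the untouched indices $\bI$ form such a set for $c^\star$. Charge every corrupted index error $1$. Then the error on the clean sample $\bS^\circ$ is at most
\[
(1-\boldeta)\cdot\frac{\boldeta+\eps}{2(1-\boldeta)}+\boldeta=\tfrac32\boldeta+\tfrac{\eps}{2}.
\]
Uniform convergence for $\Maj_7(\mcC)$ on the i.i.d.\ sample $\bS^\circ$ finishes the proof. Removals never need to be certified, so your ``main obstacle'' does not arise.
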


Our lower bound in \Cref{thm:nasty-dist-free-upper-intro} crucially relies on the distribution-independent setting, in which a single learner is required to work for any distribution $\mcD$ over unlabeled examples. Many works aim to design learners that only succeed in the \textsl{fixed-distribution} setting, in which the learner knows $\mcD$ and may specialize to it (see e.g.~\cite{Kha93,LMN93} and many other works). We show in the fixed-distribution setting improved error is possible.
\begin{theorem}[Optimal fixed-distribution learning with nasty noise, see \Cref{thm:nasty-fixed-dist-body} for the full version]
    \label{thm:nasty-fixed-dist-intro}
      For any concept class $\mcC$ with VC dimension $d$, distribution $\mcD$, and $\eps > 0$, there is an algorithm that learns $\mcC$ on distribution $\mcD$ with $\eta$-nasty noise and error at most $\eta + \eps$ using $O(\frac{d}{\eps^2})$ samples.
\end{theorem}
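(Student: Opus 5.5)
The learner will output the randomized hypothesis closest to the empirical label distribution, minimizing a worst-case distance over the concept class, with $\mcD$ used to measure distances. Since $\mcD$ is known, distances between concepts, $\mathrm{dist}_\mcD(c,c') = \Pr_{\bx\sim\mcD}[c(\bx)\neq c'(\bx)]$, are computable.

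\textbf{Step 1: a tolerant candidate set.} Draw $m = O(d/\eps^2)$ samples, a $\approx\eta$ fraction of which are corrupted. By uniform convergence over $\mcC$ and over the class of disagreement regions $\{c \neq c'\}$ (VC dimension $O(d)$), every concept $c$ has empirical error on the clean sample within $\eps$ of its true error. Because the adversary changes at most an $\eta+\eps$ fraction of points, the target satisfies $\widehat{\error}(c^\star)\le \eta+\eps$ on the corrupted sample. Let
\[
S=\{c\in\mcC:\widehat{\error}(c)\le \eta+\eps\}.
\]
Then $c^\star\in S$.

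\textbf{Step 2: every candidate is close to $c^\star$, in a weighted sense.} We want a stronger statement than the $2\eta$ triangle-inequality bound. Let $A$ be the set of corrupted points in the sample. For $c\in S$, the points of disagreement $\{c\ne c^\star\}$ that lie among clean sample points contribute to $c$'s empirical error. So
\[
\mathrm{dist}_\mcD(c,c^\star) \lesssim \widehat{\error}(c) + \text{(mass of } A \text{ on which } c \text{ agrees with the corrupted labels)}.
\]
Summing over the pair, $\mathrm{dist}(c,c^\star)\le 2\eta$ always holds. The useful structure is that the adversary spends its budget \emph{once}, on a single dataset, and that dataset must simultaneously make every $c\in S$ look good.

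\textbf{Step 3: the hypothesis.} Output a randomized $h:\mcX\to[0,1]$ minimizing
\[
\max_{c\in S}\ \mathbb{E}_{\bx\sim\mcD}\,|h(\bx)-c(\bx)|.
\]
This is a convex program over $h$ and a zero-sum game against a distribution $\mu$ over $S$. By the minimax theorem its value equals
\[
\max_\mu \min_h \mathbb{E}_{c\sim\mu}\,\mathrm{dist}(h,c),
\]
which is attained by a pointwise weighted median. Since $c^\star\in S$, the error of $h$ is at most this value, so it suffices to bound the value by $\eta+O(\eps)$.

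\textbf{Step 4 (main obstacle): bounding the value.} Fix $\mu$. The goal is to show that some $h$ satisfies $\mathbb{E}_{c\sim\mu}\mathrm{dist}(h,c)\le\eta$. The natural choice is to take $h$ equal to the empirical corrupted label distribution, pushed to $\mcD$ via the uniform-convergence structure. Pointwise,
\[
\mathbb{E}_{c\sim\mu}|h-c| = \Pr_\mu\bigl[c(x)\ne\text{label}\bigr],
\]
and averaging gives $\mathbb{E}_{c\sim\mu}\widehat{\error}(c)\le\eta+\eps$. The difficulty is that $h$ must be a function of $x$ under $\mcD$, not an empirical object: off-sample, the label distribution is undefined. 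I would handle this by replacing the corrupted sample with the induced conditional label distribution on $\mcD$. That requires either an $\eps$-cover of $\mcC$ under $\mcD$, with the game and the minimax argument run over the cover, or reweighting.

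The key inequality to establish is
\[
\min_h \mathbb{E}_\mu\,\mathrm{dist}(h,c) \le \mathbb{E}_\mu\,\mathrm{err}_{\tilde{\mcD}}(c),
\]
where $\tilde{\mcD}$ is the clean distribution perturbed by total variation at most $\eta$. This is where the $\eta$ versus $3\eta/2$ distinction arises. Knowing $\mcD$ lets us compare against the true marginal, so the adversary cannot profit from shifting the marginal. I expect this marginal-mismatch step to be the hardest. It is likely to need a transport or coupling argument showing that corrupting $x$-values costs as much budget as corrupting labels.
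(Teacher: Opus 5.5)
There is a genuine gap. It is not in the marginal-mismatch bookkeeping you flag; it is in the candidate set itself. Your filter $S=\{c:\widehat{\mathrm{err}}(c)\le\eta+\eps\}$ looks only at label disagreements on the corrupted sample and never uses $\mcD$. Knowledge of $\mcD$ enters only as the metric for the Chebyshev center. Budget the adversary spends moving $x$-values is therefore invisible to the filter, and the key inequality of Step 4 is false.

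Here is a concrete counterexample.
\begin{itemize}
\item Take $X=\{a,b_1,b_2,z\}$ with $\mcD(a)=\mcD(b_1)=\mcD(b_2)=\eta\le 1/3$ and $\mcD(z)=1-3\eta$.
\item Take $\mcC=\{c^\star,c_1,c_2\}$, which has VC dimension $1$, where $c^\star\equiv 1$, $c_1=-1$ exactly on $\{a,b_2\}$, and $c_2=-1$ exactly on $\{b_1\}$.
\item The nasty adversary corrupts exactly the sample points landing on $a$. Their number is $\mathrm{Bin}(n,\eta)$, as the model requires. It replaces each of them by $(z,1)$.
\end{itemize}
On the corrupted sample, $c^\star$ has empirical error $0$, while $c_1$ and $c_2$ have empirical error $\approx\eta$ (from $b_2$ and $b_1$ respectively). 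So all three pass your filter with high probability.

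However, $c_1$ and $c_2$ disagree on $\{a,b_1,b_2\}$, which has $\mcD$-mass $3\eta$. For $\mu$ uniform on $\{c_1,c_2\}$ this gives $\min_h\mathbb{E}_{c\sim\mu}\mathrm{dist}_\mcD(h,c)=3\eta/2$, while $\mathbb{E}_{c\sim\mu}\widehat{\mathrm{err}}(c)\approx\eta$. The value of your game is therefore $3\eta/2$, not $\eta$.

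The failure is not only in the analysis. The hypothesis that flips a fair coin on $\{a,b_1,b_2\}$ and outputs $1$ on $z$ is at distance exactly $3\eta/2$ from every candidate. It is thus one of your minimizers, and its error against the true target is $3\eta/2$. So the procedure only certifies the distribution-independent rate. No transport or coupling argument can repair this, because emptying $a$ costs the adversary $\eta$ of budget yet changes no concept's label error.

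The missing idea is a $\mcD$-aware lower bound on how many corruptions are needed to explain the sample under target $c$. The paper uses
$\eta(c,f,S,\mcD)=\Pr_{(x,y)\sim\Unif(S)}[c(x)\neq y\text{ or }f(x)=1]-\Pr_{x\sim\mcD}[f(x)=1]$,
maximized over a low-complexity detection class $\mcF$. In the example, taking $f=1$ exactly on $\{z,b_1\}$ certifies that $c_1$ would need $2\eta$ corruptions, and symmetrically for $c_2$.

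Two facts then carry the proof.
\begin{itemize}
\item For the true target, $\sup_{f\in\mcF}\eta(c^\star,f,S,\mcD)\le\eta+\eps$ with high probability. This holds because changing one sample point moves the empirical term by at most $1/n$, and $\mcF$ has small Rademacher complexity.
\item For every $\mu$, take $h=\mathrm{sign}(\bar\mu)$, and set $f_{c,\mu}=\frac{c\bar\mu-1}{c\bar\mu+1}$ where $c\bar\mu\ge 0$ and $f_{c,\mu}=-1$ otherwise. Then $\Pr_{x\sim\mcD,c\sim\mu}[h(x)\neq c(x)]\le\mathbb{E}_{c\sim\mu}\,\eta(c,f_{c,\mu},S,\mcD)$ holds identically. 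After averaging over $c$, the $\mcD$-terms cancel exactly and the $S$-terms are pointwise nonnegative, so no comparison between the sample and $\mcD$ is ever needed.
\end{itemize}
Minimax then yields a single $h$ with $\Pr_\mcD[h\neq c]\le\sup_f\eta(c,f,S,\mcD)+\eps/2$ for every $c\in\mcC$.

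In particular, if you replace your label-only filter by $\{c:\sup_{f\in\mcF}\eta(c,f,S,\mcD)\le\eta+\eps\}$, your Chebyshev-center step goes through with radius $\eta+O(\eps)$.
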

Our learner in \Cref{thm:nasty-fixed-dist-intro} does not need a full description of $\mcD$. It suffices to receive an i.i.d. (unlabeled and uncorrupted) sample from $\mcD$ containing $\tilde{O}(d/\eps^4)$ points (see \Cref{remark:access-to-dist}). Note this is slightly more than the number of labeled samples the learner needs. \Cref{thm:nasty-fixed-dist-intro} is quite related to the study of \textsl{semi-supervised learning} (see e.g. \cite{BDLP08,BB10}), in which a small portion of labeled samples are provided in addition to a larger quantity of unlabeled samples. Typically these unlabeled samples are used because they are cheaper to produce than labeled samples, but a comparison of \Cref{thm:nasty-fixed-dist-intro,thm:nasty-dist-free-upper-intro} shows they can also improve \textsl{optimal error} in robust learning provided the unlabeled samples are trustworthy.

It is an enticing open problem to achieve an error rate of $\eta$ with an algorithm that can be efficiently implemented only given an ERM oracle to $\mcC$ (note that the learner of \Cref{thm:nasty-dist-free-upper-intro} achieves $1.5\eta$ error, so beating the optimal $2\eta$ error of deterministic hypotheses is possible). One challenge is that, in all the other settings, the canonical baseline we improve upon is simply running ERM on the received dataset. However, to achieve the improved error of \Cref{thm:nasty-fixed-dist-intro}, it is necessary to construct a $\mcD$-dependent algorithm, and it's not clear what a suitable $\mcD$-dependent baseline is.

\pparagraph{The relationship between malicious and nasty noise.} Recent work \cite{BHMS26} closely examined the relationship between malicious noise and nasty noise. They showed any distribution-independent learner for $\eta$-malicious noise with error $\tau$ can be upgraded to one for $\eta$-nasty noise with error $\approx \tau(1-\eta) + \eta$. Using their result, we can derive \Cref{thm:nasty-dist-free-upper-intro} black-box from our malicious noise learner, \Cref{thm:malicious-intro}, though their reduction would lead to a worse sample complexity and more complicated algorithm. Instead, we show the same learner from \Cref{thm:malicious-intro} suffices for \Cref{thm:nasty-dist-free-upper-intro}. %

\citet{BHMS26} also asked if their error overhead could be improved. Prior to our work, it was feasible that using random hypotheses, a learner with error $\tau$ for $\eta$-malicious noise could be converted to one with error $\approx \tau(1-\eta) + \eta/2$ for $\eta$-nasty noise. Our upper bound in \Cref{thm:malicious-intro} combined with our lower bound in \Cref{thm:nasty-dist-free-upper-intro} show the answer is no: their error overhead is optimal.

\subsection{Agnostic noise and nasty classification noise}
Nasty classification noise \cite{BEK02} is a restricted variant of nasty noise where the adversary may only corrupt labels; i.e. change the point $(x,y)$ to $(x, -y)$. This is closely related to agnostic noise \cite{Hau92,KSS94} in which the adversary chooses some (possibly randomized) function $\boldf$ which is promised to be $\eta$-close to the target $c^{\star}$, and the learner is given i.i.d. examples $(\bx, \boldf(\bx))_{\bx \sim \mcD}$. Nasty classification noise can only be harder than agnostic noise so we state \Cref{thm:agnostic-intro} using this harder model (though the two models were recently shown to be equivalent up to polynomial factors in the sample and time complexity \cite{BV25}). In both cases, the optimal error of learners outputting deterministic hypotheses is $2\eta$ \cite{BEK02}.
\begin{theorem}[Optimal learning with nasty classification noise, see \Cref{thm:agnostic-body} for the formal version]
    \label{thm:agnostic-intro}
    For any concept class $\mcC$ with VC dimension $d$ and $\eps > 0$, there is an algorithm that learns $\mcC$ with $\eta$-nasty classification noise and error at most $\eta+ \eps$ using $\tilde{O}(\frac{d}{\eps^4})$ samples. Furthermore, this algorithm is efficiently implementable using $O(1/\eps^2)$ calls to an ERM oracle for $\mcC$.
\end{theorem}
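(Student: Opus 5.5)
Our plan is to exploit the connection to multiaccuracy mentioned in the abstract. We output a randomized hypothesis $p:X\to[-1,1]$, where $p(x)$ is the conditional mean label, so the error of $p$ against $c^\star$ is $\frac{1}{2}\E_{\bx\sim\mcD}[1 - p(\bx)c^\star(\bx)]$. We want this to be at most $\eta+\eps$. Write $\boldf$ for the corrupted labeling. Since nasty classification noise only flips labels, the empirical $x$-marginal is an honest sample from $\mcD$, and the labels disagree with $c^\star$ on at most an $\eta(1+o(1))$ fraction of points. In particular $\frac{1}{2}\E[1-\boldf(\bx)c^\star(\bx)]\le \eta$ on the sample. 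Our target is a $p$ that is \emph{multiaccurate} with respect to $\mcC$ on the sample, i.e.
\[
\E[(\boldf(\bx)-p(\bx))\,c(\bx)] \le \eps \quad\text{for all } c\in\mcC.
\]
Taking $c=c^\star$ then gives $\E[p\,c^\star]\ge \E[\boldf c^\star]-\eps\ge 1-2\eta-\eps$, and hence the error is at most $\eta+\eps/2$.

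To construct $p$, we run projected gradient descent (Frank--Wolfe style boosting) on the squared loss $\E[(\boldf(\bx)-p(\bx))^2]$ over functions into $[-1,1]$. We start at $p_0\equiv 0$. At each round we call the ERM oracle on the relabeled sample in which point $x$ has weight $|\boldf(x)-p_t(x)|$ and label $\mathrm{sign}(\boldf(x)-p_t(x))$. This returns $c_t\in\mcC$ approximately maximizing $\E[(\boldf-p_t)c]$. If the maximum is at most $\eps$, we stop; $p_t$ is then multiaccurate. Otherwise we set $p_{t+1}=\mathrm{clip}_{[-1,1]}(p_t+\eps c_t/2)$. The standard potential argument applies: clipping toward $[-1,1]$ never increases the pointwise squared distance to $\boldf\in\{\pm1\}$, and each step decreases the potential by $\Omega(\eps^2)$. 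Since the potential starts at $\le 1$, there are $O(1/\eps^2)$ oracle calls, which matches the claimed oracle complexity.

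The main obstacle is generalization, because the noise is adaptive. The adversary picks the corrupted labels after seeing the clean sample, so $\boldf$ is not i.i.d. We therefore avoid any generalization claim about $\boldf$ itself. We only need two things.
\begin{enumerate}
\item On the sample, $p$ correlates well with $c^\star$; this follows from the empirical multiaccuracy together with the empirical bound on the corruption fraction.
\item The empirical correlation $\E_S[p\,c^\star]$ is close to $\E_\mcD[p\,c^\star]$.
\end{enumerate}
For the second point, $p$ depends on the sample, but it lies in a structured class: clipped sums of $T=O(1/\eps^2)$ functions from $\mcC$ with fixed step sizes. Such a class has pseudo-dimension / fat-shattering at scale $\eps$ of order $\tilde O(Td)$. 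Uniform convergence for products $p\cdot c^\star$ over this class then needs $\tilde O(Td/\eps^2)=\tilde O(d/\eps^4)$ samples, which matches the stated bound. A cleaner alternative would be to bound the composed class directly by composition lemmas for VC classes. Finally, we account for the $\eta$-fraction being ``$\approx$'' via a Chernoff bound, absorbing the slack into $\eps$.
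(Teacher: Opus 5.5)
Your argument is correct, and its skeleton matches the paper's. The paper's loss in \Cref{claim:agnostic-loss} at $\alpha=0$ is exactly empirical multiaccuracy $\mathbb{E}_{S}[c(\bx)(\by-\bar{h}(\bx))]\le \eps$ on the corrupted sample. The paper likewise generalizes only the error against $c^\star$, using that $\bS$ and $\bS^{\circ}$ have the same unlabeled points. The genuine difference is how the multiaccurate predictor is produced and why it generalizes.

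The paper takes $\bar{h}=M_k\circ\mu$ with $\mu\in\Conv(\mcC)$ and $k=O(1/\eps^2)$. It finds $\mu$ with the Frank--Wolfe procedure of \Cref{thm:general-optimization}, which requires the $g$-feasibility statement \Cref{claim:agnostic-maj-low-loss} and the $O(\sqrt{k})$ Lipschitz bound of \Cref{prop:lipschitz-constant-maj}. The output $\Rad(\bar{h})$ is then a mixture over the Boolean class $\Maj_k(\mcC)$, which has VC dimension $O(dk\log k)$, so generalization is a single application of \Cref{cor:VC-random}. You instead run the standard squared-loss multiaccuracy boosting with clipping, and you generalize via the pseudo-dimension $O(dT\log T)$ of the boosted class.

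Your route is more elementary: it needs no minimax or $g$-feasibility, no Jaggi bound, and no analysis of the majority link. It also shows that generic multiaccuracy boosting already handles nasty classification noise directly, without the \cite{BV25} conversion suggested in \Cref{sec:fairness}. The paper's route fits the same framework as its malicious and nasty learners. Because its output is a mixture of Boolean hypotheses, it can use the multiplicative VC bound to get error $(1+\alpha)\eta+\eps$ with $\tilde{O}(d/(\eps(\alpha+\eps)^3))$ samples and $O(1/(\eps(\alpha+\eps)))$ oracle calls, so $\tilde{O}(d/\eps)$ samples for constant $\alpha$. Your additive uniform-convergence argument over a real-valued class does not directly provide this. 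At $\alpha=0$ both give $\tilde{O}(d/\eps^4)$ samples and $O(1/\eps^2)$ oracle calls.

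One small correction: your iterates $p_{t+1}=\mathrm{clip}(p_t+\eps c_t/2)$ are iterated clippings, not clipped sums. This is harmless, because each output is still a fixed function $F_T(c_1(x),\ldots,c_T(x))$ of $T$ concepts. The number of distinct behaviours of $p\cdot c^\star$ on any $m$ points, and hence of threshold patterns, is therefore at most $T\cdot\Pi_{\mcC}(m)^T$, where $\Pi_{\mcC}$ is the growth function of $\mcC$ and the factor $T$ accounts for the stopping time. Your pseudo-dimension estimate goes through unchanged.
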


\begin{remark}[Two definitions of error for agnostic noise]
    \label{remark:agnostic-two-defs}
     There are two ways to define the error of a hypothesis $h$ with agnostic noise: One is to measure $\error(h, c^{\star})$ where $c^{\star} \in \mcC$ is the original uncorrupted function. With this measure, randomness is necessary for optimal error and \Cref{thm:agnostic-intro} is the first optimal error agnostic learner. A second common approach is to measure $\error(h, \boldf)$ where $\boldf$ is the \textsl{corrupted} function. In this second case, ERM already achieves the optimal error of $\approx \eta$.
\end{remark}

\subsection{Broader observations and connections}
\noindent \textbf{A close connection to algorithmic fairness.} Our techniques illuminate a connection between definitions from the algorithmic fairness literature and optimal error robust learners. We show that \textsl{multiaccuracy} \cite{KGZ19} is equivalent to a slight weakening of the learner given in \Cref{thm:agnostic-intro}, and any calibrated multiaccurate hypothesis \cite{GHKRW23,CGKR25} can be post-processed into one satisfying the requirements of \Cref{thm:malicious-intro,thm:nasty-dist-free-upper-intro}, albeit with suboptimal sample and oracle complexity. We elaborate on these connections in \Cref{sec:fairness}.

\pparagraph{The necessity of improper learners.} 
If we restrict to deterministic hypotheses, ERM is optimal for all of nasty, malicious, and agnostic noise. This implies that optimal error can be achieved by a \textsl{proper} learner: One that always outputs a hypothesis in the concept class $\mcC$. The analogous notion of properness for learners outputting a randomized hypothesis is to require that hypothesis to be a mixture of functions contained within $\mcC$. However, we show that such learners cannot achieve optimal error.

\begin{restatable}[Improper learners are necessary]{claim}{improper}
    \label{claim:improper}
    None of the error guarantees of \Cref{thm:malicious-intro,thm:nasty-dist-free-upper-intro,thm:nasty-fixed-dist-intro,thm:agnostic-intro} are achievable by algorithms that are forced to output hypotheses that are mixtures of functions within $\mcC$, even when the VC dimension of $\mcC$ is $1$, the excess error $\eps$ is $\Omega(1)$, and the algorithm receives an unbounded number of samples.
\end{restatable}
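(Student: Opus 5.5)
The plan is to prove \Cref{claim:improper} by a single family of small counterexamples: one concept class of VC dimension $1$, one finite domain, and for each noise model a fixed adversary strategy. I would take $\mcC$ to be the class of thresholds on a totally ordered domain $x_1 < x_2 < \cdots < x_k$ (with $k$ a small constant, say $3$ or $4$), which has VC dimension $1$. The reason for this choice is structural: a mixture of thresholds is exactly a \emph{monotone} $[0,1]$-valued function on the domain, and conversely every monotone $[0,1]$-valued function is such a mixture. So the claim reduces to showing that, in each model, the optimal randomized hypothesis is forced to be non-monotone, and that the best monotone hypothesis is worse by a constant.

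The lower-bound template is the standard indistinguishability argument already used by \cite{BEK02} and recalled in the introduction. The adversary fixes a (possibly randomized) labeled distribution $\mcDc$ that it can produce from each of several targets $c_1,\dots,c_m \in \mcC$ (with the appropriate data distributions). Then even with unboundedly many samples, the learner's output $h$ is a function of $\mcDc$ alone. Its worst-case error is therefore at least
\[
\min_{h \text{ monotone}} \max_{i} \error(h, c_i),
\]
the value of a finite game. By LP duality, this is at least $\min_h \sum_i \lambda_i \error(h,c_i)$ for any weights $\lambda$. Note that the error is linear in $h$ pointwise, so the minimization over monotone $h$ is a small linear program over the polytope of monotone functions.

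To obtain a lower bound of the form ``target error $+\Omega(1)$'' that survives the excess-error slack, I would choose the weights $\lambda$ and the adversary's labels so that the pointwise minimizer is non-monotone. Concretely, the unconstrained weighted-error minimizer wants a value near $1$ at some $x_j$ and near $0$ at some $x_{j'}$ with $j' > j$. For instance, the adversary places a noisy-label ``bump'' in the middle of the domain so that the corrupted labels look like $0,1,0,1$ in expectation. Projecting onto monotone functions then costs a constant times $\eta$, with $\eta$ itself a constant such as $0.1$, so the gap is $\Omega(1)$.

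The steps, in order, are as follows.
\begin{enumerate}
\item Agnostic and nasty classification noise (\Cref{thm:agnostic-intro}). Here the adversary only needs $\error(\boldf, c_i) \le \eta$ for each $i$. I would exhibit $\boldf$ and the distribution $\mcD$, and compute that the monotone minimax value exceeds $\eta$ by a constant.
\item Fixed-distribution nasty noise (\Cref{thm:nasty-fixed-dist-intro}). Since nasty noise can simulate label-only corruption, the same instance gives a value above $\eta$, even with $\mcD$ known.
\item Malicious noise and distribution-independent nasty noise (\Cref{thm:malicious-intro,thm:nasty-dist-free-upper-intro}). Here I would redo the construction with the adversary also allowed to move marginal mass. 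I would check numerically in the LP that the monotone value exceeds $\frac12\cdot\frac{\eta}{1-\eta}$ and $\frac32\eta$, respectively.
\end{enumerate}

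The main obstacle is step 1, namely choosing the instance. My own quick attempts with singleton-type classes showed that mixtures can actually do \emph{as well as or better than} the naive improper hypothesis $h=\boldf$. So the instance must be tuned so that the adversary's optimal play genuinely forces non-monotonicity. I would first try a 4-point line with two threshold targets $c_1 = \mathbb{1}[x \ge x_2]$ and $c_2 = \mathbb{1}[x \ge x_4]$. I would let the adversary flip labels so that the disagreement region $\{x_2, x_3\}$ looks like ``$1, 0$'' in one part and the region outside it looks non-monotone in the other. Then I would solve the resulting 4-variable LP explicitly, verifying both that an improper $h$ attains error $\eta$ and that every monotone $h$ has error at least $(1+c)\eta$. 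If the two-target version fails, I would add a third threshold target, which gives more LP constraints, and search for the right weights $\lambda$ via the dual.
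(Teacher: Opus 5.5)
There is a genuine gap, and it lies in your choice of class rather than in the LP you postpone: thresholds on a line cannot witness \Cref{claim:improper} in any of the four models. Thresholds are closed under majority, since $\sign(c_{\theta_1}+\cdots+c_{\theta_k})$ is the threshold at the median of the $\theta_i$. Hence $\Maj_k(\mcC)=\mcC$, and every learner in the paper is already proper on this class:
\begin{itemize}
\item the malicious (and distribution-independent nasty) learner outputs $\Rad(g\circ\mu)$, with $\mu\in\Conv(\mcC)$ nondecreasing and $g$ nondecreasing;
\item the agnostic learner outputs $\Rad(M_k\circ\mu)$, which is again monotone;
\item in the fixed-distribution nasty argument, $\sign(\bar\mu)$ is itself a threshold, so the minimax step can be run over $\Conv(\mcC)$.
\end{itemize}
By your own observation that monotone functions are exactly mixtures of thresholds, mixtures of thresholds therefore attain $\frac12\cdot\frac{\eta}{1-\eta}+\eps$, $\frac32\eta+\eps$, $\eta+\eps$ and $\eta+\eps$ in the four models. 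For agnostic noise you can also check this directly: the isotonic regression of $\Ex[\by\mid\bx]$ has residual of total mass zero and nonpositive suffix sums, so it is exactly multiaccurate with respect to thresholds, and \Cref{claim:multiaccuracy-equiv} gives error at most $\eta$. The ``forced non-monotonicity'' you are looking for does not exist. No choice of targets, weights $\lambda$, or extra threshold will give an $\Omega(1)$ gap. Separately, the proposal never commits to an instance or computes a bound, so the content of the claim is left open.

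What you need is a VC-dimension-$1$ class that is not closed under majority, so that the good improper hypothesis lies outside $\Conv(\mcC)$. The paper uses $\mcS_k$, the exactly-$1$-sparse functions on $[k]$. Excluding the all-$(-1)$ function is essential; if your singleton experiments allowed it, that would explain why they failed, since $\boldf\equiv-1$ is then proper. The argument runs as follows.
\begin{itemize}
\item Feed the learner examples $(\bx,-1)$ with $\bx\sim\Unif([k])$. Its averaged output is $\sum_j w_jc_j$ with $\sum_j w_j=1$, so some $w_{i^\star}\le 1/k$, and the adversary makes $c_{i^\star}$ the target.
\item Malicious noise, $\eta=1/k$: take $\mcD_x=\Unif([k]\setminus\{i^\star\})$ and corrupt every point to $(i^\star,-1)$. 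The error is $(1-w_{i^\star})/(k-1)\ge\eta$, versus the achievable $\frac{\eta}{2(1-\eta)}$.
\item Agnostic noise over $\Unif([k])$ with $\boldf\equiv-1$: the error is $2\eta(1-w_{i^\star})\ge 2\eta(1-\eta)$. This exceeds $\eta$, and exceeds $\frac32\eta$ once $k\ge5$.
\item The agnostic bound holds over a fixed distribution, and nasty classification noise and nasty noise are at least as hard as agnostic noise. So it covers \Cref{thm:nasty-dist-free-upper-intro,thm:nasty-fixed-dist-intro,thm:agnostic-intro} at once, with constant gaps for any constant $k\ge5$.
\end{itemize}
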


Our learners output hypotheses that are mixtures of functions within $\Maj_k(\mcC)$, i.e. can be formed by taking the majority of $k$ functions within $\mcC$. The learner of \Cref{thm:malicious-intro,thm:nasty-dist-free-upper-intro} uses a constant $k$ (we show $k = 7$ suffices), whereas those of \Cref{thm:nasty-fixed-dist-intro,thm:agnostic-intro} set $k = O(1/\eps^2)$.

\pparagraph{The necessity of distinct learners.} A further implication of the optimality of ERM is that, when restricting to deterministic hypotheses, there is one algorithm that is optimal for all of nasty, malicious, and agnostic noise.
In contrast, not all of our learners are the same. We show this is necessary.
\begin{restatable}[Distinct learners are necessary]{claim}{distinct}
    \label{claim:distinct}
    There is no one algorithm achieving the error guarantee of both \Cref{thm:malicious-intro} and \Cref{thm:agnostic-intro} even when the VC dimension of $\mcC$ is $1$, the excess error $\eps$ is $\Omega(1)$, and the algorithm is given an unbounded number of samples.
\end{restatable}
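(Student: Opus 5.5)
We prove \Cref{claim:distinct} with a single instance on which the two guarantees force incompatible outputs. The idea is to exhibit two noisy sample distributions that are \emph{identical} as distributions over labeled examples. One arises from $\eta$-malicious noise, where \Cref{thm:malicious-intro} demands error about $\frac12\cdot\frac{\eta}{1-\eta}$. The other arises from $\eta'$-nasty classification (or agnostic) noise, where \Cref{thm:agnostic-intro} demands error about $\eta'$. Since the algorithm sees only samples, it must output the same randomized hypothesis on both, even with unbounded samples. We then show that no hypothesis meets both error bounds with a constant gap, so $\eps=\Omega(1)$ suffices.

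\textbf{Construction.} Take a domain $\{x_0,x_1\}$ and $\mcC=\{c_0,c_1\}$, where $c_0\equiv 0$ (i.e.\ $-1$) and $c_1$ labels $x_1$ positively and $x_0$ negatively; the VC dimension is $1$. In the malicious scenario, the target is $c_0$ and the clean marginal $\mcD$ puts mass $1$ on $x_0$. The adversary injects, with probability $\eta$, the point $(x_1,+1)$. The observed distribution is then $(x_0,-1)$ with probability $1-\eta$ and $(x_1,+1)$ with probability $\eta$. The error is measured on $\mcD$, i.e.\ only at $x_0$. \Cref{thm:malicious-intro} is therefore easy to satisfy here, since the learner can label $x_0$ negatively. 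To get a genuine conflict, I would instead mix targets: choose several malicious scenarios, together with an agnostic scenario, all producing the same observed distribution $P$. A cleaner approach is to consider the family of all scenarios consistent with a fixed $P$ under each model, compute the set $H_{\mathrm{mal}}$ of randomized hypotheses $h:\{x_0,x_1\}\to[0,1]$ meeting the malicious bound for every consistent scenario, compute the analogous set $H_{\mathrm{ag}}$, and show $H_{\mathrm{mal}}\cap H_{\mathrm{ag}}=\emptyset$. Because a hypothesis here is just two numbers $(h(x_0),h(x_1))$, both sets are described by finitely many linear inequalities in these two numbers.

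\textbf{Key steps.} First, parametrize $P$ by its weights on $x_0,x_1$ and its label biases. Second, enumerate the malicious explanations: a target $c\in\mcC$, a clean marginal $\mcD$, and an adversarial part of mass $\eta$ with $P=(1-\eta)(\mcD,c)+\eta Q$. Each explanation yields a constraint $\error_{\mcD}(h,c)\le \frac12\frac{\eta}{1-\eta}+\eps$. Third, enumerate the agnostic explanations: the marginal equals that of $P$, and the target is any $c$ with $\Pr_P[y\ne c(x)]\le\eta'$. Each yields $\error_{P_x}(h,c)\le\eta'+\eps$. The intuition for the conflict is as follows. In the agnostic model, the marginal is trusted, so a heavy point $x_1$ with a mixed label must be given a hedged label near $1/2$. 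In the malicious model, the explanation in which all of $x_1$ is adversarial has a clean marginal concentrated on $x_0$, while the explanation in which $x_1$ is clean and its weight is inflated forces $h(x_1)$ toward the clean label. Choosing weights so that $x_1$ has mass slightly above $\eta$ with labels split, the malicious constraints should force $h(x_1)$ near the majority label of $x_1$, and the agnostic constraints should force it near $1/2$.

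\textbf{Main obstacle.} The hardest step is selecting the parameters ($\eta$, $\eta'$, and the masses and biases of $P$) so that the two linear systems are provably disjoint with a constant margin, which then absorbs $\eps=\Omega(1)$. I would first try $\eta'=\eta$, with $P$ placing mass $1-2\eta$ on $(x_0,-1)$, $\eta$ on $(x_1,+1)$ and $\eta$ on $(x_1,-1)$. I would solve the resulting two-variable LPs explicitly and adjust the masses if the feasible regions touch. If two points prove insufficient, I would add a third point to separate the roles of the heavy point and the clean mass.
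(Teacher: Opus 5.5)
Your framework (reduce to linear constraints on $q=\Pr[h(x_1)=+1]$ over the same two-concept class) is fine, but the proposal never exhibits an instance where the two constraint sets are disjoint, and that is the whole claim. Worse, your suggested first try fails for a structural reason, not because of a bad choice of masses. With a common rate $\eta'=\eta$, a target is feasible in the malicious model exactly when it is feasible in the agnostic model: both require $\Pr_P[y\ne c(x)]\le\eta$. If only one concept is feasible, outputting it has zero error. If both are feasible, then $u=P(x_1,+1)\le\eta$ and $v=P(x_1,-1)\le\eta$. Take $h(x_0)=-1$ and $q=1/2$. Then the malicious constraints $\min(u,1-\eta)(1-q)\le\eta/2$ and $\min(v,1-\eta)\,q\le\eta/2$ hold. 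So do the agnostic constraints $(u+v)(1-q)\le\eta$ and $(u+v)\,q\le\eta$. Hence no common-rate instance on your class separates the guarantees. In your specific $P$, the fair coin at $x_1$ meets all four constraints with equality. Your intuition is also off: in the working construction the agnostic side does not push $h(x_1)$ toward $1/2$.

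The missing idea is to explain one observed distribution with \emph{two different} noise rates. This is legitimate because the learner sees only the sample and is not told $\eta$. Put all mass on $x_1$, with label $+1$ with probability $2/3$ and $-1$ with probability $1/3$; this is the paper's construction at $p=1$.
\begin{itemize}
\item Read it as $\eta=1/3$ malicious noise with target $c_1$: the clean marginal sits on $x_1$ and the adversary always injects $(x_1,-1)$. The bound $\frac12\cdot\frac{\eta}{1-\eta}=\frac14$ forces $q\ge 3/4-\varepsilon$.
\item Read it as $\eta'=2/3$ agnostic noise with target $c_0\equiv-1$, where the adversary's $f$ outputs $+1$ at $x_1$ with probability $2/3$. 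The bound forces $q\le 2/3+\varepsilon$.
\end{itemize}
These are incompatible for $\varepsilon<1/24$. The malicious guarantee demands more confidence than the empirical label frequency. The agnostic guarantee, against the minority-label target at the larger rate, forbids exactly that. To make this rigorous, define $q$ by averaging the algorithm's output over samples and internal randomness. This turns the high-probability guarantees into expected-error bounds at an additive cost of $\delta$.
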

In spite of \Cref{claim:distinct}, it is possible to design a single learner that improves upon the optimal deterministic learner for all three noise models. In particular, the learner of \Cref{thm:malicious-intro} achieves optimal error for malicious noise and $\frac{3\eta}{2}$ error nasty noise (by \Cref{thm:nasty-dist-free-upper-intro}) which translates to $\frac{3\eta}{2}$ error in the easier agnostic/nasty classification noise setting. All of these improve upon what is possible with deterministic classifiers.

\section{Technical overview}
Our learners are designed through a unified framework. We first convert the desired accuracy guarantee to a linear loss function. Then, using the minimax lemma, show there exists a single hypothesis that has good loss for all possible target functions. Finally, for those algorithms efficiently implementable with an ERM oracle, we use a general optimization procedure \Cref{thm:general-optimization} to efficiently search for such a hypothesis.

In this section, we will describe two representative proofs. First, we describe our malicious noise learner (\Cref{thm:malicious-intro}) which illustrates these three steps cleanly. Second, we will describe our fixed-distribution nasty noise learner (\Cref{thm:nasty-fixed-dist-intro}) as that requires a particularly tricky analysis.

\subsection{Learning with malicious noise}
\label{subsec:mal-overview}
\noindent\textbf{Step 1: Define a linear loss function.}
It will typically be convenient to work with a real-valued predictor, $\bar{h}:X \to [-1,1]$. At the end, we output its randomization $\bh \coloneqq \Rad(\bar{h})$,
\begin{equation*}
    \bh(x) = \begin{cases}
    1&\text{with probability }\frac{1 + \bar{h}(x)}{2}\\
    -1&\text{otherwise.}
    \end{cases}
\end{equation*}
Given a dataset $S$, we first define a loss function $\ell_S(c,\bar{h})$ such that if $\ell_S(c, \bar{h})$ is small, then $\bh$ will have the desired accuracy in the case where the target function $c^{\star}$ is $c$.

In the case of malicious noise, we receive a dataset $\bS = ((\bx_1, \by_1),...,(\bx_n, \by_n))$. If the noise rate were $\eta$, then there is an uncorrupted index set $I \subseteq [n]$ of size roughly $(1-\eta)n$, where $\bS_I$ are i.i.d. samples of the form $(\bx, c^{\star}(\bx))_{\bx \sim \mcD}$. A standard generalization argument gives that
\begin{equation}
    \label{eq:mal-generalization}
    \Prx_{\bx \sim \mcD, \bh}[\bh(\bx) \neq c^{\star}(\bx)] \approx \Prx_{\bi \sim \Unif(I), \bh}[\bh(\bx_{\bi}) \neq \by_{\bi}].
\end{equation}
Therefore, our goal is to learn a hypothesis $\bh$ satisfying
\begin{equation}
    \label{eq:mal-desired-accuracy}
    \Prx_{\bi \sim \Unif(I), \bh}[\bh(\bx_{\bi}) \neq \by_{\bi}] \leq \frac{\eta}{2(1-\eta)}.
\end{equation}
The challenge in establishing the above inequality is that we do not know what the uncorrupted indices $I$ are. Still, we show there is a loss that is only small if $\bh$ simultaneously has small error on \textsl{all} $I$ that are consistent with some $c \in \mcC$.
\begin{claim}[Malicious error can be linearized]
    \label{lem:mal-error-loss-overview}
    For any dataset $S \in (X \times \bits)^n$, let $\bar{h}:X \to [-1,1]$ be any function satisfying, for all $c \in \mcC$,
    \begin{equation}
        \label{eq:mal-loss-intro}
        \ell_S(c, \bar{h}) \leq 0 \quad\quad\text{where } \ell_S(c, \bar{h}) \coloneqq\Ex_{\bx,\by \sim \Unif(S)}\bracket*{c(\bx)\by \cdot \paren[\big]{2 - \bar{h}(\bx)\by} -\bar{h}(\bx)\by}.
    \end{equation}
    Then, the randomized function $\bh = \Rad(\bar{h})$ satisfies \Cref{eq:mal-desired-accuracy} simultaneously for all $\eta$ and $I$ of size $(1-\eta)n$ for which there exists some $c \in \mcC$ that is fully consistent with $S_I$.
\end{claim}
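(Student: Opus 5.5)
The plan is a direct pointwise computation. I would apply the hypothesis $\ell_S(c,\bar{h}) \le 0$ with the particular $c \in \mcC$ that is consistent with $S_I$, and then bound the summand of $\ell_S$ separately on $I$ and on $[n]\setminus I$. No minimax or generalization argument should be needed: the claim is purely about the fixed dataset $S$.

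\textbf{Step 1 (expected error as a linear function).} For a single labeled point $(x,y)$, the randomized hypothesis $\bh = \Rad(\bar{h})$ satisfies
\[
\Pr[\bh(x) \ne y] = \frac{1 - \bar{h}(x)y}{2}.
\]
Write $e_i \coloneqq (1-\bar{h}(x_i)y_i)/2 \in [0,1]$. The target quantity in \Cref{eq:mal-desired-accuracy} is then $\frac{1}{|I|}\sum_{i\in I} e_i$.

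\textbf{Step 2 (bound the summand pointwise).} Fix $\eta$, $I$ with $|I| = (1-\eta)n$, and $c\in\mcC$ consistent with $S_I$. Let
\[
g_i \coloneqq c(x_i)y_i\,(2-\bar{h}(x_i)y_i) - \bar{h}(x_i)y_i .
\]
\begin{itemize}
\item For $i \in I$ we have $c(x_i)y_i = 1$, so $g_i = 2 - 2\bar{h}(x_i)y_i = 4e_i$.
\item For $i \notin I$, $c(x_i)y_i \in \{\pm 1\}$. If it equals $1$, then again $g_i = 4e_i \ge 0$. If it equals $-1$, then $g_i = -2 + \bar{h}y - \bar{h}y = -2$.
\end{itemize}
In every case $g_i \ge -2$ for $i \notin I$.

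\textbf{Step 3 (sum and rearrange).} Summing over all indices,
\[
0 \ge n\,\ell_S(c,\bar{h}) = \sum_i g_i \ge 4\sum_{i\in I} e_i - 2(n-|I|) = 4(1-\eta)n\cdot\Ex_{\bi\sim\Unif(I)}[e_{\bi}] - 2\eta n .
\]
Rearranging gives
\[
\Ex_{\bi\sim \Unif(I)}[e_{\bi}] \le \frac{2\eta}{4(1-\eta)} = \frac{\eta}{2(1-\eta)},
\]
which is exactly \Cref{eq:mal-desired-accuracy}. Since $\eta$, $I$ and $c$ were arbitrary among consistent choices, the bound holds simultaneously for all of them.

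There is no real obstacle here. The only point needing care is the observation that a corrupted point can contribute at worst $-2$, independently of $\bar{h}$. That is precisely why the loss is designed so that the $\bar{h}$-terms cancel when $c(x)y = -1$, and it is what yields the factor $\frac{1}{2}$.
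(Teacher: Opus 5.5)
Your proof is correct and follows the paper's route: split $\ell_S(c,\bar{h})$ over $I$ and its complement, and use $c(x)y=1$ on $S_I$ to obtain $2-2\bar{h}(x)y$ there. Then lower bound every remaining summand by $-2$ and rearrange using $\Pr[\bh(x)\neq y]=\frac{1-\bar{h}(x)y}{2}$. The paper's version carries slack ($\ell_S\le\eps$ rather than $\le 0$), and with $\eps=0$ it is exactly your computation.
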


\pparagraph{Step 2: Use the minimax lemma to establish the existence of a low complexity hypothesis with small loss.}
Next, to ensure that the generalization step of \Cref{eq:mal-generalization} holds with a small sample size, we want to establish the existence of ``low complexity" $\bar{h}$ with small loss. In the case of malicious noise, we will always choose a $\bar{h}$ that is a convex combination of functions in $\Maj_7(\mcC)$, each formed by taking the majority of at most $7$ functions within $\mcC$. This class has the same VC dimension as $\mcC$ up to constants (see \Cref{fact:maj-VC}), and so $\approx \VC(\mcC)$ samples suffice. To establish the existence of such an $\bar{h}$, we apply the minimax lemma:
\begin{equation*}
    \min_{\bar{h} \in \Conv(\Maj_k(\mcC))} \set[\Big]{\max_{c \in \mcC} \set*{\ell_S(c,\bar{h})}}= \max_{\text{distribution }\mu \text{ on } \mcC}\set[\Big]{ \min_{\bar{h} \in \Conv(\Maj_k(\mcC))} \set[\Big]{\Ex_{\bc \sim \mu}\bracket*{\ell_S(\bc, \bar{h})}}}.
\end{equation*}
Hence, our task becomes to show that for any distribution $\mu$ over concepts, there is a way to choose $\bar{h}$ so that the expectation of $\ell_S(\bc, \bar{h})$ over $\bc \sim \mu$ is nonpositive. On a technical level, this makes our job substantially easier since we can choose $\bar{h}$ as a function of $\mu$. Indeed, the following becomes quite easy to show.
\begin{claim}[A good hypothesis exists for any distribution of concepts]
    \label{lem:intro-g-for-dist}
    Let $g:[-1,1] \to [-1,1]$ be any function satisfying, for all $t \in [-1,1]$,
    \begin{equation}
        \label{eq:g-constraints-mal-intro}
        \frac{2t}{1+t} \leq g(t) \leq \frac{2t}{1-t}.
    \end{equation}
    Then, for any distribution $\mu$ over functions and dataset $S$
    \begin{equation*}
        \Ex_{\bc \sim \mu}\bracket*{\ell_S(\bc, \bar{h})} \leq 0 \quad\quad\text{where}\quad \bar{h}(x) = g\paren*{\Ex_{\bc \sim \mu}[\bc(x)]}.
    \end{equation*}
\end{claim}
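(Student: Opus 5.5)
The plan is to exploit the fact that $\ell_S(c,\bar h)$ is \emph{linear in $c$} for a fixed $\bar h$. Once the expectation over $\bc \sim \mu$ is pushed inside, the claim reduces to a pointwise inequality for each labeled example in $S$, and that inequality is exactly what the two constraints in \Cref{eq:g-constraints-mal-intro} encode.

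Write $p(x) \coloneqq \Ex_{\bc\sim\mu}[\bc(x)] \in [-1,1]$, so that $\bar h(x) = g(p(x))$. Since $\bar h$ does not depend on $\bc$, linearity of expectation (swapping the expectation over $\mu$ with the finite average over $S$) gives
\[
\Ex_{\bc\sim\mu}\bracket*{\ell_S(\bc,\bar h)} = \Ex_{\bx,\by\sim \Unif(S)}\bracket*{p(\bx)\by\paren[\big]{2-\bar h(\bx)\by} - \bar h(\bx)\by}.
\]
It therefore suffices to show that for every $t\in[-1,1]$, every $u = g(t)$, and every $y\in\bits$,
\[
\phi(t,u,y) \coloneqq ty(2-uy) - uy = 2ty - tu - uy \le 0,
\]
where the second form uses $y^2=1$.

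I would then split into the two cases for $y$.
\begin{itemize}
\item For $y=1$, $\phi = 2t - u(1+t)$. Since $1+t\ge 0$, this is nonpositive whenever $1+t>0$, because then $u \ge 2t/(1+t)$ is exactly the lower constraint on $g$.
\item For $y=-1$, $\phi = u(1-t) - 2t$. Since $1-t \ge 0$, this is nonpositive whenever $1-t>0$, because then $u \le 2t/(1-t)$ is exactly the upper constraint.
\end{itemize}
Averaging these nonpositive pointwise quantities over $S$ yields the claim.

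The only delicate point is the endpoints $t=\pm1$, where one of the bounds in \Cref{eq:g-constraints-mal-intro} has a zero denominator. There I would read the constraint as vacuous ($\pm\infty$) and check directly:
\begin{itemize}
\item At $t=-1$ with $y=1$, $\phi = -2 \le 0$.
\item At $t=1$ with $y=-1$, $\phi = -2 \le 0$.
\end{itemize}
In both cases the inequality holds regardless of $u$. In the other two endpoint combinations the relevant denominator equals $2$, so the earlier case analysis applies unchanged. Nothing beyond this bookkeeping seems to be an obstacle.
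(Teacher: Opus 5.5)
Your proposal is correct and matches the paper's proof: push the expectation over $\mu$ inside by linearity, reduce to the pointwise inequality $2\bar\mu y - g(\bar\mu)(\bar\mu + y) \le 0$, and check $y = \pm 1$ against the lower and upper constraints respectively. Your explicit handling of the endpoints $t = \pm 1$, where one constraint has a zero denominator, is a small bit of extra care that the paper leaves implicit.
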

Finally, we need to pick $g$ satisfying the constraints of \Cref{eq:g-constraints-mal-intro} so that, for any distribution $\mu$ over $\mcC$, the function $g\paren*{\Ex_{\bc \sim \mu}[\bc(x)]}$ is in $\Conv(\Maj_k(\mcC))$. We show this is possible for $k = 7$ (see also \Cref{fig:g-constraints-mal-intro} for a visualization of these constraints and our specific choice of $g$).

At this point, we have a sample efficient but time inefficient learner: Brute force search for any $\bar{h} \in \Conv(\Maj_k(\mcC))$ satisfying $\max_{c \in \mcC} \set*{\ell_S(c,\bar{h})} \leq 0$ and output $\bh = \Rad(\bar{h})$.

\pparagraph{Step 3: Using an ERM oracle to make the learner efficient.}
We first observe that for any candidate hypothesis $\bar{h}$, an ERM oracle allows us to easily estimate $\max_{c \in \mcC} \ell_S(c, \bar{h})$. This is because the $c$ with maximal loss is that which minimizes a \textsl{weighted} empirical risk minimization problem,
\begin{equation*}
    \argmax_{c \in \mcC}  \ell_S(c, \bar{h}) = \argmin_{c \in \mcC} \sum_{i \in [n]} w_i \cdot \Ind[c(x_i) \neq y_i] \quad\quad\text{where }w_i = \overline{h}(x_i)y_i - 2.
\end{equation*}
It is quite straightforward to convert a weighted ERM problem to a standard unweighted problem \Cref{fact:weighted-ERM}. Therefore, we have an efficient procedure to check whether a candidate hypothesis $\bar{h}$ already has small loss, and if it does not, return a particular $c \in \mcC$ for which it has large loss.

Given this, we show an extremely simple iterative algorithm (see  \Cref{fig:MalLearner}) successfully finds a hypothesis with small loss. This algorithm maintains a distribution $\mu$ over $\mcC$ and hypothesis $\bar{h} = g(\mu)$ (where $g$ is as in \Cref{lem:intro-g-for-dist}). At each step, if there is some $c$ for which $\ell_S(c, \bar{h})$ is large, it updates $\mu$ in the direction of $c$.

\begin{figure}[H]
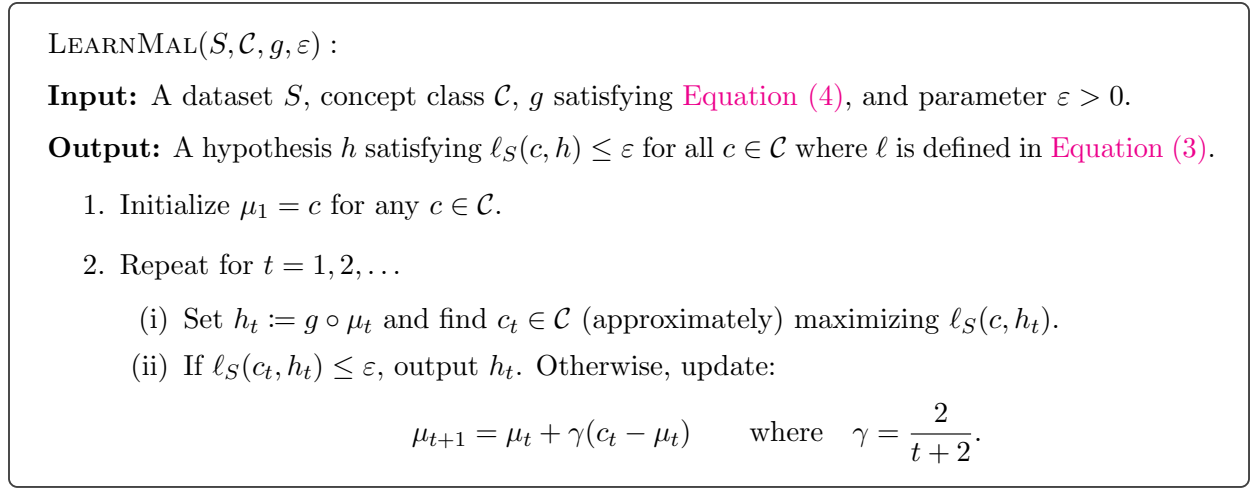
 
  \captionsetup{width=.9\linewidth}
    
    \begin{tcolorbox}[colback = white,arc=1mm, boxrule=0.25mm]
    \vspace{2pt} 
    \textsc{LearnMal}$(S, \mcC, g, \eps):$\vspace{6pt}
    
    \textbf{Input:} A dataset $S$, concept class $\mcC$, $g$ satisfying \Cref{eq:g-constraints-mal-intro}, and parameter $\eps>0$.\vspace{6pt}

    \textbf{Output:} A hypothesis $h$ satisfying  $\ell_S(c, h) \leq \eps$ for all $c \in \mcC$ where $\ell$ is defined in \Cref{eq:mal-loss-intro}.\vspace{-4pt}

    \begin{enumerate}
            \item Initialize $\mu_1 = c$ for any $c \in \mcC$.
            \item Repeat for $t = 1,2,\ldots$ \vspace{-2pt}
            \begin{enumerate}
                \item[(i)] Set $h_t \coloneqq g \circ \mu_t$ and find $c_t \in \mcC$ (approximately) maximizing $\ell_S(c, h_t)$.
                \item[(ii)] If $\ell_S(c_t, h_t) \leq \eps$, output $h_t$. Otherwise, update: \vspace{-4pt}
                \begin{equation*}
                    \mu_{t+1} = \mu_t + \gamma(c_t - \mu_t) \quad\quad\text{where}\quad\gamma = \frac{2}{t+2}.
                \end{equation*}
            \end{enumerate}
    \end{enumerate}
    \end{tcolorbox}
\caption{Our malicious noise learner.}
\label{fig:MalLearner}
\end{figure}

A priori, it is not even clear that there should exist a single choice of $\mu^*$ for which $\ell(c, g \circ \mu^*)$ is small for \textsl{all} $c \in \mcC$: The minimax lemma in the preceding step only establishes there exists, roughly speaking, a distribution over $\bmu$ so that $\overline{h} \coloneqq \Ex_{\bmu}[g \circ \bmu]$ that has small loss. Yet, the learner in \Cref{fig:MalLearner} successfully converges to such a single $\mu^{\star}$.
\begin{theorem}[Special case of \Cref{thm:general-optimization}]
    \label{thm:mal-opt-intro}
    The algorithm in \Cref{fig:MalLearner} converges in $O(1/\eps)$ iterations.
\end{theorem}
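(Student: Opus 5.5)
The plan is to read \textsc{LearnMal} as the Frank--Wolfe (conditional gradient) method for maximizing a concave potential over $\Conv(\mcC)$, with $\ell_S(c_t,h_t)$ serving as the duality gap. Since $\by^2=1$, we have $\ell_S(c,\bar h)=\Ex_{S}\bracket*{c(\bx)\paren*{2\by-\bar h(\bx)} - \bar h(\bx)\by}$. This is affine in $c$, and it extends to any $\mu\in\Conv(\mcC)$, viewed as the function $x\mapsto \Ex_{\bc\sim\mu}[\bc(x)]$.

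\textbf{Step 1: the potential.} Let $G(t)=\int_0^t g(s)\,ds$ and define
\[
\Phi(\mu)\coloneqq \Ex_{(\bx,\by)\sim\Unif(S)}\bracket*{2\by\,\mu(\bx)-G(\mu(\bx))}.
\]
Because $|\mu|\le 1$ and $|g|\le 1$, $\Phi$ takes values in $[-3,3]$. Its directional derivative at $\mu$ in direction $c-\mu$ is
\[
\Ex_S\bracket*{(c-\mu)(\bx)\paren*{2\by-g(\mu(\bx))}}=\ell_S(c,g\circ\mu)-\ell_S(\mu,g\circ\mu).
\]
By \Cref{lem:intro-g-for-dist}, applied to $\mu$ itself, which holds pointwise in $x$, we have $\ell_S(\mu,g\circ\mu)\le 0$. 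Hence the Frank--Wolfe gap $\langle\nabla\Phi(\mu_t),c_t-\mu_t\rangle$ is at least $\ell_S(c_t,h_t)$. So whenever the algorithm does not stop, the gap exceeds $\eps$, up to the additive slack of the approximate maximization oracle, which I would carry through as an $\eps/2$ term.

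\textbf{Step 2: concavity and smoothness.} I will choose $g$ nondecreasing and $L$-Lipschitz on $[-1,1]$, consistent with \Cref{eq:g-constraints-mal-intro}. Then $-G''=-g'\le 0$, so $\Phi$ is concave. It is also $L$-smooth with respect to the empirical $L_2$ (indeed sup) norm, and the feasible set has diameter at most $2$. These are exactly the hypotheses of the standard Frank--Wolfe analysis with step size $\gamma_t=2/(t+2)$. The primal gap satisfies $\Phi^\star-\Phi(\mu_t)\le O(L)/(t+2)$. Moreover, the Jaggi-style argument shows that $\min_{t\le T}\langle\nabla\Phi(\mu_t),c_t-\mu_t\rangle\le O(L)/T$. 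This follows from summing the per-step increments $\Phi(\mu_{t+1})-\Phi(\mu_t)\ge\gamma_t\,\mathrm{gap}_t-2L\gamma_t^2$ over the second half of the run, $t\in[T/2,T]$, where the primal gap is already $O(L/T)$. Consequently, some iteration $t=O(L/\eps)$ has gap at most $\eps$, and at that iteration $\ell_S(c_t,h_t)\le\eps$, so the algorithm halts.

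\textbf{Main obstacle.} The step I expect to be hardest is ensuring that the particular $g$ used, which is the one needed so that $g\circ\mu\in\Conv(\Maj_7(\mcC))$, is simultaneously monotone and $O(1)$-Lipschitz. Without monotonicity, $\Phi$ is not concave. The crude analysis, in which each bad step gains $\gamma_t\eps-2L\gamma_t^2$, then only gives $\sum_t\gamma_t\lesssim 1/\eps$, which yields $T=\exp(O(1/\eps))$ iterations. To avoid this, I would first check that the polynomial $g$ arising from $\Maj_7$ (a majority polynomial in $t$, which is monotone with bounded derivative on $[-1,1]$) fits between the curves $2t/(1+t)$ and $2t/(1-t)$. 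The binding constraints are near $t=0$, where both bounds have slope $2$, which forces $g'(0)=2$; this constant is absorbed into $L$. If that check fails, the fallback is to run the argument with any Lipschitz monotone $g'$ satisfying the constraints. One would then note that the convergence proof never uses the $\Maj_7$ property, which matters only for generalization.
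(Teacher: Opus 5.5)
Your proposal is correct and follows essentially the same route as the paper. Your concave potential $\Phi$ is exactly $-P$ from \Cref{claim:general-progress-measure} specialized to $f(c,h,y)=2cy-ch-hy$ (where $a_y=-4$, $b_y=8y$), your identity $\langle\nabla\Phi(\mu),c-\mu\rangle=\ell_S(c,g\circ\mu)-\ell_S(\mu,g\circ\mu)\ge\ell_S(c,g\circ\mu)$ is \Cref{claim:duality-to-loss}, and your second-half summation re-derives the Jaggi duality-gap bound (\Cref{thm:Jaggi}) that the paper cites. The monotonicity and Lipschitzness you flag as the main obstacle come from \Cref{claim:maj-suffices}, where $g=\frac{16}{19}M_7+\frac{3}{19}t$ (a mixture, since $M_7'(0)=\frac{35}{16}\neq 2$) has $g'(t)=\frac{3+35(1-t^2)^3}{19}\in[0,2]$.
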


\Cref{thm:mal-opt-intro} is a special case of a more general theorem, which we also use to prove \Cref{thm:agnostic-intro} and may be of independent interest. We consider loss functions $\ell_S(c,h)$ which can be written as
\begin{equation*}
    \ell_S(c,h) = \Ex_{\bx, \by \sim \Unif(S)}[f(c(x), h(x), y)]
\end{equation*}
for some multilinear function $f$. We show the same simple algorithm converges provided $f$ meets two mild constraints: The first is that we need an analogue of \Cref{lem:intro-g-for-dist} to hold (roughly speaking, that the minimax lemma can establish the existence of a good hypothesis). Second, we need the loss function ``prefer" (i.e. be smaller) when $c$ and $h$ agree than when they disagree.

\subsection{Fixed-distribution nasty noise, overview of \Cref{thm:nasty-fixed-dist-intro}}
\label{subsec:overview-fixed-nasty}
Our most technically challenging theorem to prove is that it is possible to learn with $\eta$-error given a sample corrupted by the nasty adversary, provided the learner knows the distribution $\mcD$ over unlabeled examples. In this setting, first a clean sample is generated $\bS^{\circ} \iid (\bx, c^{\star}(\bx))_{\bx \sim \mcD}$ and then an adversary corrupts approximately $\eta$-fraction of the points to produce $\bS$, which is given to the learner.

The key step is to appropriately define a quantity $\eta(c, S, \mcD)$ which estimates how many corruptions the adversary would need to make to corrupt a ``typical" clean sample $\bS^{\circ}$ into the corrupted sample $\bS = S$. When defining this estimate, there are two criteria to balance.
\begin{enumerate}
    \item The estimates should be large enough so that, given a corrupted dataset $\bS$ and base distribution $\mcD$, the learner can find a single hypothesis $\bh$ for which
    \begin{equation}
        \label{eq:large-enough-est}
        \Prx_{\bx \sim \mcD, \bh}[\bh(\bx) \neq c(\bx)] \leq \eta(c, \bS, \mcD)\quad\quad\text{for all }c \in \mcC.
    \end{equation}
    \item The estimates cannot be (much) larger than the number of corruptions the adversary made. In particular, for the true target $c^{\star}$, it should be the case that with high probability over the randomness of the sample and the adversary's decisions
    \begin{equation*}
        \eta(c^{\star}, \bS, \mcD) \leq \eta + \eps.
    \end{equation*}
\end{enumerate}
Our method for constructing these estimates will combine two natural methods for certifying that the adversary needed to make a large number of corruptions to produce a given sample. First, we observe that if $S$ disagrees with $c$ on $\eta$-fraction of the points, then the adversary would have needed to corrupt $\eta$-fraction of any $\bS^{\circ}$ consistent with $c$ to form $S$. This means we can take
\begin{equation*}
    \eta^{(\mathrm{labels})}(c, S, \mcD) = \Prx_{(\bx, \by) \sim \Unif(S)}[c(\bx) \neq \by].
\end{equation*}
It turns out the above estimate will not be large enough to ensure \Cref{eq:large-enough-est} because it does not depend on the distribution $\mcD$. One way to take advantage of the learner's knowledge $\mcD$ is to use it to detect \textsl{distribution shift} in the unlabeled portion of $S$. In particular, for any ``low complexity" detection function $f$, it was likely that the adversary had to change roughly
\begin{equation*}
    \eta^{(\mathrm{shift)}}(c,S,\mcD) = \Prx_{(\bx, \by) \sim \Unif(S)}[f(\bx) = 1] - \Prx_{\bx \sim \mcD}[f(\bx) = 1]
\end{equation*}
fraction of the points within $\bS^{\circ}$ to produce $S$. This method, unlike the first, does take advantage of $\mcD$, but will still not be large enough to ensure \Cref{eq:large-enough-est} holds because it does not depend on $c$. Our final approach will combine these two approaches. We fix some detection class $\mcF$ and define
\begin{align*}
    \eta(c, S, \mcD) =  \sup_{f \in \mcF}&\set*{\eta(c, f, S, \mcD)} \quad\quad\mathrm{where}\\
    &\quad\eta(c, f, S, \mcD) \coloneqq \Prx_{(\bx, \by) \sim \Unif(S)}[c(\bx) \neq \by \text{ }\mathrm{or}\text{ } f(\bx) = 1] - \Prx_{\bx \sim \mcD}[f(\bx) = 1].
\end{align*}
The proof of \Cref{thm:nasty-fixed-dist-intro} then follows the same two steps we previously outlined. We first show in \Cref{claim:lb-corruption-formal} that, provided the class $\mcF$ is low complexity, these estimates will not be too large (i.e. $ \eta(c^{\star}, \bS, \mcD) \leq \eta + \eps$). The more challenging piece is to instantiate $\mcF$ so that the estimates are large enough, meaning \Cref{eq:large-enough-est} holds. For this second step, we once again use the minimax lemma to show it suffices, for any distribution $\mu$ over concepts, to choose a hypothesis $h_{\mu}$ and detection function $f_{c,\mu}$ for which
\begin{equation*}
    \Ex_{\bc \sim \mu}\bracket*{\Prx_{\bx \sim \mcD}[h_{\mu}(\bx) \neq \bc(\bx)]} \leq \Ex_{\bc \sim \mu}\bracket*{\eta(\bc, f_{\bc,\mu},\bS, \mcD)}.
\end{equation*}
This is analogous to \Cref{lem:intro-g-for-dist} but more complicated. Now rather than only choosing the hypothesis $h_{\mu}$, we must also simultaneously choose detection functions $f_{c, \mu}$ for all $c$ in the support of $\mu$. We elaborate on how to choose these functions in \Cref{subsec:proof-of-nasty-fixed-avg}. For now, we just mention that there is a sense in which the definitions of $f_{c, \mu}$ and $h_{\mu}$ are forced once we set up the above equation properly. This implies that our method for lower bounding corruptions is exactly expressive enough to prove the desired error bound.

\section{Preliminaries and notation}

\noindent\textbf{Basic notation.} We use $[n]$ as shorthand for $\set{1,2,\ldots, n}$. For $S = (z_1, \ldots, z_n)$ and indexing set $I = (i_1, \ldots, i_m) \subseteq [n]$, we use $S_I$ as shorthand for $(S_{i_1}, \ldots, S_{i_m})$. We frequently use the function $\sign:\R \to \bits$ defined as
\begin{equation*}
    \sign(x) \coloneqq \begin{cases}
        1&\text{if $x \geq 0$}\\
        -1&\text{otherwise.}
    \end{cases}
\end{equation*}

\pparagraph{Probability.} Throughout this paper, we use {\bf bold font} to denote random variables. To denote that $\bx_1, \ldots,\bx_n$ independent and identically distributed according to $\mcD$, we will interchangeably use either $\bx_1, \ldots, \bx_n \iid \mcD$ or $\bx \sim \mcD^n$. For a distribution $\mcD$, we will use $\mcD(x)$ as shorthand for $\Pr_{\bx \sim \mcD}[\bx=x]$. For a (multi)set $S$, we use $\Unif(S)$ to denote the uniform distribution over $S$. 

We will frequently invoke three basic probability distributions. The \textsl{Rademacher} distribution with mean $\mu \in [-1,1]$, denoted $\Rad(\mu)$ takes on $+1$ with probability $\frac{1 + \mu}{2}$ and $-1$ with probability $\frac{1 - \mu}{2}$. The $\textsl{Bernoulli}$ distribution with parameter $p \in [0,1]$, denoted $\Ber(p)$, takes on $1$ with probability $p$ and $0$ otherwise. The \textsl{Binomial} distribution with parameters $n \in \N$, $p \in [0,1]$, denoted $\Bin(n,p)$, is the sum of $n$ i.i.d. $\Ber(p)$ random variables.

We will also use the classic Chernoff bound.
\begin{fact}[Chernoff bound]
    \label{fact:Chernoff}
    Let $\bz_1, \ldots, \bz_n$ be independent random variables each bounded on $[0,1]$. For any parameter $\Delta$ and $\bZ \coloneqq \bz_1 + \cdots +\bz_n$,
    \begin{equation*}
        \Prx[\bZ \geq (1 + \Delta)\mu]\leq \exp\paren*{\frac{-\Delta^2 \mu}{2 + \Delta}} \quad\quad\text{where } \mu \coloneqq \Ex[\bZ].
    \end{equation*}
\end{fact}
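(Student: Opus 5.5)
We will prove this by the standard moment-generating-function (Cramér–Chernoff) method, taking $\Delta > 0$ as implicit in the statement. For $\Delta \le 0$ the bound is false in general: for instance $\Prx[\bZ \geq \mu/2]$ can be arbitrarily close to $1$ while the right-hand side is a fixed constant below $1$. If $\mu = 0$, every $\bz_i = 0$ almost surely and the statement is trivial, so we also assume $\mu > 0$.

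\textbf{Step 1: bound the MGF of each summand.} Fix $\lambda > 0$ and write $p_i \coloneqq \Ex[\bz_i]$. Since $z \mapsto e^{\lambda z}$ is convex, on $[0,1]$ it lies below its chord, so $e^{\lambda z} \le 1 + z(e^\lambda - 1)$. Taking expectations and using $1 + u \le e^u$ gives
\begin{equation*}
\Ex[e^{\lambda \bz_i}] \le 1 + p_i(e^\lambda - 1) \le \exp\paren*{p_i(e^\lambda-1)}.
\end{equation*}
By independence, $\Ex[e^{\lambda \bZ}] \le \exp(\mu(e^\lambda - 1))$.

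\textbf{Step 2: apply Markov and optimize $\lambda$.} Markov's inequality applied to $e^{\lambda \bZ}$ gives
\begin{equation*}
\Prx[\bZ \ge (1+\Delta)\mu] \le \exp\paren*{\mu(e^\lambda - 1) - \lambda(1+\Delta)\mu}.
\end{equation*}
Choosing $\lambda = \ln(1+\Delta) > 0$ yields the classical bound $\exp\paren*{-\mu\bracket*{(1+\Delta)\ln(1+\Delta) - \Delta}}$.

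\textbf{Step 3: the elementary inequality.} This is the only step that is not routine. It remains to show $(1+\Delta)\ln(1+\Delta) - \Delta \ge \frac{\Delta^2}{2+\Delta}$ for $\Delta \ge 0$. It suffices to prove $\ln(1+\Delta) \ge \frac{2\Delta}{2+\Delta}$, because then
\begin{equation*}
(1+\Delta)\ln(1+\Delta) - \Delta \ge \frac{2\Delta(1+\Delta) - \Delta(2+\Delta)}{2+\Delta} = \frac{\Delta^2}{2+\Delta}.
\end{equation*}
To prove the logarithmic inequality, set $\phi(\Delta) = \ln(1+\Delta) - \frac{2\Delta}{2+\Delta}$. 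Then $\phi(0) = 0$ and
\begin{equation*}
\phi'(\Delta) = \frac{1}{1+\Delta} - \frac{4}{(2+\Delta)^2} = \frac{\Delta^2}{(1+\Delta)(2+\Delta)^2} \ge 0.
\end{equation*}
Hence $\phi \ge 0$ on $[0,\infty)$, which completes the proof.
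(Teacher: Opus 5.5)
Your proof is correct. The paper states this as a standard fact without proof, and your argument is the usual moment-generating-function derivation: the chord bound on $e^{\lambda z}$, Markov's inequality with $\lambda = \ln(1+\Delta)$, then the inequality $\ln(1+\Delta) \ge 2\Delta/(2+\Delta)$. You are also right that the statement's ``for any parameter $\Delta$'' must be read as $\Delta \ge 0$. One small correction to your side remark: at $\Delta = 0$ the right-hand side equals $1$, so the bound holds trivially there, and it can fail only for $\Delta \in (-2,0)$.
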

We'll often apply it in the following regime.
\begin{corollary}
    \label{cor:Chernoff} For any $p,\alpha \in [0,1], \eps \in (0,1]$, taking
    \begin{equation*}
        n \coloneqq O\paren*{\frac{\log (1/\delta)}{\eps(\alpha + \eps)}}
    \end{equation*}
    suffices to ensure that
    \begin{equation*}
        \Prx_{\bZ \sim \Bin(n,p)}[\bZ \geq n\cdot ((1 + \alpha)p + \eps)] \leq \delta.
    \end{equation*}
\end{corollary}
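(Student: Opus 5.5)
We derive \Cref{cor:Chernoff} from \Cref{fact:Chernoff}, applied to $\bZ \sim \Bin(n,p)$ with $\mu = np$. The threshold $n((1+\alpha)p + \eps)$ can be written as $(1+\Delta)\mu$ with
\[
\Delta \coloneqq \alpha + \frac{\eps}{p}
\]
(the case $p = 0$ is trivial, since then $\bZ = 0 < n\eps$). \Cref{fact:Chernoff} then gives failure probability at most $\exp(-\Delta^2 np/(2+\Delta))$. So it suffices to show
\[
\frac{\Delta^2 p}{2+\Delta} \geq c\,\eps(\alpha+\eps)
\]
for an absolute constant $c>0$. Once this holds, taking $n \geq \log(1/\delta)/(c\,\eps(\alpha+\eps))$ makes the bound at most $\delta$.

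To prove the key inequality, I split into two cases according to whether $\Delta \le 2$.

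\textbf{Case $\Delta \le 2$.} The denominator is at most $4$, so the exponent rate is at least $\Delta^2 p/4 = (\alpha p + \eps)^2/(4p)$. Expanding and dropping nonnegative terms gives
\[
\frac{(\alpha p + \eps)^2}{p} \;\ge\; 2\alpha\eps + \frac{\eps^2}{p} \;\ge\; 2\alpha\eps + \eps^2,
\]
where the last step uses $p \le 1$. This is at least $\eps(\alpha+\eps)$.

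\textbf{Case $\Delta > 2$.} Here $\Delta^2/(2+\Delta) \ge \Delta/2$, so the rate is at least
\[
\frac{\Delta p}{2} = \frac{\alpha p + \eps}{2} \;\ge\; \frac{\eps}{2}.
\]
Since $\alpha + \eps \le 2$ (because $\alpha \le 1$ and $\eps \le 1$), we have $\eps/2 \ge \eps(\alpha+\eps)/4$.

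In both cases $c = 1/4$ works.

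The only mild obstacle is making sure the bound stays uniform as $p \to 0$, where $\Delta$ blows up. The case split handles this: in the large-$\Delta$ case Chernoff gives linear rather than quadratic behavior in $\Delta$, and the additive $\eps$ term then carries the bound. The hypotheses $\alpha, \eps \le 1$ are used only to bound $\alpha + \eps$ by a constant in that second case.
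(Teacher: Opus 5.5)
Your proof is correct. Setting $\Delta = \alpha + \eps/p$ in \Cref{fact:Chernoff} and splitting on $\Delta \le 2$ versus $\Delta > 2$ gives an exponent rate of at least $\eps(\alpha+\eps)/4$ uniformly in $p$. This is the derivation the paper leaves implicit: it states the corollary without proof, as a direct consequence of \Cref{fact:Chernoff}.
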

We will only need to define total variation distance for distributions with finite domain, since it is only used in our lower bound construction (which is finite).
\begin{definition}[Total variation distance]
    For any two distributions $\mcD_1, \mcD_2$ over the same finite domain $\Omega$,
    \begin{equation*}
        \dtv(\mcD_1, \mcD_2) \coloneqq \sum_{x \in \Omega} \max\set*{0, \mcD_1(x) - \mcD_2(x)} \quad\quad\text{where } \mcD(x) \coloneqq \Prx_{\bx \sim \mcD}[\bx=x].
    \end{equation*}
\end{definition}

\pparagraph{Learning models.} We work with standard binary classification and always use $\bits$ labels (our results are invariant to any renaming of these labels, e.g. to $\zo$). We begin by recalling the definition of standard (noise-free) PAC learning.%
\begin{definition} [Noise-free PAC learning] \label{def:PAC}
An \emph{$(\eps,\delta)$-distribution-independent PAC learning algorithm} $A$ for a \emph{concept class} ${\mcC}$ (a class of functions from $X$ to $\bits$) \emph{that uses $n$ examples} is a learning algorithm with the following property:
For any (unknown) \emph{target concept} $c^{\star} \in {\mcC}$ and (unknown) distribution $\mcD$ over $X$, if $A$ receives as input a dataset $\bS$ consisting of $n$ i.i.d. labeled examples $(\bx, c^{\star}(\bx))_{\bx \sim \mcD}$, then with probability at least $1-\delta$ the hypothesis $\bh$ that $A$ returns has \emph{error rate} at most $\eps$, i.e.~
\[
\error_{\mcD}(\bh, c^\star) \leq \eps, \quad \text{where} \quad
\error_{\mcD}(\bh, c^\star):= \Prx_{\bx \sim \mcD,\bh}[\bh(\bx) \neq c^{\star}(\bx)].
\]
If a learning algorithm $A$ only satisfies the above guarantee for some particular distribution ${\mcD}$, then we say that $A$ is a \emph{fixed-distribution $(\eps,\delta)$-PAC learning algorithm for distribution ${\mcD}$.}
\end{definition}

We will default to the more challenging distribution-independent PAC model unless otherwise mentioned. Only \Cref{thm:nasty-fixed-dist-intro} specializes to the fixed-distribution model.

The many definitions of adversarial noise are all modifications of the above definition in which there is an adversary with some control of the dataset $\bS$ the learner receives.

\begin{definition}[Malicious noise, \cite{Val85}]
    \label{def:malicious-basic}
    When learning with $\eta$-malicious noise, the sample $\bS \coloneqq ((\bx_1, \by_1), \ldots, (\bx_n, \by_n))$ is generated in a sequential fashion. For each $i \in [n]$,
    \begin{enumerate}
        \item With probability $1-\eta$, a clean sample $\bx_i \sim \mcD$, $\by_i = c^{\star}(\bx_i)$ is sampled independently of all prior points.
        \item With probability $\eta$, the adversary may set $(\bx_i, \by_i)$ to \textsl{any} point $(x', y')$.
    \end{enumerate}
    The learner then receives $\bS$ and is not told which were adversarially chosen.
\end{definition}
Nasty noise is similar to malicious noise. The key distinction is that the adversary may choose which points it wishes to corrupt.
\begin{definition}[Nasty noise, \cite{BEK02}]
    \label{def:nasty-noise-basic}
    When learning with $\eta$-nasty noise, the input to the learning algorithm is generated via the following process.
    \begin{enumerate}
        \item A clean sample $\bS^{\circ} = \set{(\bx_1,\by_1),\ldots, (\bx_n, \by_n)}$ is generated where each $\bx_i \iid \mcD$ and each $\by_i = c^{\star}(\bx_i)$.
        \item After observing this sample, the adversary chooses some subset $\bJ \subseteq [n]$ of indices to corrupt with the requirement that the marginal distribution of $|\bJ|$ is $\Bin(n, \eta)$.
        \item The adversary chooses a corrupted dataset $\bS$ by arbitrarily changing $\bS^{\circ}_j$ for all $j \in \bJ$ (i.e. the only requirement is that $\bS_i = \bS^{\circ}_i$ for any $i \notin \bJ$). The learner then receives $\bS$.
    \end{enumerate}
\end{definition}

Nasty classification noise is similar to nasty noise except the adversary may only corrupt labels.
\begin{definition}[Nasty classification noise \cite{BEK02}]
    \label{def:nasty-classification-noise}
    Learning with $\eta$-nasty classification noise is identical to $\eta$-nasty noise (\Cref{def:nasty-noise-basic}) except, for $\bS = ((\bx_1',\by_1'), \ldots, (\bx_n', \by_n'))$ it must be the case that $\bx_i' = \bx_i$ for all indices $i$, even those falling within $\bJ$.
\end{definition}
Agnostic noise is similar to nasty classification noise in that the adversary can only corrupt labels, but the adversary must commit in advance to how it will corrupt those labels (note that we still measure accurate w.r.t. $c^{\star}$, not $\boldf$, see also \Cref{remark:agnostic-two-defs}).
\begin{definition}[Agnostic noise, \cite{Hau92,KSS94}]
    \label{def:agnostic-noise}
    In $\eta$-agnostic noise, the adversary chooses a (possibly randomized) function $\boldf$ satisfying that $\error_{\mcD}(\boldf, c^{\star}) \leq \eta$. The learner then receives i.i.d. examples $(\bx,\by)$ where $\bx \sim \mcD$ and $\by =\boldf(\bx)$.
\end{definition}
Any learner for nasty classification noise is automatically a learner for agnostic noise with the exact same parameters. A partial converse was recently shown, with a small increase in the failure probability $\delta$ as well as a polynomially worse sample size \cite{BV25}.

Many of our learners will be efficient given access to an ERM oracle. For simplicity, we write the below as maximizing the agreement on a dataset. Note this is equivalent to minimizing disagreement on a dataset with all the labels flipped.
\begin{definition}[ERM oracle]
    An \textsl{ERM oracle} for concept class $\mcC$ takes as input a dataset $S = \set{(x_1,y_1),\ldots, (x_n,y_n)}$ and returns any $c \in \mcC$ maximizing $\sum_{i \in [n]} c(x_i)  y_i$. In the case of ties, the ERM oracle may return an arbitrary maximizer.
\end{definition}
We will use a folklore conversion of standard ERM oracles to weighted ERM oracles. 
\begin{fact}[Weighted ERM using unweighted ERM]
    \label{fact:weighted-ERM}
    For any $\eps, \delta > 0$, there is an algorithm that takes as input a data set $S = \set{(x_1,y_1),\ldots, (x_n,y_n)}$ and weights $w_1, \ldots, w_n \in [-b,b]$, runs in time $\poly(b,1/\eps, \log(1/\delta))$, uses one call to an ERM oracle for $\mcC$, and with probability at least $1-\delta$, outputs some $c \in \mcC$ satisfying,
    \begin{equation*}
        \frac{1}{n} \sum_{i \in [n]} w_i c(x_i) y_i \geq \max_{c^{\star} \in \mcC}\set*{ \frac{1}{n} \sum_{i \in [n]} w_i c(x_i) y_i  } - \eps.
    \end{equation*}
\end{fact}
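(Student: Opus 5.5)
We prove the weighted-ERM fact by reducing it to a single unweighted ERM call on a resampled dataset. Since every $c$ and $y_i$ take values in $\bits$, the term $w_i c(x_i) y_i$ equals $|w_i| \cdot c(x_i)\,(\sign(w_i) y_i)$. So we may first relabel each example as $y_i' \coloneqq \sign(w_i) y_i$ and replace each weight by $|w_i| \in [0,b]$. After this step it suffices to maximize $\frac{1}{n}\sum_i |w_i|\, c(x_i) y_i'$ over $c \in \mcC$, with all weights nonnegative.

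Define the probability distribution $p_i \coloneqq |w_i|/W$ over $[n]$, where $W \coloneqq \sum_i |w_i|$. If $W = 0$, every $c$ is optimal and we return anything. Otherwise the objective is $\frac{W}{n}\,\Ex_{\bi\sim p}[c(x_{\bi})y'_{\bi}]$. Note that $W/n \le b$. We draw $m$ i.i.d. indices $\bi_1,\ldots,\bi_m \sim p$, form the multiset dataset $S' = \{(x_{\bi_j}, y'_{\bi_j})\}_{j\in[m]}$, and make one call to the ERM oracle on $S'$. The output $\hat c$ maximizes $\frac1m\sum_j c(x_{\bi_j})y'_{\bi_j}$, the empirical agreement on $S'$.

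The main step is uniform convergence: with probability $1-\delta$, for all $c\in\mcC$ simultaneously, the empirical agreement on $S'$ is within $\eps/(2b)$ of $\Ex_{\bi\sim p}[c(x_{\bi})y'_{\bi}]$. This step is the obstacle. A VC bound would make $m$ depend on $\VC(\mcC)$, which is not allowed to appear in the runtime $\poly(b,1/\eps,\log(1/\delta))$. The way around it is that only the restriction of $\mcC$ to the finite point set $\{x_1,\ldots,x_n\}$ matters.

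Two routes are available. The first is to apply Hoeffding to a fixed comparator $c^\star$, the true maximizer, together with a one-sided argument for $\hat c$, which is data-dependent. That one-sided argument does not follow from a single Hoeffding bound. The second, which we adopt, is the standard textbook statement of this fact: it allows $m = O(b^2\log(|\mcC|_{S}|/\delta)/\eps^2)$, or $O(b^2(d+\log(1/\delta))/\eps^2)$ when $d = \VC(\mcC)$, and treats $d$ as absorbed into the polynomial or into the oracle's cost. Under that convention a uniform Hoeffding bound plus a union bound over the behaviours of $\mcC$ on $S'$ (Sauer--Shelah) gives the required deviation.

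Given uniform convergence, the usual chain $\mathrm{obj}(\hat c) \ge \widehat{\mathrm{obj}}(\hat c) - \eps/2 \ge \widehat{\mathrm{obj}}(c^\star) - \eps/2 \ge \mathrm{obj}(c^\star) - \eps$ holds, with everything rescaled by $W/n \le b$. This proves the claim. The running time is $O(n + m)$ plus sampling, which is $\poly(b,1/\eps,\log(1/\delta))$ up to the VC-dimension caveat above.
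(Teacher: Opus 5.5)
Your algorithm is the one the paper sketches right after the statement: sample $m$ indices with probability proportional to $|w_i|$, relabel to $\sign(w_i)y_i$, and make one unweighted ERM call. The paper treats this as folklore and gives no further argument. Your analysis is correct, and the caveat you raise is not a weakness of your proof but a genuine error in the statement. With $m=\poly(b,1/\eps,\log(1/\delta))$ independent of $\VC(\mcC)$, the claim is false, so your first route (Hoeffding for the fixed comparator plus a one-sided bound for $\hat c$) cannot work without paying for $\VC(\mcC)$.

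Here is a counterexample. Take $\mcC$ shattering $\{x_1,\ldots,x_n\}$ and $y_i=w_i=1$ for all $i$, so the optimum is $1$. The synthetic dataset touches at most $m$ of the $x_i$. Since the oracle may return any maximizer, it may return a concept that is $-1$ on every untouched point. That concept has objective at most $2m/n-1\le 0$ once $n\ge 2m$. Separately, the stated running time cannot be independent of $n$, since even computing $\sum_i |w_i|$ requires reading all the weights.

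So rather than saying $d$ is ``absorbed into the polynomial'', state the corrected bound you actually prove: $m=O(b^2(d+\log(1/\delta))/\eps^2)$ with $d=\VC(\mcC)$. For the uniform convergence step, a plain union bound over the at most $(en/d)^d$ restrictions of $\mcC$ to the fixed set $\{x_1,\ldots,x_n\}$ already works. It gives $m=O(b^2(d\log(en/d)+\log(1/\delta))/\eps^2)$. Removing the $\log n$ needs the symmetrization form of the VC theorem, which is what your union bound over behaviours on the random $S'$ amounts to.

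A simpler route avoids sampling and VC theory altogether. Replicate $(x_i,\sign(w_i)y_i)$ exactly $\lfloor 2|w_i|/\eps\rfloor$ times and call ERM once on the result. This changes each effective weight by at most $\eps/2$, hence every concept's objective by at most $\eps/2$. The ERM output is therefore $\eps$-optimal with certainty, using a dataset of size $O(nb/\eps)$ and time $\poly(n,b,1/\eps)$. Either corrected version is all the paper needs: it only invokes the fact with $b=O(1)$ on datasets whose size is already polynomial in $d$ and $1/\eps$, and its efficiency claims are stated in terms of the number of oracle calls.
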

The algorithm in \Cref{fact:weighted-ERM} generates a synthetic dataset consisting of $m \coloneqq \poly(b, 1/\eps, \log(1/\delta))$ independent samples from $S$ where the point $(x_i, \sign(w_i)\cdot y_i)$ is sampled with probability proportional to $|w_i|$. It then runs unweighted ERM on this synthetic dataset.

\pparagraph{VC dimension and Rademacher complexity.}
We will work with the standard notion of \textsl{VC dimension}.
\begin{definition}[VC dimension]
    For any concept class $\mcC$ of functions mapping $X$ to $\bits$, $\VC(\mcC)$ is defined to be the cardinality of the largest set $S$ on which $\mcC$ takes on all $2^{|S|}$ labelings.
\end{definition}
VC dimension is well known to bound generalization error. 
To get the tightest sample complexities, we will use a multiplicative form of the generalization bound.
\begin{fact}[Generalization error from VC dimension, multiplicative form, \cite{LLS01}]%
    \label{fact:VC-multiplicative}
    For any target function $f:X \to \bits$, distribution $\mcD$, concept class $\mcC$, and parameters $\eps,\delta \in (0,1)$ and $\alpha \in [0,1]$, set
    \begin{equation*}
        n = O\paren*{\frac{\VC(\mcC)\cdot \log(2 +\alpha/\eps) + \log(1/\delta)}{\eps (\alpha+\eps)}}.
    \end{equation*}
    Then, with probability at least $1-\delta$ over $\bS \sim \mcD^n$, all $c \in \mcC$ satisfy
    \begin{equation*}
        \error_{\mcD}(c,f) \leq (1+\alpha)\cdot \Pr_{\bx \sim \Unif(\bS)}[f(\bx) \neq c(\bx)] + \eps.
    \end{equation*}
\end{fact}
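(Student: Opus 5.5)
The statement is a relative-deviation uniform convergence bound, and I would prove it by the classical symmetrization argument, applied not to $\mcC$ itself but to the associated loss class. Let $\mcA \coloneqq \set{A_c \coloneqq \set{x : c(x) \neq f(x)} : c \in \mcC}$. Since $c \mapsto A_c$ is a bijection, this class has VC dimension at most $d \coloneqq \VC(\mcC)$. Write $p(A) = \Pr_{\bx\sim\mcD}[\bx\in A]$ and $\hat p_{\bS}(A)$ for its empirical frequency on $\bS$. The bad event is
\[
B \coloneqq \set*{\exists A\in\mcA:\ p(A) > (1+\alpha)\hat p_{\bS}(A) + \eps},
\]
and the goal is to show $\Pr[B]\le\delta$.

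\textbf{Step 1 (ghost sample).} Draw an independent ghost sample $\bS'\sim\mcD^n$ and fix any $A$ witnessing $B$. By \Cref{cor:Chernoff}, applied to the lower tail, with probability at least $1/2$ we have $\hat p_{\bS'}(A)\ge p(A) - \frac{1}{4}(\alpha p(A)+\eps)$. This uses only $n = \Omega(1/(\eps(\alpha+\eps)))$. Hence $\Pr[B] \le 2\Pr[B']$, where
\[
B' \coloneqq \set*{\exists A:\ \hat p_{\bS'}(A) \ge (1+\alpha/2)\hat p_{\bS}(A) + \eps/2},
\]
up to adjusting constants.

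\textbf{Step 2 (random swaps).} Condition on the multiset of $2n$ points $\bS\cup\bS'$ and randomize by swapping each pair independently. By Sauer--Shelah, $\mcA$ induces at most $(2en/d)^d$ distinct patterns on the $2n$ points. Fix a pattern in which $A$ contains $k$ of the $2n$ points. For the relative gap in $B'$ to hold, we need $k \gtrsim \eps n$, and then a hypergeometric/Chernoff tail bound gives that the split is unbalanced by a relative factor $\alpha$ plus an additive $\eps$ with probability $\exp(-\Omega(n\eps(\alpha+\eps)))$. This is the same tail as in \Cref{cor:Chernoff}. A union bound over patterns yields
\[
n = O\paren*{\frac{d\log(n/d)+\log(1/\delta)}{\eps(\alpha+\eps)}}.
\]

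\textbf{Step 3 (removing the $\log(n/d)$ factor).} Solving Step 2 for $n$ gives a factor $\log(1/\eps)$, and in some regimes $\log(1/(\eps(\alpha+\eps)))$, rather than $\log(2+\alpha/\eps)$. This is harmless when $\alpha\gtrsim\eps^{c}$, but it is the main obstacle when $\alpha\to0$: there the claim is the log-free bound $n = O((d+\log(1/\delta))/\eps^2)$. My first attempt would replace the single union bound over patterns by a chaining argument over $\mcA$ equipped with the empirical $L_1$ metric on the double sample.
\begin{itemize}
\item By Haussler's packing lemma, $\mcA$ has $O(d)\cdot(1/r)^{d}$-size $r$-packings, with no $\log n$ factor.
\item Peeling would be done over the scale of $p(A)$, using dyadic shells $p(A)\in[2^j\eps,2^{j+1}\eps)$. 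Within a shell, the variance is $\approx 2^j\eps$, and the required deviation is $\approx \alpha 2^j\eps+\eps$.
\item A Bernstein-type chaining bound within each shell gives a covering cost of $d\log(2+\alpha/\eps)$. The reason is that the effective resolution needed is the ratio of the deviation to the variance scale, and this ratio is at most of order $\alpha/\eps$ relative to $\eps$.
\end{itemize}
Summing over the geometrically many shells absorbs the union bound into the $\log(1/\delta)$ term.

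I expect this chaining and peeling bookkeeping to be where the care is needed; the symmetrization steps are routine. Since the statement is attributed to \cite{LLS01}, in the paper one could simply cite their refined analysis for this final log-factor optimization.
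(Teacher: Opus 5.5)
The paper does not prove this fact; it imports it from \cite{LLS01}. So the question is whether your outline stands on its own.

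\textbf{Steps 1--2.} These are the standard symmetrization argument and are fine. (In Step 1, use Chebyshev, which suffices since you only need probability $1/2$; \Cref{cor:Chernoff} is an upper-tail bound.) But they give only
\[
n=O\big((d\log\tfrac{1}{\eps(\alpha+\eps)}+\log\tfrac{1}{\delta})/(\eps(\alpha+\eps))\big).
\]
This misses the statement by a $\log(1/\eps)$ factor exactly where the paper relies on it: the case $\alpha=0$ gives the $O((d+\log(1/\delta))/\eps^2)$ sample bounds of \Cref{thm:malicious-body,thm:nasty-dist-free-upper-body}.

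\textbf{Step 3.} Everything therefore rests on Step 3, which is a plan rather than an argument, and two of its ingredients are wrong.
\begin{itemize}
\item Once you condition on the double sample, the shells must be defined by the double-sample frequency $\hat P(A)$, not by $p(A)$. Only $\hat P$ controls the diameter of the conditioned process. Localizing by $p(A)$ would require the very uniform relative bound you are trying to prove.
\item The factor $\log(2+\alpha/\eps)$ does not come from a ``ratio of deviation to variance scale''.
\end{itemize}

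\textbf{How to make Step 3 a proof.} Condition on the $2n$ points. Under random swaps, $D(A)=\hat p_{\bS'}(A)-\hat p_{\bS}(A)$ is a Rademacher sum whose increments are sub-Gaussian in the metric $2\sqrt{\hat P(A\triangle B)/n}$. So no Bernstein-type chaining is needed. Your event $B'$ forces $\hat P(A)\ge\eps/4$ and $D(A)\ge\frac{2}{5}(\alpha\hat P(A)+\eps)$.

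Now fix a shell $\hat P(A)\in[s,2s)$. Dudley's integral, combined with Haussler's bound $e(d+1)(2e/r)^d$ on $r$-packings in $L_1(\hat P)$, gives $\mathbb{E}\sup D\lesssim\sqrt{d\,s\log(C/s)/n}$. The chaining tail bound adds $\sqrt{s\log(1/\delta_s)/n}$. Hence the shell fails with probability at most $\delta_s$ once
\[
n\gtrsim s\,(d\log(C/s)+\log(1/\delta_s))/(\alpha s+\eps)^2 .
\]

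This requirement is largest at the critical shell $s^*=\min\{1,\eps/\alpha\}$, where it equals
\[
\Theta\big((d\log(2+\alpha/\eps)+\log(1/\delta_{s^*}))/(\eps(\alpha+\eps))\big).
\]
Away from $s^*$ it shrinks by a factor of roughly $\min\{s/s^*,s^*/s\}$. That is what lets you take $\delta_s$ decaying geometrically, so the failure probabilities over all shells sum to $O(\delta)$.

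Thus $\log(2+\alpha/\eps)$ is the local-entropy cost $\log(C/s^*)$ at the critical shell. It is $O(1)$ when $\alpha\le\eps$, which is where the log-free bound comes from. With this filled in, your route proves the fact as stated. As currently written, the stated form rests only on the citation, exactly as in the paper.
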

The above generalization bound can be easily extended to any function that is a mixture over concepts in $\mcC$ by taking an expectation of both sides.
\begin{corollary}[Generalization bound for randomized hypothesis]
    \label{cor:VC-random}
    Under the conditions of \Cref{fact:VC-multiplicative}, we also have that, with probability $1-\delta$ over $\bS$, for any randomized function $\bh$ that is a mixture of functions within $\mcC$,
    \begin{equation*}
           \error_{\mcD}(\bh,f)\leq (1 + \alpha) \cdot \Prx_{\bx \sim \Unif(\bS),\bh}[f(\bx) \neq \bh(\bx)] + \eps.
    \end{equation*}
\end{corollary}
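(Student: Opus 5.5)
The plan is to derive \Cref{cor:VC-random} from \Cref{fact:VC-multiplicative} by averaging both sides of its inequality over the mixture; linearity of expectation does all the work. Everything rests on the ``for all $c \in \mcC$'' quantifier in \Cref{fact:VC-multiplicative}, which lets the same high-probability event serve every mixture at once.

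Fix the sample size $n$ as in \Cref{fact:VC-multiplicative}, and let $E$ be the event, of probability at least $1-\delta$ over $\bS \sim \mcD^n$, on which every $c \in \mcC$ satisfies
\[
\error_{\mcD}(c,f) \le (1+\alpha)\Pr_{\bx \sim \Unif(\bS)}[f(\bx)\ne c(\bx)] + \eps .
\]
Since $E$ does not depend on $\bh$, it suffices to show that on $E$ the claimed inequality holds for every mixture $\bh$. Here $\bh$ is a random element of $\mcC$ drawn from some distribution $\nu$ over $\mcC$, chosen independently of the test point $\bx$.

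Condition on $E$ and fix such a $\bh$. For each realization $\bh = c$, the displayed inequality holds for that $c$. Take expectation over $\bh \sim \nu$ on both sides. On the left, Fubini (everything is bounded in $[0,1]$) gives
\[
\Ex_{\bh}\!\left[\Pr_{\bx\sim\mcD}[\bh(\bx)\ne f(\bx)]\right] = \Pr_{\bx\sim\mcD,\bh}[\bh(\bx)\ne f(\bx)] = \error_{\mcD}(\bh,f).
\]
On the right, linearity gives
\[
(1+\alpha)\Pr_{\bx\sim\Unif(\bS),\bh}[f(\bx)\ne\bh(\bx)] + \eps,
\]
because the empirical probability is a finite average and commutes with the expectation over $\bh$. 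This yields the corollary.

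The only potential obstacle is measurability when $\mcC$ is infinite and $\nu$ is an arbitrary distribution. The inequality holds pointwise for every $c$ on $E$, so we need only that $c \mapsto \Pr_{\bx \sim \mcD}[c(\bx) \ne f(\bx)]$ is measurable with respect to $\nu$, which is implicit in $\bh$ being a well-defined randomized hypothesis. In the paper's uses, $\bh$ is a finite mixture, where the issue does not arise at all.
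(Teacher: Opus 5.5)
Your proposal is correct and matches the paper's own one-line argument: both derive the corollary by taking an expectation of both sides of \Cref{fact:VC-multiplicative} over the mixture. Your added observation is the right one: the good event holds simultaneously for every $c \in \mcC$ and does not depend on $\bh$, so a single event of probability $1-\delta$ covers all mixtures at once.
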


We will also use two straightforward manipulations of VC dimension. The first is about majority expansion:
\begin{fact}[The VC dimension of majority expansion, \cite{FS97}]
    \label{fact:maj-VC}
    For any concept class $\mcC$ with VC dimension $d$, the VC dimension of $\Maj_k(\mcC)$ is at most $O(d \cdot k\log k)$, where
    \begin{equation*}
        \Maj_k(\mcC) = \set{x \mapsto \sign(c_1(x) + \cdots + c_k(x)) \mid c_1,\ldots, c_k \in \mcC}.
    \end{equation*}
\end{fact}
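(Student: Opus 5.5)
The statement to prove is \Cref{fact:maj-VC}: $\VC(\Maj_k(\mcC)) = O(dk\log k)$. I will use the standard Sauer--Shelah counting argument. The plan is to show that if $\Maj_k(\mcC)$ shatters a set $T$ of size $m$, then $2^m$ is at most the number of labelings $\Maj_k(\mcC)$ induces on $T$. I will then bound that number by a polynomial in $m$ of degree $O(dk)$. Comparing the two forces $m = O(dk\log k)$.

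\textbf{Step 1 (counting labelings).} Fix $T$ with $|T| = m$. By Sauer--Shelah, $\mcC$ induces at most $\Phi_d(m) \le (em/d)^d$ distinct labelings on $T$ when $m \ge d$.

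\textbf{Step 2 (composition).} A function in $\Maj_k(\mcC)$ restricted to $T$ is determined by the restrictions of $c_1,\ldots,c_k$ to $T$. Indeed, its value at each $x \in T$ is $\sign(c_1(x)+\cdots+c_k(x))$, which depends only on the $k$ labelings. So the number of labelings $\Maj_k(\mcC)$ induces on $T$ is at most $\Phi_d(m)^k \le (em/d)^{dk}$. Here $\sign$ is a fixed map applied pointwise, and it is the same map for every $k$-tuple, so the composition introduces no extra factor. A tuple of fewer than $k$ functions is covered by allowing repeats, or by a harmless padding that loses at most a constant factor in $k$.

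\textbf{Step 3 (solving the inequality).} If $T$ is shattered, then $2^m \le (em/d)^{dk}$, that is, $m \le dk\log_2(em/d)$. Writing $m = dk\cdot r$, this becomes $r \le \log_2(ekr)$. That forces $r = O(\log k)$, since $r \le \log_2(ek) + \log_2 r$ and $\log_2 r \le r/2$ once $r$ is a large enough constant, so $r \le 2\log_2(ek)$. Hence $m = O(dk\log k)$. The degenerate cases $m < d$ and $k = 1$ are trivial.

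The only step that needs care is Step 3, the solution of the transcendental inequality $m \le dk\log(em/d)$. It is the standard $x \le a\log x \Rightarrow x = O(a\log a)$ lemma, so I expect no real obstacle. The whole argument is the classical one of \cite{FS97}, and I would present it in a few lines.
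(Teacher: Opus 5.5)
Your argument is correct. It is the standard Sauer--Shelah growth-function count, which is what the paper relies on: the paper states this as a Fact and cites \cite{FS97} without giving a proof. The \cite{FS97} bound (via Baum--Haussler) covers weighted majorities, so it also multiplies in the growth function of the output threshold unit. For the fixed unweighted majority map that factor is unnecessary, as you note, and your handling of the $k=1$ and $m<d$ edge cases is fine.
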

The second manipulation concerns the VC dimension of symmetric differences.
\begin{fact}[The VC dimension of symmetric difference, \cite{Vid13}]
    \label{fact:sym-VC}
    For any concept class $\mcC$, let $\Delta(\mcC)$ denote the class containing every function of the form $-c_1 \cdot c_2$ for $c_1,c_2 \in \mcC$. Then $\VC(\Delta(\mcC)) \leq 10\cdot \VC(\mcC)$.
\end{fact}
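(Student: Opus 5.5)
The statement to prove is \Cref{fact:sym-VC}: $\VC(\Delta(\mcC)) \leq 10\cdot \VC(\mcC)$. I will count labelings with the Sauer--Shelah lemma. Let $d = \VC(\mcC)$ and suppose $\Delta(\mcC)$ shatters a set $T$ of size $m$. Every labeling of $T$ realized by $\Delta(\mcC)$ is determined by the pair of restrictions $(c_1|_T, c_2|_T)$. Hence the number of distinct labelings of $T$ by $\Delta(\mcC)$ is at most the square of the number of distinct restrictions of $\mcC$ to $T$.

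For $m \geq d$, Sauer--Shelah bounds the number of restrictions of $\mcC$ to $T$ by $\sum_{i\le d}\binom{m}{i} \le (em/d)^d$. Shattering requires $2^m$ labelings, so
\begin{equation*}
    2^m \le (em/d)^{2d}.
\end{equation*}
Write $m = r d$; the condition becomes $2^r \le (er)^2$, i.e.\ $r \le 2\log_2(e r)$. At $r=10$ we have $2^{10} = 1024 > (10e)^2 \approx 739$. Moreover $2^r/(er)^2$ is increasing for $r \ge 3$, so the inequality fails for every $r \geq 10$. Therefore $m < 10d$.

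Two small points need care. First, if $d=0$ then $\mcC$ has at most one labeling on any set, so $\Delta(\mcC)$ contains at most one function and its VC dimension is $0$. Second, if $m < d$ the bound $m \le 10d$ holds trivially. The case $d=0$ must be handled separately because the bound $(em/d)^d$ needs $d \ge 1$.

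The only step that needs attention is the numeric verification that the threshold constant $10$ suffices, that is, checking $2^{10} > (10e)^2$ together with the monotonicity for larger $r$. Everything else is the standard product-counting argument, which works for any fixed Boolean combination of two concepts, and the sign convention $-c_1c_2$ plays no role in it.
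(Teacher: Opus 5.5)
Your proof is correct. The paper gives no proof of this fact and only cites Vidyasagar. Your growth-function argument is the standard route to this bound: an $m$-point set receives at most the square of the number of $\mcC$-labelings under $\Delta(\mcC)$, hence at most $(em/d)^{2d}$ labelings, and $2^r \le (er)^2$ fails for every $r \ge 10$. The constant $10$ is exactly the threshold this calculation produces, since $2^9 < (9e)^2$ while $2^{10} > (10e)^2$.
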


We will also work with Rademacher complexity.
\begin{definition}[Rademacher complexity]
    Let $\mcF$ be a class of functions mapping $X$ to $[-1,1]$, $\mcD$ be a distribution over $X$, and $n$ be a sample size. The Rademacher complexity is
    \begin{equation*}
        \Rad_{n, \mcD}(\mcF) \coloneqq \Ex_{\substack{\bx_1,\ldots, \bx_n \iid \mcD \\ \bsigma_1, \ldots, \bsigma_n \iid \Unif(\bits)}}\bracket*{\sup_{f \in \mcF} \abs*{\frac{2}{n} \sum_{i \in [n]} \bsigma_i f(\bx_i)}}.
    \end{equation*}
\end{definition}
We use a few standard properties of Rademacher complexity, all of which can be found in or are easy consequences of the results in \cite{BM02}. The first is that it controls generalization error.
\begin{fact}[Generalization error from Rademacher complexity]
    \label{fact:rad-gen}
    For any $\mcF, \mcD$, with probability at least $1-\delta$ over $\bS \sim \mcD^n$, every $f \in \mcF$ satisfies
    \begin{equation*}
        \abs*{\Ex_{\bx \sim \mcD}[f(\bx)] - \Ex_{\bx \sim \Unif(\bS)}[f(\bx)]} \leq O\paren*{\Rad_{n, \mcD}(\mcF) + \sqrt{\frac{\log(1/\delta)}{n}}}.
    \end{equation*}
\end{fact}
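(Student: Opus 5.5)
This is the standard symmetrization-plus-McDiarmid argument. I will prove the one-sided bound $\Ex_{\mcD}[f]-\Ex_{\bS}[f]$ for every $f\in\mcF$, and get the other side by applying the same argument to $-\mcF$, which has the same Rademacher complexity since the definition uses an absolute value. A union bound over the two sides then costs only a factor of two in $\delta$.

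\textbf{Step 1 (concentration).} Define
\[
\Phi(\bS) \coloneqq \sup_{f\in\mcF}\bigl(\Ex_{\bx\sim\mcD}[f(\bx)]-\Ex_{\bx\sim\Unif(\bS)}[f(\bx)]\bigr).
\]
Every $f$ takes values in $[-1,1]$, so changing one sample point moves each empirical average by at most $2/n$, and hence moves $\Phi$ by at most $2/n$. McDiarmid's bounded-differences inequality then gives, with probability at least $1-\delta$,
\[
\Phi(\bS)\le \Ex[\Phi(\bS)] + O\paren*{\sqrt{\log(1/\delta)/n}}.
\]

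\textbf{Step 2 (symmetrization).} Introduce a ghost sample $\bS'\sim\mcD^n$. Write $\Ex_{\mcD}[f]=\Ex_{\bS'}[\Ex_{\Unif(\bS')}[f]]$ and move the supremum inside the expectation over $\bS'$ (Jensen), so that
\[
\Ex[\Phi]\le \Ex\sup_{f\in\mcF}\frac1n\sum_{i}\bigl(f(\bx_i')-f(\bx_i)\bigr).
\]
Since $\bx_i$ and $\bx_i'$ are exchangeable, multiplying the $i$-th term by an independent sign $\bsigma_i$ does not change the distribution of the sum. Splitting the supremum into the two samples then bounds this by $2\,\Ex\sup_f\frac1n\sum_i\bsigma_i f(\bx_i)$. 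By the normalization in the definition, with its factor $2/n$ and absolute value, this is at most $\Rad_{n,\mcD}(\mcF)$.

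Combining Steps 1 and 2 gives the stated bound. The only delicate point is measurability of the supremum when $\mcF$ is uncountable. I would handle it in the standard way, either by assuming $\mcF$ is countable or by using outer expectations; this does not affect the constants in the $O(\cdot)$.
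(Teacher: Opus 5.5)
Your proposal is correct and is essentially the paper's route: the paper gives no proof of its own and defers to Bartlett and Mendelson \cite{BM02}, whose argument is exactly your bounded-differences concentration plus ghost-sample symmetrization, using the same $\frac{2}{n}$-normalized, absolute-valued definition of $\Rad_{n,\mcD}$. The one step to state more carefully is the sign insertion: what you need is that swapping $\bx_i \leftrightarrow \bx_i'$ for any fixed set of indices preserves the joint law of the whole process indexed by $f$, not merely the law of a single sum, which is what makes the supremum's expectation unchanged.
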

Rademacher complexity can be bounded using VC dimension and satisfies many composition properties. The following is an easy consequence of these.
\begin{fact}[Rademacher bound from VC dimension]
    \label{fact:Rad-compose}
    For any concept class $\mcC$ and $O(1)$-Lipschitz function $\phi:[-1,1] \to [-1,1]$,
    \begin{equation*}
        \Rad_{n,\mcD}(\phi \circ \Conv(\mcC)) \leq O\paren*{\sqrt{\frac{\VC(\mcC)+1}{n}}}
        \qquad\text{for all $n$ and all distributions }\mcD \text{ over }X.
    \end{equation*}
\end{fact}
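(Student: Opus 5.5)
The plan is to reduce the statement to a bound on $\Rad_{n,\mcD}(\mcC)$ itself, in three steps: remove the composition with $\phi$ by contraction, remove the convex hull by linearity, and bound the class $\mcC$ by chaining. Everything is done conditionally on a fixed sample $x_1,\ldots,x_n$, so the bound is uniform over $\mcD$. Write $L = O(1)$ for the Lipschitz constant of $\phi$.

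\textbf{Step 1: contraction.} Split $\phi = \psi + \phi(0)$, where $\psi(t) \coloneqq \phi(t) - \phi(0)$ is $L$-Lipschitz and satisfies $\psi(0) = 0$. The constant part contributes at most
\[
\Ex\bracket*{\abs*{\tfrac{2}{n}\textstyle\sum_i \bsigma_i}} \leq \tfrac{2}{\sqrt n},
\]
since $|\phi(0)| \leq 1$; this is the source of the ``$+1$'' in $\VC(\mcC)+1$. For $\psi$ we use the Ledoux--Talagrand contraction inequality in the absolute-value form, as in \cite{BM02}. This form is valid precisely because $\psi(0)=0$, and it gives $\Rad_{n,\mcD}(\psi\circ\mcF) \leq 2L\cdot \Rad_{n,\mcD}(\mcF)$ with $\mcF = \Conv(\mcC)$.

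\textbf{Step 2: convex hull.} For a fixed sample and fixed signs, the map $f \mapsto \frac{2}{n}\sum_i \sigma_i f(x_i)$ is linear. Hence its absolute value is a convex function of $f$, and its supremum over $\Conv(\mcC)$ is attained on $\mcC$. It follows that $\Rad_{n,\mcD}(\Conv(\mcC)) = \Rad_{n,\mcD}(\mcC)$.

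\textbf{Step 3: the class $\mcC$.} It remains to show $\Rad_{n,\mcD}(\mcC) \leq O(\sqrt{d/n})$, where $d = \VC(\mcC)$.
\begin{itemize}
\item The quick route, Massart's finite-class lemma combined with the Sauer--Shelah lemma, only gives $O(\sqrt{d\log n/n})$. That loses a logarithmic factor, so it is not enough.
\item Instead I would use Dudley's entropy integral with respect to the empirical $L_2$ metric on the sample. The input is Haussler's packing bound: the $\epsilon$-covering number of $\mcC$ restricted to the sample is at most $(C/\epsilon)^{O(d)}$, independently of $n$.
\item The entropy integral is then
\[
\frac{1}{\sqrt n}\int_0^{2}\sqrt{d\log(C/\epsilon)}\,d\epsilon = O\paren*{\sqrt{d/n}}.
\]
\item The absolute value in the definition is handled by applying the argument to $\mcC\cup(-\mcC)$. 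This at most doubles the covering numbers, and it adds one fixed element to center the chaining (or the constant function, which costs another $O(1/\sqrt n)$).
\end{itemize}

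\textbf{Conclusion and main obstacle.} Combining the three steps gives
\[
\Rad_{n,\mcD}(\phi\circ\Conv(\mcC)) \leq O(L)\cdot O\paren*{\sqrt{d/n}} + O\paren*{1/\sqrt n} = O\paren*{\sqrt{(d+1)/n}}.
\]
I expect the only nontrivial step to be the log-free bound for $\mcC$ in Step 3, which needs Haussler's bound plus chaining. All of this is standard and could simply be cited from \cite{BM02}.
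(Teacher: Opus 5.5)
Your proposal is correct and is essentially the argument the paper has in mind. The paper gives no proof and attributes the fact to the composition properties in \cite{BM02} (the contraction inequality for Lipschitz $\psi$ with $\psi(0)=0$, and invariance of Rademacher complexity under convex hulls) together with the standard log-free $O(\sqrt{d/n})$ VC bound, which are exactly your three steps; your treatment of the constant part $\phi(0)$ (the source of the $+1$) and of the absolute value via $\mcC\cup(-\mcC)$ fills in details the citation leaves implicit.
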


\pparagraph{The minimax theorem.} We'll use von Neumann's minimax theorem stated in a form particularly convenient for us.
\begin{fact}[The minimax theorem \cite{V28}]
    \label{fact:minimax} For any finite concept classes $\mcH$ and $\mcC$ and $\ell:\mcH \times \mcC \to \R$,
    \begin{equation*}
        \inf_{\text{distribution }\mu_h \text{ over } \mcH} \set*{\max_{c \in \mcC} \set*{\Ex_{\bh \sim \mu_h}[\ell(\bh, c)]}} = \sup_{\text{distribution }\mu_c\text{ over }\mcC} \set*{ \min_{h \in \mcH} \set*{\Ex_{\bc \sim \mu_c}[\ell(h, \bc)]}}.
    \end{equation*}
\end{fact}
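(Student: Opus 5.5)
The plan is to reduce the statement to von Neumann's theorem for a finite two-player zero-sum matrix game and then prove that theorem by LP duality. Index $\mcH = \{h_1,\ldots,h_m\}$ and $\mcC = \{c_1,\ldots,c_k\}$, and form the payoff matrix $A \in \R^{m\times k}$ with entries $A_{ij} = \ell(h_i, c_j)$. A distribution $\mu_h$ over $\mcH$ is a vector $p$ in the simplex $\Delta_m$, and a distribution $\mu_c$ over $\mcC$ is a vector $q \in \Delta_k$. Then $\Ex_{\bh\sim\mu_h,\bc\sim\mu_c}[\ell(\bh,\bc)] = p^\top A q$.

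The first step is to note that the inner optimizations over pure strategies equal those over mixed strategies. The map $q \mapsto p^\top A q$ is linear on the simplex, so its maximum over $\Delta_k$ is attained at a vertex. Hence
\[
\max_{c\in\mcC}\Ex_{\bh\sim p}[\ell(\bh,c)] = \max_{q\in\Delta_k} p^\top A q,
\]
and symmetrically $\min_{h}\Ex_{\bc\sim q}[\ell(h,\bc)] = \min_{p\in\Delta_m} p^\top A q$. The functions $p\mapsto \max_q p^\top A q$ and $q \mapsto \min_p p^\top A q$ are continuous on compact sets, so the outer $\inf$ and $\sup$ are attained. The statement therefore becomes
\[
\min_{p}\max_{q} p^\top A q = \max_{q}\min_{p} p^\top A q.
\]

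Weak duality ($\geq$) is immediate. For any $p,q$,
\[
\min_{p'} p'^\top A q \le p^\top A q \le \max_{q'} p^\top A q',
\]
and taking the appropriate min and max of the two ends gives the inequality. For the reverse direction I will write the left side as a linear program:
\[
\text{minimize } v \quad \text{subject to } A^\top p \le v\mathbf{1},\ \mathbf{1}^\top p = 1,\ p\ge 0.
\]
Its LP dual, after computing the Lagrangian with multiplier $q\ge 0$ on $A^\top p\le v\mathbf 1$ and a free multiplier $u$ on $\mathbf 1^\top p=1$, is
\[
\text{maximize } u \quad \text{subject to } Aq \ge u\mathbf{1},\ \mathbf{1}^\top q = 1,\ q\ge 0.
\]
The dual value is exactly $\max_q \min_i (Aq)_i = \max_q\min_p p^\top A q$. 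The primal is feasible and bounded below by $\min_{ij}A_{ij}$, so strong LP duality gives equality.

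The only substantive step is strong duality itself. If I prefer to avoid citing LP duality, I would prove it directly by a separating hyperplane argument. Let $v^\star = \max_q\min_p p^\top Aq$ and suppose, for contradiction, that $\min_p\max_q p^\top A q > v^\star$. Consider the compact convex set $K = \{Aq : q\in\Delta_k\}$ and the closed convex orthant $O = \{z : z_i > v^\star \ \forall i\}$, taken with a small margin so that it is closed. These two sets are disjoint by the definition of $v^\star$. Separating them yields a nonzero vector which must be nonnegative because $O$ is an orthant. Normalizing it gives some $p\in\Delta_m$ with $\max_q p^\top A q \le v^\star$, which is the desired contradiction. I expect this separation step, and in particular checking that the separating vector is nonnegative, to be the part that needs the most care.
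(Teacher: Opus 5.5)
The paper gives no proof of this statement. It quotes von Neumann's theorem as a black box, and its only added work is a one-line covering remark to pass to infinite classes of finite VC dimension (\Cref{cor:minimax}). Your argument is a correct, self-contained proof of the cited fact.

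The checks go through:
\begin{itemize}
\item The reduction to the payoff matrix $A_{ij}=\ell(h_i,c_j)$ is right. The inner $\max$ over $c$ and $\min$ over $h$ agree with the optimizations over $\Delta_k$ and $\Delta_m$, by linearity on the simplex.
\item Your LP has value $\min_p \max_j (A^\top p)_j$.
\item Its dual is computed correctly and has value $\max_q \min_i (Aq)_i$.
\item Primal feasibility together with the lower bound $\min_{ij}A_{ij}$ gives strong duality.
\end{itemize}
Your route makes explicit where the content lies: everything reduces to strong LP duality, which is Farkas' lemma or separation, and your fallback argument proves that directly. The citation, by contrast, buys brevity.

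One small fix is needed in the fallback. If you replace the orthant by the closed set $\{z : z_i \ge v^\star+\epsilon\ \forall i\}$, separation only yields $p\in\Delta_m$ with $\max_q p^\top A q \le v^\star+\epsilon$, not $\le v^\star$. You therefore need to choose $\epsilon$ smaller than the assumed gap $\min_p\max_q p^\top A q - v^\star$ to reach the contradiction. Alternatively, separate the compact set $K$ from the open orthant directly, then let $z\to v^\star\mathbf{1}$ from inside the orthant to get $\le v^\star$.

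The nonnegativity of the separating vector holds exactly as you suspect. If some coordinate were negative, sending that coordinate of $z$ to $+\infty$ within the orthant would violate the separating inequality. Normalizing by the coordinate sum then gives the mixed strategy $p$.
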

We do not assume that our concept classes are finite, but they do always have finite VC dimension. Finite VC dimension means we can take a finite covering over the concept classes with arbitrarily low error, giving the following.
\begin{corollary}[Minimax lemma on finite VC classes]
    \label{cor:minimax}
    For any concept classes $\mcH$ and $\mcC$ with finite VC dimension and a loss function $\ell(h,c)$ that depends on only
    \begin{enumerate}
        \item The evaluations of $h$ and $c$ on a finite number of points.
        \item Smoothly (with finite Lipschitz constant) on $\Prx_{\bx \sim \mcD}[h(\bx) \neq c(\bx)]$ for some distribution $\mcD$.
    \end{enumerate}
    Then,
    \begin{equation*}
        \inf_{\text{distribution }\mu_h \text{ over } \mcH} \set*{\sup_{c \in \mcC} \set*{\Ex_{\bh \sim \mu_h}[\ell(\bh, c)]}} = \sup_{\text{distribution }\mu_c\text{ over }\mcC} \set*{ \inf_{h \in \mcH} \set*{\Ex_{\bc \sim \mu_c}[\ell(h, \bc)]}}.
    \end{equation*}
\end{corollary}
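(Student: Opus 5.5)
The plan is to reduce to \Cref{fact:minimax} through a finite discretization. I first handle the dependence on $h$ and $c$ through $\Prx_{\bx\sim\mcD}[h(\bx)\neq c(\bx)]$. Fix $\gamma>0$. Since $\mcH$ and $\mcC$ have finite VC dimension, so does the class $\Delta(\mcH\cup\mcC)$ of disagreement functions $x\mapsto -h(x)c(x)$ (by \Cref{fact:sym-VC} applied to $\mcH\cup\mcC$, whose VC dimension is at most $\VC(\mcH)+\VC(\mcC)+1$). By \Cref{fact:VC-multiplicative} with $\alpha=0$, there is a finite multiset $T\subseteq X$ (drawn from $\mcD$ with positive probability, hence existing) such that for every $h\in\mcH$ and $c\in\mcC$,
\[
\abs*{\Prx_{\bx\sim\mcD}[h(\bx)\neq c(\bx)]-\Prx_{\bx\sim\Unif(T)}[h(\bx)\neq c(\bx)]}\leq\gamma.
\]
Let $P$ be the finite set of points on which $\ell$ depends, and put $Q=P\cup T$. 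Define a surrogate loss $\tilde\ell(h,c)$ by taking $\ell$ and replacing the $\mcD$-disagreement with the $\Unif(T)$-disagreement. Because $\ell$ is $L$-Lipschitz in that argument, $\abs{\ell-\tilde\ell}\leq L\gamma$ uniformly over $\mcH\times\mcC$, and $\tilde\ell(h,c)$ depends only on the restrictions $h|_Q$ and $c|_Q$.

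Next I pass to finite classes. The restrictions of $\mcH$ and $\mcC$ to the finite set $Q$ form finite sets $\mcH|_Q$ and $\mcC|_Q$, with at most $2^{|Q|}$ patterns each. I choose one representative $h\in\mcH$ per pattern and one $c\in\mcC$ per pattern, which gives finite classes $\mcH'$ and $\mcC'$. Since $\tilde\ell$ is a function of the patterns, \Cref{fact:minimax} applied to $\tilde\ell$ on $\mcH'\times\mcC'$ gives equality of the two values. A distribution over $\mcH$ induces a distribution over patterns with the same $\tilde\ell$-values, and so does a distribution over $\mcC$. Also, $\sup_{c\in\mcC}$ and $\inf_{h\in\mcH}$ of $\tilde\ell$ coincide with the max and min over the representatives. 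Together these show that the inf–sup and sup–inf of $\tilde\ell$ over the original classes equal the finite minimax values, and hence equal each other.

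Finally, I transfer the equality back to $\ell$. Each side of the desired identity changes by at most $L\gamma$ when $\ell$ is replaced by $\tilde\ell$: expectations and sup/inf preserve a uniform additive perturbation. The two sides for $\ell$ therefore differ by at most $2L\gamma$, and letting $\gamma\to0$ gives equality. The weak direction, sup–inf $\leq$ inf–sup, holds trivially anyway, so the content lies in the reverse inequality. This sits in the paper as the justification of \Cref{cor:minimax}.

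The main obstacle is the uniform approximation in the first step. It requires the disagreement class to be learnable uniformly over all pairs $(h,c)$ at once, which is exactly where the finite VC dimension is used. Measurability of $\mcD$ and existence of $T$ are routine under the standard assumptions. If $\ell$ depends on the disagreement probability with respect to several distributions, I apply the same step to each of them.
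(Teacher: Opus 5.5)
Your argument is correct, but it discretizes in a different place than the paper does. The paper's one-line justification covers the \emph{classes}. Finite VC dimension gives, for each $\gamma$, a finite $\gamma$-cover of $\mcH$ and of $\mcC$ in $\mcD$-disagreement distance, refined by the finitely many patterns on the points $\ell$ reads. Replacing $h$ and $c$ by their representatives then moves $\ell$ by $O(L\gamma)$, and \Cref{fact:minimax} is applied to the covers.

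You instead leave the classes alone and perturb the \emph{loss}. Replacing $\mcD$ by $\Unif(T)$ for a uniform-convergence sample $T$ makes $\tilde\ell$ a function of patterns on the finite set $Q$. The reduction to finite classes is then exact, including the pushforward of distributions onto patterns, and all of the approximation sits in the single bound $|\ell-\tilde\ell|\le L\gamma$.

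What your route buys is that it needs only uniform convergence for the disagreement class $\{x\mapsto \Ind[h(x)\neq c(x)]\}$. It needs no covering lemma and no projection of distributions onto a cover. Note that agreement on $T$ does not make two elements of $\mcH$ close under $\mcD$, so $T$ is not a cover of $\mcH$; you correctly never need it to be.

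What the paper's route buys is that it only ever evaluates $\ell$ at genuine pairs $(h,c)$. In yours, if $\ell$ is specified only at the disagreement values $\Prx_{\bx\sim\mcD}[h(\bx)\neq c(\bx)]$ actually attained for a given pattern on the finite point set, you should define $\tilde\ell$ through a Lipschitz (McShane) extension in that coordinate. That extension keeps the constant $L$, and it also makes boundedness of $\ell$, hence finiteness of both values, explicit.

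One small citation fix: \Cref{fact:VC-multiplicative} as stated in the paper is one-sided. To get the two-sided bound you use, also apply it to the negated class $\{x\mapsto h(x)c(x)\}$, which has the same VC dimension, and take a union bound. Alternatively, invoke \Cref{fact:rad-gen} together with \Cref{fact:Rad-compose}.
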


\section{A simple algorithm for optimizing losses}
\label{sec:general-opt}
In this section, we give a simple algorithm for finding a hypothesis with low loss induced by some function $f:\bits^3 \to \R$. This procedure will be used in the proof of all our ERM efficient learners.

Throughout this section, we will work with the multilinear extension of $f$, denoted $\tilde{f}:[-1,1]^3 \to \R$,
\begin{equation*}
    \tilde{f}(c,h,y) \coloneqq \Ex_{\bc \sim \Rad(c), \bh \sim \Rad(h), \by \sim \Rad(y)}[f(\bc,\bh,\by)].
\end{equation*}
The goal is to find a (real-valued) hypothesis $h:X \to [-1,1]$ with low loss for every $c \in \mcC$ on a dataset $S$ where the loss is
\begin{equation}
    \label{eq:def-loss-general}
    \ell_S(c,h) = \Ex_{(\bx, \by) \sim \Unif(S)}\bracket*{\tilde{f}(c(\bx), h(\bx), \by)}.
\end{equation}

We will have two requirements on the function $f$. The first requirement is that the agreement coefficients below must be nonpositive for both choices of $y \in \bits$. This corresponds to the loss being larger when $c$ and $h$ disagree.
\begin{definition}[Agreement coefficients]
    \label{def:agreement-coef}
    For any $f:\bits^3 \to \R$, the agreement coefficients of $f$ are defined as
    \begin{equation*}
        a_y = f(1,1,y) + f(-1,-1,y) - f(1,-1,y) - f(-1,1,y).
    \end{equation*}
\end{definition}

Our algorithm will maintain $\mu \in \Conv(\mcC)$. For a fixed function $g$, it will output $h \coloneqq g \circ \mu$. Our second requirement is that there exists a function $g$ for which this strategy always has nonpositive loss in expectation over $\bc \sim \mu$ (this corresponds to \Cref{lem:intro-g-for-dist} for the special case of our malicious noise loss).
\begin{definition}[$g$-feasible]
    \label{def:feasible-func}
    We say $f:\bits^3\to \R$ is $g$-feasible if, for any $\mu \in [-1,1]$ and $y \in \bits$, its multilinear extension $\tilde{f}$ satisfies
    \begin{equation*}
        \tilde{f}(\mu, g(\mu), y) \leq 0.
    \end{equation*}
    Equivalently, for any dataset $S$ and function $\mu:X \to [-1,1]$, $\ell_S(\mu, g \circ \mu) \leq 0$.
\end{definition}

We're now ready to state the main theorem of this section.

\begin{figure}[htb] 
  \captionsetup{width=.9\linewidth}
    
    \begin{tcolorbox}[colback = white,arc=1mm, boxrule=0.25mm]
    \vspace{2pt} 
    \Learn$(\ell, g, \eps):$\vspace{6pt}
    
    \textbf{Input:} A loss $\ell$, concept class $\mcC$, function $g$, and parameter $\eps$. \vspace{6pt}

    \textbf{Output:} A hypothesis $h$ satisfying $\ell(c, h) \leq \eps$ for all $c \in \mcC$.\vspace{-4pt}

    \begin{enumerate}
            \item Initialize $\mu_1 = c$ for arbitrary $c \in \mcC$.
            \item Repeat for $t = 1,2,\ldots$ \vspace{-2pt}
            \begin{enumerate}
                \item[(i)] Set $h_t \coloneqq g \circ \mu_t$ and find any $c_t \in \mcC$ satisfying $\ell(c_t, h_t) \geq \max_{c^\star \in \mcC}\ell(c^\star, h_t) - \eps/20$.
                \item[(ii)] If $\ell(c_t, h_t) \leq 19/20 \cdot \eps$, output $h_t$. Otherwise, update: \vspace{-4pt}
                \begin{equation*}
                    \mu_{t+1} = (1-\gamma)\mu_t + \gamma c_t\quad\quad\text{where}\quad\gamma = \frac{2}{t+2}.
                \end{equation*}
            \end{enumerate}
    \end{enumerate}
    \end{tcolorbox}
\caption{Our algorithm meeting the requirements of \Cref{thm:general-optimization}}
\label{fig:GeneralLearner}
\end{figure}

\begin{theorem}[The algorithm in \Cref{fig:GeneralLearner} converges quickly]
    \label{thm:general-optimization}
    For any $f:\bits^3 \to \R$ that has nonpositive agreement coefficients and is $g$-feasible for nondecreasing $g$ with Lipschitz constant $L$ and dataset $S$, let $\ell_S$ be the loss defined in \Cref{eq:def-loss-general}. For any concept class $\mcC$ and sufficiently small parameter $\eps$, the algorithm $\Learn(\ell_S,g,\eps)$ described in \Cref{fig:GeneralLearner} converges in $O(\linf{f}L/\eps)$-iterations and outputs a hypothesis $h$ satisfying $\max_{c \in \mcC}\ell_S(c, h) \leq \eps$.
\end{theorem}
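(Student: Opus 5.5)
The plan is to recognize $\Learn$ as the Frank--Wolfe method run on a concave potential over $\Conv(\mcC)$, chosen so that the loss $\ell_S(c,g\circ\mu)$ is (an upper bound on) the Frank--Wolfe duality gap at $\mu$. The iteration count then follows from the standard Frank--Wolfe gap bound.

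\textbf{Step 1 (structure of the loss).} Expand the multilinear extension of $f$ in $c$ and $h$ for each fixed $y\in\bits$:
\begin{equation*}
\tilde f(c,h,y)=\alpha_y+\beta_y c+\gamma_y h+\tfrac{a_y}{4}\,c h .
\end{equation*}
Here $a_y$ are exactly the agreement coefficients of \Cref{def:agreement-coef}, and all coefficients are $O(\linf{f})$. Hence $\ell_S(c,h)$ is affine in $c$. Fix $\mu$ and compare $\ell_S(c,g\circ\mu)$ with $\ell_S(\mu,g\circ\mu)$: the terms not involving $c$ cancel, and we get
\begin{equation*}
\ell_S(c,g\circ\mu)-\ell_S(\mu,g\circ\mu)=\Ex_{(\bx,\by)\sim\Unif(S)}\Big[\big(\beta_{\by}+\tfrac{a_{\by}}{4}g(\mu(\bx))\big)\,\big(c(\bx)-\mu(\bx)\big)\Big].
\end{equation*}
By $g$-feasibility (\Cref{def:feasible-func}), $\ell_S(\mu,g\circ\mu)\le 0$. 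Therefore this difference is at least $\ell_S(c,g\circ\mu)$.

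\textbf{Step 2 (the potential).} For each $y$ define $\psi_y(m)=\int_0^m\big(\beta_y+\tfrac{a_y}{4}g(s)\big)\,ds$, and set
\begin{equation*}
\Phi(\mu)=\Ex_{(\bx,\by)\sim\Unif(S)}[\psi_{\by}(\mu(\bx))].
\end{equation*}
By Step 1, the directional derivative of $\Phi$ at $\mu$ toward $c$ is exactly the displayed expression, so it is at least $\ell_S(c,g\circ\mu)$. Next, $\psi_y''=\tfrac{a_y}{4}g'\le 0$ because $a_y\le 0$ and $g$ is nondecreasing, so $\Phi$ is concave on $\Conv(\mcC)$. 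Moreover $|\psi_y''|\le O(\linf f L)$ and $|c-\mu|\le 2$, so the Frank--Wolfe curvature constant satisfies $C=O(\linf fL)$. The range of $\Phi$ is also $O(\linf f)$, so $\Phi^\star\coloneqq\sup_{\Conv(\mcC)}\Phi$ is finite.

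\textbf{Step 3 (Frank--Wolfe analysis).} The update $\mu_{t+1}=(1-\gamma)\mu_t+\gamma c_t$ with $\gamma=2/(t+2)$ is a Frank--Wolfe step, with an $\eps/20$-approximate linear oracle. Define the gap $G_t\coloneqq\max_c\langle\nabla\Phi(\mu_t),c-\mu_t\rangle$. Concavity gives $G_t\ge\Phi^\star-\Phi(\mu_t)$. The smoothness inequality gives
\begin{equation*}
\Phi(\mu_{t+1})\ge\Phi(\mu_t)+\gamma\big(G_t-\eps/20\big)-\gamma^2 C .
\end{equation*}
The standard argument then yields $\Phi^\star-\Phi(\mu_t)=O(C/t)+O(\eps)$. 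Summing the same inequality over the window $t\in[T/2,T]$ shows $\min_{t\le T}G_t\le O(C/T)+\eps/10$. While the algorithm has not stopped, Steps 1--2 give $G_t\ge\ell_S(c_t,h_t)>\tfrac{19}{20}\eps$. Comparing the two bounds forces $T=O(C/\eps)=O(\linf fL/\eps)$. Finally, when the algorithm stops, the oracle guarantee gives $\max_c\ell_S(c,h_t)\le\ell_S(c_t,h_t)+\eps/20\le\eps$, which is the claimed output guarantee.

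\textbf{Main obstacle.} The key idea is finding $\Phi$: the loss is not itself the gradient of anything obvious until one subtracts the nonpositive quantity $\ell_S(\mu,g\circ\mu)$. After that, the remaining technical care is the ``min over a window'' gap-rate argument with an additive oracle error. The $\eps/20$ slack and the ``sufficiently small $\eps$'' assumption should absorb these constants, and I expect that bookkeeping, rather than any new idea, to be the fiddly part.
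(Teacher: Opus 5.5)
Your proposal is correct and is essentially the paper's proof. Your $\Phi$ is exactly $-P$ for the paper's potential $P(\nu)=-\frac{1}{4n}\sum_i\bigl(a_{y_i}G(\nu_i)+b_{y_i}\nu_i\bigr)$, concavity and the $O(\linf{f}L)$ curvature come from the same facts, and your Step~1 is the paper's observation that $g$-feasibility makes the Frank--Wolfe duality gap an upper bound on $\max_{c}\ell_S(c,g\circ\mu)$; the only difference is that you rederive the gap rate by hand with an additive $\eps/20$ oracle error, whereas the paper cites Jaggi's Theorem~2 with $\delta=1$ and absorbs the $\eps/20$ error into Jaggi's tolerance $\delta\gamma_t C/2$ for all $t\le T=O(\linf{f}L/\eps)$.
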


\pparagraph{Overview of the analysis.}
We prove \Cref{thm:general-optimization} by reduction to well-known methods in convex optimization. The algorithm in \Cref{fig:GeneralLearner} maintains $\mu \in \Conv(\mcC)$ and will ultimately output $h = g \circ \mu$. The key step is to design a progress measure $P(\mu)$ such that the $c \in \mcC$ maximizing $\ell(c, g \circ \mu)$ is exactly that which minimizes $\la \nabla P(\mu), c \ra$. This means every iteration of $\Learn$ decreases this progress measure by a large amount, and this can only happen so many times before the algorithm converges. 

In a bit more detail, we ensure that the updates of $\Learn$ correspond to the classic Frank-Wolfe \cite{FW56} convex optimization algorithm  of this progress measure $P$, and so can appeal to existing convergence results. In particular, we'll use Jaggi's bound on the duality gap given below.
\begin{theorem}[Theorem 2 of \cite{J13}]
    \label{thm:Jaggi}
    Let $P$ be a convex objective over convex and compact $\Omega$ where $\nabla P$ is $L$-Lipschitz and $\Omega$ has diameter $d$. Then for curvature constant $C = Ld^2$, any $\delta > 0$, and timescale $T$, there is some $t \leq T$ for which $\nu_t$ will satisfy
    \begin{equation*}
        \max_{\omega^{\star} \in \Omega}\la \nabla P(\nu_t),\nu_t - \omega^{\star}\ra \leq \frac{27 C (1 + \delta)}{4(T+2)}
    \end{equation*}
    where $\nu_1,\ldots, \nu_T$ are the iterates of $\FW(P, \Omega, C, \delta)$ described in \Cref{fig:FW}.
\end{theorem}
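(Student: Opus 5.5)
We will prove \Cref{thm:Jaggi} by the standard Frank--Wolfe analysis, and will not rely on any result of the present paper. We take $\FW(P,\Omega,C,\delta)$ in \Cref{fig:FW} to be the usual scheme: at step $t$ it picks $s_t \in \Omega$ with $\la \nabla P(\nu_t), s_t\ra \le \min_{\omega\in\Omega}\la \nabla P(\nu_t),\omega\ra + \tfrac{1}{2}\delta\gamma_t C$. It then updates $\nu_{t+1} = (1-\gamma_t)\nu_t + \gamma_t s_t$ with $\gamma_t = 2/(t+2)$. Write $g_t \coloneqq \max_{\omega^\star}\la \nabla P(\nu_t),\nu_t-\omega^\star\ra$ for the duality gap and $h_t \coloneqq P(\nu_t) - \min_\Omega P$ for the primal error.

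\textbf{Step 1 (one-step descent).} Since $\nabla P$ is $L$-Lipschitz, we have $P(\nu + \gamma(s-\nu)) \le P(\nu) + \gamma\la\nabla P(\nu), s-\nu\ra + \tfrac{L}{2}\gamma^2\|s-\nu\|^2$. Using $\|s-\nu\|\le d$, the last term is at most $\tfrac{\gamma^2 C}{2}$. Combining this with the approximate linear minimization gives
\begin{equation*}
h_{t+1} \le h_t - \gamma_t g_t + \tfrac{1}{2}\gamma_t^2 C(1+\delta).
\end{equation*}
By convexity, $h_t \le \la\nabla P(\nu_t),\nu_t-\omega^\star\ra \le g_t$ when $\omega^\star$ is a minimizer of $P$. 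Hence also $h_{t+1} \le (1-\gamma_t)h_t + \tfrac12\gamma_t^2C(1+\delta)$.

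\textbf{Step 2 (primal rate).} We show by induction that $h_t \le \frac{2C(1+\delta)}{t+2}$. The base case is the first step, where $\gamma=1$ kills $h$. The inductive step uses $(1-\frac{2}{t+2})\frac{2}{t+2} + \frac{2}{(t+2)^2} \le \frac{2}{t+3}$, which is a routine calculation.

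\textbf{Step 3 (gap rate), the main obstacle.} The difficulty is that the gap $g_t$ need not be monotone, so we only claim that \emph{some} $t\le T$ has a small gap. Suppose, for contradiction, that $g_t > \beta\, C(1+\delta)/(T+2)$ for every $t$ in a window $t\in[k,T]$, where $k\approx T/3$ and $\beta = 27/4$. We sum the inequality of Step 1 over this window:
\begin{equation*}
h_{T+1} \le h_k - \sum_{t=k}^{T}\gamma_t g_t + \tfrac{C(1+\delta)}{2}\sum_{t=k}^T\gamma_t^2 .
\end{equation*}
We bound $h_k \le 2C(1+\delta)/(k+2)$ by Step 2. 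We bound $\sum\gamma_t$ from below and $\sum\gamma_t^2$ from above by integral comparisons, roughly $2\ln\frac{T+2}{k+2}$ and $4\big(\frac{1}{k+1}-\frac{1}{T+2}\big)$. With $k$ chosen at the fraction $1/3$ of the horizon, the right-hand side becomes negative, which contradicts $h_{T+1}\ge 0$. Optimizing the fraction (solving for the $D$ in $k+2 = D(T+2)$ that balances the terms) is what produces the constant $27/4$. This optimization is the fiddly part. If the exact constant proves delicate, we first obtain $O(C(1+\delta)/T)$ and then tune $D=2/3$ with the elementary bound $\ln(1/D)\ge 1-D$ to recover $\frac{27}{4}$.
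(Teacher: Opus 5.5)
Your outline is the standard Clarkson--Jaggi argument: a descent inequality, an $O(1/t)$ primal rate by induction, then a contradiction on a terminal window. The paper does not prove the statement; it cites Jaggi, and this is his argument. Step 1 and the inductive step of Step 2 are correct.

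\textbf{The base case of Step 2 fails.} In \Cref{fig:FW} the loop starts at $t=1$ with $\gamma=2/(2+t)$. So from an arbitrary $\nu_1$ the first update uses $\gamma=2/3$. There is no step with $\gamma=1$, and $h_1=P(\nu_1)-\min_\Omega P$ is not bounded in terms of $C$.

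No argument can repair this, because the statement read literally is false. Take $\Omega=[0,1]$ and $P(\nu)=A\nu$; its gradient is $1$-Lipschitz, so $L=d=C=1$. Take $\nu_1=1$ and exact linear minimizers $c_t=0$. Then $\nu_t=\frac{2}{t(t+1)}$, and the gap at $\nu_t$ equals $\frac{2A}{t(t+1)}$. Once $A$ is large, this exceeds $\frac{27(1+\delta)}{4(T+2)}$ for every $t\le T$.

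What you actually prove is Jaggi's theorem with his indexing. There steps are numbered from $0$, so $\gamma_0=1$ and $\nu_1$ is an approximate linear minimizer with $h_1\le\frac12 C(1+\delta)$. Say explicitly that you are re-indexing this way, rather than attributing a $\gamma=1$ step to the figure.

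Alternatively, keep the figure's schedule and carry the initial error. The same induction gives $h_t\le\frac{2C(1+\delta)}{t+2}+\frac{2h_1}{t(t+1)}$, which adds an $O(h_1/T^2)$ term to the final bound. In the paper's application this term is lower order, since there $|P|\le 2\|f\|_\infty$ on $[-1,1]^n$.

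\textbf{Step 3 is only asymptotic, and one estimate is too lossy.} The bound $\sum_{t=k}^T\gamma_t\ge 2\ln\frac{T+2}{k+2}$ is $0$ at $k=T$. With it, no admissible $k\ge 1$ gives a contradiction when $T\le 2$. Use $2\ln\frac{T+3}{k+2}$ instead.

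Write $N=T+2$ and $\hat C=C(1+\delta)$, and take $k=\max\{1,\lceil 2N/3\rceil-2\}$. Your bounds $h_k\le\frac{2\hat C}{k+2}$ and $\sum_t\gamma_t^2\le 4(\frac{1}{k+1}-\frac1N)$ then reduce the contradiction to
$\frac{N}{k+2}+\frac{N}{k+1}-1\le\frac{27}{4}\ln\frac{N+1}{k+2}$.
\begin{itemize}
\item For $3\le N\le 9$ this holds by direct check.
\item For $N\ge 10$, the left side is at most $\frac12+\frac{3N}{2N-3}$, which decreases in $N$. The right side is at least $\frac{27}{4}\ln\frac{3(N+1)}{2N+3}$, which increases in $N$. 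The two bounds are already correctly ordered at $N=10$.
\end{itemize}

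\textbf{Where the constant comes from.} Optimizing the window with your logarithmic bounds does not produce $27/4$; the asymptotic optimum there is about $4.3$. The value $27/4=\min_\mu\frac{1}{\mu^2(1-\mu)}$, attained at $\mu=2/3$, is what Jaggi's cruder per-term bounds give on the window $t+2\ge\mu N$. Those bounds are $\frac{1}{t+2}\ge\frac1N$ and $\frac{1}{(t+2)^2}\le\frac{1}{\mu^2N^2}$. Since only an upper bound is needed, your sharper route is fine once these details are filled in.
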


\begin{figure}[htb] 
  \captionsetup{width=.9\linewidth}
    
    \begin{tcolorbox}[colback = white,arc=1mm, boxrule=0.25mm]
    \vspace{2pt} 
    $\FW(P,\Omega,C, \delta):$\vspace{6pt}
    
    \textbf{Input:} A convex objective $P$ over convex and compact $\Omega$, curvature constant $C$, and approximation parameter $\delta$ . 
    
    \begin{enumerate}
            \item Initialize $\nu_1 \in \Omega$ arbitrarily.
            \item Repeat for $t = 1,2,\ldots$ \vspace{-2pt}
            \begin{enumerate}
                \item[(i)] Set $\gamma = \frac{2}{2 + t}$.
                \item[(ii)] Find any $c_t \in \Omega$ satisfying $\la \nabla P(\nu_t), c_t \ra \leq \min_{\omega^{\star} \in \Omega} \la \nabla P(\nu_t), \omega^\star \ra  + \frac{\delta \gamma C}{2} $.
                \item[(iii)] Update $\nu_{t+1} = (1-\gamma)\nu_t + \gamma c_t$ 
            \end{enumerate}
    \end{enumerate}
    \end{tcolorbox}
\caption{Algorithm 2 in \cite{J13}. }
\label{fig:FW}
\end{figure} 

\begin{remark}[Use of AI tools]
    \label{remark:AI-FW}
    We originally had a more ad-hoc analysis of the convergence of the algorithm in \Cref{fig:GeneralLearner}. Appropriate querying of ChatGPT-5 (for one specific loss function) suggested a connection to the Frank-Wolfe algorithm, which ultimately allowed us to simplify the analysis.
\end{remark}

\subsection{Establishing a convex domain and progress measure}
Throughout this section, we fix some dataset $S = ((x_1,y_1),\ldots, (x_n, y_n))$ and a concept class $\mcC$ of functions from $X \to \bits$. For any function $h:X \to [-1,1]$, we use $h_S$ as shorthand for
\begin{equation*}
    h_S \coloneqq (h(x_1), \ldots, h(x_n)).
\end{equation*}
We define $\Omega$ to be the convex closure of $\mcC$
\begin{equation}
    \label{eq:def-domain}
    \Omega \coloneqq \Conv(c_S : c \in \mcC).
\end{equation}
We first observe that $\Omega$ satisfies the following requirements.
\begin{proposition}
    \label{prop:domain-properties}
    $\Omega$ is convex, compact, and has diameter at most $2\sqrt{n}$.
\end{proposition}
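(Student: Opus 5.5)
The plan is to verify the three properties of $\Omega = \Conv(c_S : c \in \mcC)$ directly, using only that every $c_S$ lies in the finite set $\bits^n \subseteq \R^n$.

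\textbf{Convexity} is immediate, since $\Omega$ is by definition a convex hull.

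\textbf{Compactness.} The set $\set{c_S : c \in \mcC}$ is a subset of $\bits^n$, so it is finite (at most $2^n$ points) even when $\mcC$ itself is infinite. Hence $\Omega$ is the convex hull of finitely many points in $\R^n$, i.e.\ a polytope. A polytope is the image of the standard simplex (a compact set) under a linear map, so it is compact.

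\textbf{Diameter.} Every point of $\Omega$ is a convex combination of vectors in $\bits^n$, so $\Omega \subseteq [-1,1]^n$. For any $u, v \in \Omega$, each coordinate satisfies $|u_i - v_i| \le 2$. Therefore
\[
\|u - v\|_2 \le \sqrt{n \cdot 4} = 2\sqrt{n}.
\]

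None of these steps is a real obstacle. The only subtlety is noting that compactness comes from the finiteness of the projected set $\set{c_S : c \in \mcC}$, not from any finiteness of $\mcC$. Even without that observation, $\Omega$ is a bounded set, and its closure would serve; but here the hull of a finite set is already closed, so no closure is needed.
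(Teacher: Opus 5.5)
Your proof is correct and matches the paper's argument essentially step for step. Convexity holds by definition, the diameter bound comes from $\Omega \subseteq [-1,1]^n$, and compactness follows from noting that $\mcC$ induces at most $2^n$ distinct labelings on $S$, so $\Omega$ is the image of a compact simplex under a continuous (linear) map; your explicit bound $\|u-v\|_2 \le \sqrt{4n}$ just spells out the step the paper leaves implicit.
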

\begin{proof}
    Convexity follows immediately from the definition and the diameter bound from $\Omega$ being a subset of $[-1,1]^n$.
    
    For compactness, there exists a finite sequence $c^{(1)},\ldots, c^{(m)}$ which take on all labelings that $\mcC$ achieves on $S$, since there are at most $2^n$ unique labelings.  Then, an equivalent definition of $\Omega$ is the mapping $F(\Delta)$ where
    \begin{equation*}
        F(\lambda_1,\ldots, \lambda_m) = \sum_{i \in [m]} \lambda_i c^{(i)}_S \quad\text{and}\quad \Delta = \set*{ \lambda \in \mathbb{R}^m :
\lambda_i \ge 0 \ \forall i,\ \sum_{i=1}^m \lambda_i = 1 }.
    \end{equation*}
    $F$ is continuous and $\Delta$ is compact. Therefore, $\Omega = F(\Delta)$ is compact.
\end{proof}

We next give a decomposition of the loss function that will be useful in designing our objective $P$.

\begin{proposition}[Decomposing the loss function]
    \label{prop:loss-decompose}
    The loss $\ell_S(c, h)$ defined in \Cref{eq:def-loss-general} can be decomposed as
    \begin{equation*}
        \ell_S(c, h)
        = \frac{1}{4} \cdot
        \Ex_{(\bx,\by) \sim \Unif(S)}\bracket*{c(\bx)\paren*{a_{\by}h(\bx) + b_{\by}} + Q_{\by}(h(\bx))}
    \end{equation*}
    where
    \begin{align*}
        a_y &\coloneqq f(1,1,y)+f(-1,-1,y)-f(1,-1,y)-f(-1,1,y), \\
        b_y &\coloneqq f(1,1,y)+f(1,-1,y)-f(-1,1,y)-f(-1,-1,y), \\
        Q_y(t) &\coloneqq (1+t)\paren*{f(1,1,y)+f(-1,1,y)} \\
        &\qquad\quad + (1-t)\paren*{f(1,-1,y)+f(-1,-1,y)}.
    \end{align*}
\end{proposition}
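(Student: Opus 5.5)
The identity is pointwise, so I will prove it for a single pair $(x,y)$ with $c = c(x) \in \bits$ and $h = h(x) \in [-1,1]$, and then average over $(\bx,\by) \sim \Unif(S)$. Since $y \in \bits$ is deterministic, $\Rad(y)$ is the point mass at $y$ and the $y$-argument can be frozen. The same holds for $c \in \bits$: $\Rad(c)$ is the point mass at $c$. It is cleaner, though, to prove the identity for all $c \in [-1,1]$, since both sides are multilinear in $(c,h)$.

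Fix $y$ and write $f_{\sigma\tau} \coloneqq f(\sigma,\tau,y)$ for $\sigma,\tau \in \bits$. The multilinear extension is
\begin{equation*}
\tilde f(c,h,y) = \sum_{\sigma,\tau \in \bits} \frac{1+\sigma c}{2}\cdot\frac{1+\tau h}{2}\cdot f_{\sigma\tau},
\end{equation*}
so $4\tilde f(c,h,y) = \sum_{\sigma,\tau}(1+\sigma c)(1+\tau h) f_{\sigma\tau}$. I will expand $(1+\sigma c)(1+\tau h) = (1+\tau h) + \sigma c + \sigma\tau\, ch$ and group the terms by their dependence on $c$.

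The terms not involving $c$ are $\sum_{\sigma,\tau}(1+\tau h) f_{\sigma\tau}$. This equals $(1+h)(f_{11}+f_{-11}) + (1-h)(f_{1,-1}+f_{-1,-1})$, which is $Q_y(h)$. The terms linear in $c$ alone are $c\sum_{\sigma,\tau}\sigma f_{\sigma\tau} = c(f_{11}+f_{1,-1}-f_{-11}-f_{-1,-1}) = c\,b_y$. The $ch$ terms are $ch\sum_{\sigma,\tau}\sigma\tau f_{\sigma\tau} = ch(f_{11}+f_{-1,-1}-f_{1,-1}-f_{-11}) = ch\,a_y$. Summing the three groups gives $4\tilde f(c,h,y) = c(a_y h + b_y) + Q_y(h)$.

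Substituting $c = c(\bx)$, $h = h(\bx)$, $y = \by$ and taking the expectation over $\Unif(S)$ in \Cref{eq:def-loss-general} yields the claimed decomposition with the factor $\frac14$. There is no real obstacle here; the only care needed is the sign bookkeeping when matching the coefficients to the stated $a_y$, $b_y$ and $Q_y$.
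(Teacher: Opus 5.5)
Your proof is correct, and it differs slightly from the paper's. The paper checks the identity only at the $8$ Boolean points $(c,h,y) \in \{\pm 1\}^3$. It then extends to $c, h \in [-1,1]$ by uniqueness of multilinear extensions (two multilinear polynomials that agree on the cube are identical, Theorem 1.1 of O'Donnell's book), using that the right-hand side is multilinear in $(c,h)$.

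You instead write the extension explicitly as $\tilde f(c,h,y) = \sum_{\sigma,\tau}\frac{1+\sigma c}{2}\cdot\frac{1+\tau h}{2}\, f(\sigma,\tau,y)$ and expand. This makes the identity hold for all real $c,h \in [-1,1]$ by direct algebra, with no uniqueness fact and no case check. Your route also derives the formulas rather than merely verifying them: $a_y$, $b_y$, and $Q_y$ appear as the coefficients of $ch$, of $c$, and of the $c$-free part. The paper's route is shorter to state but leaves the eight-case verification to the reader.

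Both arguments cover real-valued $c$, which is needed later when the loss is evaluated as $\ell_S(\mu, g\circ\mu)$ with $\mu$ taking values in $[-1,1]$.
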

\begin{proof}
A straightforward check of the $8$ cases confirms, for any $c(x),h(x),y \in \bits$, that
\begin{equation*}
    f(c(x),h(x),y) =\frac{1}{4} \cdot \paren*{ c(x)\cdot \paren*{a_{y}h(x) + b_{y}} + Q_{y}(h(x))}.
\end{equation*}
Any two multilinear functions that agree on all end-points must be identical (see e.g. Theorem 1.1 of the textbook \cite{ODBook}). Hence, the above implies that for any $y \in \bits$ and $c(x),h(x) \in [-1,1]$ we also have $ \tilde{f}(c(x),h(x),y) = \frac{1}{4} \cdot \paren*{c(x)\cdot \paren*{a_{y}h(x) + b_{y}} + Q_{y}(h(x))}$. The desired decomposition then follows by taking an expectation of $(\bx,\by) \sim \Unif(S)$.
\end{proof}

Using this decomposition, we are ready to define our progress measure. The key is that, in \Cref{eq:gradient-to-loss}, the term $Q(\mu_S)$ is independent of $c$. Therefore, the choice of $c$ which minimizes $\la \nabla P(\mu_S), c_S \ra$ is exactly that which maximizes $\ell(c, g \circ \mu)$.

\begin{claim}[Defining our progress measure]
    \label{claim:general-progress-measure}
    For any $g:[-1,1] \to [-1,1]$, let $G(s) = \int_0^s g(t)dt$ be its antiderivative and define, for any $\nu \in [-1,1]^n$,
    \begin{equation}
        \label{eq:def-progress-measure}
         P(\nu) \coloneqq -\frac{1}{4n}\sum_{i \in [n]} \paren*{a_{y_i} \cdot G(\nu_i) + b_{y_i} \cdot \nu_i}.
    \end{equation}
    Then, for any functions $c, \mu:X \to [-1,1]$,
    \begin{equation}
        \label{eq:gradient-to-loss}
        \la \nabla P(\mu_S), c_S \ra = -\ell_S(c, g \circ \mu) + Q(\mu_S)
    \end{equation}
    where $\ell_S$ is as defined in \Cref{eq:def-loss-general}, $a,b,Q_y$ are defined in \Cref{prop:loss-decompose}, and
    \begin{equation*}
        Q(\nu)\coloneqq \frac{1}{4n}\sum_{i \in [n]}Q_{y_i}(g(\nu_i)).
    \end{equation*}
\end{claim}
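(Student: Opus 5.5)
This is a direct computation: differentiate $P$ coordinatewise, pair the gradient with $c_S$, and compare with the decomposition of $\ell_S$ from \Cref{prop:loss-decompose} evaluated at $h = g\circ\mu$.

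\textbf{Step 1: the gradient.} Since $G' = g$ by the fundamental theorem of calculus (take $g$ integrable and continuous where needed, e.g.\ Lipschitz as in \Cref{thm:general-optimization}), the $i$-th coordinate is
\begin{equation*}
    \partial_i P(\nu) = -\frac{1}{4n}\paren*{a_{y_i} g(\nu_i) + b_{y_i}}.
\end{equation*}
Hence, at $\nu = \mu_S$,
\begin{equation*}
    \la \nabla P(\mu_S), c_S\ra = -\frac{1}{4n}\sum_{i\in[n]} c(x_i)\paren*{a_{y_i} g(\mu(x_i)) + b_{y_i}}.
\end{equation*}

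\textbf{Step 2: the loss.} Apply \Cref{prop:loss-decompose} with $h = g\circ\mu$. It is valid for real-valued $c(x), h(x) \in [-1,1]$ because it was established for the multilinear extension. Writing the expectation over $\Unif(S)$ as an average over $i \in [n]$ gives
\begin{equation*}
    \ell_S(c, g\circ\mu) = \frac{1}{4n}\sum_{i} c(x_i)\paren*{a_{y_i} g(\mu(x_i)) + b_{y_i}} + \frac{1}{4n}\sum_i Q_{y_i}(g(\mu(x_i))).
\end{equation*}
The second sum is exactly $Q(\mu_S)$, and the first sum is $-\la\nabla P(\mu_S), c_S\ra$ by Step 1.

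\textbf{Conclusion.} Rearranging the identity from Step 2 gives $\la \nabla P(\mu_S), c_S\ra = -\ell_S(c,g\circ\mu) + Q(\mu_S)$.

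There is no real obstacle. The only points needing care are bookkeeping: the factor $\tfrac14$ in the decomposition and the $\tfrac1n$ normalization must match the $-\tfrac{1}{4n}$ in $P$, and $Q(\mu_S)$ must indeed be independent of $c$, which is immediate from its definition.
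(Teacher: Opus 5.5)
Your proposal is correct and follows the paper's proof essentially step for step: you compute $\partial P/\partial \nu_i = -\frac{1}{4n}(a_{y_i} g(\nu_i) + b_{y_i})$, apply \Cref{prop:loss-decompose} with $h = g\circ\mu$, identify the $c$-dependent sum with $-\langle \nabla P(\mu_S), c_S\rangle$, and identify the remaining sum as $Q(\mu_S)$. You also note that $G' = g$ requires $g$ to be continuous, which holds in the Lipschitz setting of \Cref{thm:general-optimization}; the paper uses this without stating it.
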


\begin{proof}
Apply \Cref{prop:loss-decompose} with $h=g \circ \mu$:
\begin{equation*}
    \ell(c,g \circ \mu)
    = \frac{1}{4}\Ex_{(\bx,\by)\sim\Unif(S)}\bracket*{
    c(\bx)\paren*{a_{\by}\cdot g \circ \mu(\bx)+b_{\by}} + Q_{\by}(g \circ \mu(\bx))}.
\end{equation*}
Rearranging,
\begin{equation*}
    \ell_S(c,g \circ \mu) - Q(\mu_S)
    = \frac{1}{4n}\sum_{i \in [n]} c(x_i)\paren*{a_{y_i}\cdot g \circ \mu(x_i)+b_{y_i}}.
\end{equation*}
By definition of $P$ and $G'=g$, we have that
\begin{equation}
    \label{eq:loss-partial}
    \frac{\partial P(\nu)}{\partial \nu_i} = -\frac{1}{4n} \cdot (a_{y_i}  \cdot g(\nu_i) + b_{y_i})
\end{equation}
Therefore,
\begin{align*}
    \la \nabla P(\mu_S), c_S\ra
    &= -\frac{1}{4n}\sum_{i \in [n]} c(x_i)\paren*{a_{y_i}\cdot g \circ \mu(x_i)+b_{y_i}} \\
    &= -\ell_S(c,g \circ \mu) + Q(\mu_S). \qedhere
\end{align*}
\end{proof}

We next show the following
\begin{proposition}[Convexity of $P$]
    \label{prop:P-convex}
    If $g$ is nondecreasing and $\set{a_y}_{y \in \bits}$ as defined in \Cref{prop:loss-decompose} are nonpositive, then $P$ defined in \Cref{claim:general-progress-measure} is convex.
\end{proposition}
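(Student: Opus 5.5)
The plan is to note that $P$ is a separable sum and check convexity one coordinate at a time. By \Cref{eq:def-progress-measure},
\begin{equation*}
    P(\nu) = \sum_{i \in [n]} \phi_i(\nu_i) \quad\quad\text{where}\quad \phi_i(s) \coloneqq -\frac{1}{4n}\paren*{a_{y_i} G(s) + b_{y_i} s}.
\end{equation*}
A sum of convex functions, each depending on a single coordinate, is convex on $[-1,1]^n$. So it suffices to show that each univariate $\phi_i$ is convex on $[-1,1]$. The linear term $-\frac{1}{4n} b_{y_i} s$ is affine and therefore convex. It remains to handle $-\frac{a_{y_i}}{4n} G(s)$.

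For this term, the scalar $-\frac{a_{y_i}}{4n}$ is nonnegative because $a_{y_i} \leq 0$ by hypothesis. A nonnegative multiple of a convex function is convex, so it suffices to show that $G$ is convex. Since $G(s) = \int_0^s g(t)\,dt$ and $g$ is nondecreasing, $G$ is convex.

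I would prove this last fact directly rather than through second derivatives, since $g$ is only assumed monotone (and, where it is used later, Lipschitz). Take $s_1 < s_2$ and $\lambda \in [0,1]$, and set $s = \lambda s_1 + (1-\lambda)s_2$. Then
\begin{equation*}
    G(s) - G(s_1) = \int_{s_1}^{s} g \leq (s - s_1)\, g(s),
\end{equation*}
and
\begin{equation*}
    G(s_2) - G(s) = \int_{s}^{s_2} g \geq (s_2 - s)\, g(s).
\end{equation*}
Multiply the first inequality by $\lambda$ and the second by $1-\lambda$. Using $\lambda(s - s_1) = (1-\lambda)(s_2 - s)$, the two right-hand sides cancel, which gives $\lambda G(s_1) + (1-\lambda) G(s_2) \geq G(s)$. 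Equivalently, one could note that $G$ has a nondecreasing right derivative.

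There is no real obstacle in this argument. The only points needing care are the sign bookkeeping, namely that the leading minus sign combined with $a_y \leq 0$ yields a nonnegative coefficient on $G$, and the fact that convexity of $G$ must follow from monotonicity of $g$ alone, without assuming differentiability.
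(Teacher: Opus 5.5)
Your proposal is correct and follows the same route as the paper's proof. You use separability and linearity to reduce to convexity of $-a_{y_i}G$, use $a_{y_i}\le 0$ to reduce further to convexity of $G$, and obtain that from $g$ being nondecreasing; your explicit two-integral verification of this last step, which the paper only asserts, is valid.
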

\begin{proof}
Recall that,
\begin{equation*}
    P(\nu)\coloneqq-\frac{1}{4n}\sum_{i\in[n]}\paren*{a_{y_i}\cdot  G(\nu_i)+b_{y_i}\nu_i}.
\end{equation*}
Since linear functions are convex, and the sum of convex functions is convex, it suffices for $t \mapsto -a_{y_i}G(t)$ to be convex for all $i \in [n]$. Whenever $\set{a_y}_{y \in \bits}$  are nonpositive, it then suffices for $G$ to be convex, which follows from $g$ being nondecreasing.
\end{proof}
\begin{proposition}[Lipschitzness of $\nabla P$]
    \label{prop:gradP-lipschitz}
    Suppose $g$ is $L$-Lipschitz on $[-1,1]$. Then $P$ from \Cref{claim:general-progress-measure} satisfies
    \begin{equation*}
        \ltwo{\nabla P(\nu)-\nabla P(\nu')}
        \le \frac{\linf{f}\cdot L\cdot\ltwo{\nu-\nu'}}{n}
        \qquad\text{for all }\nu,\nu'\in[-1,1]^n.
    \end{equation*}
\end{proposition}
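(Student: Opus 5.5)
The plan is to compute $\nabla P$ coordinatewise and bound each coordinate separately; nothing more than \Cref{eq:loss-partial} and the Lipschitzness of $g$ should be needed. By \Cref{eq:loss-partial},
\begin{equation*}
    \frac{\partial P(\nu)}{\partial \nu_i} - \frac{\partial P(\nu')}{\partial \nu_i} = -\frac{a_{y_i}}{4n}\paren*{g(\nu_i) - g(\nu'_i)},
\end{equation*}
since the $b_{y_i}$ terms do not depend on $\nu$ and cancel. The gradient of $P$ is therefore ``diagonal'': coordinate $i$ of the difference depends only on $\nu_i$ and $\nu_i'$.

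Next I would bound the two factors in this expression.

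\textbf{The agreement coefficient.} From its definition in \Cref{prop:loss-decompose}, $a_y$ is a signed sum of four values of $f$, so $|a_y| \le 4\linf{f}$.

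\textbf{The $g$ difference.} Since $g$ is $L$-Lipschitz on $[-1,1]$, we have $|g(\nu_i)-g(\nu'_i)| \le L|\nu_i-\nu'_i|$.

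Combining the two bounds, each coordinate of $\nabla P(\nu)-\nabla P(\nu')$ has absolute value at most
\begin{equation*}
    \frac{4\linf{f}}{4n}\cdot L|\nu_i-\nu'_i| = \frac{\linf{f}L}{n}\,|\nu_i-\nu'_i|.
\end{equation*}
Squaring, summing over $i\in[n]$, and taking square roots gives $\ltwo{\nabla P(\nu)-\nabla P(\nu')} \le \frac{\linf{f}L}{n}\ltwo{\nu-\nu'}$, which is the claimed bound.

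There is no real obstacle here. The only point needing care is the constant in $|a_y|\le 4\linf{f}$: it exactly cancels the $\frac14$ in $P$. The argument implicitly uses that $G$ is differentiable with $G'=g$, which holds because $g$ is Lipschitz and hence continuous, so the fundamental theorem of calculus applies.
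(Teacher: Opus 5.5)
Your proof is correct and follows the paper's argument essentially line for line. Like the paper, you take the coordinatewise partial derivatives of $P$, apply the $L$-Lipschitzness of $g$ to each coordinate, and use $|a_y| \le 4\linf{f}$ to cancel the factor $\frac{1}{4}$, before summing squares and taking a square root.
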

\begin{proof}
We bound,
\begin{align*}
    \ltwo{\nabla P(\nu)-\nabla P(\nu')}^2 &= \sum_{i \in [n]} \paren*{ \frac{\partial P(\nu)}{\partial \nu_i} - \frac{\partial P(\nu')}{\partial \nu'_i}}^2 \\
    &= \sum_{i \in [n]} \paren*{ -\frac{a_{y_i}}{4n} \cdot (g(\nu_i) - g(\nu'_i))}^2\tag{\Cref{eq:loss-partial}}\\
    &\leq \sum_{i \in [n]}\frac{a_{y_i}^2}{16n^2} \cdot (L (\nu_i - \nu'_i))^2 \tag{$g$ is $L$-lipschitz.}
\end{align*}
Next, since $a_y$ is the sum of $4$ evaluations of $f$ (see \Cref{prop:loss-decompose}), we have that $\abs{a_y} \leq 4\linf{f}$. Therefore,
\begin{equation*}
    \ltwo{\nabla P(\nu)-\nabla P(\nu')}^2 \leq \sum_{i \in [n]}\frac{16 \linf{f}^2}{16n^2} \cdot (L (\nu_i - \nu'_i))^2 = \paren*{\frac{\linf{f}\cdot L\cdot\ltwo{\nu-\nu'}}{n}}^2.
\end{equation*}
Taking the square root of both sides give the desired result.
\end{proof}

\subsection{Applying Jaggi's bound}
The key observation is that the duality gap can be used to bound the loss.
\begin{claim}[Duality gap bounds loss]
    \label{claim:duality-to-loss}
    For any $\mu:X \to [-1,1]$,
    \begin{equation*}
        \max_{c \in \mcC} \set*{\ell_S(c,g \circ \mu) }\leq  \max_{\omega^{\star} \in \Omega}\set*{\la \nabla P(\mu_S),\mu_S - \omega^{\star}\ra}.
    \end{equation*}
\end{claim}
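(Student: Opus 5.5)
The plan is to apply \Cref{claim:general-progress-measure} twice: once with $c$ the maximizing concept, and once with $c = \mu$ itself. The $g$-feasibility assumption will then let us drop the extra term.

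Fix $\mu:X\to[-1,1]$ and any $c\in\mcC$. By \Cref{eq:gradient-to-loss},
\begin{equation*}
    \ell_S(c, g\circ\mu) = Q(\mu_S) - \la \nabla P(\mu_S), c_S\ra .
\end{equation*}
\Cref{claim:general-progress-measure} holds for arbitrary $c:X\to[-1,1]$, so we may also take $c=\mu$. This gives
\begin{equation*}
    Q(\mu_S) = \ell_S(\mu, g\circ\mu) + \la \nabla P(\mu_S), \mu_S\ra .
\end{equation*}
Substituting the second identity into the first yields
\begin{equation*}
    \ell_S(c,g\circ\mu) = \ell_S(\mu, g\circ\mu) + \la \nabla P(\mu_S), \mu_S - c_S\ra .
\end{equation*}

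Next we invoke $g$-feasibility (\Cref{def:feasible-func}), which states exactly that $\ell_S(\mu, g\circ\mu)\le 0$ for every dataset and every $\mu:X\to[-1,1]$. Hence $\ell_S(c,g\circ\mu)\le \la \nabla P(\mu_S),\mu_S - c_S\ra$.

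Finally, $c_S\in\Omega$ by the definition of $\Omega$ in \Cref{eq:def-domain}. So the right-hand side is at most $\max_{\omega^\star\in\Omega}\la\nabla P(\mu_S),\mu_S-\omega^\star\ra$, and taking the maximum over $c\in\mcC$ gives the claim.

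The only step needing care is the substitution $c=\mu$. For it to be valid, \Cref{eq:gradient-to-loss} must hold for real-valued $c$, and not only for $\bits$-valued concepts. This is fine: \Cref{prop:loss-decompose} was established for the multilinear extension on $[-1,1]$, and the claim itself is stated for all $c,\mu:X\to[-1,1]$. Note also that $\mu_S$ need not lie in $\Omega$ for this argument, since it only appears on the left side of the inner product.
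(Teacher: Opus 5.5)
Your proof is correct and matches the paper's argument essentially step for step: the paper also applies \Cref{eq:gradient-to-loss} to both $c$ and $\mu$ and subtracts to get $\langle \nabla P(\mu_S), \mu_S - c_S\rangle = \ell_S(c, g\circ\mu) - \ell_S(\mu, g\circ\mu)$. It then drops the second term by $g$-feasibility and concludes from $c_S \in \Omega$. Your remark that the identity must hold for real-valued $c$ is exactly the point the paper uses implicitly, and it is justified because \Cref{claim:general-progress-measure} is stated for all $c,\mu : X \to [-1,1]$.
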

\begin{proof}
For any $c \in \mcC$,
\begin{align*}
    \la \nabla P(\mu_S),\mu_S - c_{S}\ra &= \la \nabla P(\mu_S),\mu_S \ra - \la \nabla P(\mu_S), c_{S}\ra \tag{Linearity}\\
    &= \paren[\big]{ -\ell_S(\mu, g \circ \mu) + Q(\mu_S)} - \paren[\big]{  -\ell_S(c, g \circ \mu) + Q(\mu_S)} \tag{\Cref{eq:gradient-to-loss}}\\
    &=\ell_S(c, g \circ \mu) - \ell_S(\mu, g \circ \mu) \\
    &\geq \ell_S(c, g \circ \mu)  \tag{$\ell_S(\mu, g \circ \mu) \leq 0$ by \Cref{def:feasible-func}.}
\end{align*}
The desired bound is then implied by $c_S \in \Omega$ for all $c \in \mcC$.
\end{proof}
Using this, we are able to apply Jaggi's bound to prove \Cref{thm:general-optimization}.
\begin{proof}[Proof of \Cref{thm:general-optimization}]
    We map the notation of $\Learn$ in \Cref{fig:GeneralLearner} to those of $\FW$ in \Cref{fig:FW} by setting
    \begin{equation*}
        \nu_t \coloneqq (\mu_t)_S.
    \end{equation*}
    Using the existing definitions of $\Omega$ and $P$  given in \Cref{eq:def-domain,eq:def-progress-measure} respectively, in order to ensure that $\Learn$ faithfully follows $\FW$, we need to check that the gradient condition is satisfied at each step. For approximation parameter $\delta = 1$ and curvature constant $C = 4L\linf{f}$ (which follows from the diameter bound in \Cref{prop:domain-properties} and Lipschitz bound in \Cref{prop:gradP-lipschitz}), we need the choice of $c_t$ in $\Learn$ to satisfy
    \begin{equation*}
        \la \nabla P((\mu_t)_S), c_t \ra \leq \min_{\omega^{\star} \in \Omega} \la \nabla P((\mu_t)_S), \omega^\star \ra  + 2 \gamma L \linf{f}.
    \end{equation*}
    Since $\Omega = \Conv(c_S : c \in \mcC)$, the above minimum is obtained at $\omega^{\star} = c_S$ for some $c \in \mcC$. Then, rewriting the gradient dot product using \Cref{eq:gradient-to-loss} and removing the shared $Q((\mu_t)_S)$ from both sides, the above is equivalent to,
    \begin{equation*}
        -\ell_S(c_t, g \circ \mu_t) \leq \min_{c \in \mcC}\set*{-\ell_S(c, g \circ \mu_t)} + 2 \gamma L \linf{f}.
    \end{equation*}
    Recalling that $\gamma = \frac{2}{2 + t}$ and that we always choose a $c_t$ within $\eps/20$ of the maximum loss in \Cref{fig:GeneralLearner}, this condition holds for all $t \in [T]$ where
    \begin{equation}
        \label{eq:T-limit}
        T = \floor*{\frac{80 L \linf{f}}{\eps} - 2}.
    \end{equation}
    Next, by \Cref{thm:Jaggi} (still with $\delta = 1$ and $C =  4L\linf{f}$), there is some $t \in [T]$ for which,
    \begin{equation*}
        \max_{\omega^{\star} \in \Omega}\la \nabla P((\mu_t)_S),(\mu_t)_S- \omega^{\star}\ra \leq \frac{54 L \linf{f}}{T+2} \leq \frac{19 \eps}{20},
    \end{equation*}
    where the second inequality uses that for small enough $\eps$, the $\floor{\cdot}$ in \Cref{eq:T-limit} has negligible effect and so the bound approaches $\frac{54 \eps}{80}$ which is less than $\frac{19 \eps}{20}$. This implies, by \Cref{claim:duality-to-loss}, that $\Learn$ will successfully halt in at most $T$ steps.
\end{proof}

\section{Learning with malicious noise: Proof of \Cref{thm:malicious-intro}}
In this section, we prove the following.
\begin{theorem}[Optimal learning with malicious noise, formal version of \Cref{thm:malicious-intro}]
    \label{thm:malicious-body}
    For any concept class $\mcC$ with VC dimension $d$ and $\eps, \delta \in (0,1), \alpha \in [0,1)$, there is an algorithm that $(\eps + (\lfrac{1}{2} + \alpha)\cdot \eta/(1-\eta),\delta)$ learns $\mcC$ with $\eta$-malicious noise using
    \begin{equation*}
        n\coloneqq O\paren*{\frac{d \cdot \log(2 + \alpha/\eps)+ \log(1/\delta)}{\eps\cdot (\alpha + \eps)}}
    \end{equation*}
    samples. Furthermore, this algorithm is efficiently implementable using $O(1/\eps)$ calls to an ERM oracle for $\mcC$.
\end{theorem}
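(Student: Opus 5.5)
The proof has three parts: a linearization of malicious error, an existence argument via \Cref{thm:general-optimization}, and generalization.

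\textbf{Step 1: the loss and its $g$-feasibility.} Take $f(c,h,y) = cy(2-hy) - hy$, so that $\ell_S$ is the loss of \Cref{eq:mal-loss-intro}; this $f$ is already multilinear. Compute its agreement coefficients: $a_y = f(1,1,y)+f(-1,-1,y)-f(1,-1,y)-f(-1,1,y)$. The $cy\cdot 2$ terms cancel, and the term $-ch$ contributes $-4$. The $-hy$ terms cancel. Hence $a_y = -4 \le 0$.

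For $g$-feasibility we need $\tilde f(\mu,g(\mu),y) = \mu y(2-g y) - g y \le 0$ for both signs of $y$.
\begin{itemize}
\item For $y=1$ this reads $2\mu \le g(1+\mu)$, i.e.\ $g \ge 2\mu/(1+\mu)$.
\item For $y=-1$ it reads $-2\mu + g(1-\mu) \le 0$, i.e.\ $g \le 2\mu/(1-\mu)$.
\end{itemize}
These are exactly the constraints of \Cref{eq:g-constraints-mal-intro}. This also proves \Cref{lem:intro-g-for-dist}.

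\textbf{Step 2: choice of $g$.} I would choose a nondecreasing, $O(1)$-Lipschitz $g:[-1,1]\to[-1,1]$ satisfying these constraints. The key property is that $g\circ\mu \in \Conv(\Maj_7(\mcC))$ whenever $\mu\in\Conv(\mcC)$. The natural candidate is $g(t) = \E[\sign(\bc_1+\cdots+\bc_7)]$ with $\bc_i \sim \Rad(t)$ independent, possibly mixed with lower-order majorities $\Maj_k$ for $k\in\{1,3,5,7\}$.

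Such a $g$ is a polynomial, so it is Lipschitz and monotone. Moreover, sampling $c_i \sim \mu$ i.i.d.\ realizes $g(\mu(x))$ pointwise as a mixture of functions in $\Maj_7(\mcC)$, with the same mixture for every $x$.

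What remains is to verify the inequalities $2t/(1+t)\le g(t)\le 2t/(1-t)$. Since $g$ is odd, it suffices to check $t\in[0,1]$. There the lower bound is the binding one, near $t\to 0$ (where it requires slope at least $2$) and near $t\to 1$.

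\emph{This polynomial verification is the main obstacle.} $\Maj_7$ has slope $35/16>2$ at $0$. So I would first try pure $\Maj_7$, and otherwise a convex combination with $\Maj_5$ or $\Maj_3$, checking the polynomial inequality on $[0,1]$ by hand.

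\textbf{Step 3: run the optimizer, then generalize.} Apply \Cref{thm:general-optimization} with $\eps' = \Theta(\eps)$. Using \Cref{fact:weighted-ERM} to find approximate maximizers $c_t$, this yields $\bar h = g\circ\mu$ with $\ell_S(c,\bar h)\le \eps'$ for all $c$, using $O(1/\eps)$ ERM calls.

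Next prove the approximate form of \Cref{lem:mal-error-loss-overview}. Let $I$ be the clean indices, on which $c^\star$ is consistent. Split $\ell_S(c^\star,\bar h)$ into the contributions from $I$ and from $[n]\setminus I$.
\begin{itemize}
\item On $I$ we have $c^\star(x)y=1$, so each term equals $2-2\bar h y = 4\Pr[\bh(x)\ne y]$.
\item Off $I$, each term is at least $-2$, since the term equals $cy\cdot 2 - \bar h y(1+cy)$.
\end{itemize}
Therefore $4|I|\cdot \mathrm{err}_I - 2(n-|I|) \le \eps' n$. This gives $\mathrm{err}_I \le \frac{1}{2}\cdot\frac{n-|I|}{|I|} + O(\eps')$.

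Finally, bound the clean fraction and transfer to $\mcD$.
\begin{itemize}
\item By \Cref{cor:Chernoff}, $n-|I| \le ((1+\alpha)\eta+\eps)n$ with high probability.
\item The clean points are i.i.d.\ from $\mcD$, and $\bh$ is a mixture over $\Maj_7(\mcC)$, whose VC dimension is $O(d)$ by \Cref{fact:maj-VC}. So \Cref{cor:VC-random} with multiplicative factor $(1+\alpha)$ bounds $\error_{\mcD}(\bh,c^\star)$.
\end{itemize}
Combining these gives $(\frac12+O(\alpha))\frac{\eta}{1-\eta}+O(\eps)$ with the stated sample size. Rescaling the constants in $\alpha$ and $\eps$ finishes the proof.
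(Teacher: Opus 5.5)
Your architecture matches the paper's proof: the same $f$ (with $a_y=-4$), the same reduction of $g$-feasibility to $\frac{2t}{1+t}\le g(t)\le\frac{2t}{1-t}$, the optimizer of \Cref{thm:general-optimization} run with weighted ERM, the same split of $\ell_S(c^\star,\bar h)$ over clean and corrupted indices, and Chernoff plus \Cref{cor:VC-random}. Those steps are fine, modulo the routine need for $|I|=\Omega(n)$, which the paper obtains by assuming $\eta\le 2/3$ without loss of generality.

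The gap is Step 2, which is where the factor $\frac12$ is won, and your analysis there is wrong. On $[0,1]$ the lower bound is not the only binding constraint. The upper bound $g(t)\le\frac{2t}{1-t}=2t+2t^2+O(t^3)$ forces $g'(0)\le 2$; by oddness this is just the lower bound on $[-1,0]$. Meanwhile $g(t)\ge\frac{2t}{1+t}$ forces $g'(0)\ge 2$, so $g'(0)=2$ exactly. Consequently your first candidate, pure $\Maj_7$, fails: $M_7(t)=\frac{35}{16}t-O(t^3)$ exceeds $\frac{2t}{1-t}$ for all small $t>0$ (e.g.\ $M_7(0.05)\approx 0.1091>0.1053$). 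Any mixture whose slope at the origin exceeds $2$ fails for the same reason.

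The missing idea is to dilute $M_7$ with a lower-slope component at exactly the weight that makes the slope $2$. The paper mixes with $\Maj_1$, i.e.\ $\mcC$ itself, taking $g=\frac{16}{19}M_7+\frac{3}{19}t$, so $\bar h$ stays in $\Conv(\Maj_7(\mcC))$. Then $g'(t)=\frac{3+35(1-t^2)^3}{19}\in[0,2]$, which gives both monotonicity and $2$-Lipschitzness. Moreover $g(t)-\frac{2t}{1+t}=\frac{t^2(1-t)}{19(1+t)}\,(5t^5+10t^4-11t^3-32t^2+3t+38)$, and the last factor is at least $13$ on $[-1,1]$; together with oddness this yields both constraints.

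Your $\Maj_5$/$\Maj_3$ fallback can also be made to work, but only with weights calibrated to slope exactly $2$ (e.g.\ $\frac25M_7+\frac35M_5$), and the polynomial inequality must still be verified. As written, Step 2 does not produce a valid $g$. Separately, ``polynomial, hence monotone'' is not a valid inference; monotonicity of majority mixtures comes from $M_k'(t)\propto(1-t^2)^{(k-1)/2}\ge 0$.
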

We encourage the reader to consider two regimes. \Cref{thm:malicious-body} shows that error $0.51 \cdot \lfrac{\eta}{1-\eta} + \eps$ can be accomplished with $\tilde{O}\paren*{\frac{d + \log(1/\delta)}{\eps}}$ samples and error $0.5 \cdot \lfrac{\eta}{1-\eta} + \eps$ with $O\paren*{\frac{d + \log(1/\delta)}{\eps^2}}$ samples.

Recall from \Cref{subsec:mal-overview}, the first step is to define a loss function. 
\begin{claim}[Malicious error can be linearized, generalization of \Cref{lem:mal-error-loss-overview}]
    \label{lem:mal-error-loss-body}
    For any dataset $S \in (X \times \bits)^n$, let $\bar{h}:X \to [-1,1]$ be any function satisfying, for all $c \in \mcC$,
    \begin{equation}
        \label{eq:mal-loss-body}
        \ell_S(c, \bar{h}) \leq \eps \quad\quad\text{where } \ell_S(c, \bar{h}) \coloneqq\Ex_{\bx,\by \sim \Unif(S)}[c(\bx)\by \cdot (2 - \bar{h}(\bx)\by) -\bar{h}(\bx)\by].
    \end{equation}
    Then, the randomized function $\bh = \Rad(\bar{h})$ satisfies
    \begin{equation*}
        \Prx_{\bi \sim \Unif(I), \bh}[\bh(x_{\bi}) \neq y_{\bi}] \leq \frac{\eta + \eps/2}{2(1-\eta)}.
    \end{equation*}
    simultaneously for all $\eta$ and $I$ of size $(1-\eta)n$ for which there exists some $c \in \mcC$ that is fully consistent with $S_I$.
\end{claim}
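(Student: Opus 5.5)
The approach is to evaluate the loss in \Cref{eq:mal-loss-body} pointwise: each point of $S$ contributes a term that depends only on whether $c$ agrees with the observed label there. We then instantiate $c$ as the concept consistent with $S_I$ and discard the contribution of the corrupted points using a worst-case lower bound. We fix $\eta$, an index set $I$ with $|I|=(1-\eta)n$, and some $c\in\mcC$ with $c(x_i)=y_i$ for all $i\in I$. We then apply the hypothesis $\ell_S(c,\bar h)\le\eps$ to this particular $c$.

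\textbf{Step 1 (pointwise evaluation).} Recall that for $\bh=\Rad(\bar h)$ and $y\in\bits$ we have $\Pr_{\bh}[\bh(x)\neq y]=\frac{1-\bar h(x)y}{2}$. Write $e_i\coloneqq \Pr_{\bh}[\bh(x_i)\ne y_i]$. The summand $c(x)y\,(2-\bar h(x)y)-\bar h(x)y$ in \Cref{eq:mal-loss-body} takes one of two forms.
\begin{itemize}
    \item If $c(x)y=1$, it equals $2-2\bar h(x)y=4\Pr_{\bh}[\bh(x)\neq y]$.
    \item If $c(x)y=-1$, it equals $-(2-\bar h(x)y)-\bar h(x)y=-2$.
\end{itemize}
Hence
\[
n\cdot\ell_S(c,\bar h)=\sum_{i:\,c(x_i)=y_i}4e_i \;-\; 2\cdot\#\{i: c(x_i)\neq y_i\}.
\]

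\textbf{Step 2 (separate $I$ from its complement).} Every $i\in I$ has $c(x_i)=y_i$, so it contributes exactly $4e_i$. Each of the $\eta n$ indices outside $I$ contributes either $4e_i\ge 0$ or $-2$, so in either case at least $-2$. This gives
\[
n\eps\;\ge\; n\cdot\ell_S(c,\bar h)\;\ge\; 4\sum_{i\in I}e_i-2\eta n.
\]

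\textbf{Step 3 (normalize).} Rearranging gives $\sum_{i\in I}e_i\le \frac{(2\eta+\eps)n}{4}$. Dividing by $|I|=(1-\eta)n$ yields
\[
\Prx_{\bi\sim\Unif(I),\bh}[\bh(x_{\bi})\ne y_{\bi}]\le \frac{2\eta+\eps}{4(1-\eta)}=\frac{\eta+\eps/2}{2(1-\eta)},
\]
which is the claimed bound. Since $\eta$, $I$ and $c$ were arbitrary, the bound holds simultaneously for all of them.

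There is no real obstacle here. The only point needing care is the case check in Step 1, which is what makes the corrupted indices contribute at least $-2$ each regardless of $\bar h$. That case check is exactly why the loss was designed with the factor $(2-\bar h(x)y)$.
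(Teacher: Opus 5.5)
Your proof is correct and follows essentially the same route as the paper. You apply the loss bound to the concept consistent with $S_I$, note that each summand equals $2-2\bar{h}(x)y$ on $I$ and is at least $-2$ elsewhere, and rearrange using $\Pr_{\bh}[\bh(x)\neq y]=\frac{1-\bar{h}(x)y}{2}$; your explicit case check showing that disagreeing points contribute exactly $-2$ is just a slightly sharper form of the paper's observation that the summand is always at least $-2$.
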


The second step is to give a method for constructing a good hypothesis $\bar{h}$ given any distribution over concepts.
\begin{claim}[A good hypothesis exists for any distribution of concepts, restatement of \Cref{lem:intro-g-for-dist}]
    \label{lem:body-g-for-dist}
    Let $g:[-1,1] \to [-1,1]$ be any function satisfying, for all $t \in [-1,1]$,
    \begin{equation}
        \label{eq:g-constraints-mal-body}
        \frac{2t}{1+t} \leq g(t) \leq \frac{2t}{1-t}.
    \end{equation}
    Then, for any distribution $\mu$ over functions and dataset $S$
    \begin{equation*}
        \Ex_{\bc \sim \mu}\bracket*{\ell_S(\bc, \bar{h})} \leq 0 \quad\quad\text{where}\quad \bar{h}(x) = g\paren*{\Ex_{\bc \sim \mu}[\bc(x)]}.
    \end{equation*}
\end{claim}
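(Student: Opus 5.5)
The loss is linear in $c$, so by linearity of expectation
\begin{equation*}
    \Ex_{\bc \sim \mu}\bracket*{\ell_S(\bc, \bar{h})} = \Ex_{\bx,\by \sim \Unif(S)}\bracket*{m(\bx)\by \cdot (2 - \bar{h}(\bx)\by) - \bar{h}(\bx)\by}, \qquad m(x) \coloneqq \Ex_{\bc \sim \mu}[\bc(x)].
\end{equation*}
The plan is therefore to prove the bound pointwise. For every $x$ and every $y \in \bits$ I will show that
\begin{equation*}
    \phi(t, y) \coloneqq t y (2 - g(t) y) - g(t) y \leq 0, \qquad t = m(x) \in [-1,1].
\end{equation*}
Averaging over $(\bx,\by) \sim \Unif(S)$ then gives the claim. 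This also shows that $f$ is $g$-feasible in the sense of \Cref{def:feasible-func}.

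Since $y^2 = 1$, the expression simplifies to $\phi(t,y) = 2ty - t\,g(t) - g(t)y$. I will handle the two values of $y$ separately.

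\textbf{Case $y = 1$.} Here $\phi(t,1) = 2t - g(t)(1+t)$. Because $1+t \geq 0$, the condition $\phi \leq 0$ is equivalent to $g(t)(1+t) \geq 2t$. For $t > -1$ this is exactly the lower constraint $g(t) \geq \frac{2t}{1+t}$. At $t = -1$ we get $\phi = -2 \leq 0$ directly.

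\textbf{Case $y = -1$.} Here $\phi(t,-1) = -2t - t\,g(t) + g(t) = g(t)(1-t) - 2t$. The condition $\phi \leq 0$ is equivalent to $g(t)(1-t) \leq 2t$. For $t < 1$ this is exactly the upper constraint $g(t) \leq \frac{2t}{1-t}$. At $t = 1$ we get $\phi = -2 \leq 0$.

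The only delicate step is the endpoints $t = \pm 1$, where one side of \Cref{eq:g-constraints-mal-body} degenerates. These are handled by the direct evaluations above, since the coefficient of $g(t)$ vanishes there. We also use $g(t) \in [-1,1]$ so that $\bar{h}$ is a valid predictor.

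Note that the two constraints are mutually consistent: for every $t \in [-1,1]$ one checks $\frac{2t}{1+t} \leq \frac{2t}{1-t}$. Otherwise, the argument is a direct algebraic check with no real obstacle.
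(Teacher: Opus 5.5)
Your proof is correct and follows the paper's argument: both pass the expectation over $\bc \sim \mu$ inside by linearity, reduce to the pointwise inequality $2\bar{\mu}y - g(\bar{\mu})(\bar{\mu} + y) \leq 0$, and check it for each $y \in \bits$ against the corresponding constraint on $g$. Your explicit handling of the endpoints $t = \pm 1$, where the coefficient of $g(t)$ vanishes, fills in a detail the paper leaves implicit.
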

We will instantiate the function $g$ with a choice that ensures the hypothesis is always in $\Conv(\Maj_7(\mcC))$. We choose a majority over $k=7$ instances because for a function $g$ to satisfy \Cref{lem:body-g-for-dist} it must be the case that $g'(0) = 2$, and a smaller $k$ would give $g'(0) < 2$. Note the below claim most directly gives a hypothesis within $\Conv(\Maj_7(\mcC) \cup \mcC)$. However, we observe that
every $c \in \mcC$ is simply equal to $\Maj(c,c,\ldots,c)$ and so is also in $\Maj_k(\mcC)$, so $\Maj_7(\mcC) \cup \mcC = \Maj_7(\mcC)$.
\begin{claim}[Instantiating the function $g$]
    \label{claim:maj-suffices}
    For any $k$, define $M_k(t)$ as
    \begin{equation*}
        M_k(t) \coloneqq \Ex_{\bz_1,\ldots, \bz_k \iid \Rad(t)}[\sign(\bz_1 + \cdots + \bz_k)].
    \end{equation*}
    Then,
    \begin{equation*}
        g(t) \coloneqq \frac{16}{19}M_7(t) + \frac{3}{19}t
    \end{equation*}
    satisfies \Cref{eq:g-constraints-mal-body} and is $2$-Lipschitz.
\end{claim}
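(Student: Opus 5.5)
Since $g$ is an odd polynomial, I would reduce everything to explicit polynomial inequalities on $[0,1]$. The first step is to compute $M_7$ explicitly. The derivative of $M_k$ is $M_k'(t) = k\binom{k-1}{(k-1)/2}2^{-(k-1)}(1-t^2)^{(k-1)/2}$: only the event that the other $k-1$ votes tie matters, and the extra vote moves the mean by $2\cdot\frac12$. For $k=7$ this gives
\[
M_7'(t) = \tfrac{35}{16}(1-t^2)^3,
\qquad
M_7(t) = \tfrac{1}{16}\left(35t - 35t^3 + 21t^5 - 5t^7\right).
\]
Hence $g'(t) = \frac{35}{19}(1-t^2)^3 + \frac{3}{19}$. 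This lies in $[\frac{3}{19}, 2]$ on $[-1,1]$, so $g$ is nondecreasing and $2$-Lipschitz, with $g'(0)=2$ exactly. Also $g(1)=1$ and $g$ maps $[-1,1]$ into $[-1,1]$.

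Next I would reduce to $t \in [0,1]$ using oddness. Substituting $t \mapsto -t$ in the lower bound $g(t)\ge \frac{2t}{1+t}$ turns it into the upper bound $g(t)\le\frac{2t}{1-t}$, and vice versa. So it suffices to prove both bounds for $t\in[0,1]$, where both sides are nonnegative.

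For the upper bound on $[0,1]$, I would use concavity. Since $g'$ is decreasing on $[0,1]$, $g$ is concave there with $g(0)=0$ and $g'(0)=2$, so $g(t)\le 2t \le \frac{2t}{1-t}$. At $t=1$ the right side is $+\infty$ and there is nothing to check.

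The main obstacle is the lower bound $(1+t)g(t)\ge 2t$ on $[0,1]$. It is tight at both endpoints: at $t=0$ to first order, since $g'(0)=2$, and at $t=1$, since $g(1)=1$. So no crude estimate will work, and I would do exact polynomial algebra. Dividing by $t/19$ and expanding, the condition becomes
\[
p(t) \coloneqq 38 - 35t - 35t^2 + 21t^3 + 21t^4 - 5t^5 - 5t^6 \ge 0 .
\]
Since $p(1)=0$, I would factor out $(1-t)$ by synthetic division, giving
\[
p(t) = (1-t)\,q(t), \qquad q(t) = 5t^5 + 10t^4 - 11t^3 - 32t^2 + 3t + 38 .
\]
It remains to show $q\ge 0$ on $[0,1]$. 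For $t\in[0,1]$ we have $-11t^3 \ge -11t^2$, and dropping the nonnegative terms $5t^5$ and $3t$ gives
\[
q(t) \ge 10t^4 - 43t^2 + 38 .
\]
As a quadratic in $s=t^2$, the right-hand side is $10s^2-43s+38$. Its discriminant is $1849-1520=329$, so its roots are $(43\pm\sqrt{329})/20$, and the smaller root exceeds $1.2$. Hence it is positive for $s\in[0,1]$. This gives $q>0$, hence $p\ge 0$, which completes the lower bound and thus \Cref{eq:g-constraints-mal-body}.

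Throughout, I would double-check the expansion of $M_7$ and the synthetic division numerically, for example that $q(1)=13$, since these computations are where an error would most likely slip in.
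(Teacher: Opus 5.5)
Your proof is correct, and its skeleton matches the paper's. Both use the explicit polynomial form of $g$ and oddness. Both also use the same factorization $(1+t)g(t)-2t=\frac{t^2}{19}(1-t)q(t)$ with the same quintic $q$. (One harmless bookkeeping slip: to reach $p(t)\ge 0$ you must divide by $t^2/19$, not $t/19$.)

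The real difference is how you split the constraint. The paper uses oddness only to turn the upper bound into the lower bound. It then proves $g(t)\ge \frac{2t}{1+t}$ on all of $(-1,1]$, so it needs $q\ge 0$ on the whole interval, which it gets from the identity $q=13+(1-t)\bigl(3+(1+t)(22+6t-10t^2-5t^3)\bigr)\ge 13$. You instead reduce to $t\in[0,1]$ and get the upper bound there for free from $g'\le 2$, which is already needed for the Lipschitz claim. That leaves only $q\ge 0$ on $[0,1]$, where your cruder estimate $q(t)\ge 10t^4-43t^2+38>0$ suffices. That estimate relies on $t\ge 0$ to drop $5t^5$ and $3t$, so it would not cover the negative half, but your reduction never needs it there.

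Deriving $M_7$ by integrating the pivotality formula $M_7'(t)=\frac{35}{16}(1-t^2)^3$ is also a neat alternative to the paper's direct binomial expansion. It makes $g'(0)=2$ and the Lipschitz bound immediate. In short, you trade one inequality on the whole interval for a structural observation plus a weaker inequality on half of it, and both routes are fully rigorous.
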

\begin{figure}[htb]
    \centering
    \begin{tikzpicture}
        \begin{axis}[
            width=0.6\textwidth,
            height=0.4\textwidth,
            xmin=-1, xmax=1,
            ymin=-1, ymax=1,
            xtick={-1,1},
            ytick={-1,1},
            xlabel={$t$},
            ylabel={$g(t)$},
            axis lines=middle,
            domain=-0.999:0.999,
            samples=401,
            clip mode=individual,
            legend cell align=left,
            legend style={at={(0.02,0.98)},anchor=north west,fill=white,draw=none,font=\small},
        ]
            \addplot[name path=lower,draw=none] {max(-1,2*x/(1+x))};
            \addplot[name path=upper,draw=none] {min(1,2*x/(1-x))};
            \addplot[blue!20,draw=none] fill between[of=lower and upper];

            \addplot[blue!80!black,thick,dotted] {max(-1,2*x/(1+x))};
            \addplot[blue!80!black,thick,dotted] {min(1,2*x/(1-x))};

            \addplot[black,very thick] {(38*x - 35*x^3 + 21*x^5 - 5*x^7)/19};
            \addlegendentry{Our choice of $g$}
        \end{axis}
    \end{tikzpicture}
    \caption{The shaded region depicts the constraints of \Cref{eq:g-constraints-mal-body}, which our choice of $g$ lies within.}
    \label{fig:g-constraints-mal-intro}
\end{figure}
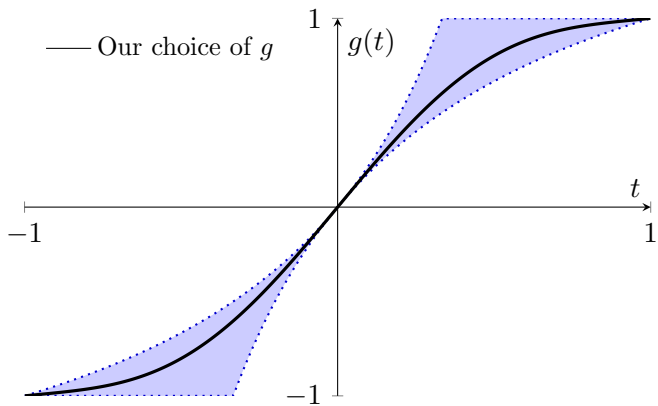

\noindent\textbf{Structure of this section.} The proof of \Cref{claim:maj-suffices} is straightforward but tedious. We defer it to \Cref{appendix:g-choice}. We prove \Cref{lem:mal-error-loss-body} in \Cref{subsec:proof-of-mal-error-loss}, \Cref{lem:body-g-for-dist} in \Cref{subsec:proof-of-g-for-dist}. For now, we show how they imply our malicious noise learner.

\begin{proof}[Proof of \Cref{thm:malicious-body}]
    Let $g$ be the function defined in \Cref{claim:maj-suffices}. We will use \Cref{thm:general-optimization} to find some function $\mu \in \Conv(\mcC)$ for which $\overline{h} = g \circ \mu$ has small loss, satisfying $\ell_S(c, \overline{h}) \leq \eps$ for all $c \in \mcC$. We first verify that the assumptions of that optimization algorithm are satisfied. Defining
    \begin{equation*}
        f(c,h,y) \coloneqq cy \cdot (2 - hy) - hy,
    \end{equation*}
    we have that $\ell_S(c,h)$ is the loss defined in \Cref{eq:def-loss-general}. A direct computation verifies that $f$ has nonpositive agreement coefficients and is bounded. Furthermore, it is $g$-feasible by \Cref{lem:body-g-for-dist,claim:maj-suffices}. Therefore, the algorithm in \Cref{fig:GeneralLearner} converges in $O(1/\eps)$ iterations to $\mu \in \Conv(\mcC)$ for which $\ell_S(c, g \circ \mu) \leq \eps$. Each iteration needs to approximately find the $c \in \mcC$ maximizing $\Ex_{\bx,\by \sim \Unif(S)}[c(\bx)\by \cdot (2 - \bar{h}(\bx)\by)]$ which is simply an approximate weighted ERM call, and can be done with a single unweighted ERM call by \Cref{fact:weighted-ERM}.

    Given this $\mu \in \Conv(\mcC)$ for which $\bar{h} = g \circ \mu$ has small loss, we output $\bh = \Rad(\bar{h})$. Note that $\bh$ is a mixture of hypotheses in $\Maj_7(\mcC)$. This is because, since $\mu \in \Conv(\mcC)$, there exists some distribution $\mcD_c$ over $\mcC$ for which $\mu(x) = \Ex_{\bc \sim \mcD_c}[\bc(x)]$. Then, if we sample $\bc_1, \ldots, \bc_7 \iid \mcD_c$ and set
    \begin{equation*}
        \bh \coloneqq \begin{cases}
            \Maj_7(\bc_1,\bc_2,\ldots, \bc_7) &\text{with probability $\frac{16}{19}$}\\
            \Maj_7(\bc_1,\bc_1,\ldots,\bc_1)&\text{otherwise,}
        \end{cases}
    \end{equation*}
   it is straightforward to verify that $\bh = \Rad(\bar{h})$
    
    What remains is to analyze the accuracy of $\bh$. Let $\bJ \subseteq [n]$ be the indices the adversary may corrupt and $\bI \coloneqq [n] \setminus \bJ$ be the clean indices. By our choice of $n$ and \Cref{cor:Chernoff}, we have with probability at least $1-\delta$ that $\boldeta \coloneqq \abs*{\bJ}/n$ is at most $(1 + \alpha) \eta + \eps$. 

    For the remainder of the proof, we can assume without loss of generality that $\eta \leq 2/3$ as otherwise the error guarantee is vacuous. Similarly, we can assume that $\eps,\alpha \leq 1/100$. Otherwise, we could divide them each by $100$ which would only affect the constant hidden in the $O(\cdot)$ of \Cref{thm:malicious-body}. These assumptions allow us to conclude that $(1 + \alpha)\eta + \eps \leq 3/4$. This is useful since it means that $\bS_{\bI}$ will consist of at least $n/4$ points (under our previous condition that $\boldeta \leq(1 + \alpha) \eta + \eps)$. Since all of these $\geq n/4$ points are i.i.d. clean draws, \Cref{cor:VC-random} along with $\VC(\Maj_7(\mcC)) = O(d)$ from \Cref{fact:maj-VC} gives, with probability at least $1-\delta$,
     \begin{align*}
        \error_{\mcD}(\bh, c^{\star}) &\leq (1 + \alpha) \cdot \Prx_{(\bx,\by) \sim \Unif(\bS_{\bI})}[\bh(\bx) \neq \by] + \eps\\
        &\leq (1 + \alpha)\cdot \frac{\boldeta + \eps}{2(1-\boldeta)} + \eps \tag{\Cref{lem:mal-error-loss-body}} \\
        &\leq (1 + \alpha)\cdot \frac{(1 + \alpha)\eta  +\eps+ \eps}{2(1-(\eta(1 + \alpha)+\eps))} + \eps \tag{$\boldeta \leq (1 +\alpha)\eta + \eps$.} \\
    \end{align*}
    Furthermore, since $\eta(1 + \alpha)+\eps \leq 3/4\leq 1 - \Omega(1)$, the above can be upper bounded by 
    \begin{equation*}
        \error_{\mcD}(\bh, c^{\star}) \leq (1 + O(\alpha + \eps))\frac{\eta}{2(1-\eta)} + O(\eps) \leq (1 + O(\alpha))\frac{\eta}{2(1-\eta)} + O(\eps).
    \end{equation*}
     Union bounding over the two failure probabilities, we have that with probability at least $1 - 2\delta$, the desired result holds up to constants. The constants can be adjusted to get exactly the desired error bound.
\end{proof}

\subsection{Proof of \Cref{lem:mal-error-loss-body}}
\label{subsec:proof-of-mal-error-loss}
Fix any $I$ of size $(1-\eta)n$ for which there is some $c \in \mcC$ fully consistent with $S_I$. Then,
\begin{align*}
    \ell_S(c,\overline{h}) = (1-\eta) &\cdot \Ex_{(\bx,\by) \sim \Unif(S_I)}[c(\bx)\by \cdot (2 - \bar{h}(\bx)\by) -\bar{h}(\bx)\by] \\
   + \eta &\cdot \Ex_{(\bx,\by) \sim \Unif(S_{[n] \setminus I})}[c(\bx)\by \cdot (2 - \bar{h}(\bx)\by) -\bar{h}(\bx)\by].
\end{align*}
Since $c(x) = y$ for all $(x,y) \in S_I$ and the expression $c(x)y \cdot (2 - \bar{h}(x)y) - \bar{h}(x)y$ is always at least $-2$,
\begin{equation*}
     \ell_S(c,\overline{h}) \geq (1-\eta) \cdot \Ex_{(\bx,\by) \sim \Unif(S_I)}\bracket*{2 - 2\bar{h}(\bx) \by} - 2 \eta.
\end{equation*}
This can be rearranged to give that,
\begin{equation*}
    \Ex_{(\bx,\by) \sim \Unif(S_I)}\bracket*{1 - \bar{h}(\bx) \by} \leq \frac{\ell_S(c,\overline{h})/2 + \eta}{1-\eta}.
\end{equation*}
The desired result then follows from $\Pr_{\bh}[\bh(x) \neq y] = \frac{1 - \overline{h}(x)y}{2}$.
\qed

\subsection{Proof of \Cref{lem:body-g-for-dist}}
\label{subsec:proof-of-g-for-dist}
We will show for any fixed choice of $x,y$ that
\begin{equation*}
    \Ex_{\bc \sim \mu}[\bc(x)y \cdot (2 - \overline{h}(x)y) - \overline{h}(x)y] \leq 0,
\end{equation*}
which implies the desired result by taking an expectation over $x,y$. Using $\overline{\mu}$ to denote $\Ex_{\bc \sim \mu}[\bc(x)]$, we have,
\begin{equation*}
    \Ex_{\bc \sim \mu}[\bc(x)y \cdot (2 - \overline{h}(x)y) - \overline{h}(x)y] = \overline{\mu} y \cdot (2  - g(\overline{\mu})y) - g(\overline{\mu})y = 2\overline{\mu}y - g(\overline{\mu})(\overline{\mu} + y).
\end{equation*}
With our constraints on $g$, the above is at most $0$ for both choices of $y \in \bits$.
\qed

\section{Distribution-independent learning with nasty noise: Proof of \Cref{thm:nasty-dist-free-upper-intro}}
In this section, we prove both the upper and lower bound of \Cref{thm:nasty-dist-free-upper-intro}. We begin with the upper bound.
\begin{theorem}[Optimal distribution-independent learning with nasty noise, upper bound of \Cref{thm:nasty-dist-free-upper-intro}]
    \label{thm:nasty-dist-free-upper-body}
    For any concept class $\mcC$ with VC dimension $d$ and $\eps, \delta \in (0,1), \alpha \in [0,1)$, the learner from \Cref{thm:malicious-body} is an $((3/2 + \alpha)\eta + \eps,\delta)$ learner for $\mcC$ with $\eta$-nasty noise using
    \begin{equation*}
        n\coloneqq O\paren*{\frac{d \cdot \log(2 + \alpha/\eps)+ \log(1/\delta)}{\eps\cdot (\alpha + \eps)}}
    \end{equation*}
    samples. Furthermore, this algorithm is efficiently implementable using $O(1/\eps)$ calls to an ERM oracle for $\mcC$.
\end{theorem}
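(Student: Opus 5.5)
The plan is to reuse the deterministic guarantee of the learner from \Cref{thm:malicious-body}: it outputs $\bh = \Rad(\bar h)$, a mixture of functions in $\Maj_7(\mcC)$, with $\ell_{\bS}(c,\bar h) \le \eps$ for every $c \in \mcC$. This holds for any input dataset, so in particular for the corrupted dataset $\bS$. The difference from the malicious case is that under nasty noise the clean indices $\bI = [n]\setminus \bJ$ are chosen by the adversary after seeing $\bS^{\circ}$. Hence $\bS_{\bI}$ is \emph{not} an i.i.d.\ sample, and we cannot generalize from it directly. Instead I will generalize from the full clean sample $\bS^{\circ}$, which is i.i.d., and pay a price on the corrupted indices. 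That price is exactly where the extra $\eta$ comes from, turning $\eta/2$ into $3\eta/2$.

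\textbf{Step 1 (concentration of the corruption budget).} Since $|\bJ| \sim \Bin(n,\eta)$, \Cref{cor:Chernoff} with our choice of $n$ gives that, with probability $1-\delta$, $\boldeta \coloneqq |\bJ|/n \le (1+\alpha)\eta + \eps$. As in the malicious proof, I will assume WLOG that $\eta \le 2/3$ (otherwise the bound is vacuous) and that $\eps,\alpha \le 1/100$.

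\textbf{Step 2 (empirical error on the clean sample).} I will apply \Cref{lem:mal-error-loss-body} to $\bS$ with index set $\bI$, of size $(1-\boldeta)n$. This is valid because $c^{\star}$ is fully consistent with $\bS_{\bI} = \bS^{\circ}_{\bI}$. It yields
\[
\Prx_{\bi \sim \Unif(\bI),\bh}[\bh(\bx_{\bi}) \ne c^{\star}(\bx_{\bi})] \le \frac{\boldeta + \eps/2}{2(1-\boldeta)}.
\]
On the clean sample $\bS^{\circ}$, the indices in $\bI$ are identical to those in $\bS$. Bounding the error on the indices in $\bJ$ trivially by $1$, the empirical error of $\bh$ on $\bS^{\circ}$ is therefore at most
\[
(1-\boldeta)\cdot\frac{\boldeta+\eps/2}{2(1-\boldeta)} + \boldeta = \tfrac{3}{2}\boldeta + \eps/4.
\]
Notably, the $1/(1-\boldeta)$ factor cancels, which is why the nasty bound has no $1-\eta$ denominator.

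\textbf{Step 3 (generalization).} Since $\bS^{\circ}$ consists of $n$ i.i.d.\ clean draws, \Cref{cor:VC-random} together with $\VC(\Maj_7(\mcC)) = O(d)$ from \Cref{fact:maj-VC} gives, with probability $1-\delta$,
\[
\error_{\mcD}(\bh,c^{\star}) \le (1+\alpha)\left(\tfrac32\boldeta + \eps/4\right) + \eps \le \left(\tfrac32 + O(\alpha)\right)\eta + O(\eps).
\]
This bound holds uniformly over all mixtures of $\Maj_7(\mcC)$. That uniformity is essential, because $\bh$ depends on the adversary's choices; but $\bS^{\circ}$ itself is drawn before the adversary acts, so the uniform statement applies. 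A union bound over the two failure events and a rescaling of constants then give the theorem. The efficiency claim is inherited verbatim from \Cref{thm:malicious-body}.

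The main obstacle is conceptual and sits in Step 2: recognizing that one must not generalize from $\bS_{\bI}$, because the adversary's choice of $\bJ$ is sample-dependent. The fix is to pass to $\bS^{\circ}$ via a uniform convergence statement that holds before the adversary acts. The remaining calculations are routine.
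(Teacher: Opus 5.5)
Your proposal is correct and follows the paper's proof essentially step for step. Like the paper, you apply \Cref{lem:mal-error-loss-body} on the uncorrupted indices $\bI$ and bound the error on $\bS^{\circ}_{\bJ}$ trivially by $1$, so the $(1-\boldeta)$ factor cancels to give $\tfrac32\boldeta + O(\eps)$, and you then generalize from the i.i.d.\ clean sample $\bS^{\circ}$ via \Cref{cor:VC-random} and $\VC(\Maj_7(\mcC)) = O(d)$ (your $\eps/4$ is slightly tighter than the paper's $\eps/2$, and your remark that uniform convergence over $\bS^{\circ}$ handles the adversary's sample-dependent choice of $\bJ$ makes explicit what the paper leaves implicit).
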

\Cref{thm:nasty-dist-free-upper-body} implies that error $1.51\eta + \eps$ can be achieved using $\tilde{O}(d/\eps)$ samples and $1.5\eta + \eps$ using $O(d/\eps^2)$ samples. The proof mostly builds on that of \Cref{thm:malicious-body}.

\begin{proof}
    Recall from the definition of nasty noise (\Cref{def:nasty-noise-basic}) that first a clean sample $\bS^{\circ} \iid (\bx, c^{\star}(\bx))_{\bx \sim \mcD}$ was generated. Then, the adversary produced the dataset the learner receives, $\bS$, by corrupting indices of $\bS^{\circ}$ within $\bJ \subseteq [n]$.

    Let $\boldeta \coloneqq |\bJ|/n$ be the fraction of the samples the adversary corrupted, and $\bI \coloneqq [n] \setminus \bJ$ be the clean indices.  As in the proof of \Cref{thm:malicious-body}, we have with probability at least $1 - \delta$, that $ \boldeta \leq (1 + \alpha)\eta + \eps$ (in both malicious and nasty noise, the number of corrupted points is distributed as $\Bin(n,\eta)$). Then, as we showed in the proof of \Cref{thm:malicious-body}, the hypothesis $\bh$ that the algorithm outputs satisfies, 
     \begin{equation}
        \label{eq:nasty-good-error}
           \Prx_{(\bx,\by) \sim \Unif(\bS_{\bI})}[\bh(\bx) \neq \by]  \leq \frac{\boldeta + \eps}{2(1-\boldeta)}.
     \end{equation}
     Since the clean sample $\bS^{\circ}$ and corrupted sample $\bS$ agree on points within $\bI$, we can therefore upper bound the error on the clean sample as
     \begin{align*}
          \Prx_{(\bx,\by) \sim \Unif(\bS^{\circ})}[\bh(\bx) \neq \by] &= (1 - \boldeta) \cdot  \Prx_{(\bx,\by) \sim \Unif(\bS_{\bI})}[\bh(\bx) \neq \by] + \boldeta \cdot \Prx_{(\bx,\by) \sim \Unif(\bS^{\circ}_{\bJ})}[\bh(\bx) \neq \by] \\
          &\leq (1 - \boldeta) \cdot\frac{\boldeta + \eps}{2(1-\boldeta)} + \boldeta = \frac{3}{2}\boldeta + \eps/2,
     \end{align*}
     where the inequality uses \Cref{eq:nasty-good-error} and that the average error on $\bS^{\circ}_{\bJ}$ is naively upper bounded by $1$. Then, since $\bh$ is supported on $\Maj_7(\mcC)$ which has VC dimension $O(d)$ (see \Cref{fact:maj-VC}), \Cref{cor:VC-random} gives that, with probability $1 - \delta$
     \begin{equation*}
         \error_{\mcD}(\bh, c^{\star}) \leq (1 + \alpha) \cdot \Prx_{(\bx,\by) \sim \Unif(\bS^{\circ})}[\bh(\bx) \neq \by] + \eps.
     \end{equation*}
    Combining these bounds gives that, with probability at least $1 - 2\delta$, the error is at most $(3/2 + O(\alpha)) \eta + O(\eps)$. Adjusting the constants gives the desired result.
\end{proof}

\subsection{Lower bound for distribution-independent learning with nasty noise}
\label{sec:nasty-lb}
We prove this lower bound with an adversary known to be weaker than nasty noise, sometimes called ``general, non-adaptive, contamination" \cite{DK23book}, ``TV-contamination" \cite{DKPP22}, or ``TV-noise" \cite{BHMS26}.
\begin{definition}[TV-noise]
    \label{def:TV-noise}
    In PAC learning with $\eta$-TV noise, given a base distribution $\mcD$ and concept $c$, let $\mcD_c$ be the distribution over labeled examples $(\bx, c(\bx))_{\bx \sim \mcD}$. The adversary chooses distribution over labeled examples $\mcDc$ satisfying
    \begin{equation*}
        \dtv(\mcD_c, \mcDc)  \leq \eta,
    \end{equation*}
    and the learner receives i.i.d. samples from $\mcDc$.
\end{definition}
A standard coupling argument (e.g. Claim~2.5 of \cite{DKKLMS19}) shows that handling nasty noise is at least as hard as handling TV-noise. This is because, for any $\mcDc$ satisfying $\dtv(\mcD_c, \mcDc) \leq \eta$, there is a strategy for the $\eta$-nasty adversary that gives the learner i.i.d. samples from $\mcDc$. Hence, the below lower bound extends automatically to show that no learner in the nasty noise model can have expected error better than $\frac{3\eta}{2} - O(1/d)$. This also means no learner can have $\eps$-error with probability $1-\delta$ when $\eps + \delta \leq \frac{3\eta}{2}-O(1/d)$.
\begin{theorem}[Average case lower bound for distribution-independent learning with TV noise, lower bound of \Cref{thm:nasty-dist-free-upper-intro}]
    \label{thm:nasty-lb-body}
    For any noise rate and concept class $\mcC$ with VC dimension $d \geq 4$ there is a distribution over $\bmcD_{\bc}, \bmcDc$ that meet the requirements\footnote{Meaning, $\bmcD_{\bc}$ will always have labels fully consistent with some $\bc \in \mcC$ and $\dtv(\bmcD_{\bc},\bmcDc) \leq \eta$.} of \Cref{def:TV-noise} with probability $1$ satisfying, for any algorithm $A$ mapping a distribution over labeled examples to a hypothesis,
    \begin{equation*}
        \Ex_{\bh \leftarrow A(\bmcDc)}\bracket*{\Prx_{(\bx,\by) \sim \bmcD_{\bc}}[\bh(\bx) \neq \by]} \geq \min\set*{\frac{1}{2}, \frac{3\eta}{2}} - O\paren*{\frac{1}{d}}.
    \end{equation*}
\end{theorem}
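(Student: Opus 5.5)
The plan is to build a Yao-style hard instance on a set shattered by $\mcC$. Fix a set $\{x^\star, x_1, \ldots, x_{d-1}\}$ shattered by $\mcC$, together with one further point $x_0$ whose label is fixed. I will draw the target $\bc$ uniformly among labelings of $x_1,\ldots,x_{d-1}$ and use a random bit for $c(x^\star)$. The data distribution $\bmcD$ will then be chosen from a small family of ``scenarios''. Each scenario splits its mass among a heavy point $x^\star$, a set of light points $x_1,\ldots,x_{d-1}$ (each of mass $O(1/d)$, or placed so that a learner cannot memorize them), and filler mass on $x_0$.

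The key design principle is that, across scenarios, the corrupted distribution $\bmcDc$ is the same (or at least independent of the secret bits that determine the error). Then $A(\bmcDc)$ is a single randomized hypothesis, and its expected error can be computed against the posterior over $(\bmcD_{\bc},\bc)$. The adversary has two tools, and I intend to combine them using a single TV budget. The first is to \emph{erase} light mass: light points never appear in $\bmcDc$, so their labels are uniformly random given $\bmcDc$, and any hypothesis errs on them with probability $1/2$. The second is to \emph{relabel heavy mass}: this creates an $f_{\mathrm{mix}}$-style ambiguity at $x^\star$, as in the \cite{BEK02} argument. The point is that erased light mass must be reinserted somewhere, and reinserting it at $x^\star$ with the opposite label performs both attacks at the cost of a single $\eta$.

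Concretely, I will parametrize each scenario by a clean heavy mass $a$ at $x^\star$ with label $\pm 1$ and an erased light mass $e$, under the constraint that the TV distance to the common $\bmcDc$ is at most $\eta$. Writing $q = \Pr[\bh(x^\star) = 1]$, the learner's error in a scenario is
\[
  a\cdot \Pr[\bh(x^\star)\ne c(x^\star)] + \tfrac{e}{2} - O(1/d),
\]
where the $O(1/d)$ term covers light points that remain visible, or lower-order effects. The proof then reduces to a small linear program: choose scenario masses and prior weights so that $\min_q$ of the averaged error is at least $\tfrac{3\eta}{2}$, and verify that each scenario's TV cost is $\le \eta$. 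The cap at $\tfrac12$ in the statement arises when $\eta \ge 1/3$: there the mass budget is exhausted, and I will simply take the scenario that makes all labels uniformly unpredictable.

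I expect this final step, finding the right scenarios, to be the main obstacle. My first attempt is the symmetric pair: $x^\star$ with label $+$ of mass $\eta$ together with $\eta$ erased light mass, against its mirror image. This only yields error $\eta$. The reason is that the ambiguous heavy mass counts against the learner with weight $\eta$, while it needs weight $2\eta$. So I will next try asymmetric scenarios, in which one scenario uses its budget entirely on label flips at a heavy point of clean mass $2\eta$. The other scenario pays for its contribution at $x^\star$ by \emph{moving} erased light mass there, so that the learner faces a $2\eta$-heavy ambiguity in one world and $\eta$ of erased mass in the other. I will choose the prior weights so that the learner's best response is indifferent in $q$.

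This is also where distribution-independence must be exploited, and I expect it to be the source of the extra $\eta/2$ over the fixed-distribution bound $\eta$ of \Cref{thm:nasty-fixed-dist-intro}. The learner does not know whether the missing mass sits on light points or on $x^\star$, so the scenarios must differ in their \emph{marginals}, not only in their labels.

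Once a feasible scenario family is found, the remaining steps are routine. I will check the footnote's requirements (consistency with $\bc\in\mcC$ via shattering, and $\dtv\le\eta$), average over the random labels, and take the minimum over $q$. Finally, I will appeal to the coupling remark preceding the theorem to transfer the bound to nasty noise.
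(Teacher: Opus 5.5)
Your proposal has a genuine gap. It never produces the construction, and the accounting it sets up provably cannot exceed error $\eta$, so no choice of scenarios or prior weights inside it will work. In your model, error comes only from two sources: the heavy point, contributing $a\cdot\Pr[\bh(x^\star)\neq c(x^\star)]$, and erased mass, which earns just $e/2$ because its labels are fair coins given $\bmcDc$. Visible light points are written off as $O(1/d)$.

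Let $m_{\pm}$ be the mass of $(x^\star,\pm 1)$ in the common $\bmcDc$. Suppose both values of $c(x^\star)$ occur among your scenarios; otherwise the learner predicts the forced label and loses only $e/2\le\eta/2$. A scenario with $c(x^\star)=+1$ has no clean mass on $(x^\star,-1)$, so $m_-\le\eta$, and symmetrically $m_+\le\eta$. Counting the clean excess (the erased mass plus any surplus at $(x^\star,+1)$) gives $e+\max(0,a-m_+)\le\eta$. Hence $a+e\le\eta+m_+\le 2\eta$, and symmetrically for $c(x^\star)=-1$. So a learner that flips a fair coin at $x^\star$ and on unseen points, and copies observed labels elsewhere, has error at most $(a+e)/2+O(1/d)\le\eta+O(1/d)$ in every scenario. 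Your asymmetric pair is an instance: its errors are $2\eta(1-q)$ and $\eta q+\eta/2$, and the best prior (weight $1/3$ on the first scenario) yields exactly $\eta$.

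The missing idea is to \emph{disguise} a heavy point rather than erase mass. In the paper, $\bx_{\wrong}$ is uniform over $d-1$ points. $\bmcD_{\bc}$ puts mass $\eta$ on it with the correct label, while $\bmcDc$ gives it the same small mass $\frac{1-2\eta}{d-3}$ as every other light point but with the \emph{flipped} label. Given $\bmcDc$, each light-looking point is $\bx_{\wrong}$ with probability $\frac{1}{d-1}$. For $\eta\le 1/3$, trusting every observed label is Bayes-optimal: expected cost $\frac{\eta}{d-1}$ versus $\approx\frac{1-2\eta}{d-1}$ per point. So the learner is wrong on the entire mass $\eta$ of $\bx_{\wrong}$, which is error weight $1$ per unit of TV budget rather than your $1/2$.

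The removed mass is then placed on $x_{\rand}$ with the flipped label. This makes $c(x_{\rand})$ (clean mass $\approx\eta$) a fair coin and contributes the other $\eta/2$ from the same budget. Your instinct to spend the budget twice is right, but the first use must be this one. This is also where distribution-independence enters: what is hidden is \emph{which visible point is heavy}, which is precisely the term your $O(1/d)$ discards.

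Two smaller points. First, a point $x_0$ outside the shattered set whose label is constant across the $2^d$ witnessing concepts need not exist; work inside the shattered set instead. Second, the cap $1/2$ for $\eta\ge 1/3$ is not obtained by making all labels unpredictable, since within your framework that gives at most $\eta<1/2$. Instead, run the construction at rate $1/3-O(1/d)$, which is legal because TV noise only requires distance at most $\eta$.
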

Note that achieving an accuracy of $1/2$ is trivial: The learner that always outputs the coin-flip hypothesis achieves this. Hence, \Cref{thm:nasty-lb-body} shows that either that trivial algorithm or \Cref{thm:nasty-dist-free-upper-body} is optimal.

The proof of \Cref{thm:nasty-lb-body} is divided into three pieces. First, we describe the construction of the distributions $\bmcD_{\bc}, \bmcDc$. Then, in \Cref{claim:lb-tv-construction}, we verify this construction always chooses distributions satisfying $\dtv(\bmcD_{\bc}, \bmcDc) =\eta$. This, combined with the fact that our construction explicitly chooses a $\bc^{\star}$ that $\bmcD_{\bc}$ is consistent with, ensures that $\bmcD_{\bc}, \bmcDc$ meet the requirements of \Cref{def:TV-noise} with probability $1$. Finally, in \Cref{lem:lb-bad-error}, we show no algorithm can have better than $\approx \frac{3\eta}{2}$ error whenever $\eta \leq 1/3$.

\pparagraph{The construction.}
We begin by describing the construction of both the uncorrupted labeled distribution $\bmcD_{\bc^{\star}}$ and the corrupted labeled distribution, $\bmcDc$ . Let $S$ be the size-$d$ set that $\mcC$ shatters. We will draw the target concept $\bc^{\star}$ uniformly among a set of $2^d$ concepts that shatter this set (so $\bc^{\star}$ is equally likely to take on any labeling on $S$). The distribution $\bmcD$ will be chosen as follows. We fix arbitrary $x_{\rand} \in S$. Then, we choose a random $\bx_{\wrong} \sim \Unif(S \setminus \set{x_{\rand}})$. We use $\bS_{\correct} \coloneqq S \setminus (\set{x_{\rand}} \cup \set{\bx_{\wrong}})$ to refer to the remaining points.

Given these choices, to sample $(\bx, \by) \sim \bmcDc$, we
\begin{enumerate}
    \item With probability $\frac{1-2\eta}{d-3}$, we set $\bx = \bx_{\wrong}$ and $\by = -\bc^{\star}(\bx)$.
    \item For each $x \in \bS_{\correct}$, we also set $\bx = x$ with probability $\frac{1-2\eta}{d-3}$. In this case, we set $\by = \bc^{\star}(\bx)$.
    \item With the remaining\footnote{Observe this is nonnegative when $\eta \geq 1/(d-1)$. We are free to assume $\eta$ satisfies this bound in the statement of \Cref{thm:nasty-lb-body}, as otherwise, the desired result is vacuous because the $O(1/d)$ term dominates} $\frac{2(\eta d-\eta-1)}{d-3}$ probability, we set $\bx = x_{\rand}$ and draw $\by \sim \Unif(\bits)$.
\end{enumerate}
To sample $(\bx, \by) \sim \bmcD_{\bc^{\star}}$ we set,
\begin{enumerate}
    \item   With probability $\eta$, we set $\bx = \bx_{\wrong}$ and $\by = \bc^{\star}(\bx)$.
    \item For each $x \in \bS_{\correct}$, we set $\bx = x$ with probability $\frac{1-2\eta}{d-3}$ and $\by = \bc^{\star}(\bx)$.
    \item With the remaining $\frac{\eta d-\eta-1}{d-3}$ probability, we set $\bx = x_{\rand}$ and $\by = \bc^{\star}(\bx)$.
\end{enumerate}

\begin{figure}[htb]
    \centering
    \def\masssmall{0.0714}
    \def\masswrongclean{0.2500}
    \def\massrandclean{0.1786}
    \def\masswrongcorr{-0.0714}
    \def\massrandcorr{0.1786}
    \def\consblue{blue!55}
    \def\inconsred{red!55}
    \def\xwrongL{6.5}
    \def\xwrongR{7.5}
    \def\xrandL{9.5}
    \def\xrandR{10.5}

    \begin{minipage}[t]{0.47\textwidth}
        \centering
        \begin{tikzpicture}
            \begin{axis}[
                    width=7.6cm,
                    height=4.7cm,
                    xmin=0.5, xmax=10.5,
                    ymin=-0.30, ymax=0.33,
                    axis x line*=middle,
                    axis y line=left,
                    axis on top,
                    every axis x line/.append style={line width=0.9pt},
                    every axis y line/.append style={},
                    xtick=\empty,
                    xlabel={},
                    ylabel={},
                    ytick=\empty,
                    clip=false,
                    tick label style={font=\small},
                    label style={font=\small},
                    title={Uncorrupted distribution $\bmcD_{\bc^{\star}}$},
                    title style={font=\small}
                ]
                    \draw[fill=\consblue,draw=none] (axis cs:0.5,0) rectangle (axis cs:10.5,\masssmall);
                    \draw[fill=\consblue,draw=none] (axis cs:\xwrongL,0) rectangle (axis cs:\xwrongR,\masswrongclean);
                    \draw[fill=\consblue,draw=none] (axis cs:\xrandL,0) rectangle (axis cs:\xrandR,\massrandclean);
                    \node[anchor=south,font=\scriptsize] at (axis cs:7,0.257) {$\bx_{\wrong}$};
                    \node[anchor=south,font=\scriptsize] at (axis cs:10,0.186) {$x_{\rand}$};
                    \node[anchor=west,font=\scriptsize] at (axis cs:0.62,0.313) {Consistent with $\bc^{\star}$};
                    \node[anchor=west,font=\scriptsize] at (axis cs:0.62,-0.283) {Inconsistent with $\bc^{\star}$};
            \end{axis}
        \end{tikzpicture}
    \end{minipage}
    \hfill
    \begin{minipage}[t]{0.47\textwidth}
        \centering
        \begin{tikzpicture}
            \begin{axis}[
                    width=7.6cm,
                    height=4.7cm,
                    xmin=0.5, xmax=10.5,
                    ymin=-0.30, ymax=0.33,
                    axis x line*=middle,
                    axis y line=left,
                    axis on top,
                    every axis x line/.append style={line width=0.9pt},
                    every axis y line/.append style={},
                    xtick=\empty,
                    xlabel={},
                    ylabel={},
                    ytick=\empty,
                    clip=false,
                    tick label style={font=\small},
                    label style={font=\small},
                    title={Corrupted distribution $\bmcDc$},
                    title style={font=\small}
                ]
                    \draw[fill=\consblue,draw=none] (axis cs:0.5,0) rectangle (axis cs:\xwrongL,\masssmall);
                    \draw[fill=\inconsred,draw=none] (axis cs:\xwrongL,0) rectangle (axis cs:\xwrongR,\masswrongcorr);
                    \draw[fill=\consblue,draw=none] (axis cs:\xwrongR,0) rectangle (axis cs:10.5,\masssmall);
                    \draw[fill=\consblue,draw=none] (axis cs:\xrandL,0) rectangle (axis cs:\xrandR,\massrandcorr);
                    \draw[fill=\inconsred,draw=none] (axis cs:\xrandL,0) rectangle (axis cs:\xrandR,-\massrandcorr);
                    \node[anchor=north,font=\scriptsize] at (axis cs:7,-0.082) {$\bx_{\wrong}$};
                    \node[anchor=south,font=\scriptsize] at (axis cs:10,0.186) {$x_{\rand}$};
            \end{axis}
        \end{tikzpicture}
    \end{minipage}
    \caption{Visualization of our construction used in the proof of \Cref{thm:nasty-lb-body}.}
    \label{fig:nasty-lb-construction}
\end{figure}
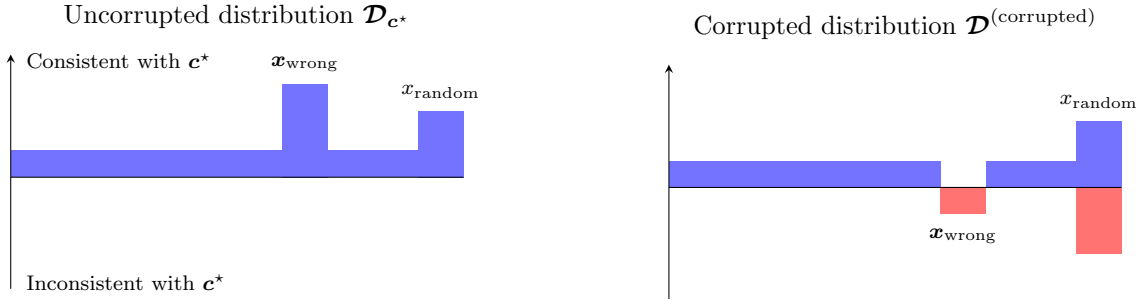

We first verify this construction has the desired total variation distance bound.
\begin{claim}
    \label{claim:lb-tv-construction}
    With probability $1$,
    \begin{equation*}
        \dtv(\bmcD_{\bc^{\star}}, \bmcDc) =\eta.
    \end{equation*}
\end{claim}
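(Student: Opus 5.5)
The plan is a direct computation from the definition of total variation distance given in the preliminaries, $\dtv(\mcD_1,\mcD_2) = \sum_{z} \max\{0, \mcD_1(z) - \mcD_2(z)\}$. The domain consists of the labeled pairs $(x,y)$ with $x \in S$ and $y \in \bits$. Fix any outcome of $\bc^{\star}$ and $\bx_{\wrong}$. I will list the mass that $\bmcD_{\bc^{\star}}$ and $\bmcDc$ place on each labeled point and sum the positive parts of $\bmcD_{\bc^{\star}}(z) - \bmcDc(z)$. Because the argument is the same for every outcome, the equality then holds with probability $1$.

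\textbf{Step 1: points of $\bS_{\correct}$.} For $x \in \bS_{\correct}$, both distributions put mass $\frac{1-2\eta}{d-3}$ on $(x, \bc^{\star}(x))$ and nothing on $(x, -\bc^{\star}(x))$. These points therefore contribute $0$.

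\textbf{Step 2: the point $\bx_{\wrong}$.} The clean distribution puts mass $\eta$ on $(\bx_{\wrong}, \bc^{\star}(\bx_{\wrong}))$, and the corrupted one puts $0$ there. This contributes $\eta$. The corrupted distribution puts mass $\frac{1-2\eta}{d-3}$ on $(\bx_{\wrong}, -\bc^{\star}(\bx_{\wrong}))$, while the clean one puts $0$; this is a negative difference and contributes $0$.

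\textbf{Step 3: the point $x_{\rand}$.} The clean distribution puts mass $\frac{\eta d - \eta - 1}{d-3}$ on $(x_{\rand}, \bc^{\star}(x_{\rand}))$. The corrupted distribution puts half of $\frac{2(\eta d-\eta-1)}{d-3}$, the same amount, on each label. So the correct label has difference exactly $0$, and the wrong label has a negative difference. Both contribute $0$.

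Summing gives $\dtv = \eta$. As a sanity check, both masses should total $1$. For the corrupted distribution, $(d-2)\cdot\frac{1-2\eta}{d-3} + \frac{2(\eta d - \eta - 1)}{d-3} = 1$. For the clean distribution, $\eta + (d-3)\frac{1-2\eta}{d-3} + \frac{\eta d-\eta-1}{d-3} = 1$. There is no real obstacle here; the only care needed is to check that all the masses are nonnegative. This requires $\eta \ge 1/(d-1)$, which the footnote allows us to assume, and $\eta \le 1/2$. Step 3 is where it would be easy to slip: the $x_{\rand}$ mass in the corrupted distribution is exactly twice the clean one, so the correct label cancels exactly.
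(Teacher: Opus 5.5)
Your proof is correct and is the same direct computation the paper uses: apply the positive-part formula for $\dtv$ and observe that the only labeled point where the clean mass strictly exceeds the corrupted mass is $(\bx_{\wrong}, \bc^{\star}(\bx_{\wrong}))$, which contributes exactly $\eta$. Your case analysis is slightly more careful than the paper's, which writes ``$\geq$'' where it means ``$>$''. One slip is in your sanity check: $|\bS_{\correct}| = d-2$, not $d-3$, so the coefficients should be $(d-1)$ for the corrupted distribution and $(d-2)$ for the clean one. As written, your two identities evaluate to $\frac{d-4+2\eta}{d-3}$ and $1-\eta+\frac{\eta d-\eta-1}{d-3}$, neither of which is $1$ unless $\eta = 1/2$; this does not affect Steps 1--3.
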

\begin{proof}
    We will use the fact that total variation distance can be written as
    \begin{equation*}
        \dtv(\bmcD_{\bc^{\star}}, \bmcDc) = \sum_{x \in S, y \in \bits}\max\set*{0, \bmcD_{\bc^{\star}}(x,y) - \bmcDc(x,y)}.
    \end{equation*}
    The only choice of $x,y$ for which $\bmcD_{\bc^{\star}}(x,y) \geq \bmcDc(x,y)$ is when $x = \bx_{\wrong}$ and $y = \bc^{\star}(x)$. In this case, the difference is exactly $\eta$.
\end{proof}
Next, we verify that no algorithm can learn to high accuracy in expectation.
\begin{lemma}
    \label{lem:lb-bad-error}
    For any (possibly randomized) algorithm $A$ mapping a distribution over labeled examples to a (possibly randomized) hypothesis, 
    \begin{equation*}
        \Ex_{ \bh \leftarrow A\paren*{\bmcDc}}\bracket*{\Prx_{\bx \sim \bmcD}\bracket*{\bh(\bx) \neq \bc^{\star}(\bx)} } \geq \frac{3\eta}{2} - O(1/d),
    \end{equation*}
    whenever $O(1/d) \leq \eta \leq \frac{1}{3} - O(1/d)$.
\end{lemma}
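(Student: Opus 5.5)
The plan is a Bayesian/symmetry argument. We fix what the learner sees and average the error over the posterior of the hidden parameters $(\bc^{\star}, \bx_{\wrong})$ given $\bmcDc$. The first step is to describe $\bmcDc$ precisely. It places mass $\frac{1-2\eta}{d-3}$ on each of the $d-2$ points of $S\setminus\set{x_{\rand}}$, each with a deterministic label. It places the remaining mass on $x_{\rand}$ with a uniformly random label. So $\bmcDc$ is determined by a labeling $\ell: S\setminus\set{x_{\rand}} \to \bits$, where $\ell = \bc^{\star}$ except at $\bx_{\wrong}$, where $\ell = -\bc^{\star}$. Since $\bc^{\star}$ is uniform over all $2^d$ labelings of $S$ and $\bx_{\wrong}$ is uniform and independent, the posterior given $\bmcDc$ (equivalently, given $\ell$) has two parts:
\begin{itemize}
    \item $\bx_{\wrong}$ is uniform over $S\setminus\set{x_{\rand}}$, and $\bc^{\star}$ agrees with $\ell$ off $\bx_{\wrong}$;
    \item $\bc^{\star}(x_{\rand})$ is a uniform bit, independent of $\ell$ and $\bx_{\wrong}$.
\end{itemize}
The whole proof rests on this: the algorithm's output $\bh \leftarrow A(\bmcDc)$ depends only on $\ell$ (plus its own randomness), so it is independent of these posterior quantities.

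Next we condition on $\ell$ and on $\bh$ and split the error under $\bmcD_{\bc^{\star}}$ into two parts.

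\emph{Contribution from $x_{\rand}$.} The point $x_{\rand}$ has mass $\frac{\eta d-\eta-1}{d-3} = \eta - O(1/d)$. Because $\bc^{\star}(x_{\rand})$ is an independent fair coin, any prediction errs there with probability exactly $1/2$. This contributes $\frac{\eta}{2} - O(1/d)$.

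\emph{Contribution from the $d-2$ remaining points.} For each such $z$, set $p_z \coloneqq \Pr_{\bh}[\bh(z)\neq \ell(z)]$. There are two cases for $z$:
\begin{itemize}
    \item With posterior probability $\frac{1}{d-2}$, $z = \bx_{\wrong}$. Then $z$ has true mass $\eta$, and $\bh$ errs exactly when it agrees with $\ell(z)$, contributing $\eta(1-p_z)$.
    \item With probability $\frac{d-3}{d-2}$, $z$ is a correct point of mass $\frac{1-2\eta}{d-3}$, contributing $\frac{1-2\eta}{d-3}\, p_z$.
\end{itemize}
Summing over $z$, the expected error from these points is
\[
\frac{1}{d-2}\sum_{z}\paren*{\eta(1-p_z) + (1-2\eta)p_z} = \frac{1}{d-2}\sum_z \paren*{\eta + (1-3\eta)p_z} \;\geq\; \eta,
\]
using $\eta \le 1/3$. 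Adding the two contributions and then averaging over $\ell$ and $\bh$ gives the claimed $\frac{3\eta}{2}-O(1/d)$. The assumption $\eta \geq O(1/d)$ is only needed so that the $x_{\rand}$ mass is nonnegative and the construction is well defined.

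The main obstacle is the first step: making the posterior claim rigorous. We must check that the map $(\bc^{\star},\bx_{\wrong}) \mapsto \ell$ is exactly $(d-2)$-to-one on the coordinates outside $x_{\rand}$, with each preimage corresponding to a different choice of $\bx_{\wrong}$. We must also check that $\bmcDc$ reveals nothing about $\bc^{\star}(x_{\rand})$. Both follow from the explicit description of $\bmcDc$ and the uniformity of $\bc^{\star}$ over the shattered set. After that, the rest is the linear computation above, and the only place $\eta\le 1/3$ enters is to make the coefficient $1-3\eta$ nonnegative.
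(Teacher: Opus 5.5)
Your proof is correct and follows essentially the same route as the paper. Both arguments condition on $\bmcDc$, under which $\bc^{\star}(x_{\rand})$ is a fair coin and $\bx_{\wrong}$ is uniform, and then do a per-point computation in which $\eta \le 1/3$ makes agreeing with the observed labels optimal; the paper phrases this through the correlation $\Ex[\bar{h}(\bx)\bc^{\star}(\bx)]$ rather than error probabilities, and both yield the bound $\frac{3\eta}{2} - \frac{1-2\eta}{2(d-3)}$.

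There is one small counting slip. $S\setminus\{x_{\rand}\}$ has $d-1$ points, not $d-2$, so $\Pr[z=\bx_{\wrong}\mid \bmcDc]=\frac{1}{d-1}$ and each point contributes $\frac{1}{d-1}\bigl(\eta+(\frac{d-2}{d-3}(1-2\eta)-\eta)p_z\bigr)$. The coefficient on $p_z$ is still nonnegative for $\eta\le 1/3$, so your conclusion $\ge \eta$ is unchanged.
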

\begin{proof}
    Our construction has only two sources of randomness, the choice of $\bc^{\star}$, which is uniform over all labelings of $S$, and the choice of $\bx_{\wrong}$, which is uniform over $S\setminus \set{x_{\rand}}$. Given a corrupted dataset $\bmcDc$, we say that $c^{\star}, x_{\wrong}$ are ``plausible" if $\bmcDc$ would be the corrupted dataset given $\bc^{\star} = c^{\star}$ and $\bx_{\wrong} = x_{\wrong}$. We observe that $c^{\star}, x_{\wrong}$ are plausible if and only if,
    \begin{equation*}
        c^{\star}(x) = y(x) \cdot (-1)^{\Ind[x = x_{\wrong}]} \quad\quad\text{for all $x \neq x_{\rand}$,}
    \end{equation*}
    where $y(x)$ is the unambiguous value of $\by \mid \bx=x$ for a sample $\bx,\by \sim \bmcDc$. All plausible candidates are equally likely since the original choices in the construction were uniform. Therefore, conditioned on $\bmcDc$,
    \begin{enumerate}
        \item The value of $\bc^{\star}(x_{\rand})$ is equally likely to be $+1$ and $-1$.
        \item The choice of $\bx_{\wrong}$ is equally likely to be any point in $S\setminus \set{x_{\rand}}$.
    \end{enumerate}
    Using these observations, we now compute the expected error of the hypothesis. Let $\bar{h}_{\bmcDc}(x)$ be the average label of the hypothesis $A$ outputs on input $x$ given $\bmcDc$. We will drop the dependence on $\bmcDc$ for convenience. Then,
    \begin{align*}
        \Ex_{\bx \sim \bmcD}[\bar{h}(\bx)\bc^{\star}(\bx)] &=\Ex_{\bx \sim \bmcD}[\bar{h}(\bx)\bc^{\star}(\bx), \bx = x_{\rand}] + \Ex_{\bx \sim \bmcD}[\bar{h}(\bx)\bc^{\star}(\bx) , \bx \neq x_{\rand}] \\
        &=\Ex_{\bx \sim \bmcD}[\bar{h}(\bx)\bc^{\star}(\bx) , \bx \neq x_{\rand}] \tag{$\Ex[\bc^{\star}(x_{\rand}) \mid \bmcDc]= 0$}\\
        &=\Ex_{\bx \sim \bmcD}[\bar{h}(\bx)\bc^{\star}(\bx) , \bx = \bx_{\wrong}]  + \Ex_{\bx \sim \bmcD}[\bar{h}(\bx)\bc^{\star}(\bx) , \bx \in \bS_{\correct}]. 
    \end{align*}
    The above is easiest to analyze for each $x$ separately. The probability any individual $x$ is $\bx_{\wrong}$ is $\frac{1}{d-1}$. If it is, then $\bmcD$ puts weight $\eta$ on $x$, and $\bar{h}(x)\bc^{\star}(x) = -\bar{h}(x)\by(x)$. If $x$ is not $\bx_{\wrong}$, then $\bmcD$ puts weight $\frac{1-2\eta}{d-3}$ on $x$ and $\bar{h}(x)\bc^{\star}(x) = \bar{h}(x)\by(x)$. Hence,
    \begin{equation*}
         \Ex_{\bx \sim \bmcD}[\bar{h}(\bx)\bc^{\star}(\bx)]  = \sum_{x \neq x_{\rand}} \frac{\eta}{d-1} \cdot (-\bar{h}(x)\by(x)) + \frac{d-2}{d-1} \cdot \frac{1-2\eta}{d-3} \cdot \bar{h}(x)\by(x).
    \end{equation*}

    Under the constraint that $\eta \leq 1/3 - O(1/d)$, the above is maximized by setting $\bar{h}(x) = \by(x)$, in which case
    \begin{equation*}
         \Ex_{\bx \sim \bmcD}[\bar{h}(\bx)\bc^{\star}(\bx)] \leq (d-1)\cdot \paren*{\frac{-\eta}{d-1} + \frac{(d-2)(1-2\eta)}{(d-1)(d-3)}} = \frac{d-2}{d-3}\cdot (1-2\eta) - \eta.
    \end{equation*}
    Finally we use that $\Pr[\bh(x) \neq \bc^{\star}(x)] =\frac{1-\bar{h}(x)\bc^{\star}(x)}{2}$,
    \begin{equation*}
        \Pr[\bh(\bx) \neq \bc^{\star}(\bx)] \geq \frac{1}{2}\cdot \paren*{1 - \frac{d-2}{d-3}\cdot (1-2\eta) + \eta} \geq \frac{3\eta}{2} - O(1/d). 
        \qedhere
    \end{equation*}

\end{proof}

\section{Fixed-distribution learning with nasty noise, proof of \Cref{thm:nasty-fixed-dist-intro}}
\label{sec:nasty-new-proxy}
In this section, we prove the following.
\begin{theorem}[Fixed-distribution learning with nasty noise, formal version of \Cref{thm:nasty-fixed-dist-intro}]
    \label{thm:nasty-fixed-dist-body}
    For every concept class $\mcC$ and distribution over unlabeled examples $\mcD$, there exists an algorithm $A_{\mcC, \mcD}$ which $(\eta + \eps, \delta)$-learns $\mcC$ with $\eta$-nasty noise over the fixed distribution $\mcD$ using $n \coloneqq O\paren*{\frac{d + \log(1/\delta)}{\eps^2}}$ samples.
\end{theorem}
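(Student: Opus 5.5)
The plan follows the outline of \Cref{subsec:overview-fixed-nasty}. I would use a fractional version of the corruption estimate so that \Cref{fact:Rad-compose} controls its complexity. Let $\mcF$ consist of functions $x \mapsto \phi(\mu(x), c'(x))$ with $\mu \in \Conv(\mcC)$, $c' \in \mcC$ and $\phi$ drawn from a fixed finite family of $O(1)$-Lipschitz maps into $[0,1]$. Define
\begin{equation*}
    \eta(c,f,S,\mcD) \coloneqq \Ex_{(\bx,\by)\sim\Unif(S)}\bracket*{\Ind[c(\bx)\neq\by] + \Ind[c(\bx)=\by]\, f(\bx)} - \Ex_{\bx\sim\mcD}[f(\bx)],
\end{equation*}
and $\eta(c,S,\mcD) \coloneqq \sup_{f\in\mcF}\eta(c,f,S,\mcD)$. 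The learner, which knows $\mcD$ (or a large unlabeled sample from it), outputs any randomized $\bh$ from a fixed low-complexity class $\mcH$ (mixtures over $\Maj_k(\mcC)$) satisfying $\error_{\mcD}(\bh,c) \le \eta(c,\bS,\mcD) + \eps$ for every $c\in\mcC$. The proof then has two parts.

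\textbf{Upper bound on the estimate at the true target.} I would show $\eta(c^\star,\bS,\mcD)\le \eta+\eps$ with high probability. Let $\bJ$ be the corrupted indices and $\bI=[n]\setminus\bJ$. On $\bI$ we have $c^\star(\bx_i)=\by_i$, so the bracketed quantity equals $f(\bx_i)$ there. It is at most $1$ on $\bJ$. Hence
\begin{equation*}
    \eta(c^\star,f,\bS,\mcD) \le |\bJ|/n + \Ex_{\Unif(\bS^\circ)}[f] - \Ex_{\mcD}[f],
\end{equation*}
using $f\ge 0$ to extend the clean part to all of $\bS^\circ$. By \Cref{cor:Chernoff}, $|\bJ|/n\le \eta+\eps/2$. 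The uniform deviation $\sup_{f\in\mcF}|\Ex_{\Unif(\bS^\circ)}f-\Ex_\mcD f|$ is $O(\sqrt{(d+\log(1/\delta))/n})$ by \Cref{fact:rad-gen} and \Cref{fact:Rad-compose} (applied to the Lipschitz compositions, with a union over the finite family of $\phi$'s). This is what forces $n = O((d+\log(1/\delta))/\eps^2)$.

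\textbf{Existence of a good $\bh$.} I would prove that a single $\bh\in\mcH$ satisfies $\error_\mcD(\bh,c)\le\eta(c,S,\mcD)+\eps$ for all $c$, for every dataset $S$. The loss $\error_\mcD(h,c)-\eta(c,f,S,\mcD)$ depends on $h,c$ only through finitely many evaluations and through $\mcD$-agreement, so \Cref{cor:minimax} applies. It therefore suffices, for each distribution $\mu$ over $\mcC$ with mean $\bar\mu(x)=\Ex_{\bc\sim\mu}[\bc(x)]$, to exhibit $h_\mu$ and detectors $f_{c,\mu}\in\mcF$ with
\begin{equation*}
    \Ex_{\bc\sim\mu}\bracket*{\error_\mcD(h_\mu,\bc)} \le \Ex_{\bc\sim\mu}\bracket*{\eta(\bc,f_{\bc,\mu},S,\mcD)}.
\end{equation*}
My first attempt is $f_{c,\mu}(x)=\Ind[c(x)\ne \sign(\bar\mu(x))]$, smoothed into a Lipschitz function of $(\bar\mu(x),c(x))$, together with $h_\mu=\Rad(g(\bar\mu))$ for a suitable $g$. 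Both sides then become integrals of pointwise quantities: $\Ex_\mcD$ of $(1-g(\bar\mu)\bar\mu)/2$ on the left, against the $S$-empirical mass of disagreements minus the $\mcD$-mass of minority-vote regions on the right. I would compare them point by point. If that pointwise comparison fails because it ignores the mismatch between the $S$-mass and the $\mcD$-mass at $x$, I would instead derive $h_\mu$ and $f_{c,\mu}$ as the pointwise optimizers of the averaged inequality. The overview says they are "forced" to be such optimizers, which suggests they are sign and threshold functions of $\bar\mu$. Once the right $g$ is found, I would replace it by a Lipschitz approximation realizable in $\Conv(\Maj_k(\mcC))$ with $k=O(1/\eps^2)$. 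This costs only $\eps$ in the inequality, keeps $\VC(\mcH)=\tilde O(d/\eps^2)$, and keeps $\mcF$ inside Lipschitz compositions of $\Conv(\mcC)$.

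\textbf{Combining the two parts and the main obstacle.} Apply the existence step to the received $\bS$ to get $\bh$, then evaluate at $c=c^\star$: $\error_\mcD(\bh,c^\star)\le \eta(c^\star,\bS,\mcD)+\eps\le \eta+2\eps$. This needs no generalization step for $\bh$ itself, since its error is measured directly against $\mcD$, which the learner knows. I expect the averaged inequality in the existence step to be the main obstacle. The detector choice must make $\mcF$ rich enough to capture the $\mcD$-dependent certificate, yet low enough in complexity for the first part to hold with $O(d/\eps^2)$ samples.
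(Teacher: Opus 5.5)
Your setup matches the paper. Your fractional $\eta(c,f,S,\mcD)$ is the paper's estimate, with your $f\in[0,1]$ playing the role of $\Pr[\boldf(x)=1]$. Your bound $\eta(c^\star,\bS,\mcD)\le\eta+\eps$ via $f\ge 0$ is a valid variant of \Cref{claim:lb-corruption-formal}. The minimax reduction, the $\Maj_k$ approximation, and the absence of a generalization step for $\bh$ are all as in the paper. The gap is the step you yourself flag: the averaged inequality is never established, and the detector you propose fails. With $f_{c,\mu}=\Ind[c\neq\sign\bar\mu]$, the subtracted term $\Ex_{\bc\sim\mu}\Ex_{\mcD}[f_{\bc,\mu}]$ equals $\Ex_{\mcD}[(1-|\bar\mu|)/2]$. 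This is also the smallest possible left-hand side, attained by $h_\mu=\sign\bar\mu$. Meanwhile, at a point of $S$ labeled $\sign\bar\mu(x)$, the $S$-integrand is only $(1-|\bar\mu(x)|)/2$. Suppose $S$ sits on points where the concepts in $\mu$'s support are unanimous and agree with the labels, while $\mcD$ puts mass where they split. Then the required inequality becomes $\Ex_{\mcD}[(1-|\bar\mu|)/2]\le-\Ex_{\mcD}[(1-|\bar\mu|)/2]$, which is false. Smoothing does not help: the detector fires where $\mcD$ has mass and the worst-case $S$ has none.

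The missing idea is this. Since $S$ is arbitrary relative to $\mcD$, the $\mcD$-side integrand must be pointwise at most some constant $v$, and the $S$-side integrand pointwise at least $v$. Set $m=|\bar\mu(x)|$ and $h_\mu=\sign\bar\mu$. Let $p_+$ (resp.\ $p_-$) be the detection probability when $c(x)$ agrees (resp.\ disagrees) with $\sign\bar\mu(x)$. The $\mcD$-side integrand (error plus expected detection) is $\frac{1-m}{2}+\frac{1+m}{2}p_++\frac{1-m}{2}p_-$. The $S$-side integrand is $\frac{1-m}{2}+\frac{1+m}{2}p_+$ at a point labeled $\sign\bar\mu(x)$, and $\frac{1+m}{2}+\frac{1-m}{2}p_-$ otherwise. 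Requiring this at equal and at different values of $m$ forces $p_-=0$, $v=\frac12$, and $p_+=\frac{m}{1+m}$. So you must detect, fractionally, exactly where $c$ agrees with the majority. That is the opposite of your first attempt, and it is not a sign or threshold function of $\bar\mu$.

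This is the paper's $f_{c,\mu}$ from \Cref{lem:nasty-fixed-dist-avg}; in $[-1,1]$ form it is $\frac{c\bar\mu-1}{c\bar\mu+1}$ if $c\bar\mu\ge 0$ and $-1$ otherwise. With it, $\Ex_{\bc\sim\Rad(\bar\mu)}[p]=|\bar\mu|/2$, so the $\mcD$ side is identically $\frac12$. The inequality then reduces to $0\le\frac14\Ex_{(\bx,\by)\sim\Unif(S)}[|\bar\mu(\bx)|-\by\bar\mu(\bx)]$. The detector is $1$-Lipschitz in $c\bar\mu$, so it fits in your $\mcF$ without smoothing. Without this choice, the existence of a good $\bh$, and hence the theorem, is not proved.
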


As discussed in \Cref{subsec:overview-fixed-nasty}, the first step is to design a way of certifying that if the true target function were $c$, then the adversary would have needed to corrupt at least $\eta(c,\bS, \mcD)$ fraction of the dataset.
\begin{claim}[Lower bounding corruptions]%
    \label{claim:lb-corruption-formal}
    For any functions $c:X \to \bits$ and $f:X \to [-1,1]$, define
    \begin{align*}
        \eta(c, f, S ,\mcD) &\coloneqq  \Prx_{(\bx, \by) \sim \Unif(S)}[c(\bx) \neq \by \text{ }\mathrm{or}\text{ } \boldf(\bx) = 1] - \Prx_{\bx \sim \mcD}[\boldf(\bx) = 1] \quad \text{where }\boldf(x) \sim \Rad(f(x)) \\
        &=\frac{1}{4} \cdot \Ex_{(\bx, \by) \sim \Unif(S)}\bracket*{1 + f(\bx) + c(\bx)\by(f(\bx) - 1)} - \frac{1}{2} \cdot \Ex_{\bx \sim \mcD}[f(\bx)].
    \end{align*}
    Then, for any function class $\mcF$ mapping to $[-1,1]$ and function $c^{\star}:X \to \bits$, with probability $1-\delta$ over $\bS^{\circ}$ consisting of $n$ i.i.d. samples of the form $(\bx, c^{\star}(\bx))_{\bx \sim \mcD}$, all datasets $S$ satisfy
    \begin{equation*}
        \dist(S, \bS^{\circ}) \geq \sup_{f \in \mcF}\set*{\eta(c^{\star},f,S, \mcD)} - O\paren*{\Rad_{n, \mcD}(\mcF) + \sqrt{\frac{\log(1/\delta)}{n}}},
    \end{equation*}
    where $\dist(S, S^{\circ})$ is the minimum fraction of points within $S^{\circ}$ that must be corrupted to form $S$.
\end{claim}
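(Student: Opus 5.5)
Fix $f\in\mcF$ and any $S$ with $\dist(S,\bS^{\circ})=k/n$. Let $J$ be a set of $k$ indices such that $S_i=\bS^{\circ}_i$ for all $i\notin J$. The argument has three steps: a deterministic pointwise bound, a single uniform concentration event, and a final combination.

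\textbf{Step 1 (pointwise bound).} For each index $i$, set
\[
\phi(x,y)\coloneqq \Pr_{\boldf}[c^{\star}(x)\neq y \text{ or } \boldf(x)=1] = \tfrac14\paren*{1+f(x)+c^{\star}(x)y(f(x)-1)}.
\]
We always have $\phi\in[0,1]$. For $i\notin J$ the $i$-th pair is a clean pair $(\bx_i,c^{\star}(\bx_i))$, so $\phi = \frac{1+f(\bx_i)}{2} = \Pr[\boldf(\bx_i)=1]$. For $i\in J$ we use $\phi(S_i)\le 1$, which gives
\[
\phi(S_i)\le \frac{1+f(\bx^{\circ}_i)}{2} + 1 - \frac{1+f(\bx^{\circ}_i)}{2} \le \frac{1+f(\bx^{\circ}_i)}{2}+1 .
\]
This slack is too large. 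Instead, bound $\phi(S_i)-\frac{1+f(\bx^{\circ}_i)}{2}\le 1$ directly, since the subtracted term is nonnegative. Averaging over $i$ then yields the deterministic inequality
\[
\Ex_{\Unif(S)}[\phi] \le \Ex_{\bx\sim\Unif(\bS^{\circ}_X)}\bracket*{\tfrac{1+f(\bx)}{2}} + \frac{k}{n}.
\]
Subtracting $\Ex_{\mcD}[\frac{1+f}{2}]$ from both sides gives
\[
\eta(c^{\star},f,S,\mcD)\le \dist(S,\bS^{\circ}) + \tfrac12\paren*{\Ex_{\Unif(\bS^{\circ}_X)}[f]-\Ex_{\mcD}[f]}.
\]

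\textbf{Step 2 (concentration).} The only random quantity left is the deviation $\sup_{f\in\mcF}\abs{\Ex_{\Unif(\bS^{\circ})}[f]-\Ex_{\mcD}[f]}$. It depends only on the unlabeled clean points, not on $S$ or on the adversary. Apply \Cref{fact:rad-gen} to $\mcF$ with the $n$ i.i.d. clean points: with probability $1-\delta$, this supremum is at most $O\paren*{\Rad_{n,\mcD}(\mcF)+\sqrt{\log(1/\delta)/n}}$.

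\textbf{Step 3 (combine).} On the event from Step 2, the inequality from Step 1 holds for every $f$ and every $S$ simultaneously, because Step 1 was deterministic. Taking the supremum over $f$ gives the claim.

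The one point needing care is that the good event must be quantified before $S$ is chosen, so that it covers adaptive adversaries. This works because the event concerns only $\bS^{\circ}$ and the fixed class $\mcF$. Otherwise the proof is direct, and I expect no real obstacle beyond checking the multilinear identity for $\phi$ on clean points.
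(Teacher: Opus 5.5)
Your proof is correct and follows essentially the same route as the paper. The paper first bounds $\sup_{f}\eta(c^{\star},f,\bS^{\circ},\mcD)$ via \Cref{fact:rad-gen}, and then uses the fact that the summand takes values in an interval of width $1$ to pay at most $\dist(S,\bS^{\circ})$ when passing to $S$; this is your Steps~1 and~2 in the opposite order.

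One small slip: your displayed identity for $\phi$ is off by an additive $\tfrac12$. The probability actually equals $\tfrac14\bigl(3+f(x)+c^{\star}(x)y(f(x)-1)\bigr)$, and your formula gives $f/2$ rather than $\tfrac{1+f}{2}$ on clean pairs. Nothing breaks, though: you only use the probabilistic definition of $\phi$, and the constant cancels against $\Pr_{\mcD}[\boldf=1]$.
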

The second ingredient is a recipe for choosing the certifying function $f$ that has the following guarantee, on average over any distribution $\mu$.
\begin{lemma}[Lower bounding corruptions on average]
    \label{lem:nasty-fixed-dist-avg}
    For any distribution $\mu$ over functions mapping $X \to \bits$ let $\bar{\mu}$ be the function mapping $x$ to $\Ex_{\bc \sim \mu}[\bc(x)]$, define
    \begin{equation*}
        f_{c, \mu}(x) \coloneqq \begin{cases}
            \frac{c(x)\bar{\mu}(x) - 1}{c(x)\bar{\mu}(x) + 1} &\text{if }c(x)\bar{\mu}(x) \geq 0 \\
            -1 &\text{otherwise}.
        \end{cases}
    \end{equation*}
    Then, for any dataset $S$ and distribution $\mcD$,
    \begin{equation*}
        \Prx_{\bx \sim \mcD, \bc \sim \mu}[\sign(\bar{\mu}(\bx)) \neq \bc(\bx)] \leq \Ex_{\bc \sim \mu}[\eta(\bc, f_{\bc, \mu}, S, \mcD)].
    \end{equation*}
\end{lemma}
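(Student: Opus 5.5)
The plan is to reduce the inequality to a pointwise statement in $x$ (and $y$). Neither side depends on $S$ or $\mcD$ in any essential way beyond averaging, so I would separate the right-hand side into its $S$-part and its $\mcD$-part and bound each separately. I will use the closed form of $\eta(c,f,S,\mcD)$ from \Cref{claim:lb-corruption-formal} and linearity of expectation in $\bc \sim \mu$. This gives
\begin{equation*}
\Ex_{\bc\sim\mu}[\eta(\bc,f_{\bc,\mu},S,\mcD)] = \frac14\Ex_{(\bx,\by)\sim\Unif(S)}\bracket*{1 + F(\bx) + \by\paren*{H(\bx) - \bar\mu(\bx)}} - \frac12 \Ex_{\bx\sim\mcD}[F(\bx)],
\end{equation*}
where $F(x) \coloneqq \Ex_{\bc}[f_{\bc,\mu}(x)]$ and $H(x) \coloneqq \Ex_{\bc}[\bc(x) f_{\bc,\mu}(x)]$.

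The left-hand side equals $\Ex_{\bx\sim\mcD}\bracket*{\frac{1-|\bar\mu(\bx)|}{2}}$, with the convention $\sign(0)=1$ giving exactly $1/2$ when $\bar\mu=0$. It therefore suffices to verify two pointwise facts for every $x$ and every $y\in\bits$. The first, $F(x) \le |\bar\mu(x)| - 1$, controls the $\mcD$-term. The second, $1 + F(x) + y(H(x) - \bar\mu(x)) \ge 0$, makes the $S$-term nonnegative.

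To verify these, fix $x$ and write $m = \bar\mu(x)$. The construction is symmetric under $c \mapsto -c$, so I may assume $m \ge 0$. Then $\bc(x)=1$ with probability $p=(1+m)/2$, and in that case $f = (m-1)/(m+1)$. Otherwise $\bc(x)=-1$, so $\bc(x)m \le 0$ and $f=-1$; I will check the case $m=0$ separately, where both branches give $-1$. A direct computation gives $F = p\frac{m-1}{m+1} - (1-p) = m-1$, so the first fact holds with equality. It also gives $H = p\frac{m-1}{m+1} + (1-p) = 0$, so the $S$-expression is $m - ym = m(1-y) \ge 0$ for both values of $y$.

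Summing these pointwise bounds completes the argument. There is no real obstacle here, only bookkeeping: the boundary cases $m=0$ and $|m|=1$ (where $f_{c,\mu}=0$), and checking that the sign convention on $\sign(0)$ matches the error $1/2$. This computation also explains why $f_{c,\mu}$ is "forced." It is exactly the choice that makes the $\mcD$-term tight and kills the $H$ coefficient.
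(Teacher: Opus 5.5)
Your proof is correct and follows essentially the same route as the paper. Both arguments reduce the claim to the pointwise identities $\Ex_{\bc}[f_{\bc,\mu}(x)] = |\bar\mu(x)| - 1$ and $\Ex_{\bc}[\bc(x) f_{\bc,\mu}(x)] = 0$ (the paper's \Cref{prop:compute-exp}); the $\mcD$-term then cancels the left-hand side exactly, and the $S$-term becomes $\frac{1}{4}\Ex_{(\bx,\by)\sim\Unif(S)}[|\bar\mu(\bx)| - \by\bar\mu(\bx)] \ge 0$. Your reduction to $m \ge 0$ via the $c \mapsto -c$ symmetry is valid because you check both values of $y$, and it differs only cosmetically from the paper's direct computation with $|\bar\mu|$ and $\sign(\bar\mu)$.
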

We'll also use the following.
\begin{restatable}[Approximating with $k$-wise majority]{claim}{MajClaim}
    \label{claim:approx-with-maj}
    Let $\mu$ be any distribution over functions mapping $X$ to $\pm 1$ with mean $\bar{\mu} \coloneqq \Ex_{\bc \sim \mu}[\bc]$ and $\mcD$ be any distribution over $X$. Then, for any $k = O(1/\eps^2)$
    \begin{equation*}
        \Ex_{\bc\sim \mu, \bx \sim \mcD}\bracket*{\Ex_{\bc_1, \ldots, \bc_k \iid \mu}\bracket*{\bh(\bx) \bc(\bx) : \bh \leftarrow \sign\paren*{\lfrac{\bc_1 + \cdots + \bc_k}{k}}}} \geq \Ex_{\bc \sim \mu, \bx \sim \mcD}\bracket*{\sign \circ \bar{\mu}(\bx) \bc(\bx)} - \eps.
    \end{equation*}
\end{restatable}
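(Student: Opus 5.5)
We fix a single point $x$ and prove the inequality pointwise; the claim then follows by averaging over $\bx \sim \mcD$. Write $m \coloneqq \bar{\mu}(x) \in [-1,1]$. By symmetry (replacing every $c$ by $-c$ flips the sign of $m$ and of both sides consistently), we may assume $m \geq 0$. Conditioned on $x$, the values $\bc_1(x),\ldots,\bc_k(x)$ are i.i.d.\ $\Rad(m)$, independent of $\bc(x)\sim\Rad(m)$. The left-hand integrand therefore factors as $\Ex[\bh(x)]\cdot\Ex[\bc(x)] = M_k(m)\cdot m$, where $M_k$ is as in \Cref{claim:maj-suffices}, with the convention $\sign(0)=1$ as in the paper. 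The right-hand integrand is $\sign(m)\cdot m = |m|$. So it suffices to show that for every $m \in [0,1]$,
\begin{equation*}
    m\,\paren*{1 - M_k(m)} \leq \eps.
\end{equation*}

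We bound $1 - M_k(m) = 2\Pr[\bz_1+\cdots+\bz_k < 0]$ with $\bz_i \iid \Rad(m)$, using Hoeffding's inequality. The sum has mean $km$ and bounded increments of range $2$, so the probability is at most $\exp(-km^2/2)$. Hence
\begin{equation*}
    m(1-M_k(m)) \leq 2m\,e^{-km^2/2}.
\end{equation*}
The right-hand side is maximized over $m$ at $m = 1/\sqrt{k}$, where it equals $2e^{-1/2}/\sqrt{k}$. It is therefore at most $\eps$ once $k \geq 4/\eps^2$, which is $O(1/\eps^2)$ as required.

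The only mildly delicate point is the trade-off between two regimes. When $m$ is small, the majority vote is unreliable, but the loss is then weighted by the small factor $m$. When $m$ is large, concentration makes the vote reliable. The Hoeffding bound combined with maximizing $m e^{-km^2/2}$ handles both regimes at once. The remaining details are routine: tie-breaking at a zero sum only helps when $m \geq 0$, and the case $m<0$ follows by the symmetric argument. This also explains why $k$ must scale as $1/\eps^2$ rather than $\log(1/\eps)$: the worst case is $m \approx \eps$.
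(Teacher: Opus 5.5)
Your proof is correct and follows essentially the same route as the paper: fix $x$, use the independence of $\bc$ from $\bc_1,\ldots,\bc_k$ to reduce to bounding $|\bar{\mu}(x)|$ times the probability that the $k$-sample majority disagrees with $\sign(\bar{\mu}(x))$, and balance the small-$|\bar{\mu}|$ regime against the concentration regime at $|\bar{\mu}(x)| \approx 1/\sqrt{k}$. The only difference is the concentration tool. The paper uses the variance (Chebyshev) bound to get $\min\{2|\bar{\mu}|, 2/(k|\bar{\mu}|)\} \le 2/\sqrt{k}$ via a two-case split, whereas your Hoeffding bound $2me^{-km^2/2}$ covers both regimes at once with a slightly better constant; and because Hoeffding also controls the non-strict event $\{\sum_i \bz_i \ge 0\}$, your $m<0$ case under the $\sign(0)=1$ convention does go through.
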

\pparagraph{Structure of this section.} We prove \Cref{claim:lb-corruption-formal} in \Cref{subsec:proof-of-lb-corrupt}, \Cref{lem:nasty-fixed-dist-avg} in \Cref{subsec:proof-of-nasty-fixed-avg}, and  \Cref{claim:approx-with-maj} in \Cref{subsec:proof-of-approx-with-maj}. For now, we show how these results imply our fixed-distribution nasty noise learner.
\begin{proof}[Proof of \Cref{thm:nasty-fixed-dist-body} assuming \Cref{claim:lb-corruption-formal}, \Cref{lem:nasty-fixed-dist-avg}, and \Cref{claim:approx-with-maj}]
    Define the $2$-Lipschitz function,
    \begin{equation*}
        \phi(t) = \begin{cases}
            \frac{t-1}{t + 1} &\text{if }t \geq 0 \\
            -1 &\text{otherwise}
        \end{cases}
    \end{equation*}
    and set
    \begin{equation*}
        \mcF \coloneqq \set{\phi \circ (c\cdot \bar{\mu}) \mid c \in \mcC, \bar{\mu}\in \Conv(\mcC)}.
    \end{equation*}
    Our learner will output any randomized hypothesis $\bh$ satisfying
    \begin{equation}
        \label{eq:nasty-error-by-proxy}
        \Prx_{\bx \sim \mcD}[\bh(\bx) \neq c(\bx)] \leq \sup_{f \in \mcF}\set*{\eta(c, f, \bS, \mcD)} + \eps/2 \quad\quad\text{for all }c \in \mcC.
    \end{equation}
    The remainder of this proof is divided into two pieces. First, we show such a hypothesis $\bh$ that is supported on $\Maj_k(\mcC)$ for $k = O(1/\eps^2)$ always exists. Since we are not bounding the runtime of the learner, this suffices to ensure that the algorithm can output such a hypothesis. Second, we show that such hypotheses indeed have the desired error bound.

    \pparagraph{Existence of a good hypothesis.} We will show that there is some $\bar{h} \in \Conv(\Maj_k(\mcC))$ s.t. the randomized hypothesis $\bh(x) = \Rad(\bar{h}(x))$ satisfies \Cref{eq:nasty-error-by-proxy}. Define,
    \begin{equation*}
        \ell(\bar{h}, c) \coloneqq \Prx_{\bx \sim \mcD}[\Rad(\bar{h}(\bx)) \neq c(\bx)] - \sup_{f \in \mcF}\set*{\eta(c, f, \bS, \mcD)}.
    \end{equation*}
    Then, by the minimax lemma (\Cref{cor:minimax}),
    \begin{equation*}
        \inf_{\bar{h} \in \Conv(\Maj_k(\mcC))} \set*{\sup_{c \in \mcC} \set*{\ell(\bar{h}, c)}} = \sup_{\text{distribution }\mu\text{ over }\mcC} \set*{ \inf_{\bar{h} \in \Conv(\Maj_k(\mcC))} \set*{\Ex_{\bc \sim \mu}[\ell(\bar{h}, \bc)]}}.
    \end{equation*}
    We will upper bound the right-hand side of the above equation by $\eps/2$ for any choice of $\mu$. Since the function $f_{c, \mu}$ used in \Cref{lem:nasty-fixed-dist-avg} is contained in $\mcF$, it gives that
    \begin{equation*}
        \Ex_{\bc \sim \mu}[\ell(\sign(\bar{\mu}), \bc)] \leq 0.
    \end{equation*}
    where $\bar{\mu} =\Ex_{\bc \sim \mu}[\bc]$. Then, \Cref{claim:approx-with-maj} gives that there is some $\bar{h} \in \Conv(\Maj_k(\mcC))$ with $\Ex_{\bc \sim \mu}[\ell(\bar h, \bc)] \leq \eps/2$ as desired.

    \pparagraph{Any good hypothesis has small error.} We will show that any $\bh$ satisfying \Cref{eq:nasty-error-by-proxy} satisfies, with probability at least $1-\delta$ that
    \begin{equation*}
        \Prx_{\bx \sim \mcD}[\bh(\bx) \neq c^{\star}(\bx)] \leq \eta + \eps.
    \end{equation*}
    Applying \Cref{eq:nasty-error-by-proxy} with $c = c^{\star}$, we get that
    \begin{align*}
        \Prx_{\bx \sim \mcD}[\bh(\bx) \neq c^{\star}(\bx)]  &\leq \sup_{f \in \mcF}\set*{\eta(c^{\star}, f, \bS, \mcD)} + \eps/2 \\
        &\leq \eta + \eps/2 + O\paren*{\Rad_{n, \mcD}(\mcF) + \sqrt{\frac{\log(1/\delta)}{n}}} \tag{\Cref{claim:lb-corruption-formal}}
    \end{align*}
    Finally, by combining \Cref{fact:sym-VC} with \Cref{fact:Rad-compose},\footnote{In the below, we are assuming without loss of generality that $\VC(\mcC) \geq 1$. If $\VC(\mcC) = 0$, it can only contain one function and learning is trivial.}
    \begin{equation*}
        \Rad_{n,\mcD}(\mcF) \leq O\paren*{\sqrt{\frac{\VC(\mcC)}{n}}}
    \end{equation*}
    For our choice of $n$, the term $O\paren*{\Rad_{n, \mcD}(\mcF) + \sqrt{\frac{\log(1/\delta)}{n}}}$ is at most $\eps/2$, giving the desired result.
\end{proof}

\begin{remark}[Sample access to $\mcD$]
    \label{remark:access-to-dist}
    The only information about $\mcD$ that $A$ needs is $O(\eps)$-accurate estimates of $\eta(c, f, S, \mcD)$ for all $c\in \mcC$ and $f \in \mcF$, as well as $O(\eps)$ accurate estimates of $\Prx_{\bx \sim \mcD}[\bh(x) \neq c(x)]$ for all $\bh$ supported on $\Maj_{O(1/\eps^2)}(\mcC)$ and $c \in \mcC$. All of these quantities can be estimated with access to an unlabeled i.i.d. sample from $\mcD$ of size $\tilde{O}((d + \log(1/\delta))/\eps^4)$.
\end{remark}

\subsection{Proof of \Cref{claim:lb-corruption-formal}}
\label{subsec:proof-of-lb-corrupt}
\begin{proof}
    Set 
    $$\eps \coloneqq O\paren*{\Rad_{n, \mcD}(\mcF) + \sqrt{\frac{\log(1/\delta)}{n}}}.$$
    We first show that with probability at least $1-\delta$ that $\sup_{f \in \mcF}\set*{\eta(c^{\star},f,\bS^{(\circ)}, \mcD)} \leq \eps$. This proves the special case where $S = \bS^{\circ}$. We'll then use a Lipschitzness argument to complete the proof.

    For this special case, we observe that since $\bS^{\circ}$ is i.i.d. examples from $\mcD$ consistent with $c^{\star}$,
    \begin{align*}
        \eta(c^{\star}, f, \bS^{\circ} ,\mcD) &=\frac{1}{4} \cdot \Ex_{(\bx, \by) \sim \Unif(\bS^{(\circ)})}\bracket*{1 + f(\bx) + c^{\star}(\bx)c^{\star}(\bx)(f(\bx) - 1)} - \frac{1}{2} \cdot \Ex_{\bx \sim \mcD}[f(\bx)] \\
        &= \frac{1}{2} \cdot \Ex_{(\bx, \by) \sim \Unif(\bS^{(\circ)})}\bracket*{f(\bx)} - \frac{1}{2} \cdot \Ex_{\bx \sim \mcD}[f(\bx)] \tag{$c^{\star}(x)^2 = 1$.}
    \end{align*}
    \Cref{fact:rad-gen} therefore gives that $\sup_{f \in \mcF} \set*{\eta(c^{\star}, f, \bS^{\circ} ,\mcD)} > \eps$ with probability at most $\delta$.

    We now consider the more general case of arbitrary $S$. We observe that
    \begin{equation*}
       \abs*{T_f(x,y)} \leq 1/2 \quad\text{where} \quad  T_f(x,y) \coloneqq \frac{1 + f(x) + c^{\star}(x)y(f(x) - 1)}{4}.
    \end{equation*}
    for any $f$ bounded on $[-1,1]$ and $c^{\star}(x), y \in \bits$. This is a straightforward calculation: If $c^{\star}(x) = y$, then $T_f(x,y) = f(x)/2$ which satisfies the desired bound. Otherwise, if $c^{\star}(x) \neq y$, then $T_f(x,y) = 1/2$ which also satisfies the desired bound. Then, for any $S$,
    \begin{equation*}
        \eta(c^{\star},f,S, \mcD) = \Ex_{(\bx, \by) \sim S}[T_f(x,y)] - \Ex_{(\bx, \by) \sim \bS^{\circ}}[T_f(x,y)] + \eta(c^{\star}, f, \bS^{\circ} ,\mcD).
    \end{equation*}
    The desired result follows by observing that $ \Ex_{(\bx, \by) \sim S}[T_f(\bx,\by)] - \Ex_{(\bx, \by) \sim \bS^{\circ}}[T_f(\bx,\by)]  \leq \dist(S, \bS^{\circ})$, since the maximum and minimum of $T$ differ by $1$, and our earlier bound that $\sup_{f \in \mcF} \set*{\eta(c^{\star}, f, \bS^{\circ} ,\mcD)} > \eps$ with probability at most $\delta$.
\end{proof}

\subsection{Proof of \Cref{lem:nasty-fixed-dist-avg}}
\label{subsec:proof-of-nasty-fixed-avg}
Rather than directly prove the result of \Cref{lem:nasty-fixed-dist-avg}, we will describe how to derive the lemma statement (in particular, our choice of $f_{c, \mu}$). Our ultimate goal is to ensure that
\begin{equation*}
    \Prx_{\bx \sim \mcD, \bc \sim \mu}[h(\bx) \neq \bc(x)] \leq \Ex_{\bc \sim \mu}[\eta(\bc, f_{\bc, \mu}, S, \mcD)].
\end{equation*}
Setting $h(x) = \sign(\bar{\mu}(x))$ is the obvious choice as that minimizes the left-hand side and does not affect the right-hand side. We will set $f_{c, \mu}(x) = f(c(x), \bar{\mu}(x))$ for some function $f:\bits \times [-1,1] \to [-1,1]$ we aim to derive.
With this choice, we then expand the definition of $\eta$ and, after some straightforward algebraic manipulations, arrive at a goal of
\begin{align*}
    &\Ex_{\bx \sim \mcD, \bc \sim \mu} \bracket*{1 - \sign(\bar{\mu}(\bx)) \bc(\bx) + f(\bc(\bx), \bar{\mu}(\bx))} \\
    &\quad\quad\quad\quad\leq \frac{1}{2} \cdot \Ex_{(\bx,\by) \sim \Unif(S), \bc \sim \mu}\bracket*{1 + f(\bc(\bx), \bar{\mu}(\bx)) + \bc(\bx)\by(f(\bc(\bx), \bar{\mu}(\bx)) - 1)}.
\end{align*}
A challenge with satisfying the above equation is that we are viewing $S$ as a worst-case sample, and there is no guarantee that it is representative of $\mcD$. Therefore, the behavior of $\mu$ on $\mcD$ may be completely different than its behavior on $S$. To handle this, for some constant value $v$, we will ensure the left-hand side of the above equation is \textsl{always} at most $v$, and the right-hand side is \textsl{always} at least $v$. To accomplish this, it is sufficient (and indeed necessary) for
\begin{align*}
    &1 - \abs*{\bar{\mu}} + \Ex_{\bc \sim \Rad(\bar{\mu})}[f(\bc, \bar{\mu})] \leq v \quad \text{for all }\bar{\mu} \in [-1,1]\\
    &\abs*{\bar{\mu} - \Ex_{\bc \sim \Rad(\bar{\mu})}[\bc \cdot f(\bc, \bar{\mu})]} \leq 1 - 2v +  \Ex_{\bc \sim \Rad(\bar{\mu})}[f(\bc, \bar{\mu})] \quad \text{for all }\bar{\mu} \in [-1,1].
\end{align*}
    It turns out, the above constraints have exactly one solution\footnote{Not counting additional solutions from setting $f(-1,1)$ and $f(1,-1)$ arbitrarily because they don't affect the constraints} among $[-1,1]$-valued $f$, which is satisfied by setting $v = 0$ and
    \begin{equation}
        \label{eq:def-f}
        f(c, \bar{\mu}) \coloneqq \begin{cases}
            \frac{c\bar{\mu}- 1}{c\bar{\mu} + 1} &\text{if }c\bar{\mu} \geq 0 \\
            -1 &\text{otherwise}.
        \end{cases}
\end{equation}

We now proceed to give a formal proof of \Cref{lem:nasty-fixed-dist-avg}. We first observe the below expected values.
\begin{proposition}
    \label{prop:compute-exp}
    For $f$ defined in \Cref{eq:def-f} and any $\bar{\mu} \in [-1,1]$,
    \begin{equation*}
        \Ex_{\bc \sim \Rad(\bar{\mu})}[f(\bc, \bar{\mu})] = \abs*{\bar{\mu}} - 1 \quad\quad\text{and}\quad\quad \Ex_{\bc \sim \Rad(\bar{\mu})}[\bc \cdot f(\bc, \bar{\mu})] =0.
    \end{equation*}
\end{proposition}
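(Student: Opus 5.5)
This is a direct two-case computation. By the symmetry $f(-c,-\bar\mu) = f(c,\bar\mu)$, which is visible from \Cref{eq:def-f} since $f$ depends only on the product $c\bar\mu$, I would first reduce to $\bar\mu \ge 0$. Flipping the sign of $\bar\mu$ sends $\bc \sim \Rad(\bar\mu)$ to $-\bc \sim \Rad(-\bar\mu)$. Under this flip the first expectation is unchanged and the second flips sign, and both target values $\abs{\bar\mu}-1$ and $0$ are invariant. The edge case $\bar\mu=0$ fits the formula with the convention $c\bar\mu = 0 \ge 0$, giving $f = -1$ for both values of $c$, so it needs no separate treatment.

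With $\bar\mu \ge 0$, I would write out the two outcomes. Here $\bc = 1$ has probability $\frac{1+\bar\mu}{2}$ and gives $f(1,\bar\mu) = \frac{\bar\mu-1}{\bar\mu+1}$. Similarly $\bc=-1$ has probability $\frac{1-\bar\mu}{2}$, and since $c\bar\mu = -\bar\mu$ is negative for $\bar\mu > 0$, it gives $f = -1$; at $\bar\mu = 0$ it also gives $-1$. The factor $1+\bar\mu$ then cancels against the denominator. For the first identity,
\begin{equation*}
\Ex[f(\bc,\bar\mu)] = \frac{1+\bar\mu}{2}\cdot\frac{\bar\mu-1}{\bar\mu+1} - \frac{1-\bar\mu}{2} = \frac{\bar\mu-1}{2} - \frac{1-\bar\mu}{2} = \bar\mu - 1 .
\end{equation*}
For the second identity,
\begin{equation*}
\Ex[\bc\, f(\bc,\bar\mu)] = \frac{\bar\mu-1}{2} + \frac{1-\bar\mu}{2} = 0 .
\end{equation*}

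There is no real obstacle here. The only point needing care is the boundary case $\bar\mu = 0$ (and $\bar\mu=1$, where the $\bc=-1$ branch has probability zero). I would check that the case split in \Cref{eq:def-f} at $c\bar\mu \ge 0$ is consistent there: both branches equal $-1$ at $\bar\mu = 0$, so the formulas above still hold.
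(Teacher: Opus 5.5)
Your proposal is correct and is essentially the paper's proof: a direct two-outcome computation in which the factor $1+|\bar{\mu}|$ cancels against the denominator of $f$. The only difference is cosmetic. The paper handles both signs of $\bar{\mu}$ at once, writing the outcome probabilities as $\frac{1\pm|\bar{\mu}|}{2}$ and multiplying by the sign of $\bar{\mu}$ in the second identity, whereas you first reduce to $\bar{\mu}\ge 0$ using the symmetry $f(-c,-\bar{\mu})=f(c,\bar{\mu})$.
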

\begin{proof}
    Both are straightforward calculations. For the first:
    \begin{equation*}
        \Ex_{\bc \sim \Rad(\bar{\mu})}[f(\bc, \bar{\mu})]  = \frac{1 + \abs*{\bar{\mu}}}{2} \cdot \frac{\abs*{\bar{\mu}} - 1}{\abs*{\bar{\mu}} + 1} + \frac{1 - \abs*{\bar{\mu}}}{2} \cdot (-1) = \abs*{\bar{\mu}} - 1
    \end{equation*}
    The second:
    \begin{equation*}
        \Ex_{\bc \sim \Rad(\bar{\mu})}[\bc \cdot f(\bc, \bar{\mu})]  = \frac{1 + \abs*{\bar{\mu}}}{2} \cdot \frac{\abs*{\bar{\mu}} - 1}{\abs*{\bar{\mu}} + 1} \cdot \sign(\bar{\mu}) + \frac{1 - \abs*{\bar{\mu}}}{2} \cdot (-1) \cdot (-\sign(\bar{\mu}))= 0.
    \end{equation*}
\end{proof}

We are now ready to prove the main result of this subsection.
\begin{proof}[Proof of \Cref{lem:nasty-fixed-dist-avg}]
    Our goal is to prove that
    \begin{equation*}
         \Prx_{\bx \sim \mcD, \bc \sim \mu}[\sign(\bar{\mu}(\bx)) \neq \bc(\bx)] \leq \Ex_{\bc \sim \mu}[\eta(\bc, f_{\bc, \mu}, S, \mcD)].
    \end{equation*}
    Expanding both sides, setting $f_{c,\mu}(x) = f(c(x), \bar{\mu}(x))$, and using that $\bc(x) \sim \Rad(\bar{\mu}(x))$ for any $x$,
    \begin{align*}
        \Ex_{\bx \sim \mcD}\bracket*{\frac{1 - \abs*{\bar{\mu}(\bx)}}{2}} &\leq \frac{1}{4} \cdot \Ex_{(\bx, \by) \sim \Unif(S), \bc \sim \Rad(\bar{\mu}(\bx))}\bracket*{1 + f(\bc, \bar{\mu}(\bx)) + \bc\by(f(\bc, \bar{\mu}(\bx)) - 1)} \\
        &\quad- \frac{1}{2} \cdot \Ex_{\bx \sim \mcD, \bc \sim \Rad(\bar{\mu}(\bx))}[f(\bc, \bar{\mu}(\bx))].
    \end{align*}
    Moving the last term to the left-hand side and using both expectations computed in \Cref{prop:compute-exp}, this is equivalent to
    \begin{equation*}
         \Ex_{\bx \sim \mcD}\bracket*{\frac{1 - \abs*{\bar{\mu}(\bx)}}{2} + \frac{\abs*{ \bar{\mu}(\bx)} - 1}{2}} \leq \frac{1}{4} \cdot \Ex_{(\bx, \by) \sim \Unif(S)} \bracket*{1 + \paren*{\abs*{ \bar{\mu}(\bx)} - 1} + 0 - \by \bar{\mu}(\bx)}.
    \end{equation*}
    The left-hand side is always exactly $0$. Using that $-y \mu \geq -\abs*{\mu}$, the right-hand side is at least $0$, which gives the desired bound.
\end{proof}

\subsection{Proof of \Cref{claim:approx-with-maj}}
\label{subsec:proof-of-approx-with-maj}
    It suffices to show the desired inequality holds individually for each $x$. Let $\tilde{\bmu}(x)$ be the random variable taking on value $\lfrac{\bc_1(x) + \cdots + \bc_k(x)}{k}$. Then, we wish to show for any $\bar{\mu}(x) \in [-1,1]$ that
    \begin{equation*}
        \Ex\bracket*{\sign(\tilde{\bmu}(x))\bar{\mu}(x)} \geq \sign(\bar{\mu}(x)) \bar{\mu}(x) - \frac{2}{\sqrt{k}}.
    \end{equation*}
    We will use that $\tilde{\bmu}(x)$ has mean $\bar{\mu}(x)$ and variance at most $1/k$ to show that
    \begin{equation*}
        \Ex\bracket*{\abs*{\paren*{\sign(\tilde{\bmu}(x)) - \sign(\bar{\mu}(x))} \bar{\mu}(x)}} \leq \frac{2}{\sqrt{k}},
    \end{equation*}
    which easily implies the desired result. For $\sign(a)$ and $\sign(b)$ to differ, we must have that $|a - b| \geq |b|$, and if they differ, it is by a value of magnitude $2$. Therefore,
    \begin{align*}
         \Ex\bracket*{\abs*{\paren*{\sign(\tilde{\bmu}(x)) - \sign(\bar{\mu}(x))} \bar{\mu}(x)}} &\leq \Ex\bracket*{2 \cdot \Ind\bracket[\big]{\abs*{\tilde{\bmu}(x) - \bar{\mu}(x)} \geq \abs*{\bar{\mu}(x)}} \cdot \abs*{\bar{\mu}(x)}}\\
         &= \Ex\bracket*{2 |\bar{\mu}(x)|\cdot \Ind\bracket[\big]{\paren*{\tilde{\bmu}(x) - \bar{\mu}(x)}^2 \geq \bar{\mu}(x)^2}} \\
         &\leq \Ex\bracket*{\frac{2 |\bar{\mu}(x)|}{\bar{\mu}(x)^2} \cdot \paren*{\tilde{\bmu}(x) - \bar{\mu}(x)}^2 } \\
         &= \frac{2 \Var[\tilde{\bmu}(x)]}{|\bar{\mu}(x)|} \leq \frac{2}{k\cdot|\bar{\mu}(x)|}
    \end{align*}
    If $\abs*{\bar{\mu}(x)} \geq 1/\sqrt{k}$, the above gives the desired bound. Otherwise, if $\abs*{\bar{\mu}(x)} \leq 1/\sqrt{k}$, we can use the simpler bound of $2 \abs*{\bar{\mu}(x)}$ which follows from the first line of the above equation block, which also implies the desired result. \qed

\section{Learning with nasty classification noise: Proof of \Cref{thm:agnostic-intro}}
\label{sec:agnostic}

In this section, we prove the following.
\begin{theorem}[Optimal learning with nasty classification noise, formal version of \Cref{thm:agnostic-intro}]
    \label{thm:agnostic-body}
    For any concept class $\mcC$ with VC dimension $d$ and $\eps, \delta \in (0,1), \alpha \in [0,1)$, there is an algorithm that $((1 + \alpha)\cdot \eta + \eps,\delta)$ learns $\mcC$ with $\eta$-nasty classification noise using
    \begin{equation*}
        n\coloneqq O\paren*{\frac{d \cdot \log(2 + \alpha/\eps)\log(1/(\eps + \alpha))+ \log(1/\delta)}{ \eps (\alpha + \eps)^3}} = \tilde{O}\paren*{\frac{d + \log(1/\delta)}{\eps(\alpha + \eps)^3}}
    \end{equation*}
    samples. Furthermore, this algorithm is efficiently implementable using $O\paren*{\frac{1}{\eps(\eps+\alpha)}}$ calls to an ERM oracle for $\mcC$.
\end{theorem}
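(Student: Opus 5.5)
The plan is to follow the three-step template of the malicious-noise learner: linearize the error, use a feasibility argument, then run $\Learn$ from \Cref{fig:GeneralLearner}. The one new ingredient is a slack term that keeps the link function Lipschitz.

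\textbf{Step 1: a linear loss.} In nasty classification noise the unlabeled points $\bx_1,\ldots,\bx_n$ of the received sample $\bS$ are exactly the clean i.i.d. draws from $\mcD$. Only labels change, and the fraction changed is $\boldeta \le (1+\alpha)\eta+\eps$ with probability $1-\delta$ by \Cref{cor:Chernoff}. So for the true target $c^\star$ we have $\Pr_{(\bx,\by)\sim\Unif(\bS)}[c^\star(\bx)\neq \by]\le \boldeta$. It therefore suffices to find $\bar h$ with
\[
\Prx_{\bx\sim\Unif(\bS),\bh}[\bh(\bx)\neq c(\bx)] \le \Prx_{(\bx,\by)\sim\Unif(\bS)}[c(\bx)\neq\by] + O(\eps+\alpha) \quad\text{for all } c\in\mcC .
\]
The difference of the two sides is $\ell_S(c,\bar h)=\Ex_{\Unif(S)}[f(c,\bar h,y)]$ with $f(c,h,y)=\tfrac12(cy-hc)$. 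Its agreement coefficients are $a_y=-2\le 0$. Once this holds for $c=c^\star$, we apply \Cref{cor:VC-random} to the clean points $\bx_i$ with target $c^\star$, using $\VC(\Maj_k(\mcC))=O(dk\log k)$ from \Cref{fact:maj-VC}. This gives $\error_{\mcD}(\bh,c^\star)\le(1+\alpha)(\boldeta+O(\gamma+\eps))+\eps$, which is $(1+O(\alpha))\eta+O(\eps)$ after rescaling constants.

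\textbf{Step 2: feasibility with slack.} Exact $g$-feasibility of $f$ requires $\tilde f(\mu,g(\mu),y)=\tfrac12(\mu y-g(\mu)\mu)\le 0$ for both $y$. That means $g(\mu)\mu\ge|\mu|$, forcing $g=\sign$, which is not Lipschitz. The idea is to take instead $g=M_k$, the $k$-wise majority function from \Cref{claim:maj-suffices}, with $k=\Theta(1/\gamma^2)$ and $\gamma\approx\eps+\alpha$. This $g$ is nondecreasing and $O(\sqrt k)=O(1/\gamma)$-Lipschitz. Moreover $t\,M_k(t)\ge|t|-O(1/\sqrt k)$, by the same variance argument as in the proof of \Cref{claim:approx-with-maj}. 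Hence the shifted function $f'\coloneqq f-\gamma/2$ is exactly $M_k$-feasible, and shifting by a constant leaves the agreement coefficients unchanged. Crucially, $\Rad(M_k\circ\mu)$ for $\mu\in\Conv(\mcC)$ is literally the mixture obtained by sampling $\bc_1,\ldots,\bc_k\iid\mu$ and outputting $\Maj(\bc_1,\ldots,\bc_k)$. So the output lies in $\Conv(\Maj_k(\mcC))$ with no further rounding.

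\textbf{Step 3: optimization.} Apply \Cref{thm:general-optimization} to $f'$ with $L=O(1/\gamma)$. It converges in $O(L/\eps)=O(1/(\eps(\eps+\alpha)))$ iterations to $\mu$ with $\ell_S(c,M_k\circ\mu)\le\eps+\gamma/2$ for all $c$. Each iteration is a weighted ERM with bounded weights, implemented by one ERM call via \Cref{fact:weighted-ERM}. Plugging $k\log k=\tilde O(1/(\eps+\alpha)^2)$ into the sample bound of \Cref{fact:VC-multiplicative} gives $n=\tilde O\bigl((d+\log(1/\delta))/(\eps(\eps+\alpha)^3)\bigr)$, as claimed.

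The main obstacle is Step 2: getting the slack bound $tM_k(t)\ge|t|-O(1/\sqrt k)$ with the right constants, while keeping the Lipschitz constant at $O(\sqrt k)$. Balancing $\gamma$ against both the iteration count and the VC blow-up of $\Maj_k$ is exactly what produces the $(\eps+\alpha)^3$ in the sample complexity.
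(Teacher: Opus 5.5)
Your Step~1 is where the argument breaks. An additive guarantee $\Pr_{\Unif(\bS)}[\bh(\bx)\neq c(\bx)]\le \Pr_{\Unif(\bS)}[c(\bx)\neq \by]+O(\eps+\alpha)$ does not suffice. With additive slack $\gamma\approx\eps+\alpha$, your final bound $(1+\alpha)(\boldeta+O(\gamma+\eps))+\eps$ contains an additive $\Theta(\alpha)$ term. That term cannot be rescaled into $\alpha\eta$ or into $\eps$, since $\eta$ may be far below $1$ and $\alpha$ far above $\eps$.

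Concretely, at $\eta=0$ and $\alpha=1/2$ the theorem promises error $\eps$ from $\tilde{O}(d/\eps)$ samples. Your loss bound only certifies error about $\gamma/2=\Theta(1)$, and \Learn{} can in fact halt at such a hypothesis.

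Choosing $\gamma=\Theta(\eps)$ instead repairs the error but forces $k=\Theta(1/\eps^2)$ and $L=\Theta(1/\eps)$. That costs $O(1/\eps^2)$ ERM calls and $\tilde{O}(d/(\eps^3(\eps+\alpha)))$ samples, which is weaker than claimed whenever $\alpha\gg\eps$. No other link function rescues additive slack. Feasibility of $f-\gamma/2$ means $t(1-g(t))\le\gamma$ for $t>0$ and $|t|(1+g(t))\le\gamma$ for $t<0$. So $g$ must rise from $-1/2$ to $1/2$ on $[-2\gamma,2\gamma]$, giving $L=\Omega(1/\gamma)$.

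The missing idea is to put $\alpha$ multiplicatively on the label-disagreement term. Require $\Pr_{\Unif(S)}[\bh(\bx)\neq c(\bx)]\le(1+\alpha)\Pr_{\Unif(S)}[c(\bx)\neq \by]+\eps/2$ for all $c\in\mcC$. This corresponds to the loss $\Ex[c(\bx)((1+\alpha)\by-\bar h(\bx))]-\alpha$, which still has nonpositive agreement coefficients. Because $c^\star$ disagrees with the labels on only a $\boldeta$ fraction, $\alpha$ is paid only as $\alpha\boldeta$.

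With $g=M_k$, feasibility now reads $|\bar\mu|-\bar\mu M_k(\bar\mu)\le\eps+\alpha(1-|\bar\mu|)$. The two regimes work as follows:
\begin{itemize}
\item For $|\bar\mu|\le 3/4$, which is exactly where majority approximates $\sign$ poorly, the slack is at least $\eps+\alpha/4$. There your variance bound $|t|-tM_k(t)\le 2/\sqrt{k}$ suffices with $k=O(1/(\eps+\alpha)^2)$.
\item As $|\bar\mu|\to1$ the slack shrinks toward $\eps$. There a binomial tail bound shows the left side is at most $(1-|\bar\mu|)\cdot 2^{-\Omega(k)}$, which the $\alpha(1-|\bar\mu|)$ term (or $\eps$) absorbs.
\end{itemize}
Hence $k=O(1/(\eps+\alpha)^2)$ and $L=O(1/(\eps+\alpha))$ remain valid. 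The rest of your Steps 2--3 (Frank--Wolfe via \Learn{}, the $\Maj_k$ mixture, and the sample count) then goes through unchanged.
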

We encourage the reader to consider two regimes. \Cref{thm:agnostic-body} shows that error $1.01 \eta + \eps$ can be accomplished with $\tilde{O}\paren*{\frac{d + \log(1/\delta)}{\eps}}$ samples and $O(1/\eps)$ ERM calls, and error $\eta + \eps$ with $\tilde{O}\paren*{\frac{d + \log(1/\delta)}{\eps^4}}$ samples and $O(1/\eps^2)$ ERM calls.

As with our proof of \Cref{thm:malicious-intro} we begin by defining a linearized loss function.
\begin{claim}[Linear loss for agnostic noise]
    \label{claim:agnostic-loss}
    For any dataset $S \in (X \times \bits)^n$, let $\bar{h}:X \to [-1,1]$ be any function satisfying, for all $c \in \mcC$,
    \begin{equation}
        \label{eq:agnostic-loss}
        \ell_S(c, \bar{h}) \leq \eps \quad\quad\text{where } \ell_S(c, \bar{h}) \coloneqq\Ex_{\bx,\by \sim \Unif(S)}[c(\bx)((1 + \alpha)\by - \bar{h}(\bx))] - \alpha.
    \end{equation}
    Then, the randomized function $\bh = \Rad(\bar{h})$ satisfies
    \begin{equation*}
        \Prx_{(\bx, \by) \sim \Unif(S), \bh}[\bh(\bx) \neq c(\bx)] \leq (1 + \alpha) \cdot \Prx_{(\bx, \by) \sim \Unif(S)}[c(\bx) \neq \by] + \eps/2
    \end{equation*}
    simultaneously for all $c \in \mcC$.
\end{claim}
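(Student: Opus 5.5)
This is a pointwise identity computation followed by one rearrangement of the loss. We rewrite both error probabilities as linear functions of $c(x)y$ and $c(x)\bar{h}(x)$, substitute into the hypothesis $\ell_S(c,\bar{h}) \leq \eps$, and read off the claimed inequality.

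\textbf{Step 1: the hypothesis' disagreement with $c$.} For $\bh = \Rad(\bar{h})$ and $c(x) \in \bits$,
\begin{equation*}
\Prx_{\bh}[\bh(x) \neq c(x)] = \frac{1 - c(x)\bar{h}(x)}{2}.
\end{equation*}
Averaging over $(\bx,\by) \sim \Unif(S)$ gives
\begin{equation*}
\Prx_{(\bx,\by)\sim\Unif(S),\bh}[\bh(\bx) \neq c(\bx)] = \frac{1}{2} - \frac{1}{2}\Ex[c(\bx)\bar{h}(\bx)].
\end{equation*}

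\textbf{Step 2: the disagreement of $c$ with the labels.} Since $c(x), y \in \bits$, we have $\Ind[c(x)\neq y] = \frac{1 - c(x)y}{2}$. Hence
\begin{equation*}
(1+\alpha)\Prx_{(\bx,\by)\sim\Unif(S)}[c(\bx)\neq \by] = \frac{1+\alpha}{2} - \frac{1+\alpha}{2}\Ex[c(\bx)\by].
\end{equation*}

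\textbf{Step 3: take the difference.} Subtract the right-hand side of the target inequality (without the $\eps/2$) from the left-hand side:
\begin{equation*}
\frac{1}{2} - \frac{1}{2}\Ex[c\bar{h}] - \frac{1+\alpha}{2} + \frac{1+\alpha}{2}\Ex[c\by] = \frac{1}{2}\paren*{\Ex[c(\bx)((1+\alpha)\by - \bar{h}(\bx))] - \alpha} = \frac{1}{2}\ell_S(c,\bar{h}).
\end{equation*}
By assumption $\ell_S(c,\bar{h}) \leq \eps$, so this difference is at most $\eps/2$. That is exactly the claimed bound.

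The argument is uniform in $c \in \mcC$: the hypothesis holds for every $c$, and Steps 1–3 are identities valid for each fixed $c$. So the conclusion holds simultaneously for all $c \in \mcC$.

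There is no genuine obstacle here. The only point to check is that the constant term works out: the $-\alpha$ in the loss must cancel the $\frac{1}{2} - \frac{1+\alpha}{2} = -\frac{\alpha}{2}$ offset after the factor $\frac{1}{2}$, which the computation in Step 3 confirms.
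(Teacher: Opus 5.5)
Your proof is correct and takes the same route as the paper. Both rewrite the two disagreement probabilities using $\Pr_{\bh}[\bh(x)\neq c(x)] = \frac{1-c(x)\bar{h}(x)}{2}$ and $\Ind[c(x)\neq y] = \frac{1-c(x)y}{2}$, which gives the exact identity $2\Pr[\bh(\bx)\neq c(\bx)] = 2(1+\alpha)\Pr[c(\bx)\neq \by] + \ell_S(c,\bar{h})$; the bound then follows from $\ell_S(c,\bar{h})\leq \eps$.
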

Second, we need to choose a function that satisfies our loss on average for any distribution $\mu$ over concepts.
\begin{claim}[Majority has low loss]
    \label{claim:agnostic-maj-low-loss}
    Let $k = O(1/(\alpha + \eps)^2)$. Then, $g \coloneqq M_k$ as defined in \Cref{claim:maj-suffices}, satisfies, for any distribution $\mu$ over functions and dataset $S$,
    \begin{equation*}
         \Ex_{\bc \sim \mu}\bracket*{\ell_S(\bc, \bar{h})} \leq \eps \quad\quad\text{where}\quad \bar{h}(x) = g\paren*{\Ex_{\bc \sim \mu}[\bc(x)]}.
    \end{equation*}
\end{claim}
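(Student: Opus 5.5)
The claim reduces to a pointwise scalar inequality, which I would then check for $M_k$ using concentration of the majority vote. First, linearity: for fixed $(x,y)$, write $t \coloneqq \bar{\mu}(x) = \Ex_{\bc \sim \mu}[\bc(x)]$. Since $\bar{h}(x) = M_k(t)$ does not depend on $\bc$,
\begin{equation*}
    \Ex_{\bc \sim \mu}\bracket*{\bc(x)\paren*{(1+\alpha)y - \bar{h}(x)}} - \alpha = (1+\alpha)ty - tM_k(t) - \alpha.
\end{equation*}
So, exactly as in the proof of \Cref{lem:body-g-for-dist}, it suffices to show that for every $t \in [-1,1]$ and $y \in \bits$ this quantity is at most $\eps$. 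Averaging over $(\bx,\by) \sim \Unif(S)$ then gives the claim.

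Next I would maximize over $y$. The worst case is $y = \sign(t)$. Since $M_k$ is an odd-like function with the same sign as $t$ (taking $k$ odd avoids ties), we have $tM_k(t) = |t|\,|M_k(t)|$. The target inequality therefore becomes
\begin{equation*}
    |t|\paren*{1 - |M_k(t)|} \leq \alpha(1-|t|) + \eps .
\end{equation*}
Here $1 - |M_k(t)| = 2p_{\mathrm{err}}(t)$, where $p_{\mathrm{err}}(t)$ is the probability that the majority of $k$ i.i.d.\ $\Rad(t)$ votes has the wrong sign.

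I would then bound $p_{\mathrm{err}}$ in two regimes, with $\beta \coloneqq \alpha + \eps$ and $k = C/\beta^2$ odd.
\begin{itemize}
    \item \emph{Moderate $|t|$, say $|t| \le 1/2$.} The slack on the right is at least $\alpha/2 + \eps \ge \beta/2$. Hoeffding gives $p_{\mathrm{err}} \le e^{-kt^2/2}$, so the left side is at most $\sup_t 2t e^{-kt^2/2} = O(1/\sqrt{k}) \le \beta/2$ for a suitable $C$.
    \item \emph{$|t|$ near $1$.} Here the slack $\alpha(1-|t|)$ vanishes, so the error must be bounded relative to $1-|t|$. With $p = (1-|t|)/2$, I would use $\Pr[\Bin(k,p) \ge k/2] \le (4p)^{k/2}$. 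This is at most $(1-|t|)$ times $(2(1-|t|))^{k/2-1}2$, which is tiny once $1-|t|$ is below a small constant and $k$ is large. For intermediate $|t| \in [1/2, 1-c]$, Hoeffding gives $e^{-k/8}$, which is at most $\beta c$ for $k \gtrsim \log(1/\beta)/1 \le C/\beta^2$.
\end{itemize}

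The main obstacle is the last regime when $\alpha \ll \eps$, because the slack is then only $\eps$ and not proportional to $1-|t|$. In that case I would instead fall back on the uniform bound $|t|(1-|M_k(t)|) \le O(1/\sqrt{k}) \le \eps$, which $k = O(1/\beta^2) = O(1/\eps^2)$ already guarantees. So I would split into the cases $\alpha \le \eps$ (uniform $O(1/\sqrt k)$ bound) and $\alpha > \eps$ (the regime analysis above, using $k = O(1/\alpha^2)$, adjusting constants in the near-$1$ regime). Choosing the constants so that both cases hold with a single $k = O(1/(\alpha+\eps)^2)$ is where I expect the bookkeeping to be most delicate.
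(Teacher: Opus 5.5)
Your proposal is correct and follows essentially the same route as the paper: a pointwise reduction to the scalar inequality with worst case $y=\sign(t)$, then a split by $|t|$, using a concentration bound for moderate $|t|$ and the crude tail $\Pr[\Bin(k,p)\ge k/2]\le 2^k p^{k/2}$ near $|t|=1$. The paper handles moderate $|t|$ with a trivial bound for very small $|t|$ plus Chebyshev, giving $2/(k|t|)$, where you use Hoeffding instead. Also, the $\alpha \ll \eps$ obstacle you anticipate does not arise: since $\alpha(1-|t|)+\eps \ge (\alpha+\eps)(1-|t|)$, the near-$1$ regime goes through uniformly and your case split on $\alpha$ versus $\eps$ is unnecessary, though harmless.
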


We will also need to upper bound the Lipschitz constant of this choice of $g$.
\begin{proposition}[Lipschitzness of majority]
    \label{prop:lipschitz-constant-maj}
    For any odd $k$, $M_k$ is $O(\sqrt{k})$-Lipschitz.
\end{proposition}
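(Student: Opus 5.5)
Since $M_k$ is a polynomial in $t$, I will bound its derivative uniformly on $[-1,1]$ by $O(\sqrt{k})$; the mean value theorem then gives the Lipschitz bound. First I will write $M_k(t) = 2\Pr[\mathrm{Bin}(k,p) \geq (k+1)/2] - 1$ with $p = \frac{1+t}{2}$. Since $k$ is odd, the sum $\bz_1+\cdots+\bz_k$ is never zero, so $\sign$ is unambiguous and the event $\sign(\bz_1 + \cdots + \bz_k) = 1$ is exactly the event that at least $(k+1)/2$ of the $\bz_i$ equal $1$. By the chain rule, $\frac{d}{dt}M_k(t) = \frac{d}{dp}\Pr[\mathrm{Bin}(k,p) \geq m]$ with $m = (k+1)/2$, because the factor $2$ cancels against $dp/dt = 1/2$.

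Next I will use the standard identity for the derivative of a binomial upper tail, which can be checked by telescoping or by a coupling/pivotality argument:
\begin{equation*}
    \frac{d}{dp}\Pr[\mathrm{Bin}(k,p) \geq m] = k \binom{k-1}{m-1} p^{m-1}(1-p)^{k-m}.
\end{equation*}
With $m-1 = k-m = (k-1)/2$, the right-hand side is $k\binom{k-1}{(k-1)/2}\paren*{p(1-p)}^{(k-1)/2}$. Equivalently, it equals $k$ times the probability that the other $k-1$ voters are exactly tied, which is the usual ``influence of a pivotal voter'' formula. Since $p(1-p) \leq 1/4$, this is at most $k \binom{k-1}{(k-1)/2} 2^{-(k-1)}$. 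The maximum is attained at $t=0$.

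Finally, the central binomial coefficient bound $\binom{2j}{j} \leq 4^j/\sqrt{\pi j}$ with $j = (k-1)/2$ gives a bound of $k/\sqrt{\pi (k-1)/2} = O(\sqrt{k})$ for $k \geq 3$. The case $k=1$ is trivial since $M_1(t) = t$. Because the derivative is bounded by $O(\sqrt{k})$ on all of $[-1,1]$, $M_k$ is $O(\sqrt{k})$-Lipschitz. The only step requiring care is the derivative identity. If I want to avoid quoting it, I will prove it directly: differentiate the sum $\sum_{j\geq m}\binom{k}{j}p^j(1-p)^{k-j}$ term by term and observe that consecutive terms telescope, leaving only the boundary term at $j=m$.
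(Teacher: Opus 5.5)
Your proposal is correct and follows essentially the same route as the paper. Both arguments reduce $M_k'$ to the derivative of the binomial tail, write that derivative as $k$ times the probability that the other $k-1$ votes are exactly tied (the paper obtains this from the Russo--Margulis formula rather than by telescoping), and bound the tie probability by $O(1/\sqrt{k})$. The only difference is that you make the last step explicit using $p(1-p)\le 1/4$ and $\binom{2j}{j}\le 4^j/\sqrt{\pi j}$, where the paper simply cites it as well known.
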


\textbf{Structure of this section.} We prove \Cref{claim:agnostic-loss} in \Cref{subsec:proof-of-agnostic-loss}, \Cref{claim:agnostic-maj-low-loss} in \Cref{subsec:proof-of-agnostic-maj}, and \Cref{prop:lipschitz-constant-maj} in \Cref{subsec:lipschitz-maj}. For now, we show how to use them to prove \Cref{thm:agnostic-body}.
\begin{proof}
    Set $k = O(1/(\alpha + \eps)^2)$, and define the slightly shifted loss $\tilde{\ell}_S(c, \bar{h}) = \ell_S(c,\bar{h}) - \eps$ (this is so that \Cref{claim:agnostic-maj-low-loss} gives a loss of $0$ rather than $\eps$). We will use \Cref{thm:general-optimization} to find a $\mu \in \Conv(\mcC)$ so that $\overline{h} \coloneqq M_k \circ \mu$ satisfies $\tilde{\ell}_S(c,\bar{h}) \leq \eps$. We verify the assumptions of \Cref{thm:general-optimization} hold: The loss $\tilde{\ell}_S$ corresponds to function
    \begin{equation*}
        f(c,h,y) = c((1 + \alpha)y - h) - \alpha - \eps,
    \end{equation*}
    which has nonpositive agreement coefficients and has $\linf{f} \leq O(1)$. Furthermore it is $(g \coloneqq M_k)$-feasible by \Cref{claim:agnostic-maj-low-loss} and this $g$ is $O(1/(\alpha + \eps))$-Lipschitz by \Cref{prop:lipschitz-constant-maj}. Taken together, the algorithm of \Cref{thm:general-optimization} will converge in $O(1/(\eps \cdot (\alpha + \eps)))$ iterations to a $\mu \in \Conv(\mcC)$ for which $\overline{h} \coloneqq M_k \circ \mu$ satisfies $\tilde{\ell}_S(c,\bar{h}) \leq \eps$ and therefore $\ell_S(c,\bar{h}) \leq 2\eps$. Each such iteration can be done with one ERM call by \Cref{fact:weighted-ERM}.

    We output $\bh = \Rad(\bar{h})$. Observe that this function can be written as a mixture of functions within $\Maj_k(\mcC)$: Taking $\mcD_c$ to the distribution over $\mcC$ corresponding to $\mu$, we can sample $\bc_1, \ldots,\bc_k \iid \mcD_c$ and set $\bh = \Maj(\bc_1, \ldots, \bc_k)$. The class $\Maj_k(\mcC)$ has VC dimension $O(dk\log k) = O(\frac{d\log(1/(\eps + \alpha))}{(\eps + \alpha)^2})$ by \Cref{fact:maj-VC}. Therefore, with probability at least $1-\delta$ by \Cref{cor:VC-random}, using $\bS^{\circ}$ to denote the uncorrupted sample and the fact that the unlabeled portions of $\bS$ and $\bS^{\circ}$ agree,
    \begin{equation*}
        \error_{\mcD}(\bh, c^{\star}) \leq (1 + \alpha) \cdot \Prx_{(\bx,\by) \sim \Unif(\bS^{\circ}),\bh}[\bh(\bx) \neq c^{\star}(\bx)] + \eps = (1 + \alpha) \cdot \Prx_{(\bx,\by) \sim \Unif(\bS),\bh}[\bh(\bx) \neq c^{\star}(\bx)] + \eps.
    \end{equation*}
    Recall that the number of points the adversary corrupts is distributed according to $\Bin(n,\eta)$ (\Cref{def:nasty-classification-noise,def:nasty-noise-basic}) and therefore by \Cref{cor:Chernoff}, with probability at least $1-\delta$, the fraction of points the adversary could corrupt, $\boldeta$ is at most $(1+ \alpha)\eta + \eps$. Combining this with the prior bound, with probability at least $1 - 2\delta$,
    \begin{align*}
	         \error_{\mcD}(\bh, c^{\star}) &\leq (1 + \alpha) \cdot \Prx_{(\bx,\by) \sim \Unif(\bS),\bh}[\bh(\bx) \neq c^{\star}(\bx)] + \eps \\
	         &\leq (1 + \alpha) \cdot \paren*{(1 + \alpha) \cdot \Prx_{(\bx, \by) \sim \Unif(\bS)}[c^{\star}(\bx) \neq \by] + \ell_S(c,\overline{h})/2} + \eps \tag{\Cref{claim:agnostic-loss}} \\
	         &\leq (1 + \alpha) \cdot \paren*{(1 + \alpha) \cdot \boldeta + \eps} + \eps \tag{$c^{\star}$ only wrong on corrupted points, $\ell_S \leq 2\eps$} \\
	         &\leq  (1 + \alpha) \cdot \paren*{(1 + \alpha) \cdot ((1 + \alpha)\eta + \eps) + \eps} + \eps \tag{$\boldeta \leq (1+\alpha)\eta + \eps$}\\
	         &=  (1 + O(\alpha))\eta + O(\eps).
    \end{align*}
    By adjusting $\alpha$, $\eps$ and $\delta$ by constants, we can conclude that $\error_{\mcD}(\bh,c^{\star})$ is at most $(1 + \alpha)\eta + \eps$ with probability at least $1-\delta$.
\end{proof}

\subsection{Proof of \Cref{claim:agnostic-loss}}
\label{subsec:proof-of-agnostic-loss}
Fix any $c \in \mcC$. Then, a rearrangement of \Cref{eq:agnostic-loss} gives that
\begin{equation*}
    -\Ex[c(\bx)\bar{h}(\bx)] = \ell_S(c,\bar{h}) + \alpha - (1+\alpha) \cdot \Ex[c(\bx)\by]
\end{equation*}
Since $c(x)\bar{h}(x) = 1 - 2\Pr[c(x)\neq \bh(x)]$ and likewise for $c(x)y$, this is equivalent to
\begin{equation*}
    2\Pr[c(\bx) \neq \bh(\bx)] - 1 = \ell_S(c,\bar{h}) + \alpha  + (1 + \alpha)\cdot (2 \Pr[c(\bx) \neq \by] -1),
\end{equation*}
which can simply be written as
\begin{equation*}
    2\Pr[c(\bx) \neq \bh(\bx)] = (1 + \alpha) \cdot 2 \Pr[c(\bx) \neq \by] + \ell_S(c,\bar{h}),
\end{equation*}
which gives the desired result under the assumption $\ell_S(c, \bar{h}) \leq \eps$ \qed

\subsection{Proof of \Cref{claim:agnostic-maj-low-loss}}
\label{subsec:proof-of-agnostic-maj}
This proof is quite similar to \Cref{claim:approx-with-maj}. That simpler proof roughly corresponds to the $\alpha=0$ case of this proof.

\begin{proof}
We show that for any fixed choice of $x,y$ that
\begin{equation*}
    \Ex_{\bc \sim \mu}[\bc(x)((1 + \alpha)y - \bar{h}(x))] - \alpha \leq \eps,
\end{equation*}
which implies the desired result by taking an expectation over $(\bx,\by) \sim \Unif(S)$. Using $\bar{\mu}$ to denote, $\Ex_{\bc \sim \mu}[\bc(x)]$, we wish to show
\begin{equation*}
    \bar{\mu}((1 + \alpha)y - M_k(\bar{\mu})) - \alpha \leq \eps.
\end{equation*}
This equation is hardest to satisfy when $y = \sign(\bar{\mu})$, in which case it can be rewritten,
\begin{equation*}
    \bar{\mu} \cdot (\sign(\bar{\mu}) - M_k(\bar{\mu}) ) \leq  \eps + \alpha\cdot (1-|\bar{\mu}|).
\end{equation*}
By definition, $M_k(\bar{\mu}) = \Ex[\sign(\tilde{\bmu})]$  where $\bz_1, \ldots, \bz_k \iid \Rad(\bar{\mu})$ and $\tilde{\bmu} = \frac{\bz_1 + \cdots \bz_k}{k}$. Using this notation, we wish to show,
\begin{equation}
    \label{eq:agnostic-loss-main-bound}
     \Ex[\bar{\mu} \cdot (\sign(\tilde{\bmu}) - \sign(\bar{\mu}))] \leq  \eps + \alpha\cdot (1-|\bar{\mu}|).
\end{equation}
We will show the above holds by considering three regimes on how large $|\bar{\mu}|$ is.

\pparagraph{Small regime:} If $\abs{\bar{\mu}} \leq \min(\eps + \alpha/2, 1/2)$, then the left-hand side of \Cref{eq:agnostic-loss-main-bound} is upper bounded by $\abs{\bar{\mu}} \leq \eps + \alpha/2$ and the right-hand side must be larger.

\pparagraph{Large regime:} If $|\bar{\mu}| \geq 3/4$, we will use that 
\begin{equation*}
    \Pr[\sign(\bar{\mu}) \neq \sign(\tilde{\bmu})] = \Pr[\Bin(k, (1 - \abs*{\bar{\mu}})/2) \geq k/2] \leq 2^k \cdot \paren*{\frac{1 - \abs*{\bar{\mu}}}{2}}^{k/2},
\end{equation*}
where the last inequality uses that in order for $\Bin(k,p) \geq t$, there must exist a choice of $\binom{k}{t}$ indices that are all sampled positively (each of which occurs with probability $p^t$) along with the crude bound that $\binom{k}{t} \leq 2^k$. Therefore,
\begin{align*}
	     \Ex[\bar{\mu} \cdot (\sign(\tilde{\bmu}) - \sign(\bar\mu))] &\leq 2 \cdot\paren*{2(1-\abs*{\bar{\mu}})}^{k/2} \\
	     &= 4(1-\abs*{\bar{\mu}}) \cdot \paren*{2(1-\abs*{\bar{\mu}})}^{k/2-1}\\
	     &\leq 4(1-\abs*{\bar{\mu}})\cdot (1/2)^{k/2 - 1}.
\end{align*}
Hence, for \Cref{eq:agnostic-loss-main-bound} to hold in this regime, it suffices $k = \Omega(\log(1/(\eps + \alpha)))$, whereas we are picking the much larger and sufficient choice of $k = \Omega(1/(\eps + \alpha)^2)$.

\pparagraph{Middle regime:} If $\min(\eps + \alpha/2, 1/2) \leq \abs{\bar{\mu}} \leq \frac{3}{4}$, we bound each term within the expectation (using that $\abs*{\sign(a) - \sign(b)} \leq 2\cdot  \Ind[|a-b| \geq |a|]$)
\begin{align*}
	     \Ex[\bar{\mu} \cdot (\sign(\tilde{\bmu}) - \sign(\bar{\mu}))] &\leq  2\cdot\Ex[|\bar{\mu}| \cdot\Ind[|\tilde{\bmu} - \bar{\mu}| \geq |\bar{\mu}|]\\
	     &\leq  2\cdot\Ex\bracket*{\frac{\paren*{\tilde{\bmu} - \bar{\mu}}^2}{|\bar{\mu}|} }.
\end{align*}
Then, since $\tilde{\bmu}$ has mean $\bar{\mu}$ and variance at most $1/k$, we have
\begin{equation*}
     \Ex[\bar{\mu} \cdot (\sign(\tilde{\bmu}) - \sign(\bar{\mu}))] \leq \frac{2}{k |\bar{\mu}|} \leq \frac{2}{k\cdot\min(\eps + \alpha/2, 1/2)}.
\end{equation*}
For any $k \geq \Omega(1/(\eps + \alpha)^2)$, this quantity is at most $\eps + \alpha/4$, which means \Cref{eq:agnostic-loss-main-bound} holds in this regime.
\end{proof}

\subsection{Proof of \Cref{prop:lipschitz-constant-maj}}
\label{subsec:lipschitz-maj}
    We will show that the derivative satisfies $0 \leq M'_k(t) \leq O(\sqrt{k})$. 
    
    It will be more convenient to work with $B_k(p) \coloneqq \Pr[\Bin(k,p) \geq k/2]$. Note this is related to $M_k$ via the relation $M_k(t) = 2\cdot B_k((t+1)/2) -1$ which means that $M_k'(t) = B'_k((t+1)/2)$ so the two functions have the same bounds on their derivatives.
   
    The Russo–Margulis formula (see e.g. \cite{ODBook}) gives that the derivative of $B_k$ can be expressed as
    \begin{equation*}
        B_k'(p) = \sum_{i \in [k]}  \Prx_{\bz \sim \Ber(p)^k}[\Maj(\bz) \neq \Maj(\bz^{\oplus i})]
    \end{equation*}
    where $z^{\oplus i}$ indicates $z$ with the $i^{\text{th}}$ bit flipped, and $\Maj(z_1,\ldots, z_k) = \Ind[z_1 + \cdots + z_k \geq k/2]$. In order for $\Maj(z)$ and $\Maj(z^{\oplus i})$ to differ, the remaining $k-1$ bits must have exactly $(k-1)/2$ many $1$s. It is well known that $\Pr[\Bin(k-1,p) = (k-1)/2]$ is upper bounded by $O(1/\sqrt{k})$ for any choice of $p$. Summing over the $k$ indices in the above equation therefore gives the desired bound. \qed

\section{Connections to multiaccuracy and calibration}
\label{sec:fairness}
\subsection{Multiaccuracy is equivalent to optimal error agnostic learning}
\Cref{thm:agnostic-body} implies a learner for agnostic noise because nasty classification noise is strictly harder than agnostic noise. Here, we give an alternative approach via an \textsl{equivalence} to multiaccuracy.
\begin{definition}[Multiaccuracy, \cite{HKRR18,KGZ19}]
    Let $\mcD_{x,y}$ be a distribution over $X \times \bits$. A hypothesis $\bar{h}:X \to [-1,1]$ is $\tau$-multiaccurate with respect to a concept class $\mcC$ if, for all $c \in \mcC$
    \begin{equation*}
        \Ex_{(\bx,\by) \sim \mcD_{x,y}}[c(\bx) (\by - \bar{h}(\bx))] \leq \tau.
    \end{equation*}
\end{definition}
\cite{KGZ19} justify multiaccuracy as a fairness condition. For example, suppose we wish to ensure that $\overline{h}$ is unbiased on various protected groups $S_1,\ldots, S_{m} \subseteq X$. Then, we can include within $\mcC$ the functions $x \mapsto \Ind[x \in S_i]$ and $x \mapsto -\Ind[x \in S_i]$ for all $i \in [m]$. Then, multiaccuracy requires, for all $i \in [m]$, that\footnote{The guarantee becomes weaker for smaller sets. This is necessary if one wants a sample efficient method to construct $\bar{h}$ because small sets will have fewer samples fall within them.}
\begin{equation*}
    \abs[\big]{\Ex[\bar{h}(\bx) \mid \bx \in S_i] - \Ex[\by \mid \bx \in S_i]} \leq \frac{\tau}{\Pr[\bx \in S_i]}.
\end{equation*}
We show that multiaccuracy is also desirable for its robustness properties.
\begin{claim}[Multiaccuracy is equivalent to agnostic learning with randomized hypothesis]
    \label{claim:multiaccuracy-equiv}
    For any concept class $\mcC$ of $\bits$-valued functions and distribution $\mcD_{x,y}$ over $X \times \bits$, the following properties of a hypothesis $\bar{h}:X \to [-1,1]$ are equivalent.
    \begin{enumerate}
        \item The randomized hypothesis $\bh \coloneqq \Rad(\bar{h})$ satisfies
        \begin{equation*}
            \error_{\mcD_x}(\bh, c) \leq \eta(c) + \eps \quad\quad\text{for all }c \in \mcC,
        \end{equation*}
        where $\mcD_x$ is the marginal distribution of $\mcD_{x,y}$ over $X$ and $\eta(c)$ is the smallest value for which the distribution $\mcD_{x,y}$ can be formed by an $\eta(c)$-agnostic adversary when the target function is $c^{\star} = c$. 
        \item The hypothesis $\bar{h}$ is $(\tau =2\eps)$-multiaccurate w.r.t. $\mcC$.
    \end{enumerate}
\end{claim}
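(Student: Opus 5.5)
The plan is to show that the two conditions are the same linear inequality in disguise, once $\eta(c)$ is computed in closed form. The first and main step is to identify $\eta(c)$ exactly: I claim $\eta(c) = \Prx_{(\bx,\by) \sim \mcD_{x,y}}[\by \neq c(\bx)]$.

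For this I use \Cref{def:agnostic-noise}. An $\eta$-agnostic adversary with target $c^{\star} = c$ produces samples $(\bx, \boldf(\bx))$ with $\bx \sim \mcD$ for some randomized $\boldf$. For the result to equal $\mcD_{x,y}$, two things are forced.
\begin{enumerate}
    \item The base distribution $\mcD$ must equal the marginal $\mcD_x$.
    \item For $\mcD_x$-almost every $x$, $\boldf(x)$ must be distributed as $\by \mid \bx = x$.
\end{enumerate}
Hence the only admissible $\boldf$ is the conditional label distribution. Its error is $\error_{\mcD_x}(\boldf, c) = \Pr[\by \neq c(\bx)]$. Conversely, this $\boldf$ is a valid choice at exactly that rate, so the smallest feasible $\eta$ equals this quantity.

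This identification is where I expect the only real care to be needed, since the rest is algebra. Two points have to be handled: that the adversary has no freedom beyond the conditional law, and that measurability of the conditional law on a general $X$ is not an issue. If needed, I would state it for the regular conditional distribution, or note that only expectations of bounded functions of $(x,y)$ are ever used.

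Next I rewrite both error quantities linearly using $\Pr_{\bh}[\bh(x) \neq c(x)] = \frac{1 - c(x)\bar{h}(x)}{2}$ for $\bh = \Rad(\bar{h})$, and $\Ind[y \neq c(x)] = \frac{1 - c(x) y}{2}$ for $y \in \bits$. Condition 1 for a fixed $c$ then reads
\begin{equation*}
    \Ex_{\bx \sim \mcD_x}\bracket*{\frac{1 - c(\bx)\bar{h}(\bx)}{2}} \leq \Ex_{(\bx,\by) \sim \mcD_{x,y}}\bracket*{\frac{1 - c(\bx)\by}{2}} + \eps.
\end{equation*}
Since $\mcD_x$ is the marginal of $\mcD_{x,y}$, both expectations can be taken over $\mcD_{x,y}$. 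Cancelling the constants and multiplying by $2$, this is equivalent to
\begin{equation*}
    \Ex_{(\bx,\by) \sim \mcD_{x,y}}[c(\bx)(\by - \bar{h}(\bx))] \leq 2\eps,
\end{equation*}
which is the multiaccuracy constraint for $c$ with $\tau = 2\eps$.

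Every step is an equivalence for each fixed $c \in \mcC$, so quantifying over all $c \in \mcC$ gives both directions of the claim at once. It is the same computation as in the proof of \Cref{claim:agnostic-loss} with $\alpha = 0$, but with population expectations in place of empirical ones.
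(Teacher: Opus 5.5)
Your proposal is correct and follows the paper's proof: both identify $\eta(c) = \Pr_{(\bx,\by)\sim\mcD_{x,y}}[\by \neq c(\bx)]$ by observing that the adversary's $\boldf$ is forced to equal the conditional law of $\by \mid \bx = x$. Both then linearize the two error terms via $\frac{1 - c(x)\bar{h}(x)}{2}$ and $\frac{1 - c(x)y}{2}$ and rearrange into the multiaccuracy inequality with $\tau = 2\eps$. Your remark about using regular conditional distributions on a general $X$ is a harmless refinement that the paper leaves implicit.
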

\begin{proof}
    We first analyze $\eta(c)$. For the distribution $\mcD_{x,y}$ to be formed by the agnostic adversary, it must pick the function $\boldf(x)$ whose output has the distribution of $(\by \mid \bx=x)$ when $(\bx,\by) \sim \mcD_{x,y}$. Then,
    \begin{equation*}
        \eta(c) = \error_{\mcD_x}(\boldf, c) = \Prx_{\bx,\by \sim \mcD_{x,y}}[\by \neq c(\bx)].
    \end{equation*}
    Therefore, the first requirement can be written as
    \begin{equation*}
        \Prx_{\bx \sim \mcD_x}[\bh(\bx) \neq c(\bx)] \leq \Prx_{\bx,\by \sim \mcD_{x,y}}[\by \neq c(\bx)] + \eps \quad\quad\text{for all $c \in \mcC$.}
    \end{equation*}
    Since $c(x) \in \bits$ and $\Ex[\bh(x)] = \bar{h}(x)$, the above is equivalent to 
    \begin{equation*}
        \Ex_{\bx \sim \mcD_x}\bracket*{\frac{1 - c(\bx) \bar{h}(\bx)}{2}} \leq \Ex_{\bx,\by \sim \mcD_{x,y}}\bracket*{\frac{1 - c(\bx) \by}{2}} + \eps \quad\quad\text{for all $c \in \mcC$.}
    \end{equation*}
    The above rearranges to exactly the multiaccuracy requirement,
    \begin{equation*}
        \Ex_{(\bx,\by) \sim \mcD_{x,y}}[c(\bx) (\by - \bar{h}(\bx))] \leq 2\eps \quad\quad\text{for all $c \in \mcC$.}\qedhere
    \end{equation*}
\end{proof}

There are various algorithms for constructing multiaccurate learners (see e.g. \cite{KGZ19}). Due to \Cref{claim:multiaccuracy-equiv}, these can be used to derive an optimal error agnostic learner. Such learners can then be converted to optimal error nasty classification noise learners using the equivalence of \cite{BV25}, although with sample complexity overheads. This gives an alternative method to derive a less efficient learner with the accuracy guarantee of \Cref{thm:agnostic-body}. We remark that even for agnostic noise, \Cref{thm:agnostic-body} can still give a better sample complexity than approaches using multiaccuracy in the relative error setting.

\subsection{Applications of calibrated multiaccuracy}
A second popular property of hypotheses is \textsl{calibration} \cite{Daw82}. This property is desirable because, roughly speaking, many types of decision making algorithms can treat the output of a calibrated predictor as ``true" probabilities without sacrificing much utility (see e.g. \cite{HW24}). 
\begin{definition}[Calibrated hypothesis]
    We say a hypothesis $\bar{h}:X \to [-1,1]$ is $\tau$-calibrated for a distribution $\mcD_{x,y}$ if for any output value $v \in [-1,1]$ that $\bar{h}$ takes on with nonzero probability,
    \begin{equation*}
        \abs*{\Ex_{\bx,\by}[\by \mid \bar{h}(\bx) = v] - v} \leq \tau.
    \end{equation*}
\end{definition}

We show below, in \Cref{claim:cal-ma}, that a calibrated and multiaccurate hypothesis can be post-processed into a hypothesis meeting the criteria of \Cref{lem:mal-error-loss-overview}. Since there are a variety of techniques for constructing such a hypothesis with an ERM oracle (see e.g. \cite{HKRR18}), this can be used in place of \Cref{thm:general-optimization} in our proofs of \Cref{thm:malicious-intro,thm:nasty-dist-free-upper-intro}. 

That said, this alternative approach is in a sense ``overkill" and requires a larger sample/oracle complexity than ours (see e.g. the sample complexity lower bound of \cite{GT25}). It is known that calibrated and multiaccurate predictors can be post-processed to have low loss for a variety of losses \cite{GHKRW23,CGKR25}, though \Cref{claim:cal-ma} does not appear to easily follow black-box from that approach.\footnote{Briefly, calibrated and multiaccurate predictors are \textsl{omnipredictors} \cite{GKRVW21}, which allow for minimizing any loss that can be decomposed as $\ell_S(c,h) = \Ex_{(\bx,\by) \sim \Unif(S)}[L(h(\bx),\by) - L(c(\bx),\by)]$. Our loss has a term measuring the correlation between $h(\bx)$ and $c(\bx)$ and so is not of this form.} Our method of generating a low loss hypothesis in the proof of \Cref{thm:malicious-intro} can be viewed as a demonstration that if we only care about one loss, then a bespoke approach is more efficient. \Cref{thm:general-optimization} may be useful for designing such a bespoke approach for other losses as well.

\begin{claim}[Using calibration and multiaccuracy to find a low-loss hypothesis]
    \label{claim:cal-ma}
    For any sample $S$, define $\mcD_{x,y} \coloneqq \Unif(S)$ and define $\mu:X \to [-1,1]$ to be any $\tau$-calibrated and multiaccurate hypothesis on $\mcD_{x,y}$ with respect to a concept class $\mcC$. Then, for any bijective $g$ satisfying the requirements of \Cref{lem:intro-g-for-dist} the loss of $\overline{h} \coloneqq g \circ \mu$ as defined in \Cref{eq:mal-loss-intro} is at most $O(\tau)$.
\end{claim}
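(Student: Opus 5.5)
The plan is to rewrite the malicious loss of \Cref{eq:mal-loss-intro} so that it splits into three pieces: a multiaccuracy term, a calibration term, and a pointwise term that \Cref{lem:intro-g-for-dist}'s constraints on $g$ make nonpositive. Since $y \in \bits$ we have $y^2 = 1$, so for $\bar{h} = g \circ \mu$,
\begin{equation*}
    \ell_S(c,\bar{h}) = \Ex_{(\bx,\by)\sim \Unif(S)}\bracket*{c(\bx)\paren*{2\by - g(\mu(\bx))}} - \Ex_{(\bx,\by)\sim\Unif(S)}\bracket*{g(\mu(\bx))\by}.
\end{equation*}

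For the first expectation, I will add and subtract $2\mu$:
\begin{equation*}
    c(x)\paren*{2y - g(\mu(x))} = 2c(x)\paren*{y - \mu(x)} + c(x)\paren*{2\mu(x) - g(\mu(x))}.
\end{equation*}
Multiaccuracy of $\mu$ with respect to $\mcC$ bounds $2\Ex[c(\bx)(\by - \mu(\bx))] \le 2\tau$. Since $c(x) \in \bits$, the remaining term is at most $\Ex\bracket*{\abs*{2\mu(\bx) - g(\mu(\bx))}}$. Bounding this term by the worst case over $c$ is the key move: it removes all dependence on $c$ beyond multiaccuracy.

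For the second expectation, I will condition on the level sets of $\mu$:
\begin{equation*}
    \Ex[g(\mu(\bx))\by] = \Ex_{\bv}\bracket*{g(\bv)\,\Ex[\by \mid \mu(\bx) = \bv]}, \quad \bv = \mu(\bx).
\end{equation*}
By $\tau$-calibration, $\abs*{\Ex[\by\mid \mu(\bx)=v] - v} \le \tau$, and $\abs{g} \le 1$. Hence $-\Ex[g(\mu(\bx))\by] \le -\Ex[g(\mu(\bx))\mu(\bx)] + \tau$. Bijectivity of $g$ ensures that conditioning on $\bar{h}(\bx)$ and conditioning on $\mu(\bx)$ give the same level sets, so calibration of $\mu$ transfers to $\bar{h}$ if needed. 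In the argument above it is only used in this harmless way.

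Combining the two bounds gives
\begin{equation*}
    \ell_S(c,\bar{h}) \le 3\tau + \Ex\bracket*{\abs*{2\mu(\bx) - g(\mu(\bx))} - g(\mu(\bx))\mu(\bx)}.
\end{equation*}
It then suffices to check pointwise, for every $t\in[-1,1]$, that $\abs{2t - g(t)} - t\,g(t) \le 0$. Equivalently, for both $y\in\bits$,
\begin{equation*}
    y(2t - g(t)) - t\,g(t) = 2ty - g(t)(t+y) \le 0.
\end{equation*}
This is exactly the inequality verified in the proof of \Cref{lem:intro-g-for-dist} (see \Cref{subsec:proof-of-g-for-dist}), which follows from $\frac{2t}{1+t}\le g(t)\le \frac{2t}{1-t}$. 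The result is $\ell_S(c,\bar{h}) \le 3\tau = O(\tau)$ for every $c\in\mcC$.

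The step I expect to need the most care is the handling of the $c$-dependent term $c(2\mu - g(\mu))$. Multiaccuracy only controls correlations of $c$ with the residual $y - \mu$, not with arbitrary functions of $\mu$. The plan is to resolve this by the crude $\abs{\cdot}$ bound and to check that the constraints on $g$ are strong enough to absorb it. They are, since the maximum over $y$ of the feasibility expression is precisely $\abs{2t-g(t)} - t g(t)$.
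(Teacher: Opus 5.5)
Your proof is correct and follows the paper's argument. Both split $\ell_S$ into three pieces: a multiaccuracy term $2\Ex[c(\bx)(\by-\mu(\bx))]\le 2\tau$, a calibration term $\Ex[\bar{h}(\bx)(\mu(\bx)-\by)]\le\tau$, and the pointwise feasibility term $2\mu c-\bar{h}(\mu+c)\le 0$; your absolute-value bound is exactly this inequality maximized over $c(x)\in\bits$. The one minor difference is that you condition on the level sets of $\mu$ rather than of $\bar{h}$, so your calibration step does not actually need $g$ to be bijective, whereas the paper uses bijectivity to translate $\bar{h}(\bx)=v$ into $\mu(\bx)=g^{-1}(v)$.
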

\begin{proof}
    Our goal is to show that for any $c \in \mcC$,
    \begin{equation*}
        \ell_S(c, \bar{h}) = \Ex_{\bx,\by \sim \Unif(S)}[2c(\bx)\by - \bar{h}(\bx)c(\bx) -\bar{h}(\bx)\by] \leq O(\tau).
    \end{equation*}
    The guarantee that $g$ satisfies \Cref{lem:intro-g-for-dist} means, for any $z \in \bits$ and $\bar{\mu} \in [-1,1]$ that $$2\bar{\mu} z - g(\bar{\mu})(\bar{\mu} + z)\leq 0.$$
    Since $c(x) \in \bits$, this means that
    \begin{equation}
        \label{eq:g-implication}
        \Ex[2 \mu(\bx) c(\bx) - \bar{h}(\bx) (\mu(\bx) + c(\bx))] \leq 0.
    \end{equation}
    We will use this in our analysis of the loss and so rewrite it as
    \begin{equation*}
        \ell_S(c, \bar{h}) = \Ex[2 \mu(\bx) c(\bx) - \bar{h}(\bx) (\mu(\bx) + c(\bx))] + 2 \Ex[c(\bx)(\by - \mu(\bx))] + \Ex[\bar{h}(\bx)(\mu(\bx) - \by)].
    \end{equation*}
    The first term is $\leq 0$ from \Cref{eq:g-implication}. The second term is $\leq 2 \tau$ from the multiaccuracy constraint. We'll bound the third term using calibration. Let $V \coloneqq \set{\bar{h}(x) : (x,y) \in S}$ be the set containing all values that $\bar{h}$ takes on. Then,
    \begin{align*}
        \Ex[\bar{h}(\bx)(\mu(\bx) - \by)] &= \sum_{v \in V}    \Ex[\bar{h}(\bx)(\mu(\bx) - \by) \mid \bar{h}(\bx) = v] \cdot \Pr[\bar{h}(\bx) = v] \\
        &\leq \max_{v \in V}\Ex[\bar{h}(\bx)(\mu(\bx) - \by) \mid \bar{h}(\bx) = v].
    \end{align*}
    Since $g$ is bijective and $\bar{h}(x) = g(\mu(x))$, the condition $\bar{h}(x) = v$ is equivalent to $\mu(x) = g^{-1}(v)$. We can then write,
    \begin{equation*}
        \Ex[\bar{h}(\bx)(\mu(\bx) - \by) \mid \bar{h}(\bx) = v] = v \cdot \Ex[\mu(\bx) - \by \mid \mu(\bx) = g^{-1}(v)] \leq \tau |v|,
    \end{equation*}
    where the last inequality uses the calibrating constraint. The desired bound then follows from $|v| \leq 1$.
\end{proof}

\section{Impossibility results}

\subsection{The necessity of improper learners, proof of \Cref{claim:improper}}
We prove the following, restated for convenience.
\improper*
Both of our constructions in this section will use the concept class over domain $[k]$ that consists of exactly-$1$ sparse functions
\begin{equation*}
    \mcS_k \coloneqq \set{c: c(x) = 1\text{ for exactly one choice of }x \in [k]}.
\end{equation*}
This concept class has VC dimension $1$. We begin by showing that achieving malicious noise error better than $\eta$ (such as the guarantee of \Cref{thm:malicious-intro}) is not doable with a proper learner. All of our lower bounds will be for expected error.
\begin{claim}[Improper learners are necessary for malicious noise]
    \label{claim:improper-mal}
    Let $\eta = 1/k$ for any integer $k$. There is no algorithm that learns $\mcS_k$ with $\eta$-malicious noise and always outputs a hypothesis supported on $\mcS_k$ with expected error better than $\eta$.
\end{claim}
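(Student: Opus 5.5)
We will prove \Cref{claim:improper-mal} with a symmetric, averaging construction. Take $\mcD = \Unif([k])$ and draw the target $\bc^\star$ uniformly from $\mcS_k$; write $\bx^\star$ for its unique positive point. The malicious adversary's goal is to make the distribution of each sample independent of $\bc^\star$. A clean sample is $(\bx, \bc^\star(\bx))$ with $\bx \sim \Unif([k])$, so under clean draws the point $\bx^\star$ appears labeled $+1$ with probability $(1-\eta)/k$. Every other point appears labeled $-1$, each with probability $(1-\eta)/k$.

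The adversary uses its $\eta$ budget to make every point $x \in [k]$ equally likely to appear with label $+1$ and with label $-1$, regardless of the target. Concretely, with its $\eta$ probability it outputs $(\bx', \by')$ as follows. With probability $\eta \cdot \frac{k-1}{k}$ of the total, it emits $(\bz, +1)$ with $\bz$ uniform over $[k]\setminus\{\bx^\star\}$. With probability $\eta \cdot \frac1k$ of the total, it emits $(\bx^\star, -1)$.

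We then need these masses to match the clean ones. Take $\eta = 1/k$, and aim for every point $x \neq \bx^\star$ to carry label $+1$ with mass $(1-\eta)/k$. With the split above, each such point receives adversarial mass $\frac{\eta(k-1)/k}{k-1} = \eta/k$ with label $+1$, while it carries clean mass $(1-\eta)/k$ with label $-1$. These are unequal, so the split must be adjusted. The cleaner plan is to choose the adversarial mixture so that the total observed distribution is exactly $\Unif([k]) \times$ (some fixed label law). The requirement is that each point $x$ has the same $(+1,-1)$ mass pair whichever $x$ is $\bx^\star$.

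Per point, clean mass gives $\bx^\star$ the pair $((1-\eta)/k, 0)$ and every other point the pair $(0,(1-\eta)/k)$. To equalize, the adversary adds $+1$ mass $a$ to each non-target point and $-1$ mass $b$ to $\bx^\star$. It needs $a = (1-\eta)/k$ and $b = (1-\eta)/k$, for a total of $k\cdot(1-\eta)/k = 1-\eta$. That must equal $\eta$, so $\eta = 1/2$. Alternatively, we only match the distribution rather than fully symmetrize it, which is the step needing care: only $\eta$ mass is available.

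The cleaner route is to make the observed law independent of $\bc^\star$ in a weaker sense: it is symmetric under permutations of $[k]$ fixing nothing. That fails for general $\eta$, so we instead exploit that the learner is \emph{proper}, which makes indistinguishability unnecessary. A hypothesis supported on $\mcS_k$ is a distribution $p$ over $[k]$, predicting $+1$ only at its sampled point. If the target's positive point is $x^\star$, its error is
\begin{equation*}
\frac{1}{k}\bigl(1 - p(x^\star)\bigr) + \frac{1}{k}\bigl(1-p(x^\star)\bigr) = \frac{2}{k}\bigl(1-p(x^\star)\bigr),
\end{equation*}
since on $x^\star$ the hypothesis is wrong with probability $1-p(x^\star)$, and with that same probability it places its $+1$ on some other point, each erring with mass $1/k$.

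So we need the adversary to make $\Ex[p(\bx^\star)] \le 1/2$. Using $\eta = 1/k$, the adversary picks a uniformly random decoy $\bz \ne \bx^\star$. With all its $\eta$ budget it emits $(\bz,+1)$ and nothing else. Then the observed law puts mass $(1-\eta)/k = (k-1)/k^2$ on $(\bx^\star,+1)$, and mass $1/k$ on $(\bz,+1)$ added to $(\bz,-1)$ with mass $(k-1)/k^2$. This is still asymmetric.

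We therefore choose the decoy's adversarial mass so that $\bx^\star$ and $\bz$ look symmetric. Clean mass gives $\bx^\star$ the pair $(+1: \frac{1-\eta}{k})$ and $\bz$ the pair $(-1:\frac{1-\eta}{k})$. The adversary adds $(\bz,+1)$ and $(\bx^\star,-1)$, each with mass $\frac{1-\eta}{k}$, which costs $\frac{2(1-\eta)}{k}$. For $\eta = 1/k$ this exceeds the budget, so exact swap-symmetry is unaffordable.

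The main obstacle is this budget accounting. The intended resolution is that the learner's loss from a proper hypothesis is already linear in $p(x^\star)$, with slope $2/k = 2\eta$. Partial confusion is therefore enough: the adversary makes the posterior over $\{\bx^\star,\bz\}$ balanced enough that $\Ex[p(\bx^\star)]\le 1/2$, which yields expected error at least $\eta$. Calibrating the masses so that the adversary's total is exactly $\eta$ while the posterior on the two candidates is uniform is the step I would work out carefully.
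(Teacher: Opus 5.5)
Your proposal never produces an adversary that works. Each concrete strategy you try either exceeds the budget $\eta = 1/k$ or leaves the target distinguishable. The ``intended resolution'' at the end, a posterior uniform over $\{x^\star, z\}$, is exactly the swap-symmetry you already showed costs $2(1-\eta)/k > \eta$ for $k \ge 3$.

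This cannot be repaired by calibrating masses more carefully: with base distribution $\Unif([k])$ the lower bound is false for $k \ge 3$. Under target $c_{x^\star}$ the observed law is $P = (1-\eta)\mcD_{c_{x^\star}} + \eta Q$. The score $s(x) \coloneqq P(x,+1) - P(x,-1)$ therefore satisfies $s(x^\star) \ge (1-\eta)/k - \eta Q(x^\star,-1)$ and $s(z) \le \eta Q(z,+1) - (1-\eta)/k$ for every $z \neq x^\star$. Hence $s(x^\star) - s(z) \ge 2(1-\eta)/k - \eta = (k-2)/k^2 > 0$ for any adversary. The proper learner that outputs $c_{\hat x}$, for $\hat x$ maximizing the empirical score, thus recovers the target given enough samples and has error tending to $0$. (The corrupted samples number only about $\eta n$, so adaptivity does not help the adversary.)

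The missing idea is to use the freedom of the distribution-independent setting. Let the base distribution depend on the target and put \emph{zero} mass on the target's positive point. The paper takes $\mcD_x = \Unif([k]\setminus\{i^\star\})$. Then $c_{i^\star}$ labels its whole support $-1$, and each clean point appears with probability $(1-\eta)/(k-1) = 1/k$. The adversary spends its entire budget emitting $(i^\star,-1)$, so the learner sees exactly $\Unif([k]) \times \{-1\}$ whichever $i^\star$ is the target.

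Because this law does not depend on $i^\star$, the learner's averaged output $\sum_j w_j c_j$ is fixed. Pigeonhole then gives some $i^\star$ with $w_{i^\star} \le 1/k$; in your Bayesian framing, a uniformly random $i^\star$ has expected weight exactly $1/k$. Under $\mcD_x$, each $c_j$ with $j \neq i^\star$ errs only at $j$, which has mass $1/(k-1)$; the disagreement at $i^\star$ carries no mass. The error is therefore $(1-w_{i^\star})/(k-1) \ge (1-1/k)/(k-1) = 1/k = \eta$.

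Your uniform instance lacks exactly this feature. There, disguising the target's positive point is both too expensive for the adversary and counted in the error. Here it is exactly affordable at rate $\eta = 1/k$ and invisible to the error measure.
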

\begin{proof}
    Let $\mcD_{x,y}$ be the distribution of $(\bx,\by)$ where $\bx \sim \Unif([k])$ and $\by = -1$ always. For any algorithm $A$, we consider the average output of $A$ given an i.i.d. sample from $\mcD_{x,y}$,
    \begin{equation*}
        \bh = \Ex_{\bS \iid \mcD_{x,y}}[A(\bS)].
    \end{equation*}
    Since $A$ outputs hypotheses supported on $\mcS_k$, that is also true of $\bh$. We can therefore write $\bh = w_1 c_1 + \cdots + w_k c_k$ where $c_i$ is the hypothesis that outputs $1$ only on input $i$. These weights satisfy that $w_1 + \cdots + w_k = 1$, so there is some $i^{\star}$ where $w_{i^{\star}} \leq 1/k$.

    We now construct a malicious adversary that makes samples from $c_{i^\star}$ look like $\mcD_{x,y}$. Let the distribution over unlabeled examples, $\mcD_x$ be $\Unif([k] \setminus \set{i^\star})$. Then, the malicious adversary can create i.i.d. examples from $\mcD_{x,y}$ given base distribution $\mcD_x$ and target $c_{i^\star}$ if, whenever it corrupts a point, it sets it to $(x' = i^{\star}, y' = -1)$. In this setting, $A$ on average outputs $\bh$.

    Finally, we compute the accuracy of $\bh$. For any $j \neq i^{\star}$, we have that
    \begin{equation*}
        \error_{\mcD_x}(c_j, c_i^{\star}) = \frac{1}{k-1},
    \end{equation*}
    because the two functions differ on exactly one point on $[k] \setminus \set{i^\star}$ (the point $x = j$), and $\mcD_x$ puts weight $1/(k-1)$ on that point. For $j = i^{\star}$, we simply have $\error_{\mcD_x}(c_j, c_i^{\star})  = 0$. Hence,
    \begin{equation*}
        \error_{\mcD_x}(\bh, c_{i}^{\star}) = \sum_{j \neq i^{\star}} w_j \cdot \frac{1}{k-1} = \frac{1-w_{i^{\star}}}{k-1} \geq \frac{1 - 1/k}{k-1} = 1/k = \eta. \qedhere
    \end{equation*}
\end{proof}

    Our second construction proves a lower bound for fixed-distribution agnostic noise. Since agnostic noise is easier than nasty classification noise, this implies that the guarantee of \Cref{thm:agnostic-intro} is not possible. Since nasty classification noise is easier than nasty noise, and our lower bound is in the fixed-distribution setting, this similarly implies the error guarantees of \Cref{thm:nasty-dist-free-upper-intro,thm:nasty-fixed-dist-intro} are not possible with proper learners.

\begin{claim}[Improper learners are necessary for agnostic noise even over the uniform distribution]
    \label{claim:improper-agnostic}
    Let $\eta = 1/k$ for any integer $k$. There is no algorithm that learns $\mcS_k$ over the distribution $\mcD_x = \Unif([k])$ with $\eta$-agnostic noise and always outputs a hypothesis supported on $\mcS_k$ with expected error better than $2\eta(1-\eta)$.
\end{claim}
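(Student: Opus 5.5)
We mirror the proof of \Cref{claim:improper-mal}: we exhibit a single corrupted distribution $\mcD_{x,y}$ that an $\eta$-agnostic adversary can produce from \emph{every} target $c_i \in \mcS_k$, and then argue by averaging that any mixture of sparse functions is far from some $c_i$. Keep $\mcD_x = \Unif([k])$ and let $\mcD_{x,y}$ be $\bx \sim \Unif([k])$ with $\by = -1$ always. For target $c_i$, the deterministic function $f \equiv -1$ disagrees with $c_i$ only at $x=i$, which has mass $1/k = \eta$. So the adversary may choose $\boldf \equiv -1$, and the learner's input distribution is identical for every $i$.

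Let $A$ be any learner that outputs hypotheses supported on $\mcS_k$. Set $\bh \coloneqq \Ex_{\bS}[A(\bS)]$, the average output on i.i.d.\ samples from $\mcD_{x,y}$. It is still a mixture of functions in $\mcS_k$, so we can write $\bh = \sum_j w_j c_j$ with $w_j \ge 0$ and $\sum_j w_j = 1$. Expected error is linear in the mixture, so the learner's expected error against target $c_i$ equals $\error_{\mcD_x}(\bh, c_i) = \sum_j w_j \error_{\mcD_x}(c_j, c_i)$.

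Under the uniform distribution, $c_j$ and $c_i$ disagree on exactly two points when $j \ne i$ (namely $i$ and $j$), and agree everywhere when $j = i$. Hence $\error_{\mcD_x}(c_j,c_i) = \frac{2}{k}\Ind[j \ne i]$, and so
\begin{equation*}
    \error_{\mcD_x}(\bh, c_i) = \frac{2}{k}\,(1 - w_i).
\end{equation*}
Since the weights sum to $1$, some $i^\star$ has $w_{i^\star} \le 1/k$. For the target $c_{i^\star}$ the expected error is therefore at least $\frac{2}{k}(1 - 1/k) = 2\eta(1-\eta)$, which proves the claim.

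The bound is the rate-$\eta$ deterministic baseline up to the factor $(1-\eta)$, and it exceeds the $\eta + \eps$ guarantees once $k \ge 3$ (for $\eta = 1/3$ it equals $4/9 > 1/3$, so $\eps = \Omega(1)$ suffices). The only mildly delicate point is that $A$ may be randomized and may depend on the sample. This is handled exactly as in \Cref{claim:improper-mal}: the sample distribution does not depend on $i$, so averaging the output hypothesis over the sample and the internal randomness yields a fixed mixture $\bh$ to which the linearity argument applies.
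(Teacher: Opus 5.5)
Your proposal is correct and matches the paper's proof step for step. Both use the same all-negative distribution $\mcD_{x,y}$, realized by the adversary choosing the constant $-1$ function (at distance $\eta$ from every target). Both average the learner's output into a single mixture $\sum_j w_j c_j$ with some $w_{i^\star}\le 1/k$, and both use that distinct functions in $\mcS_k$ are at distance $2\eta$ under the uniform distribution, giving error $2\eta(1-w_{i^\star})\ge 2\eta(1-\eta)$.
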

\begin{proof}
    We consider the same distribution $\mcD_{x,y}$ as in the proof of \Cref{claim:improper-mal} and averaged hypothesis $\bh$. Once again, we can write $\bh = w_1 c_1 + \cdots + w_k c_k$ and there is some $i^{\star}$ for which $w_{i^{\star}} \leq 1/k$.

    Consider the agnostic adversary that selects the function $f$ that always outputs $-1$. This function satisfies that $\error_{\mcD_x}(f, c_i) = \eta$ for all $i \in [k]$, so is a legal choice when $c_{i^{\star}}$ is the target function. With this choice, the learner receives i.i.d. samples from $\mcD_{x,y}$ and so outputs $\bh$.

    To compute the error of $\bh$, we first observe that, for any $j \neq i^{\star}$, 
    \begin{equation*}
        \error_{\mcD_x}(c_j, c_i^{\star}) = \frac{2}{k} = 2\eta,
    \end{equation*}
    since the two concepts differ on exactly two points. Therefore,
     \begin{equation*}
        \error_{\mcD_x}(\bh, c_{i}^{\star}) = \sum_{j \neq i^{\star}} w_j \cdot 2\eta = 2\eta (1 - w_{i^{\star}}) \geq 2\eta(1 - 1/k)= 2\eta(1-\eta). \qedhere
    \end{equation*}
\end{proof}

\subsection{The necessity of distinct learners, proof of \Cref{claim:distinct}}
We prove the following, restated for convenience.
\distinct*
\begin{proof}
    Let the domain be $X \coloneqq \set{0,1}$ and $\mcC = \set{c_{-1}, c_1}$ consist of two functions,
    \begin{equation*}
        c_{-1}(x) \coloneqq\begin{cases}
            -1&\text{if }x=0,\\
            -1&\text{if }x=1.
        \end{cases} \quad\quad\text{and}\quad\quad 
        c_1(x) \coloneqq\begin{cases}
            -1&\text{if }x=0,\\
            1&\text{if }x=1.
        \end{cases} 
    \end{equation*}

    Fix any $p \leq 1$ and learning algorithm $A$. Let 
    \begin{equation*}
        q \coloneqq \Prx_{\bS \iid \mcD_{x,y},\bh \leftarrow A(\bS)}[\bh(1)=1].
    \end{equation*} 
    where $\mcD_{x,y}$ is defined as
    \begin{equation*}
        \mcD_{x,y}(x,y) = \begin{cases}
            1-p &\text{if }x=0,y=-1\\
            0&\text{if }x=0,y=1\\
            p/3 &\text{if }x=1,y=-1\\
            2p/3&\text{if }x=1,y=1
        \end{cases}
    \end{equation*}
    We will show that regardless of what $q$ is, $A$ will have too much error on one of learning with malicious noise or agnostic noise (recall that agnostic noise is easier than nasty classification noise, so the error guarantee of \Cref{thm:agnostic-intro} covers it as well).
    
    First, we consider $(\eta_{\mathrm{mal}} = p/3)$-malicious noise. Let the unlabeled distribution $\mcD_x$ be the distribution that outputs $0$ with probability $(1-p)/(1-p/3)$ and output $1$ with probability $(2p/3)/(1-p/3)$. We observe that if the target function $c^{\star} = c_1$, then the malicious adversary can create a $\bS$ that is drawn i.i.d. from $\mcD_{x,y}$ by always choosing the point $(x'=1, y'=-1)$ whenever it makes a corruption. Then, the average error of the algorithm $A$ is at least $(1-q) \cdot \frac{2p/3}{1-p/3}$. In order to satisfy the error guarantee of \Cref{thm:malicious-intro}, this must satisfy
    \begin{equation*}
        (1-q) \cdot \frac{2p/3}{1-p/3} \leq \frac{1}{2} \cdot \frac{p/3}{1-p/3} + \eps.
    \end{equation*}
    This simplifies to the requirement that
    \begin{equation}
        \label{eq:q-large}
        q \geq \frac{3}{4} - \eps \cdot \frac{1 - p/3}{2p/3}.
    \end{equation}

    Next consider $(\eta_{\mathrm{agn}} = 2p/3)$-agnostic noise. Let the unlabeled distribution $\mcD_x$ be the distribution that outputs $0$ with probability $1-p$ and $1$ with probability $p$. We observe that if the target function is $c_{-1}$, the agnostic adversary can create a dataset that is drawn i.i.d. from $\mcD_{x,y}$ by choosing the corrupted function $\boldf$ that on input $0$ always outputs $-1$ and on input $1$ outputs $1$ with probability $2/3$. In this case, the error of $A$ is $pq$. In order to satisfy the error guarantee of \Cref{thm:agnostic-intro}, this must satisfy
    \begin{equation*}
        pq \leq \frac{2p}{3} + \eps,
    \end{equation*}
    which simplifies to
    \begin{equation}
        \label{eq:q-small}
        q \leq \frac{2}{3} + \eps/p.
    \end{equation}
    This construction works for any choice of $p$. Setting $p = 1$, it is impossible to satisfy both \Cref{eq:q-large,eq:q-small} with any $\eps < 1/24$.
\end{proof}

\section{Acknowledgements}
I thank Li-Yang Tan, S\'ilvia Casacuberta, Aaron Roth, and Michael Kim for their very helpful discussions and feedback. This research was supported by NSF awards 1942123, 2211237, and 2224246 as well as Omer Reingold's Simons Foundation investigators award.

\subsection{AI disclosure}
We used ChatGPT-5 primarily for minor editing, literature search, and the creation of figures. Two more specific uses: It noticed a connection to the Frank-Wolfe algorithm (see \Cref{remark:AI-FW} in \Cref{sec:general-opt} for details). Second, we used it to write much of a straightforward but tedious proof in \Cref{appendix:g-choice} (see the discussion in that section for details).

\bibliographystyle{alpha}
\bibliography{ref}

@STRING{acm = {ACM Press}}

@string{colt = "IEEE Conference on Computational Learning Theory"}

@string{isaac = aisaac}

@STRING{jacm = {Journal of the ACM}}

@STRING{ml = {Machine Learning}}

@string{sigact = "SIGACT News"}

@string{springer = "Springer Verlag"}

@string{stoc = astoc}

@STRING{stoc1984 = {Proc.\ 16th Annual ACM Symposium on Theory of Computing (STOC)}}

@STRING{stoc1993 = {Proc.\ 25th Annual ACM Symposium on Theory of Computing (STOC)}}

@STRING{stoc1998 = {Proc.\ 30th Annual ACM Symposium on Theory of Computing (STOC)}}

@string{wiley = "John Wiley and Sons"}

@article{BEK02,
  title={{PAC} learning with nasty noise},
  author={Bshouty, Nader H and Eiron, Nadav and Kushilevitz, Eyal},
  journal={Theoretical Computer Science},
  volume={288},
  number={2},
  pages={255--275},
  year={2002},
  publisher={Elsevier}
}

@article{DKKLMS19,
  title={Robust estimators in high-dimensions without the computational intractability},
  author={Diakonikolas, Ilias and Kamath, Gautam and Kane, Daniel and Li, Jerry and Moitra, Ankur and Stewart, Alistair},
  journal={SIAM Journal on Computing},
  volume={48},
  number={2},
  pages={742--864},
  year={2019},
  publisher={SIAM}
}

@article{Hau92,
title = {Decision theoretic generalizations of the PAC model for neural net and other learning applications},
journal = {Information and Computation},
volume = {100},
number = {1},
pages = {78-150},
year = {1992},
author = {David Haussler},
}

@article{KL93,
  title={Learning in the presence of malicious errors},
  author={Kearns, Michael and Li, Ming},
  journal={SIAM Journal on Computing},
  volume={22},
  number={4},
  pages={807--837},
  year={1993},
}

@article{KSS94,
author={Michael Kearns and Robert Schapire and Linda Sellie},
title={{Toward efficient agnostic learning}},
journal={Machine Learning},
year={1994},
volume={17},
number={2/3},
pages={115-141}
}

@inproceedings{Val85,
  author    = {Leslie G. Valiant},
  title     = {Learning Disjunction of Conjunctions},
  booktitle = {Proceedings of the 9th International Joint Conference on Artificial Intelligence (IJCAI)},
  pages     = {560--566},
  year      = {1985},
}

@INPROCEEDINGS{Val84,
  author = {L. G. Valiant},
  title = {A theory of the learnable},
  booktitle = stoc1984,
  year = {1984},
  pages = {436--445},
  publisher = acm,
  printoutnum = {271},
}

@inproceedings{DKPP22,
  title={Streaming algorithms for high-dimensional robust statistics},
  author={Diakonikolas, Ilias and Kane, Daniel M and Pensia, Ankit and Pittas, Thanasis},
  booktitle={International Conference on Machine Learning},
  pages={5061--5117},
  year={2022},
  organization={PMLR}
}

@article{LMN93,
  author={N. Linial and Y. Mansour and N. Nisan},
  title={Constant depth circuits, {F}ourier transform and learnability},
  journal={Journal of the ACM},
  year={1993},
  volume={40},
  number={3},
  pages={607-620}
}

@book{DK23book,
  title={Algorithmic high-dimensional robust statistics},
  author={Diakonikolas, Ilias and Kane, Daniel M},
  year={2023},
  publisher={Cambridge university press}
}

@inproceedings{BLMT22,
  title={On the power of adaptivity in statistical adversaries},
  author={Blanc, Guy and Lange, Jane and Malik, Ali and Tan, Li-Yang},
  booktitle={Conference on Learning Theory},
  pages={5030--5061},
  year={2022},
  organization={PMLR}
}

@inproceedings{DKKLMS18,
  title={Robustly learning a gaussian: Getting optimal error, efficiently},
  author={Diakonikolas, Ilias and Kamath, Gautam and Kane, Daniel M and Li, Jerry and Moitra, Ankur and Stewart, Alistair},
  booktitle={Proceedings of the Twenty-Ninth Annual ACM-SIAM Symposium on Discrete Algorithms},
  pages={2683--2702},
  year={2018},
  organization={SIAM}
}

@article{KKMS08,
  title={Agnostically learning halfspaces},
  author={Kalai, Adam Tauman and Klivans, Adam R and Mansour, Yishay and Servedio, Rocco A},
  journal={SIAM Journal on Computing},
  volume={37},
  number={6},
  pages={1777--1805},
  year={2008},
  publisher={SIAM}
}

@article{KLS09,
author={A. Klivans and P. Long and R. Servedio},
  title={{Learning Halfspaces with Malicious Noise}},
  journal={Journal of Machine Learning Research},
  volume={10},
  year={2009},
  pages={2715-2740}
}

@article{KK09,
  title={Potential-based agnostic boosting},
  author={Kanade, Varun and Kalai, Adam},
  journal={Advances in neural information processing systems},
  volume={22},
  year={2009}
}

@article{F10,
  title={Distribution-specific agnostic boosting},
  author={Feldman, Vitaly},
  journal={Innovations in Computer Science},
  year={2010}
}

@article{LS11,
title="Learning large-margin halfspaces with more malicious noise",
author="P. Long and R. Servedio",
journal="NIPS",
year=2011
}

@article{V28,
  title={Zur theorie der gesellschaftsspiele},
  author={von Neumann, John},
  journal={Mathematische annalen},
  volume={100},
  number={1},
  pages={295--320},
  year={1928},
  publisher={Springer}
}

@article{CBNDFSS99,
  title={Sample-efficient strategies for learning in the presence of noise},
  author={Cesa-Bianchi, Nicolo and Dichterman, Eli and Fischer, Paul and Shamir, Eli and Simon, Hans Ulrich},
  journal={Journal of the ACM (JACM)},
  volume={46},
  number={5},
  pages={684--719},
  year={1999},
  publisher={ACM New York, NY, USA}
}

@article{KS94,
  title={Efficient distribution-free learning of probabilistic concepts},
  author={Kearns, Michael J and Schapire, Robert E},
  journal={Journal of Computer and System Sciences},
  volume={48},
  number={3},
  pages={464--497},
  year={1994},
  publisher={Elsevier}
}

@inproceedings{BHMS26,
  title={Is nasty noise actually harder than malicious noise?},
  author={Blanc, Guy and Huang, Yizhi and Malkin, Tal and Servedio, Rocco A},
  booktitle={Proceedings of the 2026 Annual ACM-SIAM Symposium on Discrete Algorithms (SODA)},
  pages={6767--6787},
  year={2026},
  organization={SIAM}
}

@inproceedings{BV25,
  title={Adaptive and oblivious statistical adversaries are equivalent},
  author={Blanc, Guy and Valiant, Gregory},
  booktitle={Proceedings of the 57th Annual ACM Symposium on Theory of Computing},
  pages={2031--2042},
  year={2025}
}

@inproceedings{HKRR18,
  title={Multicalibration: Calibration for the (computationally-identifiable) masses},
  author={H{\'e}bert-Johnson, Ursula and Kim, Michael and Reingold, Omer and Rothblum, Guy},
  booktitle={International Conference on Machine Learning},
  pages={1939--1948},
  year={2018},
  organization={PMLR}
}

@inproceedings{KGZ19,
  title={Multiaccuracy: Black-box post-processing for fairness in classification},
  author={Kim, Michael P and Ghorbani, Amirata and Zou, James},
  booktitle={Proceedings of the 2019 AAAI/ACM Conference on AI, Ethics, and Society},
  pages={247--254},
  year={2019}
}

@article{GT25,
  title={Sample-Efficient Omniprediction for Proper Losses},
  author={Gibbs, Isaac and Tibshirani, Ryan J},
  journal={arXiv preprint arXiv:2510.12769},
  year={2025}
}

@article{FS97,
  title={A decision-theoretic generalization of on-line learning and an application to boosting},
  author={Freund, Yoav and Schapire, Robert E},
  journal={Journal of computer and system sciences},
  volume={55},
  number={1},
  pages={119--139},
  year={1997},
  publisher={Elsevier}
}

@inproceedings{BDLP08,
  title={Does Unlabeled Data Provably Help? Worst-case Analysis of the Sample Complexity of Semi-Supervised Learning.},
  author={Ben-David, Shai and Lu, Tyler and P{\'a}l, D{\'a}vid},
  booktitle={COLT},
  pages={33--44},
  year={2008}
}

@article{BB10,
  title={A discriminative model for semi-supervised learning},
  author={Balcan, Maria-Florina and Blum, Avrim},
  journal={Journal of the ACM (JACM)},
  volume={57},
  number={3},
  pages={1--46},
  year={2010},
  publisher={ACM New York, NY, USA}
}

@inproceedings{J13,
  title={Revisiting Frank-Wolfe: Projection-free sparse convex optimization},
  author={Jaggi, Martin},
  booktitle={International conference on machine learning},
  pages={427--435},
  year={2013},
  organization={PMLR}
}

@article{FW56,
  title={An algorithm for quadratic programming},
  author={Frank, Marguerite and Wolfe, Philip},
  journal={Naval research logistics quarterly},
  volume={3},
  number={1-2},
  pages={95--110},
  year={1956},
  publisher={Wiley Online Library}
}

@book{ODBook,
  title={Analysis of boolean functions},
  author={O'Donnell, Ryan},
  year={2014},
  publisher={Cambridge University Press}
}

@article{BM02,
  title={Rademacher and gaussian complexities: Risk bounds and structural results},
  author={Bartlett, Peter L and Mendelson, Shahar},
  journal={Journal of machine learning research},
  volume={3},
  number={Nov},
  pages={463--482},
  year={2002}
}

@article{LLS01,
  title={Improved bounds on the sample complexity of learning},
  author={Li, Yi and Long, Philip M and Srinivasan, Aravind},
  journal={Journal of Computer and System Sciences},
  volume={62},
  number={3},
  pages={516--527},
  year={2001},
  publisher={Elsevier}
}

@book{Vid13,
  title={Learning and generalisation: with applications to neural networks},
  author={Vidyasagar, Mathukumalli},
  year={2013},
  publisher={Springer Science \& Business Media}
}

@article{MP96,
  author={Y. Mansour and M. Parnas},
  title={Learning conjunctions with noise under product distributions},
  journal={Information Processing Letters},
  year={1998},
  volume={68},
  number={4},
  pages={189--196}
}

@article{Aue97,
  author={P. Auer},
  title={Learning nested differences in the presence of malicious noise},
  journal={Theor. Comp. Sci.},
  year={1997},
  volume={185},
  number={1},
  pages={159-175}
}

@INPROCEEDINGS{Bsh98,
  author = {Nader H. Bshouty},
  title = {A New Composition Theorem for Learning Algorithms},
  booktitle = stoc1998,
  year = {1998},
  pages = {583--589},
}

@article{Ser03,
  author={R. Servedio},
  title={Smooth boosting and learning with malicious noise},
  journal={Journal of Machine Learning Research},
  pages={633-648},
  volume={4},
  year={2003},
}

@ARTICLE{AW98,
  author = {Peter Auer and Manfred K. Warmuth},
  title = {Tracking the best disjunction},
  journal = ml,
  year = {1998},
  volume = {32},
  pages = {127--150},
  number = {2},
  printoutnum = {023},
}

@inproceedings{BS12,
  author       = {Aharon Birnbaum and
                  Shai Shalev{-}Shwartz},
  title        = {Learning Halfspaces with the Zero-One Loss: Time-Accuracy Tradeoffs},
  booktitle    = {Advances in Neural Information Processing Systems 25: 26th Annual
                  Conference on Neural Information Processing Systems},
  pages        = {935--943},
  year         = {2012}
  }

@article{ABL17,
  author       = {Pranjal Awasthi and
                  Maria{-}Florina Balcan and
                  Philip M. Long},
  title        = {The Power of Localization for Efficiently Learning Linear Separators
                  with Noise},
  journal      = {J. {ACM}},
  volume       = {63},
  number       = {6},
  pages        = {50:1--50:27},
  year         = {2017}
  }

@inproceedings{DKS18,
  author       = {Ilias Diakonikolas and
                  Daniel M. Kane and
                  Alistair Stewart},
  title        = {Learning geometric concepts with nasty noise},
  booktitle    = {Proceedings of the 50th Annual {ACM} {SIGACT} Symposium on Theory of Computing (STOC)},
  pages        = {1061--1073},
  year         = {2018}
  }

@inproceedings{SZ21,
  author       = {Jie Shen and
                  Chicheng Zhang},
  title        = {Attribute-Efficient Learning of Halfspaces with Malicious Noise: Near-Optimal
                  Label Complexity and Noise Tolerance},
  booktitle    = {Algorithmic Learning Theory, 16-19 March 2021, Virtual Conference,
                  Worldwide},
  volume       = {132},
  pages        = {1072--1113},
  year={2021}
  }

@InProceedings{Shen23,
  title = 	 {{PAC} Learning of Halfspaces with Malicious Noise in Nearly Linear Time},
  author =       {Shen, Jie},
  booktitle = 	 {Proceedings of The 26th International Conference on Artificial Intelligence and Statistics},
  pages = 	 {30--46},
  year = 	 {2023}}

@article{HKLM24,
  author       = {Max Hopkins and
                  Daniel M. Kane and
                  Shachar Lovett and
                  Gaurav Mahajan},
  title        = {Realizable Learning is All You Need},
  journal      = {TheoretiCS},
  volume       = {3},
  year         = {2024}
  }

@inproceedings{KSTV25,
  title={The Power of Iterative Filtering for Supervised Learning with (Heavy) Contamination},
  author={Klivans, Adam and Stavropoulos, Konstantinos and Tian, Kevin and Vasilyan, Arsen},
  booktitle={The Thirty-ninth Annual Conference on Neural Information Processing Systems},
  year={2025}
}

@inproceedings{GHKRW23,
  title={Loss Minimization Through the Lens Of Outcome Indistinguishability},
  author={Gopalan, Parikshit and Hu, Lunjia and Kim, Michael P and Reingold, Omer and Wieder, Udi},
  booktitle={14th Innovations in Theoretical Computer Science Conference (ITCS 2023)},
  pages={60--1},
  year={2023},
  organization={Schloss Dagstuhl--Leibniz-Zentrum f{\"u}r Informatik}
}

@article{CGKR25,
  title={How global calibration strengthens multiaccuracy},
  author={Casacuberta, S{\'\i}lvia and Gopalan, Parikshit and Kanade, Varun and Reingold, Omer},
  journal={arXiv preprint arXiv:2504.15206},
  year={2025}
}

@article{Daw82,
  title={The well-calibrated Bayesian},
  author={Dawid, A Philip},
  journal={Journal of the American statistical Association},
  volume={77},
  number={379},
  pages={605--610},
  year={1982},
  publisher={Taylor \& Francis}
}

@article{HW24,
  title={Calibration error for decision making},
  author={Hu, Lunjia and Wu, Yifan},
  journal={arXiv preprint arXiv:2404.13503},
  year={2024}
}

@INPROCEEDINGS{Kha93,
  author = {Michael Kharitonov},
  title = {Cryptographic hardness of distribution-specific learning},
  booktitle = stoc1993,
  year = {1993},
  pages = {372--381},
  publisher = acm,
  printoutnum = {203},
}

@article{SSS95,
  title={Chernoff--Hoeffding bounds for applications with limited independence},
  author={Schmidt, Jeanette P and Siegel, Alan and Srinivasan, Aravind},
  journal={SIAM Journal on Discrete Mathematics},
  volume={8},
  number={2},
  pages={223--250},
  year={1995},
  publisher={SIAM}
}

@article{Vad12,
  title={Pseudorandomness},
  author={Vadhan, Salil P},
  journal={Foundations and Trends{\textregistered} in Theoretical Computer Science},
  volume={7},
  number={1-3},
  pages={1--336},
  year={2012},
  publisher={Emerald Publishing Limited}
}

@article{Jof74,
author = {Anatole Joffe},
title = {{On a Set of Almost Deterministic $k$-Independent Random Variables}},
volume = {2},
journal = {The Annals of Probability},
number = {1},
publisher = {Institute of Mathematical Statistics},
pages = {161 -- 162},
year = {1974},
doi = {10.1214/aop/1176996762},
URL = {https://doi.org/10.1214/aop/1176996762}
}

@inproceedings{GKRVW21,
  title={Omnipredictors},
  author={Gopalan, Parikshit and Kalai, Adam Tauman and Reingold, Omer and Sharan, Vatsal and Wieder, Udi},
  booktitle={13th Innovations in Theoretical Computer Science Conference (ITCS 2022)},
  pages={79--1},
  year={2022},
  organization={Schloss Dagstuhl--Leibniz-Zentrum f{\"u}r Informatik}
}

\appendix
\crefalias{section}{appendix}
\crefname{appendix}{appendix}{appendices}
\Crefname{appendix}{Appendix}{Appendices}

\section{Deferred proofs}

\subsection{Randomized rounding when there are no heavy points}

We show that any randomized hypothesis can be efficiently converted to a deterministic hypothesis with comparable error when the distribution does not have too much weight on any one point. This gives a post-processing step that can be applied to any of our algorithms outputting a randomized hypothesis to produce one outputting a deterministic hypothesis.
\begin{claim}
\label{claim:randomized-rounding}
    Let $\bh:X \to \bits$ be a randomized hypothesis, $c^{\star}:X \to \bits$ be any target function, and $\mcD$ be a distribution over $X$ satisfying that $\mcD(x) \le p_{\max}$ for all $x \in X$. There is an algorithm $A$ that takes as input $\delta > 0$ as well as $\bh$ represented as a size-$n$ circuit, runs in time $\poly(n, \log(1/\delta))$ and outputs a deterministic hypothesis $\hat{h}$ that with probability at least $1-\delta$ over the randomness of $A$, satisfies
    \begin{equation*}
          \error_{\mcD}(\hat{h}, c^{\star}) \leq \error_{\mcD}(\bh, c^{\star}) + O\paren*{\sqrt{p_{\max} \ln(1/\delta)}}.
    \end{equation*}
\end{claim}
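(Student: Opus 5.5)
The plan is to use \emph{limited-independence randomized rounding}. If the randomness of $\bh$ were drawn fresh and independently at every input $x$, the error of the resulting deterministic function would be a sum of independent indicators weighted by $\mcD(x) \le p_{\max}$. Such a sum concentrates around $\error_{\mcD}(\bh,c^\star)$ with deviation $O(\sqrt{p_{\max}\ln(1/\delta)})$. The difficulty is that $X$ may be huge, so we cannot afford truly independent bits per input. We therefore replace them with a $k$-wise independent hash family, $k = O(\log(1/\delta))$, which has a short seed and can be evaluated efficiently. The bibliography's inclusion of \cite{SSS95,Vad12,Jof74} suggests exactly this route.

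Concretely, view $\bh$ as a circuit $C(x, r)$ with random string $r \in \zo^m$, where $m \le n$. Sample a function $H$ from a $k$-wise independent family mapping $X$ (encoded in at most $n$ bits) to $\zo^m$. A standard choice is a random polynomial of degree $k-1$ over $\mathbb{F}_{2^{\max(n,m)}}$, truncated to $m$ bits. The algorithm outputs $\hat h(x) \coloneqq C(x, H(x))$. This runs in time $\poly(n, k) = \poly(n,\log(1/\delta))$. For each $x$, the value $H(x)$ is uniform, so the indicators $\bz_x \coloneqq \Ind[\hat h(x) \neq c^\star(x)]$ satisfy $\Ex[\bz_x] = \Pr_{\bh}[\bh(x)\neq c^\star(x)] \eqqcolon q_x$. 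Moreover, the family $\{\bz_x\}_x$ is $k$-wise independent. Hence $\error_{\mcD}(\hat h, c^\star) = \sum_x \mcD(x)\bz_x$, whose expectation is exactly $\error_{\mcD}(\bh, c^\star)$.

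The remaining step is the concentration bound, via a $k$-th moment (Markov on an even moment). Let $\bZ = \sum_x \mcD(x)(\bz_x - q_x)$. By $k$-wise independence, $\Ex[\bZ^k]$ equals its value under full independence. Each centered summand is bounded by $\mcD(x) \le p_{\max}$, and the total variance is at most $\sum_x \mcD(x)^2 \le p_{\max}$. The standard moment bound for sums of independent bounded variables (the computation in \cite{SSS95}, or Hoeffding-type moment estimates) then gives $\Ex[\bZ^k] \le (O(k \cdot p_{\max}))^{k/2}$. By Markov, $\Pr[\bZ \ge t] \le (O(k p_{\max})/t^2)^{k/2}$. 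Choosing $k = \Theta(\ln(1/\delta))$ (even) and $t = C\sqrt{k p_{\max}}$ for a large constant $C$ makes this at most $2^{-k/2}\le\delta$. That yields the additive term $O(\sqrt{p_{\max}\ln(1/\delta)})$.

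The main obstacle is this moment bound. We need it to depend only on $\sum_x \mcD(x)^2 \le p_{\max}\sum_x\mcD(x) = p_{\max}$ and on the per-term bound $p_{\max}$. It must not depend on the (possibly infinite) support size, so a careful but standard expansion is required. Two further points are technicalities only. If $X$ is infinite or continuous, we can restrict to inputs encodable in $n$ bits; the circuit only reads such inputs anyway. If $\mcD$ has infinite support, the sum is a convergent series and the moment computation still goes through, by a limiting argument over finite truncations.
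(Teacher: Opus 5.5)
Your proposal is correct and is essentially the paper's proof. The paper also outputs $\hat{h}_{\bs}(x) = C(x, F_{\bs}(x))$ for a $k$-wise independent family with $k = O(\log(1/\delta))$; it differs only in getting concentration by citing the limited-independence Chernoff bound of \cite{SSS95} for the normalized variables $\frac{\mcD(x)}{p_{\max}}\Ind[\hat{h}_{\bs}(x) \neq c^{\star}(x)]$ (pessimistically taking the mean of their sum to be $1/p_{\max}$), whereas your direct $k$-th moment argument with variance proxy $\sum_x \mcD(x)^2 \le p_{\max}$ is self-contained and gives $O(\sqrt{p_{\max}\ln(1/\delta)})$ without the paper's extra $p_{\max}\ln(1/\delta)$ branch.

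Drop your closing aside about continuous $X$ rather than patching it. Encoding an atomless $\mcD$ into $n$-bit strings creates atoms heavier than $p_{\max}$, so the hypothesis $\mcD(x)\le p_{\max}$ is not preserved; the statement is really about $X = \zo^d$, the circuit's input space, which is how the paper's footnote reads it.
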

To ensure the algorithm $A$ in \Cref{claim:randomized-rounding} is time-efficient, we will use $k$-wise independent rounding rather than fully independent rounding.
\begin{theorem}[Chernoff bounds under limited independence, \cite{SSS95}]
    \label{thm:limited-independence}
    Let $\bz_1, \ldots, \bz_n$ be $k$-wise independent random variables each bounded on $[0,1]$. For $\bZ \coloneqq \bz_1 + \cdots + \bz_n$ and $\mu \coloneqq \Ex[\bZ]$,
    \begin{equation*}
        \Pr[\bZ \geq (1 + \Delta)\mu] \leq  \max\set*{e^{-\Omega(k)}, e^{-\Omega(\Delta^2 \mu)}, e^{-\Omega(\Delta \mu)} }.
    \end{equation*}
\end{theorem}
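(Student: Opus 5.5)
The plan is the standard moment method. The key observation is that the $k$-th central moment of $\bZ$ depends only on the $k$-wise joint distributions of the $\bz_i$, so it can be computed as if the variables were fully independent.

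\textbf{Step 1: reduce to independent variables.} Let $\by_i \coloneqq \bz_i - \Ex[\bz_i]$. Fix an even integer $m \le k$, to be chosen later, and expand
\[
\Ex[(\bZ-\mu)^m] = \sum_{i_1,\ldots,i_m} \Ex[\by_{i_1}\cdots \by_{i_m}].
\]
Each summand involves at most $m \le k$ distinct indices. By $k$-wise independence it therefore equals the corresponding summand for fully independent copies $\bz'_i$ with the same marginals. Hence $\Ex[(\bZ-\mu)^m] = \Ex[(\bZ'-\mu)^m]$.

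\textbf{Step 2: a moment bound for independent bounded variables.} I aim to show
\[
\Ex[(\bZ'-\mu)^m] \le \paren*{C\max\set*{\sqrt{m\mu},\, m}}^m
\]
for an absolute constant $C$. In the expansion, any term in which some index appears exactly once vanishes, since $\Ex[\by_i]=0$. For the remaining terms, each index appearing $r\ge 2$ times contributes at most
\[
\Ex[|\by_i|^r] \le \Ex[\by_i^2] \le \Ex[\bz_i^2] \le \Ex[\bz_i] \eqqcolon p_i,
\]
using $|\by_i|\le 1$ and $\bz_i\in[0,1]$. Group the index tuples by the set partition of $[m]$ into $j \le m/2$ blocks, each of size at least $2$. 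The sum over distinct index assignments is then at most $(\sum_i p_i)^j = \mu^j$. The number of such partitions with $j$ blocks is at most roughly $\binom{m}{j} j^{m}$, which is bounded by $(O(m))^{m-j}$ up to constants in the exponent. Summing over $j$ gives
\[
\sum_{j \le m/2} (O(m))^{m-j}\mu^j \le \paren*{O(\max\set{\sqrt{m\mu}, m})}^m.
\]
Alternatively, I would cite a Rosenthal-type inequality, which gives exactly this bound.

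\textbf{Step 3: Markov and the choice of $m$.} Markov's inequality applied to the even power gives
\[
\Pr[\bZ \ge (1+\Delta)\mu] \le \frac{\Ex[(\bZ-\mu)^m]}{(\Delta\mu)^m} \le \paren*{\frac{C\max\set{\sqrt{m\mu}, m}}{\Delta \mu}}^m .
\]
Let $m^\star \coloneqq c\min\set{\Delta^2\mu, \Delta\mu}$ for a small constant $c$, and set $m \coloneqq$ the largest even integer at most $\min\set{k, m^\star}$. Then $\sqrt{m\mu} \le \sqrt{c}\,\Delta\mu$ and $m \le c\,\Delta\mu$, so the base is at most $1/e$ and the probability is at most $e^{-m}$.
\begin{itemize}
\item If the minimum is $k$, the bound is $e^{-\Omega(k)}$.
\item Otherwise the bound is $e^{-\Omega(\Delta^2\mu)}$ or $e^{-\Omega(\Delta\mu)}$, according to which of the two terms attains the minimum in $m^\star$.
\item When $m^\star<2$, the claimed bound is $\Omega(1)$ after adjusting constants, so it is trivially true.
\end{itemize}
In every case the probability is at most the stated maximum.

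\textbf{Main obstacle.} The only nonroutine part is the combinatorial counting in Step 2, that is, getting the partition count tight enough to yield the $\max\set{\sqrt{m\mu},m}$ form. I would first try the direct partition count above. If the constants get messy, I would fall back on citing the known moment bound for sums of independent $[0,1]$ variables.
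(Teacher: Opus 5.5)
The paper does not prove this statement; it is imported as a black box from \cite{SSS95}, so there is no in-paper argument to compare against. Your route has three parts: match the $m$-th central moment with the fully independent case for even $m\le k$, bound that moment by $(C\max\{\sqrt{m\mu},m\})^m$, and apply Markov with $m\approx\min\{k,c\Delta^2\mu,c\Delta\mu\}$. This is the standard moment-method proof of bounds of this kind. Step 1, the Markov step, and the choice of $c$ are correct, but two places need repair.

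\emph{The partition count.} The bound $\binom{m}{j}j^{m}$ is too weak to give $(O(m))^{m-j}$. At $j=m/2$ it is about $m^m$, not $(O(m))^{m/2}$. Carried through, it only yields $\Ex[(\bZ-\mu)^m]^{1/m}\lesssim m\sqrt{\mu}$, and hence a tail of $e^{-\Omega(\Delta\sqrt{\mu})}$ instead of $e^{-\Omega(\Delta^2\mu)}$. The correct count is $\binom{m}{j}j^{m-j}$: choose the $j$ block minima, then send each of the remaining $m-j$ elements to a block. Since $j\le m/2$, this is at most $2^m m^{m-j}\le(4m)^{m-j}$, and with that your summation over $j$ goes through as written.

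\emph{The case $m^\star<2$.} This case is not trivial. There $x\coloneqq\min\{\Delta^2\mu,\Delta\mu\}<2/c$, so the right-hand side is a constant strictly less than $1$, and you must show the probability is at most $e^{-c'x}$. Pairwise independence suffices. It gives $\mathrm{Var}[\bZ]\le\sum_i\Ex[\bz_i]=\mu$, so Cantelli's inequality yields
$\Pr[\bZ\ge(1+\Delta)\mu]\le 1/(1+\Delta^2\mu)\le 1/(1+x)\le e^{-c'x}$ for all $x\le 2/c$, with $c'=\frac{c}{2}\ln(1+2/c)$.

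The assumption $k\ge 2$ is genuinely needed, and your first bullet also uses it, since $k=1$ forces $m=0$. For $k=1$, let every $\bz_i$ equal a single $\Ber(1/(1+\Delta))$ variable. Then the probability is $1/(1+\Delta)$, which exceeds any fixed $e^{-\Omega(1)}$ for small $\Delta$, even though $\Delta^2\mu$ can be arbitrarily large. This is harmless for the paper, which applies the bound with $k=O(\log(1/\delta))$.
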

Note that the last two terms are what one would get from a standard fully independent Chernoff bound (e.g. \Cref{fact:Chernoff}), whereas the first term is required given only the assumption of $k$-wise independence.

We'll use a classic efficient construction of $k$-wise independent random variables.
\begin{fact}[\cite{Jof74}, Corollary 3.34 of \cite{Vad12}]
    \label{fact:k-wise-construction}
    For any parameters $d,r, k \in \N$ and seed $s \in \zo^{k \cdot \max(d,r)}$ there exists a family of functions $F_s:\zo^d \to \zo^r$ so that, if we draw $\bs \sim \Unif(\zo^{k \cdot \max(d,r)})$ the random variables $\set{F_{\bs}(x)}_{x \in \zo^d}$ are $k$-wise independent and uniform on $\zo^r$. Furthermore, evaluation of $F_s(x)$ given $s$ and $x$ can be done in time $\poly(d,r,k)$.
\end{fact}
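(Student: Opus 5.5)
We would prove \Cref{fact:k-wise-construction} with the classical construction: random polynomials of degree less than $k$ over a finite field of characteristic two, whose outputs are truncated to $r$ bits. Set $m \coloneqq \max(d,r)$ and identify $\zo^m$ with the field $\mathbb{F}_{2^m}$, writing elements in coordinates with respect to a fixed basis. Parse the seed $s \in \zo^{k m}$ as $k$ field elements $a_0,\ldots,a_{k-1}$, and let $p_s(z) \coloneqq \sum_{j<k} a_j z^j$. For $x \in \zo^d$, pad $x$ with zeros to an element $\iota(x) \in \zo^m = \mathbb{F}_{2^m}$. This map $\iota$ is injective because $d \le m$. Define $F_s(x)$ to be the first $r$ coordinates of $p_s(\iota(x))$, which is valid since $r \le m$.

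The core step is $k$-wise independence of the full field evaluations. Fix distinct inputs $x_1,\ldots,x_k$; if fewer than $k$ inputs are under consideration, pad with further distinct points, which is possible whenever $2^d \ge k$. The otherwise degenerate case can be handled by enlarging $m$ slightly, or is vacuous. The points $z_i = \iota(x_i)$ are then distinct. The map from coefficients to evaluations,
\[
(a_0,\ldots,a_{k-1}) \mapsto (p_s(z_1),\ldots,p_s(z_k)),
\]
is linear with a Vandermonde matrix. That matrix is invertible because the $z_i$ are distinct, so the map is a bijection of $\mathbb{F}_{2^m}^k$. A uniform seed therefore yields $(p_s(z_1),\ldots,p_s(z_k))$ uniform on $\mathbb{F}_{2^m}^k$, meaning the evaluations are mutually independent and each is uniform. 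Projecting each value to its first $r$ coordinates is a fixed map $\zo^m \to \zo^r$ under which every fiber has size $2^{m-r}$. Applied coordinatewise, it preserves both independence and uniformity, so the values $F_{\bs}(x_i)$ are independent and uniform on $\zo^r$.

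For efficiency, evaluate $p_s$ by Horner's rule, which uses $k$ multiplications and additions in $\mathbb{F}_{2^m}$. Each of these is polynomial multiplication modulo a degree-$m$ irreducible polynomial over $\mathbb{F}_2$, costing $\poly(m)$ time, so the total cost is $\poly(d,r,k)$.

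We expect the main obstacle to be obtaining that irreducible polynomial in deterministic $\poly(m)$ time. The first approach would be to invoke Shoup's deterministic algorithm for finding irreducible polynomials over $\mathbb{F}_2$. Alternatively, we would use the explicit irreducible trinomials $z^{2\cdot 3^\ell}+z^{3^\ell}+1$, taking $m$ to be the smallest such degree at least $\max(d,r)$. Since $m \le 3\max(d,r)$, this only changes the seed length by a constant factor, which can be absorbed. Everything else in the argument is routine linear algebra.
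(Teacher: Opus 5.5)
Your proof is correct and is the standard argument behind the cited result, Corollary~3.34 of Vadhan, which the paper quotes without proof: degree-$<k$ polynomials over $\mathbb{F}_{2^{\max(d,r)}}$, zero-padded inputs and truncated outputs, Vandermonde invertibility for $k$-wise independence, and Shoup's deterministic algorithm for the irreducible polynomial (the explicit trinomials instead cost up to a factor $3$ in seed length, so Shoup's route is the one matching the stated $k\cdot\max(d,r)$). The one loose end is the case of fewer than $k$ points (or $2^m<k$), and it needs neither padding nor a larger $m$: any $j\le k$ distinct nodes give a $j\times k$ Vandermonde matrix of full row rank, so the evaluation map is onto $\mathbb{F}_{2^m}^j$ with equal-size fibers and a uniform seed gives uniform, independent evaluations.
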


\begin{proof}[Proof of \Cref{claim:randomized-rounding}]
    Let $C$ be the circuit computing $\bh$. Then, $C$ takes $d$ bits representing its input $x$ and $r$ bits representing its random seed, where $d + r \leq n$. Viewing $C$ as a function $C:\zo^d \times \zo^r \to \bits$, we have that $\bh(x)$ is equivalent in distribution to\footnote{Implicitly, $C$ views the domain $X$ represented as $\zo^d$.} $C(x, \br)$ where $\br \sim \Unif(\zo^r)$. 
    
    The algorithm $A$ sets $k = O(\log(1/\delta))$ and draws a seed $\bs \sim \Unif(\zo^{k \cdot \max(d,r)})$ and, for the function $F_{\bs}$ defined in \Cref{fact:k-wise-construction}, outputs the function
    \begin{equation*}
        \hat{h}_{\bs}(x) \coloneqq C(x, F_{\bs}(x)).
    \end{equation*}
    Note that $\hat{h}_{\bs}$ is a deterministic function once we hardwire any one choice for $\bs$. What remains is to analyze the error of $h$. For each $x \in X$, define
    \begin{equation*}
        \bz_x \coloneqq \frac{\mcD(x)}{p_{\max}} \cdot \Ind[\hat{h}_{\bs}(x) \neq c^{\star}(x)].
    \end{equation*}
    Then, $\set{\bz_x}_{x \in X}$ are $k$-wise independent random variables each bounded on $[0,1]$ satisfying that
    \begin{equation*}
        \frac{\error_{\mcD}(\hat{h}_{\bs}, c^{\star}) }{p_{\max}} =\bZ = \sum_{x\in X}\bz_x.
    \end{equation*}
    Applying \Cref{thm:limited-independence} with our choice of $k = O(\log(1/\delta))$ along with $\Ex_{\bs}[\error_{\mcD}(\hat{h}_{\bs}, c^{\star})] =\error_{\mcD}(\bh, c^{\star})$ gives that
    \begin{equation*}
        \Prx_{\bs}\bracket*{\frac{\error_{\mcD}(\hat{h}_{\bs}, c^{\star}) }{p_{\max}} \geq \frac{\error_{\mcD}(\bh, c^{\star})(1 + \Delta) }{p_{\max}}} \leq \max\paren*{\delta,e^{-\Omega(\Delta^2 \mu)}, e^{-\Omega(\Delta \mu)}},
    \end{equation*}
    where $\mu \coloneqq \frac{\error_{\mcD}(\bh, c^{\star})}{p_{\max}}$. We can instead write this as
     \begin{equation*}
        \Prx_{\bs}\bracket*{\frac{\error_{\mcD}(\hat{h}_{\bs}, c^{\star}) }{p_{\max}} \geq \frac{\error_{\mcD}(\bh, c^{\star})}{p_{\max}} + \eps} \leq \max\paren*{\delta,e^{-\Omega(\eps^2/\mu)}, e^{-\Omega(\eps)}}.
    \end{equation*}
    This bound improves as $\mu$ gets smaller. We are therefore free to pessimistically set $\mu = 1/p_{\max}$, which is the largest value it can take, to obtain,
    \begin{equation*}
        \Prx_{\bs}\bracket*{\frac{\error_{\mcD}(\hat{h}_{\bs}, c^{\star}) }{p_{\max}} \geq \frac{\error_{\mcD}(\bh, c^{\star})}{p_{\max}} + \eps} \leq \max\paren*{\delta,e^{-\Omega(\eps^2\cdot p_{\max})}, e^{-\Omega(\eps)}}.
    \end{equation*}
    We solve for $\eps$ so this failure probability is $\delta$ giving,
    \begin{equation*}
        \Prx_{\bs}\bracket*{\frac{\error_{\mcD}(\hat{h}_{\bs}, c^{\star}) }{p_{\max}} \geq \frac{\error_{\mcD}(\bh, c^{\star})}{p_{\max}} +O\paren*{\max\set*{\sqrt{\frac{\log(1/\delta)}{p_{\max}}}, \log(1/\delta)}}} \leq \delta.
    \end{equation*}
    Multiplying by $p_{\max}$ we have shown that with probability at least $1-\delta$,
    \begin{equation*}
        \error_{\mcD}(\hat{h}_{\bs}, c^{\star}) \leq \error_{\mcD}(\bh, c^{\star}) +O\paren*{\max\set*{\sqrt{p_{\max}\log(1/\delta)}, p_{\max}\log(1/\delta)}}.
    \end{equation*}
    Finally, we observe that since error is bounded on $[0,1]$, the second term is dominated by the first term in the regime that matters.
\end{proof}

\subsection{Proof of \Cref{claim:maj-suffices}}
\label{appendix:g-choice}

This section contains a tedious but straightforward proof (see also \Cref{fig:g-constraints-mal-intro} for a ``proof by picture"). Much of it was generated by ChatGPT-5.3-Codex with author verification, scaffolding, and editing. Before diving into that AI-generated proof, the author wishes to give some intuition for the particular choice of $g$. In order for \Cref{eq:g-constraints-mal-body} to be satisfied, it must be the case that $g'(0) = 2$. For odd $k$ (see \Cref{subsec:lipschitz-maj} for how to derive this formula),
\begin{equation*}
    M_k'(0) = k \cdot \Pr[\Bin(k-1,1/2) = (k-1)/2]=  \frac{k}{2^{k-1}}\binom{k-1}{(k-1)/2},
\end{equation*}
    The smallest value for which this is at least $2$ is $k=7$. For $k = 7$, we have $M_7'(0) = \frac{35}{16}$, which is too large. In order to make the derivative exactly $2$, we mix in an appropriate amount of the identity function, giving $g(t) = \frac{16}{19}M_7(t) + \frac{3}{19} t$.

While \Cref{fig:g-constraints-mal-intro} may make the desired inequalities appear close at the origin, this is alleviated by a second plot of $g(t)/t$ (see \Cref{fig:g-constraints-divided}), which makes it more clear.
\begin{figure}[htb]
    \centering
    \begin{tikzpicture}
        \begin{axis}[
            width=0.6\textwidth,
            height=0.4\textwidth,
            xmin=-1, xmax=1,
            ymin=0, ymax=2.5,
            xtick={-1,1},
            ytick={0,2.5},
            xlabel={$t$},
            axis lines=middle,
            domain=-0.999:0.999,
            samples=401,
            clip mode=individual,
            legend cell align=left,
            legend style={at={(0.02,0.08)},anchor=south west,fill=white,draw=none,font=\small},
        ]
            \addplot[name path=lower,draw=none] {max(0,min(2/(1+x),2/(1-x)))};
            \addplot[name path=upper,draw=none] {min(2.5,max(2/(1+x),2/(1-x)))};
            \addplot[blue!20,draw=none] fill between[of=lower and upper];

            \addplot[blue!80!black,thick,dotted] {max(0,min(2/(1+x),2/(1-x)))};
            \addplot[blue!80!black,thick,dotted] {min(2.5,max(2/(1+x),2/(1-x)))};

            \addplot[black,very thick] {(38 - 35*x^2 + 21*x^4 - 5*x^6)/19};
            \addlegendentry{Our choice of $g(t)/t$}
        \end{axis}
    \end{tikzpicture}
    \caption{The shaded region depicts the constraints of \Cref{eq:g-constraints-mal-body} on the function $t \mapsto g(t)/t$.}
    \label{fig:g-constraints-divided}
\end{figure}
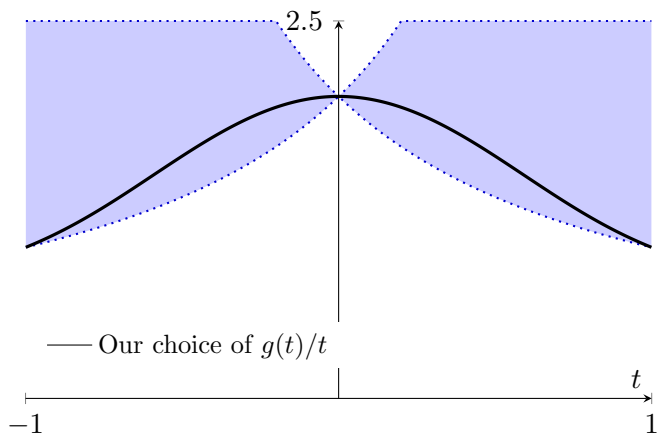

\begin{proof}[Proof of \Cref{claim:maj-suffices}, partially AI generated.]
    For any $k$, the function $M_k$ is odd and can be written as
    \begin{equation*}
        M_k(t) \coloneqq \sum_{j=0}^k \sign(2j-k)\binom{k}{j}\paren[\bigg]{\frac{1+t}{2}}^j\paren[\bigg]{\frac{1-t}{2}}^{k-j},
    \end{equation*}
    since the number of $+1$ values among $\bz_1,\ldots,\bz_k$ is distributed as $\Bin(k,\frac{1+t}{2})$.
    For $k=7$, expanding gives
    \begin{equation*}
        M_7(t)=\frac{35t-35t^3+21t^5-5t^7}{16}.
    \end{equation*}
    As a result, our function $g$ is odd and is equal to the polynomial,
    \begin{equation*}
        g(t)=\frac{16}{19}M_7(t)+\frac{3}{19}t
        =\frac{38t-35t^3+21t^5-5t^7}{19}.
    \end{equation*}
    Since $g$ is odd, it suffices to verify
    \begin{equation*}
        g(t)\ge \frac{2t}{1+t}\quad\text{for all }t\in[-1,1].
    \end{equation*}
    Indeed, if this lower bound holds for every $t$, then applying it to $-t$ gives
    \begin{equation*}
        g(-t)\ge \frac{-2t}{1-t},
    \end{equation*}
    which is equivalent (by oddness) to
    \begin{equation*}
        g(t)\le \frac{2t}{1-t}.
    \end{equation*}
    Hence \Cref{eq:g-constraints-mal-body} follows.

    It remains to prove the lower bound on $[-1,1]$. At $t=-1$ the right-hand side is $-\infty$, so the bound is trivial; thus fix $t\in(-1,1]$. A direct calculation yields
    \begin{align*}
        g(t)-\frac{2t}{1+t}
        &=\frac{t^2(1-t)}{19(1+t)}
        \paren[\big]{5t^5+10t^4-11t^3-32t^2+3t+38}.
    \end{align*}
    We lower bound the last factor by rewriting
    \begin{align*}
        5t^5+10t^4-11t^3-32t^2+3t+38
        &=13+(1-t)\paren[\big]{25+28t-4t^2-15t^3-5t^4}\\
        &=13+(1-t)\paren[\big]{3+(1+t)\paren[\big]{22+6t-10t^2-5t^3}}\\
        &\ge 13,
    \end{align*}
    where we used $t\in[-1,1]$, so
    \begin{equation*}
        22+6t-10t^2-5t^3\ge 22-6-10-5=1.
    \end{equation*}
    Hence $g(t)-\frac{2t}{1+t}\ge 0$ for all $t\in(-1,1]$, and therefore for all $t\in[-1,1]$.

    Finally, $g$ is $2$-Lipschitz. Differentiating the polynomial gives
    \begin{align*}
        g'(t)
        &=\frac{38-105t^2+105t^4-35t^6}{19}
        =\frac{3+35(1-t^2)^3}{19}.
    \end{align*}
    Therefore, for all $t\in[-1,1]$,
    \begin{equation*}
        0\le g'(t)\le \frac{3+35}{19}=2,
    \end{equation*}
   which gives that $g$ is $2$-Lipschitz.
\end{proof}

\end{document}